\begin{document}

\title{Algorithmic Randomness and Kolmogorov Complexity for Qubits}
\author{Tejas Bhojraj}

\profA{Professor Joseph Miller, Professor, Mathematics, UW-Madison}
\profB{Professor Andr\'e Nies, Professor, Computer Sciences, University of Auckland}
\profC{Professor Uri Andrews, Associate Professor, Mathematics, UW-Madison}
\profD{Professor Jin-Yi Cai, Professor, Computer Sciences, UW-Madison}
\degree{Doctor of Philosophy}
\dept{Mathematics}
\thesistype{dissertation}
\beforepreface
\prefacesection{Abstract}

This work extends the theories of algorithmic randomness and Kolmogorov complexity of bitstrings to the quantum realm. In addition, it describes a method to generate an arithmetically random infinite bitstring from a certain computable, non-quantum-random infinite qubitstring.

Nies and Scholz defined quantum Martin-L{\"o}f randomness (q-MLR): the first notion of algorithmic randomness to be defined for qubitstrings. We define a notion of quantum Solovay randomness and show it to be equivalent to q-MLR using purely linear algebraic methods. Quantum Schnorr randomness is then introduced. A quantum analogue of the law of large numbers is shown to hold for quantum Schnorr random states. 

We now turn to a quantum analogue of Kolmogorov complexity. We introduce quantum-K ($QK$), a measure of the descriptive complexity of density matrices using classical prefix-free Turing machines and show that the initial segments of weak Solovay random and quantum Schnorr random states are incompressible in the sense of $QK$. Many properties enjoyed by prefix-free Kolmogorov complexity ($K$) have analogous versions for $QK$; notably a counting condition. Several connections between Solovay randomness and $K$, including the Chaitin type characterization of Solovay randomness, carry over to those between weak Solovay randomness and $QK$. Schnorr randomness has a Levin\textendash Schnorr characterization using $K_C$; a version of $K$ using a computable measure machine, $C$. We similarly define $QK_C$, a version of $QK$. Quantum Schnorr randomness is shown to have a Levin\textendash Schnorr and a Chaitin type characterization using $QK_C$. 

We then explore a notion of `measuring' a state. We formalize how `measurement' of a state induces a probability measure on the space of infinite bitstrings. A state is `measurement random' (mR) if the measure induced by it, under any computable basis, assigns probability one to the set of Martin-L{\"o}f randoms. While quantum-Martin-L{\"o}f random states are mR, the converse fails: there is a mR state, $\rho$ which is not quantum-Martin-L{\"o}f random. In fact, something stronger is true. While $\rho$ is computable and can be easily constructed, measuring it in any computable basis yields an arithmetically random sequence with probability one. 

The work concludes by studying the asymptotic von Neumann entropy of computable states. 

\prefacesection{Acknowledgements}
I thank my advisor Joe Miller for his unfailing encouragement and for his open-mindedness in allowing me to choose for my thesis work, a topic which was initially unfamiliar not just to him but also to me (quantum information). I am grateful to Andr\'e Nies for his advice and support. His paper \cite{unpublished} was responsible for introducing me to the theme dealt with in this work.

I thank Joe Miller, Steffen Lempp and Uri Andrews for fostering my interest in computability theory and logic during my undergraduate years.
I thank my teachers (too many to name individually) at UW-Madison for teaching me ideas and techniques which I am sure will have an enduring influence on my future work.

%

\afterpreface
%

\chapter{Introduction}\label{Introduction}   
%
%
%
%
%
%

Quantum physics describes a physical system by a unit vector in an appropriate vector space. Although the vector space can be infinite dimensional in general, this thesis deals purely with finite dimensional spaces. The simplest setting is that of a two dimensional vector space: a qubit is a unit vector in $\mathbb{C}^2$ and describes a two dimensional quantum system. Consider the orthonormal basis of $\mathbb{C}^2$ comprised of the unit eigenvectors of the $z-$operator denoted (in the usual bra-ket notation) by $|0\big >$ and $|1\big >$. An arbitrary qubit has the form $\alpha |0\big> + \beta |1\big>$ where $|\alpha|^2 + |\beta|^2 = 1$. The $|0\big>$ and $|1\big>$ are the quantum analogues of the classical $0$ and $1$ respectively. While a bit can only take on two possible values ($0$ or $1$), a qubit can be any unit length \emph{linear combination} of the basis vectors $|0\big>$ and $|1\big>$. So, a qubit generalizes the classical bit. This suggests that notions concerning classical bits can be extrapolated to qubits.

Section \ref{qm}, which may be skipped by the reader familiar with quantum theory, reviews some quantum theory background relevant to this work.

Information theory has been generalized to the quantum realm \cite{Nielsen:2011:QCQ:1972505}. Similarly, the theory of computation has been extended to the quantum setting; a notable example being the conception of a quantum Turing machine \cite{Mller2007QuantumKC,doi:10.1137/S0097539796300921}. It hence seems natural to extend algorithmic randomness, a discipline using concepts from computation and information,  to the quantum realm. Algorithmic randomness studies the randomness of infinite bitstrings using two main tools: (1) effective measure theory and (2) Kolmogorov complexity. While classical Kolmogorov complexity has inspired many competing definitions of quantum Kolmogorov complexity \cite{Berthiaume:2001:QKC:2942985.2943376,Mller2007QuantumKC,Vitnyi2001QuantumKC}, effective measure theory has only recently been extended to the quantum setting \cite{unpublished, bhojraj2020quantum}.

What does algorithmic randomness study? Consider infinite sequences of ones and zeroes (called  bitstrings in this paper). First consider the bitstring $101010101010\cdots$. It has an easily describable `pattern' to it; namely that the ones and zeroes alternate. Now take a bitstring obtained by tossing a fair coin repeatedly. Intuitively, it seems that the second bitstring, in contrast to the first, is unlikely to have patterns. Algorithmic randomness tries to quantify our intuition that the second bitstring is more `random', more `structureless' than the first. For this, it uses two central concepts: (1) An effectively null set and (2) Kolmogorov complexity. Roughly speaking an `effectively null set' is one which can be approximated by a computable sequence of open sets whose measures tend to zero in a nice way. Varying the precise definition of `effectively null' yields various randomness notions such as for example, Martin-L{\"o}f randomness, Solovay randomness and Schnorr randomness (See \cite{misc} and \cite{misc1} for more details on effective measure theory and its use in algorithmic randomness). The Kolmogorov theoretical approach quantifies the randomness of infinite bitstrings by measuring the incompressibility of their finite initial segments. Roughly speaking, a finite bitstring $\sigma$ is `incompressible' if $K(\sigma)$ is large ($K$ stands for the prefix-free Kolmogorov complexity. See \cite{misc} and \cite{misc1} for an exposition on $K$ and its properties). In the Kolmogorov theoretical approach, an infinite bitstring $X$ is `random' if its finite initial segments are asymptotically incompressible as $n$ goes to infinity. I.e., if $X \upharpoonright n$ is the first $n$ bits of $X$, then $K(X \upharpoonright n)$ grows `fast' as $n$ tends to infinity. Varying the precise definition of `fast' yields different randomness notions of varying strengths. 

It turns out that the randomness of an infinite bitstring measured via the effective measure theory approach is intimately related to its randomness measured in terms of their initial segment incompressibility.

While algorithmic randomness is concerned with the randomness of bitstrings, the present thesis is concerned with \emph{quantum} algorithmic randomness: the study of the randomness of \emph{qubitstrings} (infinite sequences of qubits), also called \emph{states}\cite{unpublished, bhojraj2020quantum}.

Chapter 2 extends the classical effective measure theory approach to the quantum realm. It studies the quantum analogues of Martin-L{\"o}f, Solovay and Schnorr randomness, which are defined using effectively null sets in the classical theory. Chapter 3 extends the theory of classical Kolmogorov complexity to the quantum setting. It introduces quantum-$K$ $(QK)$, a quantum version of $K$, and relates it to the three quantum randomness notions defined in Chapter 2. The remaining two chapters explore interesting applications of the main theory developed in chapters 2 and 3. In Chapter 4, we construct a computable state which is \emph{not} quantum Martin-L{\"o}f random but which yields an arithmetically random bitstring with probability one when `measured' (the notion of measuring a state is also defined in chapter 4). Arithmetic randomness is a strong form of classical randomness, strictly stronger than 
Martin-L{\"o}f randomness (See 6.8.4 in \cite{misc1} for details on arithmetic randomness). This suggests that it is easier to generate classical randomness than it is to generate quantum randomness.

While there have been several protocols for generating a random sequence of bits from a quantum source\cite{Pironio_2010, Baumeler2017KolmogorovAF, Kessler_2020, abbott2014value, DBLP:phd/hal/Abbott15}, to the best of our knowledge, none of the currently known protocols produce as strong a form of randomness as arithmetic randomness.

The final chapter explores the von-Neumann entropies of the finite initial segments of computable states.
Section \ref{qm} reviews some background from quantum theory and the following sections give an overview of each chapter.

\section{Quantum theory background}
\label{qm}
A more detailed account may be found in the textbook by Nielsen and Chuang \cite{Nielsen:2011:QCQ:1972505}. We assume the reader to be familiar with the bra-ket notation. A n-dimensional system is described by $|\psi \big >$,  a unit vector in $ \mathbb{C}^n$. A physical quantity corresponds to a Hermitian operator H on  $\mathbb{C}^n$. By Hermicity, H has a spectral decomposition: \[H = \sum_{i\leq n} e_{i} |i\big> \big< i|\]
where, $(|i\big>)_{i\leq n}$ is the complete orthonormal set of eigenvectors of H. Measuring H on  $|\psi\big >$ produces outcome $e_i$ with probabilty $|\big <i|\psi\big >|^{2}=Tr ( |\psi \big >\big <\psi|  |i \big >\big <i|) $. So, the outcome is non-deterministic except when the $|\psi\big >$ is an eigenvector of H. The only possible outcomes of measurement of $H$ are it's eigenvalues. If the outcome is $e_i$, the post-measurement system is in $|i \big >$. As usual, we denote the eigenvectors of the z-operator (a operator on $\mathbb{C}^2$) by $|1\big>$ and $|0\big>$. Any $|\psi \big> = \alpha  |1\big>+ \beta |0\big>$ with  $\alpha, \beta\in \mathbb{C}$ and  $|\alpha|^{2}+ |\beta|^{2} =1$ is said to be a \emph{qubit}. A sequence of n qubits is modeled by a unit vector in $(\mathbb{C}^{2})^{\otimes n}$ which has an orthonormal basis comprised of elements of the form \[ \bigotimes_{i<n} |\sigma(i) \big > := |\sigma \big > \text{ for a }   \sigma \in \{0,1\}^n\]  
States which are not pure tensors are said to be \emph{entangled}. If $|\psi \big > \in (\mathbb{C}^{2})^{\otimes n}$ is entangled, its subsystem in $(\mathbb{C}^{2})^{\otimes k}$ for some $k<n$ is not a single quantum state but rather is a probabilistic mixture of multiple quantum states. To describe such subsystems, we reformulate the above in the density matrix language by replacing $(\mathbb{C}^{2})^{\otimes n}$ with $L_n$, the space of linear operators on $(\mathbb{C}^{2})^{\otimes n}$ and by replacing $|\psi \big >$ with $|\psi \big >\big <\psi| $. A positive semidefinite matrix $\rho \in L_n$ is a density matrix if Tr($\rho$) = 1. By Hermicity, $\rho$ has a complete orthonormal set of eigenvectors  $(\psi_i)_{i<2^{n}}$. So, it is unitarily diagonalizable and has eigenpairs $(\alpha_i, \psi_i)$   
    \begin{align}
    \label{denm}
    \rho = \sum_{i< 2^{n}} \alpha_{i} |\psi_i\big> \big< \psi_i|    
    \end{align} This sum must be convex as, 1=Tr($\rho$)=$\sum_{i}\alpha_{i} $. A density matrix $\rho \in L_n$ is said to be a \emph{strictly mixed} state if \ref{denm} is a strictly convex sum and is said to be a \emph{pure} state if  $\rho= |\psi \big >\big <\psi|$ for some unit vector $\psi$. A density matrix which may be pure or strictly mixed is simply referred to as a \emph{mixed} state. In the density matrix language, a system $|\psi\big>$ is represented by the pure state $|\psi \big> \big< \psi|$. A system which is in $|\psi_i\big> \big< \psi_i|$ with probability $\alpha_{i}$ is described by the mixed state $\rho = \sum_{i< 2^{n}} \alpha_{i} |\psi_i\big> \big< \psi_i|$. Measuring H on  $\rho$ produces outcome $e_i$ with probabilty $|\big<i|\rho| i\big>|^{2}=Tr (\rho |i \big >\big <i|) $. The expected value of measuring H on $\rho$ is
   \begin{align*}
    [H]_{\rho} := \sum_{i} e_{i} Tr (\rho |i \big >\big <i|) 
    =Tr(H \rho)
    \end{align*}
A system which is a composite of systems given by $\sigma \in L_n $ and $\tau \in L_k$ is described by $\rho=\sigma \otimes \tau \in L_{n+k}$.

\section{An Overview of Chapter Two}
This chapter concerns the generalization of effective measure theory as used in classical algorithmic randomness to the quantum world. The basic definitions from algorithmic randomness we state below may be found in books by Nies\cite{misc1} and Downey and Hirschfeldt\cite{misc}. Roughly speaking, a Martin-L{\"o}f random bitstring is one which has no algorithmically describable regularities. Slightly more rigorously, an infinite bitstring is said to be Martin-L{\"o}f random if it is not in any `effectively null' set. In the context of Martin-L{\"o}f randomness, a measurable set $A$ is effectively null if there is a Turing machine which computes a sequence of open sets, $(U_n)_n$ such that the measure of $U_n$ is at most $2^{-n}$ and $A \subseteq U_n$ for all $n$. By varying the definition of `effectively null', we get other notions of randomness like Solovay randomness and Schnorr randomness. Note that the randomness of a bitstring defined using this approach crucially depends on the notion of computability. In a broad sense, a bitstring is random if it has no `\emph{computably describable}' patterns. Consequently, the following notion is pertinent to us; a function, $f$ on the natural numbers is said to be \emph{computable} if there is a Turing machine, $\phi$ such that on input $n$, $\phi$ halts and outputs $f(n)$.
\begin{defn}
A sequence $(a_n)_{n\in \mathbb{N}}$ is said to be \emph{computable} if there is a computable function $f$, such that $f(n)=a_n$.

\end{defn}
The notion of a computable real number will come up when we discuss quantum Schnorr randomness. 
\begin{defn}
A real number $r$ is said to be \emph{computable} if there is a computable function $f$ such that for all $n$, $|f(n)-r|<2^{-n}$.
\end{defn}
We describe how classical algorithmic randomness generalizes to qubitstrings.
We refer the reader to the book by Nielsen and Chuang\cite{Nielsen:2011:QCQ:1972505} for preliminaries on quantum theory.

While it is clear what one means by a infinite sequence of bits, it is not immediately obvious how one would formalize the notion of an infinite sequence of qubits. To describe this, many authors have independently come up with the notion of a \emph{state} \cite{unpublished,article,brudno}. We will need the one given by Nies and Scholz \cite{unpublished}. Recall that a positive semidefinite matrix with trace equal to one is called a `density matrix' and is commonly used to represent a probabilistic mixture of pure quantum states (See \cite{Nielsen:2011:QCQ:1972505}).
\begin{defn}\cite{unpublished}
A \emph{state}, $\rho=(\rho_n)_{n\in \mathbb{N}}$ is an 
infinite sequence of density matrices such that $\rho_{n} \in \mathbb{C}^{2^{n} \times 2^{n}}$ and $\forall n$,  $PT_{\mathbb{C}^{2}}(\rho_n)=\rho_{n-1}$.
\end{defn}
The idea is that $\rho$ represents an infinite sequence of qubits whose first $n$ qubits are given by $\rho_n$. Here, $PT_{\mathbb{C}^{2}}$ denotes the partial trace which `traces out' the last qubit from $\mathbb{C}^{2^n}$. The definition requires $\rho$ to be coherent in the sense that for all $n$, $\rho_n$, when `restricted' via the partial trace to its first $n-1$ qubits, has the same measurement statistics as the state on $n-1$ qubits given by $\rho_{n-1}$.

\begin{defn}\cite{unpublished}
\label{def:tr}
Let $\tau=(\tau_n)_{n\in \mathbb{N}}$ be the state given by setting $\tau_n = \otimes_{i=1}^n I$ where $I$ is the two by two identity matrix.
\end{defn}

\begin{defn}\cite{unpublished} 
A \emph{special projection} is a hermitian projection matrix with complex algebraic entries.
\end{defn}
Since the complex algebraic numbers (roots of polynomials with rational coefficients) have a computable presentation, we may identify a special projection with a natural number and hence talk about computable sequences of special projections. Let $I$ denote the two by two identity matrix.
\begin{defn}\cite{unpublished}
\label{defn:sigclass}
A quantum $\Sigma_{1}^0$
set (or q-$\Sigma_{1}^0$
set for short) G is a computable
sequence of special projections $G=(p_{i})_{i\in \mathbb{N}}$ such that $p_i$ is $2^i$ by $2^i$ and range$(p_i \otimes I) \subseteq$ range $(p_{i+1})$ for all $i\in \mathbb{N}$. 

\end{defn}

While a $2^n$ by $2^n$ special projection may be thought of as a computable projective measurement on a system of $n$ qubits, a q-$\Sigma_{1}^0$
class corresponds to a computable sequence of projective measurements on longer and longer systems of qubits. We motivate the definition of a quantum $\Sigma_{1}^0$
set by relating it to the classical $\Sigma_{1}^0$
class. Let $2^{\omega}$, called Cantor space, denote the collection of infinite bitstrings, let $2^n$ denote the set of bit strings of length $n$, $2^{<\omega}= \bigcup_{n} 2^n$, and let $2^{\leq\omega}:=  2^{<\omega} \cup 2^{ \omega}$. Cantor space can be topologized by declaring the cylinders to be the basic open sets. If $\pi \in 2^n$ for some $n$, then the cylinder generated by $\pi$, denoted $\llbracket \pi \rrbracket$, is the set of all sequences extending $\pi$:\[\llbracket \pi \rrbracket=\{X \in 2^{\omega}: X\upharpoonright n = \pi\}.\]
If $C \subseteq 2^n$, let \[\llbracket C \rrbracket := \bigcup_{\pi \in C} \llbracket \pi \rrbracket,\] be the set of all $X \in 2^{\omega}$ such that the initial segment of $X$ of length $n$ is in $C$. One of the many equivalent ways of defining a $\Sigma_{1}^0$
class is as follows. 
\begin{defn}
\label{def:10}
A $\Sigma_{1}^0$ class $S \subseteq 2^{\omega}$ is any set of the form, \[S=\bigcup_{i\in \mathbb{N}} \llbracket A_{i} \rrbracket\] where
    \begin{enumerate}
        \item $A_{i} \subseteq 2^i, \forall i\in \mathbb{N}$ 
        \item
        The indices of $A_{i}$ form a computable sequence. (Being a finite set, each $A_i$ has a natural number coding it.)
        \item $\llbracket A_{i} \rrbracket\subseteq \llbracket A_{i+1} \rrbracket, \forall i\in \mathbb{N}$.
    \end{enumerate}
\end{defn}
Letting $\llbracket A_{i} \rrbracket:= S_i$, we write $S=(S_i )_i$. A $\Sigma_{1}^0$ class, S is coded (non-uniquely) by the index of the total computable function generating the sequence $(A_{i})_{ i\in \mathbb{N}}$ occurring in (2) in the definition of $S$. Hence, the notion of a computable sequence of $\Sigma_{1}^0$ classes makes sense. One sees that the special projections $q_i$ in the definition of the q-$\Sigma_{1}^0$ play the role of the $A_i$s which generate a the $\Sigma_{1}^0$ class, $S$. The following notion is a quantum analog of the Lebesgue measure of $S$ which equals $\lim_{n}(2^{-n}|A_n|)$, where $|.|$ refers to the cardinality. (The uniform measure on $2^{\omega}$ is the measure induced by letting the measure of $ \llbracket \tau \rrbracket$ be $2^{-|\tau|}$ for each $\tau \in 2^{<\omega}$. Here, $|\tau|:=n$ if $\tau \in 2^n$.)
\begin{defn}\cite{unpublished}
\label{def:trace}
If $G=(p_{n})_{n\in \mathbb{N}}$ is a q-$\Sigma_{1}^0$ class, define \emph{$\tau(G):=\lim_{n}(2^{-n}|q_n|)$} where, $|q_n|$ is the rank of $q_n$.
\end{defn}
 Informally, and somewhat inaccurately, a q-$\Sigma_{1}^0$ class,  $G=(p_{n})_{n\in \mathbb{N}}$, may be thought of as a projective measurement whose expected value, when `measured' on a state $\rho=(\rho_{n})_{n\in \mathbb{N}}$ is $\rho(G):=\lim_{n}$ Trace $(\rho_{n}p_n)$. In reality, a q-$\Sigma_{1}^0$ class, $G=(p_{n})_n$, is a sequence of projective measurements on larger and larger finite dimensional complex Hilbert spaces. This sequence can be used to `measure' a coherent sequence of density matrices (i.e., a state) the expected value of which is the limit of the Trace$(\rho_{n}p_n)$ (the expected value of measuring the $n^{th}$ `level').

\begin{defn}
A \emph{classical Martin-L{\"o}f test} (MLT) is a computable sequence, $(S_{m})_{m \in \mathbb{N} }$ of $\Sigma_{1}^0$ classes such that the Lebesgue measure of $S_m$ is less than or equal to $2^{-m}$ for all m.
\end{defn}
Its quantum generalization is:
\begin{defn}\cite{unpublished}
A \emph{quantum Martin-L{\"o}f test} (q-MLT) is a computable sequence, $(S_{m})_{m \in \mathbb{N}}$ of q-$\Sigma_{1}^0$ classes such that $\tau(S_m)$ is less than or equal to $2^{-m}$ for all m.
\end{defn}

\begin{defn}\cite{unpublished}
$\rho$ is \emph{q-MLR} if for any q-MLT $(S_{m})_{m \in \mathbb{N}}$, $\inf_{m \in \mathbb{N}}\rho(S_m)=0$.
\end{defn}
Roughly speaking, a state is q-MLR if it cannot be `detected by projective measurements of arbitrarily small rank'.
\begin{defn}\cite{unpublished}
$\rho$ is said to \emph{fail  the q-MLT $(S_{m})_{m \in \mathbb{N}}$, at order $\delta$}, if $\inf_{m \in \mathbb{N}}\rho(S_m)>\delta$. $\rho$ is said to \emph{pass  the q-MLT $(S_{m})_{m \in \mathbb{N}}$ at order $\delta$} if it does not fail it at $\delta$. 
\end{defn}
So, $\rho$ is q-MLR if it passes all q-MLTs at all $\delta>0$.
\begin{remark}
\label{rem:notation}
A few remarks on notation: By `bitstring', we mean a finite or infinite classical sequence of ones and zeroes. It will be clear from context whether the specific bitstring under discussion is finite or infinite. $2^n$ will denote the set of bitstrings of length $n$. Let $B^n$ denote the standard computational basis for $\mathbb{C}^{2^n}$. I.e., $B^n:= \{|\sigma\big> : \sigma \in 2^n\} $. If $S\subseteq 2^n$, let $P_{S} :=\sum_{\sigma \in S} |\sigma\big>\big<\sigma|$. `Tr' stands for trace. A sequence of q-$\Sigma_{1}^0$ classes will be indexed by the superscript. The subscript will index the sequence of special projections comprising a q-$\Sigma_{1}^0$. For example, $(S^{m})_{m \in \mathbb{N}}$ is a sequence of q-$\Sigma_{1}^0$ classes and  $S^{m}=(S^{m}_{n})_{m \in \mathbb{N}}$ is a class from the sequence. So, a sequence of q-$\Sigma_{1}^0$ classes can be thought of as a double sequence of special projections: $(S^{m}_{n})_{m,n \in \mathbb{N}}$. Lebesgue measure is denoted by $\mu$.
\end{remark}

In addition to continuing the investigation of quantum Martin-L{\"o}f randomness begun by Nies and Scholz \cite{unpublished}, we introduce and study quantum Solovay and quantum Schnorr randomness in Chapter 2.

\section{An Overview of Chapter Three}
As mentioned before, effective measure theory (using `effectively null sets') and Kolmogorov complexity theory (using descriptive complexity of initial segments) are two seemingly unrelated but equivalent approaches to study the randomness of bitstrings. Chapter 2 of this thesis and other works \cite{unpublished,qpl,bhojraj2020quantum} have generalized the first approach to the quantum realm. We work towards generalizing the second approach in Chapter 3 of which we give an overview here. 

The most basic definition from the classical theory is that of $K$: The prefix-free Kolmorogov Complexity $(K)$ of a finite bit string $\sigma$ is defined as \[K(\sigma)=\min\{|x|: \mathbb{U}(x)=\sigma\},\]
where the $x$s are finite bitstrings and $\mathbb{U}$ is the universal prefix-free Turing Machine (See \cite{misc1, misc} for detailed expositions).
Martin-L{\"o}f randomness (which is equivalent to Solovay randomness) and Schnorr randomness for infinite bitstrings, defined using the concept of `effective null sets', have characterizations in terms of initial segment prefix-free Kolmogorov complexity (denoted by $K$) \cite{misc,misc1, DBLP:series/txtcs/Calude02}; the initial segments of random infinite bitstrings are incompressible in the sense of $K$. 
 Two important characterizations show that the initial segments of Martin-L{\"o}f randoms (equivalently, of Solovay randoms) are asymptotically incompressible in the sense of $K$: the Chaitin characterization (See \cite{Chaitin:1987:ITR:24912.24913} and theorem 3.2.21 in \cite{misc}),
\[X  \text{ is Martin-L{\"o}f random} \iff \text{lim}_{n}  K(X\upharpoonright n)-n= \infty,\]
and the Levin\textendash Schnorr characterization (See theorem 3.2.9 in \cite{misc}),
\[X  \text{ is Martin-L{\"o}f random} \iff \exists c \forall n [ K(X\upharpoonright n)>n-c].\]

A prefix-free machine , $C$ is said to be a computable measure machine if the Lebesgue measure of its domain is a computable real number\cite{downey2004schnorr}. With this definition in hand, we define\cite{downey2004schnorr}, analogously to $K$,
\[K_C(\sigma)=\min\{|x|: C(x)=\sigma\}.\]

Schnorr randomness has a Levin\textendash Schnorr type characterization using $K_C$;

\[X  \text{ is Schnorr random} \iff \forall C \exists c \forall n [ K_C(X\upharpoonright n)>n-c].\]

Quantum Solovay randomness and quantum Schnorr randomness for states are defined in Chapter 2. Analogously to the classical situation, one may explore the connections between quantum Solovay randomness and quantum Schnorr randomness and the initial segment descriptive complexity of states.

Motivated by this, we asked whether there is a quantum analogue of $K$ which yields a characterization of quantum Solovay and quantum Schnorr randomness. We define $QK$, a complexity measure for density matrices based on prefix-free, classical Turing machines. The abbreviation $QK$ stands for `quantum-K', reflecting our intention of developing a quantum analogue of $K$, the classical prefix-free Kolmogorov complexity.

To the best of our knowledge, all notions of quantum Kolmogorov complexity developed so far, with one exception\cite{Vitnyi2001QuantumKC}, exclusively use machines which are not prefix-free (plain classical machines or quantum Turing machines) \cite{doi:10.1137/S0097539796300921, Berthiaume:2001:QKC:2942985.2943376,Mller2007QuantumKC}. $K_Q$, a notion developed in \cite{Vitnyi2001QuantumKC} uses a quantum Turing machine, $Q$ together with the classical prefix-free Kolmogorov complexity in its definition.

After introducing quantum-K, we show that the initial segments of weak Solovay random and quantum Schnorr random states are incompressible in the sense of $QK$. Many properties enjoyed by prefix-free Kolmogorov complexity ($K$) have analogous versions for $QK$; notably a counting condition.

Several connections between Solovay randomness and $K$, including the Chaitin type characterization of Solovay randomness, carry over to those between weak Solovay randomness and $QK$. We work towards a Levin\textendash Schnorr type characterization of weak Solovay randomness in terms of $QK$.

As mentioned above, Schnorr randomness has a Levin\textendash Schnorr characterization using $K_C$. We similarly define $QK_C$, a version of $QK$. Quantum Schnorr randomness is shown to have a  Levin\textendash Schnorr and a Chaitin type characterization using $QK_C$. The latter implies a Chaitin type characterization of classical Schnorr randomness using $K_C$. 
\section{An Overview of Chapter Four}
This chapter investigates the following question: Can a non-q-MLR, computable quantum source be used to generate a MLR sequence of bits? To make this question fully precise, we need to define what we mean by `generate'. To this end, we will formalize a notion of `measuring a state'. With this notion in hand, we will construct a computable non-q-MLR state which yields a MLR bitstring almost surely when measured.

Measuring a finite dimensional quantum system or a composite system of finitely many qubits is a pivotal concept in quantum information theory \cite{bookA}. It hence seems natural to consider defining a notion of `measuring' a state. Since measurement of a state yields a \emph{classical} infinite sequence of bits, it is interesting to explore the relation between the quantum algorithmic randomness of the measured state and the  classical algorithmic randomness of the resulting sequence.  Chapter 4 is motivated by these questions. 

We first formalize how `measurement' of a state in a basis induces a probability measure on Cantor space. A state is `measurement random' (mR) if the measure induced by it, under any computable basis, assigns probability one to the set of Martin-L{\"o}f randoms. Equivalently, a state is mR if and only if measuring it in any computable basis yields a Martin-L{\"o}f random with probability one. While quantum-Martin-L{\"o}f random states are mR, the converse fails: there is a mR state, $\rho$ which is not quantum-Martin-L{\"o}f random. In fact, something stronger is true. While $\rho$ is computable and can be easily constructed, measuring it in any computable basis yields an arithmetically random sequence with probability one. I.e., classical arithmetic randomness can be generated from a computable, non-quantum random sequence of qubits. 

\section{An Overview of Chapter Five}

As quantum Martin-L{\"o}f randomness is a notion of `randomness' for states, we don't expect computable states to be quantum Martin-L{\"o}f random. However, the tracial state, which is computable, is quantum Martin-L{\"o}f random. This rather surprising fact justifies a study of the computable quantum Martin-L{\"o}f randoms. The theme of this chapter is to use the von Neumann entropy as a measure of the randomness of computable states.

Recall that the von-Neumann entropy of a density matrix is the Shannon entropy of the distribution given by its eigenvalues (As a density matrix is positive semidefinite and has trace equal to one, its eigenvalues are real non-negative and sum to one. The eigenvalues hence form a probability distribution. See, for example \cite{Nielsen:2011:QCQ:1972505}). So, the von-Neumann entropy of a density matrix, $d$ reflects how `evenly spread out' its eigenvalues are. If the eigenmass of $d$ is `concentrated' at a few (relative to the dimension of $d$) eigenvectors then the von Neumann entropy of $d$ is low. Informally speaking, if a computable density matrix, $d$ has a low entropy, then the few  eigenvectors at which the eigenmass is concentrated can be used to construct a special projection `close' to $d$. Conversely, if the von Neumann entropy of $d$ is high, then one cannot construct such a special projection. In this chapter, we formalize this intuition and extend it from density matrices, $d$ to states, $\rho = (\rho_n )_n$. This extension from individual density matrices to states involves studying the limiting behavior of the von Neumann entropy of $\rho_n$ as $n$ goes to infinity.

Our results may be summarized by the following implications:
For any computable $\rho$, \[\exists c>0 \exists^{\infty} n H(\rho_n) > n-c \Rightarrow \rho \text { is q-MLR} \Rightarrow  H(\rho):=\text{lim}_n \dfrac{H(\rho_n )}{n} = 1.\]

Further, we also show that these implications do not reverse.
\\ 
%

\chapter{Notions of quantum algorithmic randomness}\label{Quantum algorithmic randomness}   
%
%
%
%
%
%

\section{Introduction}

This section has two major themes. First, it continues the study of quantum Martin-L{\"o}f randomness initiated by Nies and Scholz \cite{unpublished}. Second, we define quantum Solovay and quantum Schnorr randomness and prove results concerning these notions. Along with  Martin-L{\"o}f randomness, Solovay randomness and Schnorr randomness are important classical randomness notions. While Solovay randomness is equivalent to MLR, Schnorr randomness is strictly weaker. In Section \ref{sec:SolSchn}  we define quantum Solovay and quantum Schnorr randomness, show that quantum Solovay randomness is equivalent to q-MLR, show the convexity of the randomness classes in the space of states (answering open questions\cite{unpublished1, unpublished}), and obtain results regarding q-MLR states. The equivalence of quantum Solovay and quantum Martin-L{\"o}f randomness turns out to be a corollary of Theorem \ref{thm:LinA}, a linear algebraic result of independent interest concerning the approximation of density matrices by subspaces. This result, to the best of our knowledge, is novel and may prove useful in areas where approximations to density matrices are used; for example, quantum information and error correction,  quantum Kolmogorov complexity \cite{Mller2007QuantumKC,Berthiaume:2001:QKC:2942985.2943376} and quantum statistical mechanics.

In Section \ref{sec:measures}, we study states which are coherent sequences of diagonal density matrices. These states can be thought of as probability measures on Cantor space. Nies and Stephan\cite{unpublished2} defined  Martin-L{\"o}f absolutely continuity and Solovay randomness for diagonal states. We show that these two notions are the restrictions of q-MLR and quantum Solovay randomness to the space of diagonal states. We prove a result (Lemma \ref{lem:30}) about approximating a subspace of small rank by another one with a different orthonormal spanning set and of appropriately small rank. This result, novel as far as we know, may be applied to the important problem of approximating an entangled subspace (a subspace spanned by entangled pure states) by one spanned by product tensors \cite{demianowicz2019entanglement,book2}. We discuss how quantum randomness notions restrict to classical states (i.e., to infinite bitstrings) and note that quantum Schnorr randomness is strictly weaker that q-MLR, as in the classical case.

Nies and Tomamichel \cite{logicblog} showed that q-MLR states satisfy quantum versions of the law of large numbers and the Shannon\textendash McMillan\textendash Breiman theorem for i.i.d.\ Bernoulli measures. In Sections \ref{sec:lln} and \ref{sec:smb} we strengthen their results by showing that in fact, all quantum Schnorr random states (a set strictly containing the q-MLR states) satisfy these properties.

Many results in this chapter have significantly different proofs, which may be found in \cite{bhojraj2020quantum}.

\section{Notions of Quantum algorithmic randomness}
\label{sec:SolSchn}

\subsection{Solovay and Schnorr randomness}
An infinite bitstring $X$ is said to \emph{pass} the Martin-L{\"o}f test $(U_{n})_n$ if $X \notin \bigcap_{n}U_n$ and is said to be \emph{Martin-L{\"o}f random (MLR)} if it passes all Martin-L{\"o}f tests. A related randomness notion is Solovay randomness. A computable sequence of $\Sigma^{0}_1$ classes, $(S_{n})_n$ is a \emph{Solovay test} if $\sum_{n} \mu(S_{n})$, the sum of the Lebesgue measures is finite. An infinite bitstring $X$ \emph{passes} $(S_{n})_n$ if $X\in S_n$ for infinitely many $n$. It is a remarkable fact that $X$ is MLR if and only if it passes all Solovay tests. Is this also true in the quantum realm? Nies and Scholz asked \cite{unpublished1} if there is a notion of a quantum Solovay test and if so, is quantum Martin-L{\"o}f randomness equivalent to passing all quantum Solovay tests. We answer this question in the affirmative by defining a quantum Solovay test and quantum Solovay randomness as follows.
Roughly speaking, we obtain a notion of  a quantum Solovay test by replacing `$\Sigma^{0}_1$ class' and `Lebesgue measure' in the definition of classical Solovay tests with `quantum-$\Sigma^{0}_1$ set' and $\tau$ (Definition \ref{def:trace}) respectively. We show below that quantum Solovay Randomness is equivalent to q-MLR.

\begin{defn}\label{defn:1} 
A uniformly computable sequence of quantum-$\Sigma^{0}_1$ sets, $(S^{k})_{k\in\omega}$ is a \emph{ quantum-Solovay test} if  $\sum_{k\in \omega} \tau(S^{k}) <\infty.$\end{defn}\begin{defn}\label{defn:2}
For $0<\delta<1$, a state $\rho$ \emph{fails the Solovay test $(S^k)_{k\in\omega}$ at level $\delta$} if there are infinitely many $k$ such that $\rho(S^k)>\delta$.
\end{defn}
\begin{defn}
A state $\rho$ \emph{passes the Solovay test $(S^k)_{k\in\omega}$} if for all $\delta>0$, $\rho$ does not fail $(S^k)_{k\in\omega}$ at level $\delta$. I.e, lim$_{k}\rho(S^{k})=0$.
\end{defn}
\begin{defn}
A state $\rho$ is \emph{quantum Solovay random} if it passes all quantum Solovay tests.\end{defn}
An \emph{interval Solovay test}\cite{misc1} is a Solovay test, $(S_{n})_n$ such that each $S_n$ is generated by a finite collection of strings. By 7.2.22 in the book by Downey and Hirschfeldt \cite{misc1}, a Schnorr test may be defined as:
\begin{defn}
A \emph{Schnorr test} is an interval Solovay test, $(S^{m})_{m}$ such that $\sum_{m}\mu(S^{m}) $ is a computable real number. 
\end{defn}
A bitstring passes a Schnorr test if it does not fail it (using the same notion of failing as in the Solovay test). We mimic this notion in the quantum setting.
\begin{defn}
\label{defn:Schnorr}
A \emph{quantum Schnorr test} is a strong Solovay test, $(S^{m})_{m}$ such that $\sum_{m}\tau(S^{m}) $ is a computable real number. A state is quantum Schnorr random if it passes all Schnorr tests.
\end{defn}
The following two definitions are due to Nies (personal communication). The first is a quantum analogue of an interval Solovay test.
\begin{defn}
A \emph{strong Solovay test} is a computable sequence of special projections $(S^{m})_{m}$ such that $\sum_{m}\tau(S^{m}) < \infty$. A state $\rho$ fails $(S^{m})_{m}$ at $\epsilon$ if for infinitely many $m$, $\rho(S^{m})>\epsilon$.
\end{defn}
\begin{defn}
A state $\rho$ is \emph{weak Solovay random} if it passes all strong quantum Solovay tests.\end{defn}

\subsection{A general result about density matrices}
We prove a purely linear algebraic theorem about approximating density matrices by subspaces and then use it to show the equivalence of quantum Solovay and quantum Martin-L{\"
o}f randomness in the next subsection.

In words, the theorem says the following. Let $\mathcal{F}$ be a set of subspaces of `small' (at most $d$) total dimension and let $Q$ be the set of density matrices `$\delta$ close' to at least $m$ many subspaces from $\mathcal{F}$. Then, there is a subspace of small (at most $6d/\delta m$) dimension `$\delta^{2}/72$ close' to every density matrix in $Q$ .

\begin{thm}
\label{thm:LinA}
Let $m,d,n \in \mathbb{N}$ and  $ \delta\in (0,1) $ be arbitrary. Let $\mathcal{F}=(T_k)_{k}$ be a set of subspaces of $\mathbb{C}^{n}$ with $\sum_{k}\text{dim}(T_{k}) \leq d,$ and let $M_k$ be the orthonormal projection onto $T_k$. 
Let\[Q= \left\{\rho  : \rho \text{ is a density matrix on }\mathbb{C}^{n} \text{with Tr} (\rho M_k) >\delta  \text { for at least } m \text{ many } k\right\}, \]
be non-empty. Then, there is a orthonormal projection matrix $M$ such that\[\text{Tr}(M)\leq \dfrac{6d}{\delta m}   \text{ and Tr}(M \rho) \geq \dfrac{\delta^{2}}{36}  \text{ for every } \rho  \in Q.\]
 
\end{thm}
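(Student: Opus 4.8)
The plan is to build $M$ directly from the \emph{overlap operator} $B:=\sum_k M_k$. Since $\sum_k\dim T_k\le d$, only finitely many $T_k$ are nonzero, so $B$ is a well-defined positive semidefinite matrix on $\mathbb{C}^n$ with $\mathrm{Tr}(B)=\sum_k\dim T_k\le d$. The reason $B$ is the right object is that membership in $Q$ becomes a single trace inequality: if $\rho\in Q$ and $A$ is a set of $m$ indices with $\mathrm{Tr}(\rho M_k)>\delta$ for all $k\in A$, then
\[\mathrm{Tr}(\rho B)\;=\;\sum_k\mathrm{Tr}(\rho M_k)\;\ge\;\sum_{k\in A}\mathrm{Tr}(\rho M_k)\;>\;m\delta.\]
I would then fix the threshold $t:=\delta m/6$ and let $M$ be the spectral projection of $B$ onto its eigenvalues lying in $[t,\infty)$. (When $m=0$ the set $Q$ is all of the density matrices and $M=I$ works trivially, so assume $m\ge 1$.)

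The rank bound is immediate. If $\mu_1\ge\mu_2\ge\cdots$ are the eigenvalues of $B$, then $t\cdot\mathrm{rank}(M)\le\sum_{\mu_j\ge t}\mu_j\le\mathrm{Tr}(B)\le d$, so $\mathrm{Tr}(M)=\mathrm{rank}(M)\le d/t=6d/(\delta m)$, as required.

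The substance of the proof is the estimate $\mathrm{Tr}(M\rho)\ge\delta^2/36$ for $\rho\in Q$, which I would argue by contradiction. Suppose $\eta:=\mathrm{Tr}(M\rho)<\delta^2/36$ and set $P:=I-M$. Decomposing $\rho=M\rho M+M\rho P+P\rho M+P\rho P$ and bounding the first three pieces against an arbitrary projection $M_k$ using $0\preceq M_k\preceq I$ together with Cauchy--Schwarz for the Hilbert--Schmidt inner product (this is essentially the gentle measurement lemma: small $\mathrm{Tr}(M\rho)$ forces $\rho$ to be $\ell_1$-close to $P\rho P$), one obtains for every $k$
\[\mathrm{Tr}(M_k\rho)\;\le\;\eta+2\sqrt{\eta}+\mathrm{Tr}(M_k\,P\rho P).\]
Since $\delta<1$ and $\eta<\delta^2/36$ one checks $\eta+2\sqrt{\eta}<\delta/2$, so each of the $m$ indices $k\in A$ has $\mathrm{Tr}(M_k\,P\rho P)>\delta/2$. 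Summing over $k\in A$ and using that $PBP$ has all its eigenvalues below $t$ (hence $PBP\preceq tI$),
\[\frac{m\delta}{2}\;<\;\sum_{k\in A}\mathrm{Tr}(M_k\,P\rho P)\;\le\;\mathrm{Tr}\big(BP\rho P\big)\;=\;\mathrm{Tr}\big(PBP\,\rho\big)\;\le\;t\;=\;\frac{\delta m}{6},\]
which is absurd. Hence $\mathrm{Tr}(M\rho)\ge\delta^2/36$ for every $\rho\in Q$.

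The step I expect to be the real obstacle is exactly this last lower bound. The naive route, bounding $\mathrm{Tr}(M_k\rho)\le\|B\|_{\mathrm{op}}\,\mathrm{Tr}(M\rho)$, throws away a factor of $\|B\|_{\mathrm{op}}$, which can be as large as $d$ (take $d$ copies of a single line), and that is hopelessly lossy against a rank budget of only $6d/(\delta m)$. The way around it is to avoid estimating $\mathrm{Tr}(M_k\rho)$ directly, and instead to observe that if $\mathrm{Tr}(M\rho)$ were small then $\rho$ would sit almost entirely on the part of the spectrum of $B$ below $t$, where $B$ is simply too small to let $\rho$ overlap $m$ distinct $M_k$'s by more than $\delta$ apiece. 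The constants $t=\delta m/6$ and the target $\delta^2/36$ are then exactly what makes the final inequality close, with room to spare (which is also why one can push the bound down to $\delta^2/72$ as stated in the overview).
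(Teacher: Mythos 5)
Your proof is correct, and it takes a genuinely different route from the paper's. The paper builds $M$ greedily: it defines the set $L$ of unit vectors $\psi$ with $\sum_k \mathrm{Tr}(|\psi\rangle\langle\psi|M_k) > m\delta/6$, takes a \emph{maximal orthonormal subset} $D\subseteq L$, lets $M$ project onto $\mathrm{span}(D)$, and then, for each eigenvector of $\rho$ decomposed as $c_o\psi_o+c_p\psi_p$ relative to $\mathrm{range}(M)$, uses maximality of $D$ to force $\psi_p\notin L$; a chain of term-by-term estimates plus Jensen's inequality then yields $\mathrm{Tr}(\rho M)\ge(\sum_i\alpha_i|c_o^i|)^2\ge\delta^2/36$. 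You instead take $M$ to be the spectral projection of the overlap operator $B=\sum_k M_k$ above the threshold $t=\delta m/6$, and run the whole estimate at the operator level: the rank bound $t\,\mathrm{rank}(M)\le\mathrm{Tr}(B)\le d$ is immediate, and the lower bound follows from the gentle-measurement/Cauchy--Schwarz inequality $\mathrm{Tr}(M_k\rho)\le\eta+2\sqrt{\eta}+\mathrm{Tr}(M_kP\rho P)$ together with $PBP\preceq tI$ (I checked the constants: $\eta+2\sqrt\eta<13\delta/36<\delta/2$ when $\eta<\delta^2/36$, and $m\delta/2\le t=m\delta/6$ is the desired absurdity). Your version is cleaner and avoids both the eigendecomposition of $\rho$ and Jensen's inequality, and it isolates the right object ($B$) conceptually. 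What the paper's greedier construction buys is effectivizability: in the application (Theorem \ref{thm:000}) the sets $D$ are built level-by-level as nested maximal orthonormal sets of \emph{algebraic} vectors, which is what makes the resulting $(G^m_n)_n$ a q-$\Sigma^0_1$ set; a spectral projection of $B$ at an exact threshold is less obviously computable and less obviously monotone in $n$, so your construction would need extra work to replace the paper's in that argument. As a proof of the purely linear-algebraic statement, though, yours is complete and arguably preferable.
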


\begin{proof}

Let
\[L= \left\{\psi \in \mathbb{C}^{n}: ||\psi||=1, \sum_{k} \text{Tr}(|\psi \big> \big< \psi| M_{k})> \dfrac{m\delta}{6}\right\},\]
and let $D$ be a maximal orthonormal subset of $L$ and let $M$ be the orthonormal projection matrix onto the span of $D$.

\begin{lem}
\label{lem:mltbnd}
Tr$(M) \leq \dfrac{6d}{\delta m} $.
\end{lem}

\begin{proof}
We prove this using that $D$ is a orthonormal subset of $L$, that
Tr$(M)=|D|$ and 
that $d$ bounds the sum of the dimensions.
\begin{align*}
     d &\geq \sum_{k}  \text{Tr}(M_{k})\geq \sum _{k}  \sum_{\psi\in D} \text{Tr}(|\psi\big> \big< \psi|M_{k})=\sum_{\psi\in D}\sum _{k}   \text{Tr}(|\psi\big> \big< \psi|M_{k})\\
            &>|D|\dfrac{m\delta}{6}=\text{Tr}(M)\dfrac{m\delta}{6}.\qedhere
 \end{align*}
\qedhere
\end{proof}

Take any $\rho \in Q$. We can write it as 
 \[\rho = \sum_{i\leq n} \alpha_{i}|\psi^{i}\big> \big< \psi^{i}| \]  
for $\alpha_i$ non-negative real numbers with $\sum_{i\leq n} \alpha_i =1$ and for each $i$, $|\psi^{i}\big> \in \mathbb{C}^{n}$ and $||\psi^{i}||=1$.
For any $i\leq n$ we can decompose $\psi = \psi^i$ as 
\begin{align}
    \psi = c_o\psi_o + c_p\psi_p
\end{align}
where $\psi_o \in \text{range}(M)$ and $\psi_p \in \text{range}(M)^{\perp}$
are unit vectors and $c_o, c_p \in \mathbb{C}$ satisfy $|c_0|^{2}+|c_p|^{2}= 1$.
We now show that $\dfrac{\delta^{2}}{36} \leq $  Tr$(\rho M)$ = $\sum_{i\leq n}\alpha_{i} |c^{i}_o|^{2}$. Let $k$ be arbitrary and let $M_{k}=S$. A routine computation gives,

\begin{align}
\label{eq:1}
    &\text{Tr}(S|\psi\big> \big< \psi|)\\ 
    &\leq  |c_o|^{2}\big<S \psi_o|S\psi_o\big>+ |c_p|^{2} \big<S \psi_p|S\psi_p\big>+ 2| c_{o}||c_{p}||\big<S \psi_p|S\psi_o\big>|.
\end{align}

By the Cauchy-Schwarz inequality:

\begin{align*}
    |\big<S \psi_p|S\psi_o\big>|
    &\leq ||S\psi_o ||||S\psi_p ||\\
    & \leq (\text{max} \{||S\psi_o ||,||S\psi_p ||\})^{2}\\
    &\leq ||S\psi_o ||^{2} + ||S\psi_p ||^{2}.
\end{align*}
 
Putting this in \ref{eq:1}, we have: 
\begin{align}
\label{eq:decomp}
    &\text{Tr}(S|\psi\big> \big< \psi|)\\
    &\leq |c_o|^{2}\big<S \psi_o|S\psi_o\big>+ |c_p|^{2} \big<S \psi_p|S\psi_p\big>+ 2| c_{o}||c_{p}|(||S\psi_o ||^{2} + ||S\psi_p ||^{2})\\
    & \leq  |c_o|\big<S \psi_o|S\psi_o\big>+ |c_p| \big<S \psi_p|S\psi_p\big>+ 2|c_{o}|||S\psi_o ||^{2} + 2|c_{p}|||S\psi_p ||^{2}\\
    &=  3(|c_o|\big<S \psi_o|S\psi_o\big>+ |c_p| \big<S \psi_p|S\psi_p\big>).
\end{align}

As $\rho \in Q$, pick $H$ such that $|H|=m$ and $\text{Tr}(\rho M_{k}) > \delta$ for each $k$ in $H$. Using the above,

\begin{align*}
    m\delta
    &< \sum_{k\in H}\text{Tr}(\rho M_{k})\\
    &=   \sum_{i\leq n}\alpha_{i} \sum_{k\in H} \text{Tr}(|\psi^{i}\big> \big< \psi^{i}| M_{k})\\
    &\leq \sum_{i\leq n}\alpha_{i} \sum_{k\in H} 3(|c^{i}_o|\big<M_{k} \psi^{i}_o|M_{k}\psi^{i}_o\big>+ |c^{i}_p| \big<M_{k} \psi^{i}_p|M_{k} \psi^{i}_p\big>).
    \end{align*}
 So, 
\begin{align}
    &\dfrac{m\delta}{3} \leq \sum_{i\leq n}\alpha_{i} \sum_{k\in H} (|c^{i}_o|\big<M_{k} \psi^{i}_o|M_{k} \psi^{i}_o\big>+ |c^{i}_p| \big<M_{k} \psi^{i}_p|M_{k}\psi^{i}_p\big>)\\
    &=\label{eq:3} \sum_{i\leq n}\alpha_{i} |c^{i}_o|\sum_{k\in H} \big<M_{k} \psi^{i}_o|M_{k}\psi^{i}_o\big> + \sum_{i\leq n}\alpha_{i}|c^{i}_p| \sum_{k\in H}\big<M_{k} \psi^{i}_p|M_{k} \psi^{i}_p\big>.
\end{align}
 
Recall that our goal was to bound $\sum_{i\leq n}\alpha_{i} |c^{i}_o|^{2}$ from below by  $\delta^{2}/36$. In what follows, we achieve this by observing that the maximality of $D$ implies that $\psi^{i}_p\notin L$.
 
Fix an arbitrary $i$ and recall that $\psi^{i}_p \in \text{range}(M)^{\perp}$. Hence, $\psi^{i}_p$ is perpendicular to each element of $D$. If, $\psi^{i}_p \in L$, then $\{\psi^{i}_p\} \cup D$ is a orthonormal subset of $L$ strictly containing $D$, contradicting the maximality of $D$. So, for each $i$ it must be that $\psi^{i}_p \notin L$. But  $||\psi^{i}_p ||=1$. This implies that for each $i$,\\
\begin{align}
\label{eq:mimic}
    \sum_{k} \text{Tr}(|\psi^{i}_p  \big> \big< \psi^{i}_p | M_{k})\leq  \dfrac{m\delta}{6}.
\end{align}
As $\sum_{i\leq n} \alpha_{i} =1$ and  $|c^{i}_{p}|\leq 1$, the second term in \ref{eq:3} can be bounded from above:
\begin{align}\label{eq:4}
    \sum_{i\leq n}\alpha_{i}|c^{i}_p| \sum_{k\in H} \big<M_{k} \psi^{i}_p|M_{k} \psi^{i}_p\big>\leq \dfrac{m\delta}{6}.
\end{align}
Also note that
\begin{align} \label{eq:5}
    \sum_{k\in H} \big<M_{k} \psi^{i}_o|M_{k}\psi^{i}_o\big> \leq m.
\end{align}

\ref{eq:3}, \ref{eq:4} and \ref{eq:5} imply that

\[\dfrac{\delta}{6} \leq \sum_{i\leq n}\alpha_{i} |c^{i}_o|.\]

By Jensen's inequality,
\begin{align}
    \dfrac{\delta^{2}}{36} \leq \big (\sum_{i\leq n}\alpha_{i} |c^{i}_o|\big )^{2} \leq \sum_{i\leq n}\alpha_{i} |c^{i}_o|^{2}.
\end{align}

Finally, 
\begin{align*}
    \text{Tr}(\rho_n M)&=\sum_{i\leq n}\alpha_{i}\text{Tr}(|\psi^{i}\big> \big< \psi^{i}|M)\\
            &=\sum_{i\leq n}\alpha_{i}\text{Tr}(|M\psi^{i}\big> \big< M\psi^{i}|)\\
            &=\sum_{i\leq n}\alpha_{i}\text{Tr}(|c^{i}_{o}\psi^{i}_{o}\big> \big< c^{i}_{o}\psi^{i}_{o}|)\\
            &=\sum_{i\leq n}\alpha_{i} |c^{i}_o|^{2}\geq \dfrac{\delta^{2}}{36}.\qedhere
\end{align*}

\end{proof}

\subsection{Quantum Solovay randomness is equivalent to quantum Martin-L{\"o}f randomness }
\begin{thm}
\label{thm:000}
A state is quantum Solovay random if and only if it is quantum Martin-L{\"o}f random. 
\end{thm}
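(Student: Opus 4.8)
The plan is to prove the two implications separately; one is immediate, and the other is the real content and is exactly what Theorem~\ref{thm:LinA} is built for. For \textbf{quantum Solovay random $\Rightarrow$ q-MLR}: if $(S^m)_m$ is any q-MLT then $\sum_m\tau(S^m)\le\sum_m 2^{-m}<\infty$ and the sequence is (uniformly) computable, so $(S^m)_m$ is a quantum Solovay test; a quantum Solovay random $\rho$ passes it, i.e.\ $\lim_m\rho(S^m)=0$, hence $\inf_m\rho(S^m)=0$, and since this holds for every q-MLT, $\rho$ is q-MLR. No new idea is needed here.

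For \textbf{q-MLR $\Rightarrow$ quantum Solovay random} I would argue the contrapositive: assuming $\rho$ fails a quantum Solovay test $(S^k)_k$ at some level $\delta$ — which, by shrinking $\delta$, may be taken rational — I will produce a q-MLT that $\rho$ fails. Write $S^k=(S^k_n)_n$, $c:=\sum_k\tau(S^k)<\infty$. The idea is to run the \emph{construction} in the proof of Theorem~\ref{thm:LinA} coherently across levels. For $m\in\mathbb{N}$ put
\[ L^{(m)}_n=\bigl\{\psi\in\mathbb{C}^{2^n}:\ \|\psi\|=1,\ \textstyle\sum_{k\le n}\langle\psi|S^k_n|\psi\rangle>m\delta/6\bigr\}, \]
and choose $D^{(m)}_n$ to be a maximal orthonormal subset of $L^{(m)}_n$ with entries from the real algebraic numbers, subject to $D^{(m)}_n\supseteq D^{(m)}_{n-1}\otimes\{e_0,e_1\}$ ($e_0,e_1$ the standard basis of $\mathbb{C}^2$). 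This is legitimate: from $\langle\psi\otimes e_i|S^k_n|\psi\otimes e_i\rangle\ge\langle\psi\otimes e_i|(S^k_{n-1}\otimes I)|\psi\otimes e_i\rangle=\langle\psi|S^k_{n-1}|\psi\rangle$ one gets $L^{(m)}_{n-1}\otimes\mathbb{C}^2\subseteq L^{(m)}_n$, so $D^{(m)}_{n-1}\otimes\{e_0,e_1\}$ is already an orthonormal subset of $L^{(m)}_n$ and can be extended greedily; because $\delta$ is rational and the $S^k_n$ are special, $L^{(m)}_n$ is semialgebraic over the real algebraic numbers, so a maximal orthonormal subset with algebraic entries exists and is found effectively (elimination of quantifiers). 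Let $W^m_n$ be the projection onto $\text{span}\,D^{(m)}_n$. The inclusion $D^{(m)}_n\supseteq D^{(m)}_{n-1}\otimes\{e_0,e_1\}$ gives $\text{range}(W^m_{n-1}\otimes I)\subseteq\text{range}(W^m_n)$, so $W^m=(W^m_n)_n$ is a q-$\Sigma^0_1$ class and $(W^m)_m$ is uniformly computable (all displayed sums are finite).

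Two estimates transfer verbatim from the proof of Theorem~\ref{thm:LinA}, since they are established there for an \emph{arbitrary} maximal orthonormal subset: Lemma~\ref{lem:mltbnd} gives $\text{Tr}(W^m_n)=|D^{(m)}_n|\le\frac{6}{\delta m}\sum_{k\le n}\text{Tr}(S^k_n)\le\frac{6}{\delta m}2^n c$ (using $2^{-n}\text{Tr}(S^k_n)\le\tau(S^k)$), so $\tau(W^m)\le 6c/(\delta m)$; and the main inequality of that proof shows that whenever a density matrix $\sigma$ on $\mathbb{C}^{2^n}$ has $\text{Tr}(\sigma S^k_n)>\delta$ for at least $m$ indices $k\le n$, then $\text{Tr}(\sigma W^m_n)\ge\delta^2/36$. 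Now, since $\rho$ fails $(S^k)_k$ at level $\delta$, for every $m$ there are $k_1<\dots<k_m$ with $\rho(S^{k_i})>\delta$; as $\text{Tr}(\rho_n S^{k_i}_n)$ is nondecreasing in $n$ with limit $\rho(S^{k_i})$ (coherence of $\rho$ and $\text{range}(S^{k_i}_n\otimes I)\subseteq\text{range}(S^{k_i}_{n+1})$), for all large $n$, in particular $n\ge k_m$, the matrix $\rho_n$ meets the hypothesis, so $\text{Tr}(\rho_n W^m_n)\ge\delta^2/36$; letting $n\to\infty$, $\rho(W^m)\ge\delta^2/36$ for every $m$. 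Finally, fixing a natural number $K\ge 6c/\delta$, the sequence $(W^{K2^j})_j$ is a q-MLT (uniformly computable, with $\tau(W^{K2^j})\le 2^{-j}$) and $\inf_j\rho(W^{K2^j})\ge\delta^2/36>0$, so $\rho$ is not q-MLR, as desired.

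The step I expect to be the main obstacle is \textbf{coherence}: using Theorem~\ref{thm:LinA} as a black box yields, level by level, projections that need not satisfy $\text{range}(M_n\otimes I)\subseteq\text{range}(M_{n+1})$, and naively adjoining the earlier levels inflates the rank to roughly $n\,2^{n-j}$, so $\tau$ diverges. The crux is therefore to produce the covering projections coherently \emph{by construction}; the inclusion $L^{(m)}_{n-1}\otimes\mathbb{C}^2\subseteq L^{(m)}_n$ is what lets the maximal orthonormal sets be nested while the bound of Lemma~\ref{lem:mltbnd} survives at each level. The subsidiary nuisances — that $c$, hence the reindexing $K$, is not computable, and that $\sum_k\tau(S^k)$ is an infinite sum — are handled as in the classical proof: the contrapositive needs only \emph{some} q-MLT to exist, not to be found, and truncating the defining sum to $\sum_{k\le n}$ costs nothing in the three estimates above (the set $H$ of ``good'' indices in the proof of Theorem~\ref{thm:LinA} lies in $\{k\le n\}$ once $n\ge k_m$).
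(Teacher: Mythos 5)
Your proposal is correct and follows essentially the same route as the paper's proof: the contrapositive is established by building, for each $m$, a \emph{nested} sequence of maximal orthonormal subsets of the ``heavy vector'' sets (your $L^{(m)}_n$, the paper's $A^m_n$, which uses threshold $2^m\delta/6$ so that no final reindexing $m\mapsto K2^j$ is needed), with the trace bound coming from Lemma~\ref{lem:mltbnd} and the lower bound $\delta^2/36$ transferred from the proof of Theorem~\ref{thm:LinA}; your identification of coherence as the crux, resolved via $L^{(m)}_{n-1}\otimes\mathbb{C}^2\subseteq L^{(m)}_n$, is exactly the paper's construction. The one point you pass over is that maximality is only taken among (complex) algebraic vectors while $\rho_n$ may have non-algebraic eigenvectors, so the perpendicular components need not lie in the set where maximality applies; the paper closes this by first treating algebraic eigenvectors and then approximating a general $\rho_n$ by algebraic density matrices and invoking continuity of $\sigma\mapsto\mathrm{Tr}(\sigma W^m_n)$.
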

\begin{proof}
It suffices to show that if a state $\rho$ is not quantum Solovay random then it is not quantum Martin-L{\"o}f random.
To this end, let $\rho=(\rho_n)_{n\in \omega}$ be a state which fails a quantum Solovay test, $(S^{k})_{k\in \omega}$ at level $\delta$. We show that $\rho$ is not quantum Martin-L{\"o}f random by building a quantum Martin-L{\"o}f test, $(G^{m})_{m\in\omega}$, with $G^{m}= (G^{m}_n)_{n\in\omega}$,  which $\rho$ fails at level $ \delta^{2}/72$. We will use an effective version of Theorem \ref{thm:LinA} . Without loss of generality, assume that $S^{k}_n =\emptyset$ for $k>n$ and let $\sum_{k} \tau(S^k )<1$. 
 We use the notation:
\[A^{m}_{t}= \left\{\psi \in \mathbb{C}^{2^t}_{alg}: ||\psi||=1, \sum_{k\leq t} \text{Tr}(|\psi \big> \big< \psi| S^{k}_{t})> \dfrac{2^{m}\delta}{6}\right\},\]
for $t,m\in\omega$. This is analogous to $L$ in \ref{thm:LinA} with the replacements, $m\mapsto 2^{m}, n\mapsto 2^{t}$ and $d\mapsto 2^n$ and where we restrict attention to algebraic vectors. We use $A$ instead of $L$ to emphasize that we only consider complex algebraic objects in $A$, a `computable' version of $L$\\

\emph{Construction of $G^{m}$}: 
We build $G^m$ inductively as follows. Given $C^{m}_{n-1} \subseteq \mathbb{C}^{2^{n-1}}_{alg}$, a maximal (under set inclusion) orthonormal subset of $A^{m}_{n-1}$, let \[D^{m}_{n}=\left\{|\psi \big>\otimes |i\big> \in \mathbb{C}^{2^n}_{alg}: i \in \{1,0\},     \psi \in C^{m}_{n-1}\right\}.\]
 Note that $D^{m}_n \subseteq A^{m}_n$ since $C^{m}_{n-1} \subseteq A^{m}_{n-1}$. 
Define $C^{m}_n $ to be $S$ where $S$ is a maximal orthonormal set such that $S \subseteq A^{m}_n $ and $ D^{m}_n \subseteq S$. Let $G^{m}_{n}$ be the projection: \[G^{m}_{n}= \sum_{\psi \in C^{m}_{n}}|\psi\big> \big< \psi|.\] \emph{End of construction}.

\begin{lem}
$(G^{m})_{m\in\omega}$ is a quantum Martin-L{\"o}f test.
\end{lem}

\begin{proof}
Fix $m$. Clearly,  $(C^{m}_{n})_{n\in\omega}$ is a uniformly computable sequence. By construction, $\text{range}(G^{m}_{n-1} \otimes I_{2}) \subseteq \text{range}(G^{m}_n)$.
So, $G^{m}=(G^{m}_{n})_{n\in\omega}$ is a  quantum-$\Sigma^{0}_1$ set for each $m$.
The sequence $(G^{m})_{m\in\omega}$ is uniformly computable in $m$ by construction. Since, $ 1 \geq \sum_{k} \tau(S^{k})$, we have that $ 2^{n} \geq \sum_{k} \text{Tr}(S^{k}_{n})$ for all $n$. Now make the replacements $m\mapsto 2^{m}, n\mapsto 2^{n}$ and $d\mapsto 2^n$ in the proof of \ref{lem:mltbnd}  to see that Tr$(G^{m}_{n})  \leq (6/\delta)  2^{n-m}$ for all  $m,n$. So  $\tau(G^{m})  \leq (6/\delta)  2^{-m}$ for all $m$.
\end{proof}
\begin{lem}
$\rho$ fails $(G^m)_m$ at level $\dfrac{\delta^{2}}{72}$.
\end{lem}
\begin{proof}
We must show that $\text{inf}_{m\in\omega} \rho(G^m) > \dfrac{\delta^{2}}{72}.$ It suffices to show that for all $m\in\omega$, there is an $n$ such that 
$\text{Tr}(\rho_n G^{m}_n)> \dfrac{\delta^{2}}{72}.$ To this end, let $m$ be arbitrary and fix a $n$ big enough so that there exist $2^{m}$ many $ks$ such that Tr$(\rho_n S^{k}_n) > \delta$. So, let $|H|=2^m$ and $\text{Tr}(\rho_n S^{k}_n) > \delta$ for each $k$ in $H$. The projection $G^{m}_{n}$ will play the role of $M$ in the proof of Theorem \ref{thm:LinA}. Write $\rho_n$ as
 \[\rho_n = \sum_{i\leq 2^n} \alpha_{i}|\psi^{i}\big> \big< \psi^{i}| \]  
for $\alpha_i$ non-negative real numbers with $\sum_{i\leq2^n} \alpha_i =1$ and for each $i$, $|\psi^{i}\big> \in \mathbb{C}^{2^n}$ and $||\psi^{i}||=1$. First, consider the case where $|\psi^{i}\big>\in \mathbb{C}^{2^n}_{alg}$ for all $i$. 
For any $i\leq 2^{n}$ we can decompose $\psi = \psi^i$ as,
\begin{align*}
    \psi = c_o\psi_o + c_p\psi_p
\end{align*}
as in the proof of Theorem \ref{thm:LinA}, which we mimic now. By equation \ref{eq:3}, 
\begin{align}
\label{eq :3000}
    &\dfrac{2^{m}\delta}{3} \leq  \sum_{i\leq 2^n}\alpha_{i} |c^{i}_o|\sum_{k\in H} \big<S^{k}_n \psi^{i}_o|S ^{k}_n\psi^{i}_o\big> + \sum_{i\leq 2^n}\alpha_{i}|c^{i}_p| \sum_{k\in H}\big<S^{k}_n \psi^{i}_p|S^{k}_n \psi^{i}_p\big>
\end{align}

Fix an arbitrary $i$ and recall that $\psi^{i}_p \in \text{range}(G^{m}_n)^{\perp} \cap \mathbb{C}^{2^n}_{alg}$. Hence, $\psi^{i}_p$ is perpendicular to each element of $C^{m}_n$. If $\psi^{i}_p \in A^{m}_n$, then $\{\psi^{i}_p\} \cup C^{m}_n$ is a orthonormal subset of $A^{m}_n$ strictly containing $C^{m}_n$, contradicting the maximality of $C^{m}_n$. So, for each $i$ it must be that $\psi^{i}_p \notin A^{m}_n$. But, $\psi^{i}_p \in \mathbb{C}^{2^n}_{alg}$ and $||\psi^{i}_p ||=1$. This implies that for each $i$,\\
\[ \sum_{k\leq n} \text{Tr}(|\psi^{i}_p  \big> \big< \psi^{i}_p | S^{k}_{n})\leq  \dfrac{2^{m}\delta}{6}.\]
We are now in the situation of equation \ref{eq:mimic}. As the argument following it does not need complex algebraic vectors and by recalling that $M$ is replaced by $G^{m}_n$, we see that Tr$(\rho_{n}G^{m}_n)\geq \delta^{2}/36>\delta^{2}/72$. Now, suppose that not all $|\psi^{i}\big>$ are algebraic. By the density of $\mathbb{C}^{2^n}_{alg}$ in $\mathbb{C}^{2^n}$ we can approximate $\rho_n$ by a sequence $(\pi_k)_{k\in \mathbb{N}}$ of density matrices each satisfying the conditions of the previous case. So, Tr$(\pi_{k}G^{m}_n )\geq \delta^{2}/36$. By continuity, Tr$(\rho_{n}G^{m}_n)\geq \delta^{2}/36>\delta^{2}/72$.  \qed 
\end{proof}
The theorem is proved.\qed 
\end{proof}
\subsection{Convexity}
We show that all classes of random states are convex. The first result in this section is a corollary of the main theorem from the previous section.
\begin{cor}

A convex combination of q-Martin-L{\"o}f random states is q-Martin-L{\"o}f random. Formally, if $(\rho^i )_{i<k<\omega}$ are q-ML random states and $\sum_{i<k<\omega} \alpha_{i} =1$, then $\rho=\sum_{i<k}\alpha_{i}\rho^{i}$ is q-ML random.
\end{cor}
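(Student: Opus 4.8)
The plan is to show the contrapositive: if $\rho=\sum_{i<k}\alpha_i\rho^i$ is \emph{not} q-ML random, then some $\rho^i$ is not q-ML random either. By Theorem \ref{thm:000} it suffices to work with quantum Solovay randomness in place of q-MLR, and the advantage of the Solovay formulation is that it behaves well under convex combinations. So suppose $\rho$ fails a quantum Solovay test $(S^k)_{k\in\omega}$; that is, there is a $\delta>0$ with $\rho(S^k)>\delta$ for infinitely many $k$. Recall $\rho(S^k)=\lim_n\mathrm{Tr}(\rho_n S^k_n)$ and note that, since the partial-trace coherence conditions are linear and $\rho_n=\sum_{i<k}\alpha_i\rho^i_n$, we have $\rho(S^k)=\sum_{i<k}\alpha_i\,\rho^i(S^k)$ for every $k$.

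First I would fix the infinite set $I=\{k:\rho(S^k)>\delta\}$. For each $k\in I$ the convex combination $\sum_{i<k}\alpha_i\rho^i(S^k)$ exceeds $\delta$, so by an averaging/pigeonhole argument at least one index $i=i(k)$ satisfies $\rho^{i(k)}(S^k)>\delta$ (indeed, if every term were $\le\delta$ the weighted sum would be $\le\delta$). Since there are only finitely many indices $i<k$ but infinitely many $k\in I$, by pigeonhole there is a single fixed index $i^\ast<k$ such that $\rho^{i^\ast}(S^k)>\delta$ for infinitely many $k\in I$. Hence $\rho^{i^\ast}$ fails the very same quantum Solovay test $(S^k)_{k\in\omega}$ at level $\delta$, so $\rho^{i^\ast}$ is not quantum Solovay random, and therefore by Theorem \ref{thm:000} not q-ML random, contradicting the hypothesis that all the $\rho^i$ are q-ML random.

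The only points needing care are bookkeeping rather than real obstacles: one should check that $(S^k)_k$ is genuinely a quantum Solovay test independent of which state it is applied to (it is, since $\sum_k\tau(S^k)<\infty$ is a property of the test alone), and that the identity $\rho(S^k)=\sum_i\alpha_i\rho^i(S^k)$ is valid, which follows from linearity of the trace and of the limit (the limit exists for each summand since each $\rho^i$ is a state, hence for the finite sum). I expect the main subtlety to be simply making the double use of pigeonhole precise — first within each level to extract an index, then across levels to stabilize that index — and confirming that ``infinitely many $k$'' survives the second pigeonhole step because the index set $\{i:i<k\}$ is finite. Everything else is routine. An entirely parallel argument, replacing ``finite sum'' arithmetic with the same linearity, would also handle the $\tau$-computability bookkeeping if one instead wanted a direct proof via $QK$, but the Solovay route above is the cleanest.
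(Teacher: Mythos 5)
Your proposal is correct and is essentially the paper's own argument: both proofs expand $\rho(S^k)=\sum_{i}\alpha_i\rho^i(S^k)$ by linearity, use convexity to extract an index at each level where the test is failed, apply pigeonhole over the infinitely many failing levels to stabilize a single index $i^\ast$, and then invoke the equivalence of quantum Solovay randomness and q-MLR (Theorem \ref{thm:000}) to conclude. The only cosmetic difference is that the paper starts directly from a q-MLT that $\rho$ fails and reads it as a Solovay test at the end, whereas you pass to the Solovay formulation at the outset; the content is the same.
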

\begin{proof}
Suppose for a contradiction that there is a q-Martin-L{\"o}f test $(G^{m})_{m\in \omega}$  and a $\delta>0$ such that $\forall m\in \omega$, $ \rho(G^m)>\delta$.  So, $\forall m \in \omega$, $\exists n$ such that Tr$(\rho_{n}G^{m}_{n})>\delta $ where $\rho_n =\sum_{i<k}\alpha_{i}\rho^{i}_n$. So, $\forall m \in \omega$, $\exists n$ such that \[\delta<\text{Tr}\bigg(\sum_{i<k}\alpha_{i}\rho^{i}_{n}G^{m}_{n}\bigg)= \sum_{i<k}\alpha_{i}\text{Tr}(\rho^{i}_{n}G^{m}_{n}).\]
By convexity of the sum, there is an $i$ such that Tr$(G^{m}_{n}\rho^{i}_{n})>\delta$. In summary, \[\forall m, \text{ there is an } i  \text{ and an }n  \text{ such that Tr} (\rho^{i}_{n}G^{m}_{n})>\delta.\]
Since there are only finitely many $i$s, by the pigeonhole principle, there is an $i$ such that $\exists^{\infty} m$ with Tr $ (\rho^{i}_{n}G^{m}_{n})>\delta$, for some $n$. So, $\exists^{\infty} m$ with $ \rho^{i} (G^{m})>\delta$. So, $\rho^i$ fails the q-Solovay test $(G^{m})_{m\in \omega}$ and hence is not q-Martin-L{\"o}f random by our previous result. This is a contradiction. 
\end{proof}
\begin{thm}

A convex combination of quantum Schnorr random states is quantum Schnorr random. Formally, if $(\rho^i )_{i<k<\omega}$ are quantum Schnorr random states and $\sum_{i<k<\omega} \alpha_{i} =1$, then $\rho=\sum_{i<k}\alpha_{i}\rho^{i}$ is quantum Schnorr random.
\end{thm}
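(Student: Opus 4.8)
The plan is to exploit the fact that, for a fixed test, the quantity $\rho\mapsto\rho(S^m)$ is affine in $\rho$, together with the observation that a state passes a quantum Schnorr test precisely when $\lim_m\rho(S^m)=0$. This makes the argument shorter than the proof of the preceding corollary: since ``failing a quantum Schnorr test at level $\delta$'' is already an ``$\exists^{\infty}m$'' statement (rather than the ``$\forall m$'' statement that forced the detour through Solovay randomness in the q-MLR case), the convexity follows directly from linearity, with a pigeonhole argument available as an alternative.

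First I would fix an arbitrary quantum Schnorr test $(S^m)_m$; recall this is a strong Solovay test, so each $S^m$ is a special projection --- acting, say, on the first $t_m$ qubits --- and $\rho(S^m)=\text{Tr}(\rho_{t_m}S^m)$ (equivalently, $\rho$ evaluated on the q-$\Sigma^{0}_1$ class generated by $S^m$). Writing $\rho=\sum_{i<k}\alpha_i\rho^i$ gives $\rho_{t_m}=\sum_{i<k}\alpha_i\rho^i_{t_m}$, so by linearity of the trace
\[\rho(S^m)=\text{Tr}(\rho_{t_m}S^m)=\sum_{i<k}\alpha_i\,\text{Tr}(\rho^i_{t_m}S^m)=\sum_{i<k}\alpha_i\,\rho^i(S^m).\]
(One checks the same way that $\rho$ is indeed a state: $PT_{\mathbb{C}^{2}}(\rho_n)=\sum_{i<k}\alpha_i PT_{\mathbb{C}^{2}}(\rho^i_n)=\sum_{i<k}\alpha_i\rho^i_{n-1}=\rho_{n-1}$.) Next, since each $\rho^i$ is quantum Schnorr random it passes $(S^m)_m$, i.e.\ $\lim_m\rho^i(S^m)=0$ for every $i<k$. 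A finite convex combination of null sequences of reals is null, so $\lim_m\rho(S^m)=\sum_{i<k}\alpha_i\lim_m\rho^i(S^m)=0$, i.e.\ $\rho$ passes $(S^m)_m$. As $(S^m)_m$ was arbitrary, $\rho$ is quantum Schnorr random.

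There is no genuine obstacle here; the only points needing care are recording that ``$\rho$ passes the test'' means exactly $\lim_m\rho(S^m)=0$, and noting that --- unlike arguments that push a test forward under some operation --- we use the \emph{same} test $(S^m)_m$ for every $\rho^i$, so its defining data (in particular the computability of $\sum_m\tau(S^m)$, which distinguishes Schnorr tests from strong Solovay tests) is never disturbed and plays no role. If one prefers to mirror the style of the q-MLR corollary, the identity $\rho(S^m)=\sum_{i<k}\alpha_i\rho^i(S^m)$ also yields the result by contradiction: if $\rho(S^m)>\delta$ for infinitely many $m$ then each such $m$ admits some $i$ with $\rho^i(S^m)>\delta$ (a weighted average exceeding $\delta$ forces at least one term above $\delta$), and since there are only finitely many $i<k$ the pigeonhole principle produces a single $i^{*}$ with $\rho^{i^{*}}(S^m)>\delta$ for infinitely many $m$ --- so $\rho^{i^{*}}$ fails the quantum Schnorr test $(S^m)_m$ at level $\delta$, contradicting quantum Schnorr randomness of $\rho^{i^{*}}$.
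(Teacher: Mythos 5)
Your proposal is correct and rests on the same key step as the paper's proof: writing $\rho(S^m)=\sum_{i<k}\alpha_i\rho^i(S^m)$ by linearity of the trace over the convex combination. Your primary packaging (a convex combination of null sequences is null, so $\lim_m\rho(S^m)=0$ directly) is a slight streamlining of the paper's argument, which instead argues by contradiction and pigeonhole --- exactly the variant you spell out at the end --- so the two proofs are essentially identical in content.
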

\begin{proof}
Suppose for a contradiction that there is a quantum Schnorr test $(G^{m})_{m\in \omega}$  and a $\delta>0$ such that $\exists^{\infty} m\in \omega$, $ \rho(G^m)>\delta$. Letting $G^m$ be $n_{m}$ by $n_{m}$, $\exists^{\infty} m$, such that \[\delta<\text{Tr}(\rho_{n_{m}}G^{m}_{n_{m}})=\text{Tr}(\sum_{i<k}\alpha_{i}\rho^{i}_{n_{m}}G^{m}_{n_{m}})= \sum_{i<k}\alpha_{i}\text{Tr}(\rho^{i}_{n_{m}}G^{m}_{n_{m}}).\]
By convexity of the sum, there is an $i$ such that Tr$(G^{m}_{n_{m}}\rho^{i}_{n_{m}})>\delta$. In summary, \[\exists^{\infty} m, \text{ there is an } i   \text{ such that Tr} (\rho^{i}_{n_{m}}G^{m}_{n_{m}})>\delta.\]
Since there are only finitely many $i$ s, by the pigeonhole principle, there is an $i$ such that $\exists^{\infty} m$ with Tr $ (\rho^{i}_{n_{m}}G^{m}_{n_{m}})>\delta$. So, $\exists^{\infty} m$ with $ \rho^{i} (G^{m})>\delta$. So, $\rho^i$ fails the q-Schnorr test $(G^{m})_{m\in \omega}$ and hence is not q-Schnorr. This is a contradiction.
\end{proof}
Noting that the above proof needed only the Solovay type of failing criterion, we get:
\begin{thm}

A convex combination of weak Solovay random states is q-weak Solovay  random. Formally, if $(\rho^i )_{i<k<\omega}$ are weak Solovay random states and $\sum_{i<k<\omega} \alpha_{i} =1$, then $\rho=\sum_{i<k}\alpha_{i}\rho^{i}$ is weak Solovay  random.
\end{thm}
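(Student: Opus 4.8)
The plan is to mimic the proof of the preceding theorem (convexity of quantum Schnorr randomness) essentially verbatim, exploiting the remark that the only feature of that argument actually used is the Solovay-type (infinitely-often) failure criterion, which is the very same criterion appearing in the definition of a strong Solovay test. So I would argue by contradiction: suppose $\rho=\sum_{i<k}\alpha_i\rho^i$ is not weak Solovay random. Then there is a strong Solovay test $(G^m)_m$ and a $\delta>0$ with $\rho(G^m)>\delta$ for infinitely many $m$. Writing $G^m$ as an $n_m$ by $n_m$ special projection, linearity of the trace gives, for each such $m$, that $\delta<\mathrm{Tr}(\rho_{n_m}G^m_{n_m})=\sum_{i<k}\alpha_i\,\mathrm{Tr}(\rho^i_{n_m}G^m_{n_m})$.

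Next I would invoke convexity of this finite sum: for each of these infinitely many $m$ there is an index $i(m)<k$ with $\mathrm{Tr}(\rho^{i(m)}_{n_m}G^m_{n_m})>\delta$. Since there are only finitely many possible values of the index, the pigeonhole principle produces a single $i^\ast<k$ such that $\mathrm{Tr}(\rho^{i^\ast}_{n_m}G^m_{n_m})>\delta$, i.e.\ $\rho^{i^\ast}(G^m)>\delta$, for infinitely many $m$. Hence $\rho^{i^\ast}$ fails the strong Solovay test $(G^m)_m$ at level $\delta$, contradicting the hypothesis that $\rho^{i^\ast}$ is weak Solovay random. This finishes the proof.

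I do not expect any real obstacle here. The one point deserving a clause of care is that the strong Solovay test $(G^m)_m$ is a fixed object: the same sequence of special projections witnesses failure whether we speak of $\rho$ or of $\rho^{i^\ast}$, so no uniformity, effectivity, or measure-computability condition has to be re-verified — in contrast to the Schnorr case, where one still has the computable-real constraint (which, for strong Solovay tests, is simply absent). Thus the argument is strictly easier than the quantum Schnorr convexity proof, and the content is carried entirely by linearity of the trace plus pigeonhole.
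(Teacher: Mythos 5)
Your proof is correct and is exactly the argument the paper intends: the paper states that the preceding (quantum Schnorr) convexity proof uses only the Solovay-type failure criterion and that this theorem's proof is "almost identical." Your linearity-of-trace plus pigeonhole argument, together with the observation that no measure-computability condition needs re-verifying for strong Solovay tests, matches the paper's approach.
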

The proof is almost identical to the previous one.
\subsection{Nesting property of quantum Martin-L{\"o}f tests}
It is interesting to see which classical results carry over to the quantum realm. For example, the existence of a universal MLT, $(U_{n})_n$ such that a bitstring is MLR if and only if it passes this $(U_{n})_n$ does carry over \cite{unpublished}. The `nesting property' of the classical Martin-L{\"o}f test says that we can, without loss of generality assume the universal test $(U_{n})_n$ to be nested; i.e., to satisfy $U_{n+1} \supseteq U_n$ for all $n$. We extend this property to the quantum setting:
\begin{thm}
\label{thm:00}
There is a q-MLT, $(Q^{m})_{m\in \mathbb{N}}$ with the properties (1) If a state $\rho$ fails the universal q-Martin-L{\"o}f test $(G^{m})_{m\in \mathbb{N}}$ at $\delta>0$, then, it also fails $(Q^{m})_{m\in \mathbb{N}}$ at $\delta>0$ (2) If $Q^{m}=(Q^{m}_{n})_{n\in \mathbb{N}}$ for all m, then for all $m$ and $n$, range$(Q^{m+1}_{n}) \subseteq $ range $(Q^{m}_{n}) $. In particular, $Q^{m+1} \leq Q^m$ for all m.
\end{thm}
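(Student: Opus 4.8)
The plan is to imitate the classical move of replacing a Martin-L{\"o}f test by its nested "tail union". Fix the universal q-MLT $(G^m)_{m\in\mathbb N}$ (which exists by \cite{unpublished}), write $G^m=(G^m_n)_n$, and define $Q^m$ levelwise by joining the tail of the $G^k$ in the projection lattice: let $Q^m_n$ be the orthogonal projection onto $\sum_{k>m}\mathrm{range}(G^k_n)\subseteq\mathbb C^{2^n}$, i.e.\ $Q^m_n=\bigvee_{k>m}G^k_n$, and $Q^m=(Q^m_n)_n$. Since the tail starts at $m+1$, property (2) will be immediate from the definition, and the domination $\mathrm{range}(G^{m+1}_n)\subseteq\mathrm{range}(Q^m_n)$ built into the construction will give property (1); the work is in checking that $(Q^m)_m$ is a genuine, uniformly computable q-MLT.

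First I would argue the levelwise join is a legitimate computable object. The key observation is that $G^k_n=0$ whenever $k>n$: because $\mathrm{range}(G^k_n\otimes I)\subseteq\mathrm{range}(G^k_{n+1})$ forces $\mathrm{rank}(G^k_{n+1})\ge 2\,\mathrm{rank}(G^k_n)$, the quantity $2^{-n}\mathrm{rank}(G^k_n)$ is nondecreasing in $n$ with limit $\tau(G^k)\le 2^{-k}$, so $\mathrm{rank}(G^k_n)\le 2^{n-k}<1$ for $k>n$. Hence $Q^m_n=\bigvee_{m<k\le n}G^k_n$ is a \emph{finite} join of special projections. A finite join of special projections is again special — the sum of the ranges is spanned by vectors with complex algebraic coordinates (the columns of the $G^k_n$), Gram--Schmidt keeps us inside the algebraic numbers, and the resulting projection is computable from the $G^k_n$ — so $(Q^m_n)_{m,n}$ is a uniformly computable double sequence of special projections.

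Next I would verify $(Q^m)_m$ is a nested q-MLT. Using $(\bigvee_k P_k)\otimes I=\bigvee_k(P_k\otimes I)$ together with the q-$\Sigma^0_1$ property of each $G^k$, one gets $\mathrm{range}(Q^m_n\otimes I)=\sum_{k>m}\mathrm{range}(G^k_n\otimes I)\subseteq\sum_{k>m}\mathrm{range}(G^k_{n+1})=\mathrm{range}(Q^m_{n+1})$, so each $Q^m$ is a q-$\Sigma^0_1$ class. Since the rank of a join is at most the sum of the ranks, $2^{-n}\mathrm{rank}(Q^m_n)\le\sum_{k>m}2^{-n}\mathrm{rank}(G^k_n)\le\sum_{k>m}\tau(G^k)\le\sum_{k>m}2^{-k}=2^{-m}$, giving $\tau(Q^m)\le 2^{-m}$; thus $(Q^m)_m$ is a q-MLT. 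Nestedness (property 2) is then immediate: $Q^{m+1}_n=\bigvee_{k>m+1}G^k_n$, so $\mathrm{range}(Q^{m+1}_n)\subseteq\mathrm{range}(Q^m_n)$, i.e.\ $Q^{m+1}\le Q^m$ for all $m,n$.

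For property (1): since $\mathrm{range}(G^{m+1}_n)\subseteq\mathrm{range}(Q^m_n)$ we have $G^{m+1}_n\le Q^m_n$ as projections, so $\mathrm{Tr}(\rho_nQ^m_n)\ge\mathrm{Tr}(\rho_nG^{m+1}_n)$ for every $n$; recalling that $\mathrm{Tr}(\rho_nS_n)$ is nondecreasing in $n$ for any q-$\Sigma^0_1$ class $S$ (by $PT_{\mathbb C^2}(\rho_{n+1})=\rho_n$ and $\mathrm{range}(S_n\otimes I)\subseteq\mathrm{range}(S_{n+1})$), passing to the limit gives $\rho(Q^m)\ge\rho(G^{m+1})$. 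Therefore $\inf_m\rho(Q^m)\ge\inf_m\rho(G^{m+1})=\inf_m\rho(G^m)$, so if $\rho$ fails $(G^m)_m$ at order $\delta$ then it fails $(Q^m)_m$ at order $\delta$. The main obstacle is the middle step — confirming the levelwise join is a \emph{bona fide} q-$\Sigma^0_1$ class, in particular uniformly computable — and the linchpin there is the elementary rank bound $\mathrm{rank}(G^k_n)\le 2^{n-k}$, which collapses the a priori infinite join at level $n$ to a finite one; everything else reduces to basic facts about the projection lattice and monotonicity of $\mathrm{Tr}(\rho_nS_n)$.
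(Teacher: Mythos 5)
Your proposal is correct and follows essentially the same route as the paper: define $Q^m_n$ as the projection onto the span of the ranges of the tail of the $G^k_n$, use the rank bound $\mathrm{rank}(G^k_n)\le 2^{n-k}$ to control $\tau(Q^m)$ and make the join finite, and derive (1) from $G^{m+1}_n\le Q^m_n$ together with monotonicity of $\mathrm{Tr}(\rho_n S_n)$. The only difference is cosmetic — you start the tail at $k>m$ and get $\tau(Q^m)\le 2^{-m}$ directly, whereas the paper starts at $k\ge m$ and reindexes.
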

\begin{proof}
Informally speaking, we want to let $Q^m$ be $\sum_{i>m} G^{i}$. Precisely, we build $Q^m$ level by level. For any natural numbers $i \leq n$, let\[G^{i}_{n} = \sum_{j=1}^{2^{n-i}} |v^{i,n}_{j}\big>\big<v^{i,n}_{j}|.\] Let 
\[S^{m}_{n}:= \text{span} \bigcup_{i=m}^{n} \{v^{i,n}_{j}: 1\leq j \leq 2^{n-i}\},
\]
and let $Q^{m}_{n}$ be the special projection onto $S^{m}_{n}$. Let $Q^{m}= (Q^{m}_{n})_{n}$. Fix an $m$. We see that $Q^{m}_{n} \leq Q^{m}_{n+1}$, since $G^{i}_{n} \leq G^{i}_{n+1}$ holds for all $i,n$. So, $Q^m$ is a q-$\Sigma^{0}_1$ class. The dimension of $S^{m}_{n}$ is at most $\sum_{i=m}^{n} \text{Trace}(G^{i}_{n}) \leq \sum_{i=m}^{n} 2^{n-i} < 2^{n-m+1}.$ So, $(Q^{m})_{m=2}^{\infty}$ is a q-MLT. Let $m$ and $n$ be arbitrary and $n\geq m+1$. Then, clearly, by definition of $S^{m}_{n}$, we see that range$(Q^{m+1}_{n}) \subseteq $ range $(Q^{m}_{n}) $. So, the nesting property holds. Let $\rho= (\rho_{n})_{n}$ be a state. By the nesting, and by properties of projection operators, we have that for a fixed $m$ and all $n$,
\[\text{Tr}(\rho_{n}Q^{m+1}_{n}) \leq \text{Tr}(\rho_{n}Q^{m}_{n}) \leq \rho(Q^{m}).\]
So, $\rho(Q^{m+1})=$ sup$_{n}\text{Tr}(\rho_{n}Q^{m+1}_{n}) \leq \rho(Q^{m})$ for all $m$. (1) clearly holds.
\end{proof}

\section{Randomness for diagonal states}
\label{sec:measures}

A state $\rho=(\rho_{n})_n$ is defined to be \emph{diagonal} if $\rho_n$ is diagonal for all $n$. So, each $\rho_n$ in a diagonal state represents a  mixture of separable states. A diagonal $\rho=(\rho_{n})_n$ can be thought of as a measure on Cantor space, denoted by $\mu_{\rho}$: if $\sigma \in 2^n$, we define $\mu_{\rho}(\llbracket\sigma\rrbracket):= \big<\sigma|\rho_{n}|\sigma\big>$. We will write $\mu_{\rho}(\sigma)$ instead of $\mu_{\rho}(\llbracket\sigma\rrbracket)$. $\mu_{\rho}$ is easily seen to be a measure by noting that the partial trace over the last qubit of $\rho_{n+1}$ equals $\rho_n$ for all $n$. Recalling the notation in Remark \ref{rem:notation} and as $S$ is prefix free, we have, \[\mu_\rho (\llbracket S \rrbracket)= \sum_{\sigma \in S} \mu_\rho (\sigma) =\sum_{\sigma \in S} \big<\sigma|\rho_{n}|\sigma\big>=\text{Tr}(\rho_n P_S).\]
This will be used frequently. Nies and Stephan have recently defined a notion of randomness for measures on Cantor space called Martin-L{\"o}f absolute  continuity\cite{unpublished2}.
\begin{defn}
A measure $\pi$ on Cantor space is called \emph{ Martin-L{\"o}f absolutely continuous} if $\inf_{m}\pi(G_{m})=0$ for each classical MLT $(G_{m})_{m\in \mathbb{N}}$.
\end{defn}
This notion turns out to be equivalent to quantum Martin-L{\"o}f randomness in the sense that for a diagonal $\rho$, $\rho$ is q-MLR if and only if $\mu_\rho$ is  Martin-L{\"o}f absolutely continuous. It is easy to see that if a diagonal $\rho$ is q-MLR, then $\mu_\rho$ is Martin-L{\"o}f absolutely continuous. We now show the other direction.
\begin{thm}
\label{thm:30}
Let $\rho$ be diagonal. If it fails a q-MLT $(G^{m})_{m\in \mathbb{N}}$ at order $\delta$, then there is a classical MLT, $(C^{m})_{m\in \mathbb{N}}$ such that  $\inf_{m}\mu_{\rho}(C^{m})>\delta/2$.
\end{thm}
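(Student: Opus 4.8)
The plan is to turn the quantum test $(G^m)_{m\in\mathbb N}$ into a classical one by replacing each special projection $G^m_n$ with the projection onto the computational basis states it "mostly covers". Concretely, write $G^m_n$ in its eigenbasis and, for each $\sigma\in 2^n$, ask how much of $|\sigma\rangle$ lies in $\mathrm{range}(G^m_n)$: set $C^m_n := \{\sigma\in 2^n : \langle\sigma|G^m_n|\sigma\rangle > 1/2\}$ and $C^m := (\llbracket C^m_n\rrbracket)_n$. First I would check that $(C^m)_m$ is (after a harmless constant shift in the index, as in Theorem \ref{thm:00}) a classical Martin-L\"of test: uniform computability is immediate since $G^m_n$ is computable and the condition $\langle\sigma|G^m_n|\sigma\rangle>1/2$ is decidable on algebraic entries; monotonicity $\llbracket C^m_n\rrbracket\subseteq\llbracket C^m_{n+1}\rrbracket$ follows from $\mathrm{range}(G^m_n\otimes I)\subseteq\mathrm{range}(G^m_{n+1})$, because if $\sigma\in C^m_n$ then both $\sigma 0$ and $\sigma 1$ have at least half their mass in $\mathrm{range}(G^m_{n+1})$; and the measure bound comes from $|C^m_n|/2 \le \sum_{\sigma}\langle\sigma|G^m_n|\sigma\rangle = \mathrm{Tr}(G^m_n) = |G^m_n| \le 2^{n-m}$, so $\mu(\llbracket C^m_n\rrbracket) = 2^{-n}|C^m_n| \le 2^{1-m}$.

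The core of the argument is the measure estimate $\inf_m \mu_\rho(C^m) > \delta/2$. Fix $m$ and pick $n$ with $\mathrm{Tr}(\rho_n G^m_n) > \delta$; since $\rho$ is diagonal, $\rho_n = \sum_{\sigma\in 2^n} \mu_\rho(\sigma)\,|\sigma\rangle\langle\sigma|$, hence
\[
\delta < \mathrm{Tr}(\rho_n G^m_n) = \sum_{\sigma\in 2^n} \mu_\rho(\sigma)\,\langle\sigma|G^m_n|\sigma\rangle
= \sum_{\sigma\in C^m_n}\mu_\rho(\sigma)\langle\sigma|G^m_n|\sigma\rangle + \sum_{\sigma\notin C^m_n}\mu_\rho(\sigma)\langle\sigma|G^m_n|\sigma\rangle .
\]
On the first sum bound $\langle\sigma|G^m_n|\sigma\rangle\le 1$ to get $\mu_\rho(\llbracket C^m_n\rrbracket)$; on the second sum use $\langle\sigma|G^m_n|\sigma\rangle\le 1/2$ and $\sum_\sigma\mu_\rho(\sigma)\le 1$ to get a contribution at most $1/2$. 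Wait — that only yields $\mu_\rho(\llbracket C^m_n\rrbracket) > \delta - 1/2$, which is useless for small $\delta$. So the threshold $1/2$ must instead be a parameter $\theta$ chosen close to $1$, or — better — I would take $\theta$ small and argue the other way: split at threshold $\theta$, bound the low-mass part by $\theta\cdot 1$ and the high-mass part's \emph{diagonal entries} by $1$ but now also use that $|C^m_n|\theta \le \mathrm{Tr}(G^m_n)$ only controls cardinality, not $\mu_\rho$-measure. The right move is: for each $m$, instead use the test $(G^{m+j})$ for a suitable shift so that $\mathrm{Tr}(\rho_n G^{m+j}_n)>\delta$ still holds (failure at order $\delta$ gives this for \emph{all} levels of the test simultaneously), pick $\theta = \delta/2$, and get $\mu_\rho(\llbracket C^m_n\rrbracket)\ge \mathrm{Tr}(\rho_nG^m_n) - \theta > \delta - \delta/2 = \delta/2$. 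That is the argument, with $C^m_n := \{\sigma : \langle\sigma|G^m_n|\sigma\rangle > \delta/2\}$ — the threshold depends on $\delta$, which is fine since $(C^m)_m$ is built knowing $\delta$.

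The main obstacle is exactly this bookkeeping: the threshold defining $C^m_n$ must depend on $\delta$, and one must re-verify that with threshold $\delta/2$ the family $(C^m)_m$ is still a genuine MLT — the measure bound becomes $\mu(\llbracket C^m_n\rrbracket) = 2^{-n}|C^m_n| \le 2^{-n}\cdot\frac{2}{\delta}\mathrm{Tr}(G^m_n) \le \frac{2}{\delta}2^{-m}$, so after shifting the index by $\lceil\log_2(2/\delta)\rceil$ we recover a bona fide test, and the failure order degrades only by the same constant shift, which is absorbed since we only need $\inf_m\mu_\rho(C^m)>\delta/2$ rather than a specific rate. I would also double-check the monotonicity step under the new threshold: $\sigma\in C^m_n$ means $\langle\sigma|G^m_n|\sigma\rangle>\delta/2$, and since $G^m_n\otimes I \le G^m_{n+1}$, $\langle\sigma b|G^m_{n+1}|\sigma b\rangle \ge \langle\sigma b|(G^m_n\otimes I)|\sigma b\rangle = \langle\sigma|G^m_n|\sigma\rangle > \delta/2$ for $b\in\{0,1\}$, so $\sigma 0,\sigma 1\in C^m_{n+1}$ and $\llbracket C^m_n\rrbracket\subseteq\llbracket C^m_{n+1}\rrbracket$ as required.
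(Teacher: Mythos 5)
Your proof takes essentially the same route as the paper's: define $C^m_n$ as the set of computational basis strings $\sigma$ with $\langle\sigma|G^m_n|\sigma\rangle$ above a threshold proportional to $\delta$, get the measure bound from the counting estimate $|C^m_n|\cdot\theta\le \mathrm{Tr}(G^m_n)\le 2^{n-m}$ (the paper's Lemma \ref{lem:30}), get monotonicity from $G^m_n\otimes I\le G^m_{n+1}$, and split the trace $\mathrm{Tr}(\rho_nG^m_n)$ over $C^m_n$ and its complement. The only substantive difference is the threshold: the paper uses $\delta/4$, which yields the uniform bound $\mu_\rho(C^m)\ge 3\delta/4$ for every $m$ and hence $\inf_m\mu_\rho(C^m)\ge 3\delta/4>\delta/2$; your choice $\theta=\delta/2$ gives only the pointwise strict inequality $\mu_\rho(C^m)>\delta/2$, and an infimum of quantities each strictly greater than $\delta/2$ need not itself exceed $\delta/2$. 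This is easily repaired — either shrink the threshold to $\delta/4$ as the paper does, or observe that failing at order $\delta$ means $\inf_m\rho(G^m)>\delta$, so there is a fixed $\epsilon>0$ with $\mathrm{Tr}(\rho_nG^m_n)>\delta+\epsilon$ available for every $m$, giving $\inf_m\mu_\rho(C^m)\ge\delta/2+\epsilon$ — but as written the last step does not quite deliver the stated strict inequality on the infimum.
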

\begin{proof}
We isolate here a simple but useful property. 
\begin{lem}
\label{lem:30}
Let $n$ be a natural number, $E=(e_{i})_{i=1}^{2^{n}}$ be any orthonormal basis for $\mathbb{C}^{2^{n}}$ and $F$ be any hermitian, orthonormal projection matrix acting on $\mathbb{C}^{2^{n}}$. For any $\delta>0$, let  \[S^{\delta}_{E,F}:= \left\{e_{i} \in E: \big<e_{i}|F|e_{i}\big> > \delta \right\}.\]
Then, $|S^{\delta}_{E,F}| < \delta^{-1} \text{Tr}(F)$.
\end{lem}
\begin{proof}
Note that since $F$ is a hermitian orthonormal projection, $\big<e_{i}|F|e_{i}\big>=\big<Fe_{i}|F e_{i}\big> = |F e_{i}|^{2}\geq 0$. So,
\[\delta|S^{\delta}_{E,F}| < \sum_{e_{i} \in S^{\delta}_{E,F}}\big<e_{i}|F|e_{i}\big> \leq \sum_{i \leq 2^{n}}\big<e_{i}|F|e_{i}\big> = \text{Tr}(F).\]
\qedhere
\end{proof}
We now prove Theorem \ref{thm:30}. The intuition is as follows: given a special projection, we take the set of bitstrings (thought of as qubitstrings) `close' to it. If the special projection `captures' $\delta$ much mass of $\rho$, then the projection onto the span of these qubitstrings must capture atleast $\delta/2$ much mass of $\rho$.
$\sigma$ will always denote a finite length classical bit string and $|\sigma\big>$, the corresponding element of the standard computational basis. We may assume that $\delta$ is rational. Fix $m$. We describe the construction of $C^{m}= (C^{m}_{n})_{n\in \mathbb{N}}$ (See \ref{def:10}). Let  \[T^{m}_{n}:= \left\{\sigma \in 2^{n}: \big<\sigma|G^{m}_{n}|\sigma\big> > \dfrac{\delta}{4}\right\}.\]
These are those standard basis vectors `close' to $G^{m}_{n}$. Let
\[C^{m}_{n}= \bigcup_{\sigma \in T^{m}_{n}} \llbracket \sigma \rrbracket.\] 
\begin{lem}
$C^{m} $ is a $\Sigma^{0}_1$ class for any $m$.
\end{lem}
\begin{proof}
It is easy to see that for all $\sigma \in T^{m}_{n}$ and $i\in \{0,1\}$, \[\big<\sigma i|G^{m}_{n+1}|\sigma i \big > \geq \big<\sigma|G^{m}_{n}|\sigma\big> > \dfrac{\delta}{4}.\]
So, $\{\sigma i: \sigma \in T^{m}_n , i\in \{0,1\}\} \subseteq T^{m}_{n+1}$. Also note that $T^{m}_{n}$ is uniformly computable in $n$ since $G^{m}_{n}$ is.
\end{proof}
\begin{lem}
$(C^{m})_{m\in \mathbb{N}}$ is a MLT.
\end{lem}

\begin{proof}
Fix $m$. Letting $E=B^n$ and $F=G^{m}_n$ in Lemma \ref{lem:30} and by definition of q-MLT,
\[ |T^{m}_{n}| < \dfrac{4}{\delta}
2^n \tau(G^{m})\leq \dfrac{4}{\delta} 2^{n-m}.\]
So, $\mu(C^{m}) < 2^{-m}\dfrac{4}{\delta}$. $C^{m}$ is computable in $m$ since $G^{m}$ is.
\end{proof}
 
Now we show that $\inf_{m}\mu_{\rho}(C^{m})>\delta/2$.
Fix a $m$ and a $n$ (depending on $m$) such that Tr$(\rho_{n} G^{m}_{n})>\delta$. Let $\rho_{n} = \sum_{\sigma \in 2^{n} } \alpha_{\sigma} |\sigma\big>\big<\sigma|.$
Then, \begin{align*}
&\delta<\text{Tr}(\rho_{n} G^{m}_{n})
=\sum_{\sigma \in 2^{n} } \alpha_{\sigma} \big<\sigma|G^{m}_{n}|\sigma\big> = \sum_{\sigma \in T^{m}_{n}} \alpha_{\sigma} \big<\sigma|G^{m}_{n}|\sigma\big>  + \sum_{\sigma \in 2^{n}\backslash T^{m}_{n} } \alpha_{\sigma} \big<\sigma|G^{m}_{n}|\sigma\big>\\
&\leq \sum_{\sigma \in T^{m}_{n}}  \alpha_{\sigma}  + \sum_{\sigma \in 2^{n} \backslash T^{m}_{n}} \alpha_{\sigma} \dfrac{\delta}{4}\leq \sum_{\sigma \in T^{m}_{n}}  \alpha_{\sigma}  +  \dfrac{\delta}{4}=\text{Tr}(\rho_n  P_{C^{m}_{n}}) + \dfrac{\delta}{4}= \mu_{\rho}(C^{m}_{n}) + \dfrac{\delta}{4}.
\end{align*}
The last equality follows as $T_{n}^m$ is prefix free. So, $\mu_{\rho}(C^{m})\geq \mu_{\rho}(C^{m}_{n})\geq 3\delta/4$.
%
\end{proof}

Nies and Scholz showed that a measure, $\mu$ is Martin-L{\"o}f absolutely continuous if and only if for any Solovay test $(S_k)_k$, $\lim_{k}\mu (S_k )=0$\cite{unpublished2}. Adapting the proof of Theorem \ref{thm:000} yields another proof of this.
\begin{thm}
\label{thm:2}
Let $\rho$ be diagonal. If for some Solovay test $(S_k)_k$ and $\delta>0$ we have $\exists^{\infty} k , \mu_\rho (S_k )>\delta$, then there is a Martin-L{\"o}f test $(J_m )_m$ such that $\inf_{m}\mu_\rho (J^m )>\delta/2$. 
\end{thm}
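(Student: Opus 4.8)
The plan is to mimic the construction in the proof of Theorem~\ref{thm:000}, which in the diagonal setting collapses to the classical argument that failing a Solovay test forces failing a Martin-L\"of test. Since $\rho$ is diagonal, for every finite set of strings $C$ we have $\mu_\rho(\llbracket C\rrbracket)=\mathrm{Tr}(\rho_{|C|}P_C)$, so a ``$\Sigma^0_1$ subspace'' spanned by standard-basis vectors is just a $\Sigma^0_1$ class, and each $|\sigma\big>$ is either contained in such a subspace or orthogonal to it. Consequently the Cauchy--Schwarz and Jensen steps that cost Theorem~\ref{thm:000} a passage from $\delta$ to $\delta^2/72$ do not occur here, and only a single Markov-type factor survives.

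First I would pass to a tail of $(S_k)_k$ so that $\sum_k\mu(S_k)<1$; this is harmless, since infinitely many $k$ with $\mu_\rho(S_k)>\delta$ remain. Then, uniformly in $m$, define
\[ J^m \;:=\; \bigl\{X\in 2^\omega : X\in S_k \text{ for at least } 2^m \text{ many } k\bigr\}. \]
The sequence $(J^m)_m$ is a computable sequence of $\Sigma^0_1$ classes (enumerate a cylinder $\llbracket\sigma\rrbracket$ once at least $2^m$ of the $S_k$ have enumerated a prefix of $\sigma$), it is decreasing in $m$, and by Markov's inequality for Lebesgue measure $\mu(J^m)\le 2^{-m}\sum_k\mu(S_k)\le 2^{-m}$; hence $(J^m)_m$ is a Martin-L\"of test. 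At level $n$ the special projection onto the span of the length-$n$ strings generating $J^m$ plays the role of $G^m_n$ from Theorem~\ref{thm:000}, and the rank bound $<2^{n-m}$ is the diagonal analogue of Lemma~\ref{lem:mltbnd}.

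The substantive point is the lower bound $\mu_\rho(J^m)>\delta/2$ for every $m$. Fix $m$, set $K=\{k:\mu_\rho(S_k)>\delta\}$, and for large $N$ let $N'=|K\cap[0,N)|$, which may be taken arbitrarily large. The function $g(X)=\#\{k\in K,\ k<N:X\in S_k\}$ satisfies $g\le N'$ pointwise and $\int g\,d\mu_\rho=\sum_{k\in K,\,k<N}\mu_\rho(S_k)>\delta N'$, so splitting this integral at $t=\delta N'/2$ forces $\mu_\rho(g>\delta N'/2)>\frac{\delta N'-t}{N'-t}=\frac{\delta}{2-\delta}$. Choosing $N$ with $\delta N'/2\ge 2^m$, the event $\{g>\delta N'/2\}$ is contained in $J^m$, so $\mu_\rho(J^m)>\frac{\delta}{2-\delta}>\frac{\delta}{2}$; taking the infimum over $m$ yields $\inf_m\mu_\rho(J^m)\ge\frac{\delta}{2-\delta}>\frac{\delta}{2}$.

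I expect the only delicate step to be this averaging argument: obtaining a mass bound on the heavy-overlap event that is uniform in $m$, with the factor $\tfrac12$ loss being intrinsic to the Markov split (the diagonal counterpart of the loss incurred in Theorem~\ref{thm:000}). The remaining verifications --- that $(J^m)_m$ is uniformly $\Sigma^0_1$ with $\mu(J^m)\le 2^{-m}$, and that the naive string description of $J^m$ matches a maximal-orthonormal-set description in the style of the construction of $G^m$ --- are routine.
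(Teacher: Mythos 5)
Your proof is correct and follows essentially the same route as the paper's: both define $J^m$ as the set of reals covered by ``many'' of the $S_k$, bound its Lebesgue measure by a Markov/counting argument, and bound $\mu_\rho(J^m)$ from below by a first-moment split of $\sum_k \mu_\rho(S_k)$ over the heavy-overlap and light-overlap parts. The only differences are bookkeeping: the paper thresholds the overlap count at $2^{m-1}\delta$ (so the $\delta$-dependence lands in the Lebesgue bound, which then needs a reindexing to be literally $\le 2^{-m}$), whereas you threshold at $2^m$ and recover the constant via the split at $t=\delta N'/2$, obtaining $\delta/(2-\delta)>\delta/2$ uniformly in $m$.
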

The theorem will follow from the two lemmas below. Write $S^k = (S^{k}_n )_n$ as in Definition \ref{def:10}.
Without loss of generality, let $S^{k}_n =\emptyset$ for $k>n$. Let
\[C^{m}_{t}= \left\{ \sigma  \in 2^t : \sum_{k\leq t} |\big< \sigma| S^{k}_{t}|\sigma \big>|> 2^{m-1}\delta\right\},\] 
and let $G^{m}_{t}:= P_{C^{m}_{t}}$ (See Remark \ref{rem:notation}).
Let $G^{m}=(G^{m}_{n})_{n}$. It is easy to see that $G^m$ is a q-$\Sigma^{0}_1$ set for each $m$. Let $J^{m}_n := \llbracket C^{m}_n \rrbracket$ and $J^m = (J^{m}_n)_n$. One can check that that $(J^m )_m$ is a MLT if and only if $(G^{m})_{m}$ is quantum Martin-L{\"o}f test. So, $(J^m )_m$ is a MLT since:

\begin{lem}
$(G^{m})_{m}$ is quantum Martin-L{\"o}f test.
\end{lem}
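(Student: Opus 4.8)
The plan is to check the three requirements in the definition of a quantum Martin-L\"of test for $(G^{m})_{m}$: that each $G^{m}=(G^{m}_{n})_{n}$ is a q-$\Sigma^{0}_{1}$ set, that the sequence is uniformly computable in $m$, and that $\tau(G^{m})$ decays like $2^{-m}$. The first two are routine. Writing $S^{k}_{t}\subseteq 2^{t}$ for the generating strings of $S^{k}$ at level $t$, the diagonal entry $\langle\sigma|S^{k}_{t}|\sigma\rangle$ is just the indicator of $\sigma\in S^{k}_{t}$ (in particular it lies in $\{0,1\}$, so the absolute value in the definition of $C^{m}_{t}$ is immaterial), so $C^{m}_{t}$ is the set of $\sigma\in 2^{t}$ lying in more than $2^{m-1}\delta$ of the sets $S^{1}_{t},\dots,S^{t}_{t}$. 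Since $(S^{k})_{k}$ is a uniformly computable Solovay test and $\delta$ may be taken rational, $C^{m}_{t}$ — hence $G^{m}_{t}=P_{C^{m}_{t}}$ — is computable uniformly in $m,t$ with rational entries. Monotonicity, $\mathrm{range}(G^{m}_{n}\otimes I)\subseteq\mathrm{range}(G^{m}_{n+1})$, follows from $\llbracket S^{k}_{t}\rrbracket\subseteq\llbracket S^{k}_{t+1}\rrbracket$ together with the fact that the index $k$ ranges over a longer initial segment at level $t+1$, so membership in $C^{m}$ is preserved under one-bit extensions.

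The content is the bound on $\tau(G^{m})$, which I would obtain from a first-moment (Markov) argument, exactly paralleling Lemma \ref{lem:30} and Theorem \ref{thm:30}. First, $\tau(G^{m})=\lim_{t}2^{-t}\operatorname{rank}(G^{m}_{t})=\lim_{t}2^{-t}|C^{m}_{t}|=\lim_{t}\mu(J^{m}_{t})$. Summing the diagonal entries against the uniform distribution on $2^{t}$ and exchanging the order of summation,
\[\sum_{\sigma\in 2^{t}}2^{-t}\sum_{k\leq t}\langle\sigma|S^{k}_{t}|\sigma\rangle=\sum_{k\leq t}2^{-t}|S^{k}_{t}|=\sum_{k\leq t}\mu(\llbracket S^{k}_{t}\rrbracket)\leq\sum_{k}\mu(S^{k})=:\Sigma<\infty,\]
the inequality because $\llbracket S^{k}_{t}\rrbracket\subseteq S^{k}$ and the finiteness because $(S^{k})_{k}$ is a Solovay test. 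Applying Markov's inequality to the nonnegative quantity $\sigma\mapsto\sum_{k\leq t}\langle\sigma|S^{k}_{t}|\sigma\rangle$ on $(2^{t},2^{-t})$ gives
\[\mu(J^{m}_{t})=2^{-t}|C^{m}_{t}|\leq\frac{1}{2^{m-1}\delta}\sum_{\sigma\in 2^{t}}2^{-t}\sum_{k\leq t}\langle\sigma|S^{k}_{t}|\sigma\rangle\leq\frac{2\Sigma}{\delta}\,2^{-m},\]
and letting $t\to\infty$ yields $\tau(G^{m})\leq(2\Sigma/\delta)2^{-m}$ for every $m$.

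Thus $(G^{m})_{m}$ satisfies the measure requirement up to the fixed multiplicative constant $2\Sigma/\delta$; reindexing by a constant $c$ with $2^{c}\geq 2\Sigma/\delta$ (passing to $(G^{m+c})_{m}$), as is done for $(Q^{m})_{m=2}^{\infty}$ in Theorem \ref{thm:00}, produces a genuine q-MLT, and this reindexing does not affect the failing statements we care about. I do not expect a real obstacle here; the only care needed is bookkeeping: that $\langle\sigma|S^{k}_{t}|\sigma\rangle$ is the membership indicator, that $\mu(\llbracket S^{k}_{t}\rrbracket)$ increases to $\mu(S^{k})$ in $t$ so the tail is controlled by the (finite) Solovay weight, and that $\delta$ can be replaced by a rational lower bound to keep $C^{m}_{t}$ computable.
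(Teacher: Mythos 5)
Your proof is correct and takes essentially the same route as the paper, whose entire proof is the citation ``Identical to the proof of Lemma \ref{lem:mltbnd}'': your Markov-inequality computation is exactly that double-counting bound ($|C^{m}_{t}|\,2^{m-1}\delta < \sum_{\sigma}\sum_{k\le t}\langle\sigma|S^{k}_{t}|\sigma\rangle \le 2^{t}\Sigma$), and your verification of uniform computability and range-monotonicity is the routine part the paper leaves implicit. If anything you are more careful than the paper in flagging that the bound obtained is $(2\Sigma/\delta)2^{-m}$ rather than $2^{-m}$ and must be absorbed by a constant reindexing.
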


\begin{proof}
Identical to the proof of \ref{lem:mltbnd}.
\end{proof}
\begin{lem}
We have that $\inf_{m}\mu_\rho (J^m )>\delta/2$.
\end{lem}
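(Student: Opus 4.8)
The plan is to run the same weighted pigeonhole estimate that drives the proof of Theorem~\ref{thm:000}; in the diagonal setting it collapses to a purely combinatorial count, since $\rho_n$ and the special projections built from standard-basis strings are simultaneously diagonal. Throughout, read $S^k_t$ as the projection $P_{A^k_t}=\sum_{\tau\in A^k_t}|\tau\big>\big<\tau|$ onto the span of the length-$t$ strings whose cylinders form the $t$-th approximation $S^k_t=\llbracket A^k_t\rrbracket$ of $S^k$; then for $\sigma\in 2^t$ we have $\big<\sigma|S^k_t|\sigma\big>=1$ if $\sigma\in A^k_t$ and $0$ otherwise, and $\text{Tr}(\rho_t P_{A^k_t})=\mu_\rho(\llbracket A^k_t\rrbracket)$ by diagonality of $\rho_t$. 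Recall also that, by the previous lemma, $(J^m)_m$ is a genuine Martin-L\"of test, so $\mu_\rho(J^m)=\sup_n\mu_\rho(J^m_n)$.

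Fix $m$. The first step is to extract a single stage $n$ witnessing many failures at once. Since $\mu_\rho(S^k)>\delta$ for infinitely many $k$, fix a set $K$ of such $k$ with $|K|=2^{m+1}$. For each $k\in K$ the numbers $\mu_\rho(\llbracket A^k_t\rrbracket)$ increase (continuity of $\mu_\rho$ from below) to $\mu_\rho(S^k)>\delta$, so there is $t_k$ with $\mu_\rho(\llbracket A^k_{t_k}\rrbracket)>\delta$; put $n:=\max(\max_{k\in K}t_k,\ \max K)$, so that $k\le n$ and $\text{Tr}(\rho_n P_{A^k_n})>\delta$ for every $k\in K$, the latter by monotonicity of the approximations. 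Write $\rho_n=\sum_{\sigma\in 2^n}\alpha_\sigma|\sigma\big>\big<\sigma|$ with $\alpha_\sigma\ge 0$ and $\sum_\sigma\alpha_\sigma=1$, and set $v(\sigma):=|\{k\in K:\sigma\in A^k_n\}|\le|K|$.

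The heart of the proof is then a double count. Summing $\text{Tr}(\rho_n P_{A^k_n})>\delta$ over $k\in K$ gives $\sum_\sigma\alpha_\sigma v(\sigma)>|K|\delta=2^{m+1}\delta$. Let $E:=\{\sigma\in 2^n:v(\sigma)>2^{m-1}\delta\}$. The strings outside $E$ contribute at most $2^{m-1}\delta\sum_{\sigma}\alpha_\sigma=2^{m-1}\delta$ to this sum, so $\sum_{\sigma\in E}\alpha_\sigma v(\sigma)>2^{m+1}\delta-2^{m-1}\delta=3\cdot 2^{m-1}\delta$; since $v(\sigma)\le|K|=2^{m+1}$ on $E$, this forces $\sum_{\sigma\in E}\alpha_\sigma>\frac{3\delta}{4}$. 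Finally, each $\sigma\in E$ satisfies $\sum_{k\le n}\big<\sigma|S^k_n|\sigma\big>\ge v(\sigma)>2^{m-1}\delta$ (because $K\subseteq\{1,\dots,n\}$), i.e.\ $\sigma\in C^m_n$; hence $E\subseteq C^m_n$ and, using $\mu_\rho(\llbracket C^m_n\rrbracket)=\text{Tr}(\rho_n P_{C^m_n})=\sum_{\sigma\in C^m_n}\alpha_\sigma$,
\[\mu_\rho(J^m)\ \ge\ \mu_\rho(J^m_n)\ =\ \mu_\rho(\llbracket C^m_n\rrbracket)\ =\ \sum_{\sigma\in C^m_n}\alpha_\sigma\ \ge\ \sum_{\sigma\in E}\alpha_\sigma\ >\ \frac{3\delta}{4}.\]
Since $m$ was arbitrary, $\inf_m\mu_\rho(J^m)\ge\frac{3\delta}{4}>\frac{\delta}{2}$.

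The only points needing care are (i) choosing $|K|=2^{m+1}$ rather than the $2^m$ one might first write down: this is exactly what upgrades the per-$m$ estimate to the $m$-independent bound $3\delta/4$, so the infimum is genuinely strictly above $\delta/2$ as demanded; and (ii) consolidating the infinitely many failed components into a single stage $n$ via continuity of $\mu_\rho$ from below together with the $S^k_t$ being increasing in $t$. Neither is a real obstacle — the estimate is just the diagonal shadow of the Cauchy--Schwarz/Jensen computation already carried out in the proof of Theorem~\ref{thm:000}.
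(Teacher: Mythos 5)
Your proof is correct and is essentially the paper's argument: the same weighted double count over a finite set of failed components at a single stage $n$, splitting the mass according to whether the multiplicity $v(\sigma)$ exceeds the threshold $2^{m-1}\delta$ defining $C^m_n$. The one genuine (and welcome) refinement is taking $|K|=2^{m+1}$ rather than the paper's $2^m$, which yields the uniform bound $\mu_\rho(J^m)>3\delta/4$ and hence the \emph{strict} inequality $\inf_m\mu_\rho(J^m)>\delta/2$ as literally stated, whereas the paper's choice only gives $\mu_\rho(J^m)>\delta/2$ for each $m$ and thus $\inf_m\mu_\rho(J^m)\geq\delta/2$.
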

\begin{proof}
Let $m$ be arbitrary. By assumption, there are infinitely many $k$s such that  $\mu_\rho (S^{k})>\delta$. For each of these, there is an $n$ such that $\mu_\rho (S^{k}_n )>\delta$. So, fix a $n$ so that there are $2^{m}$ many $ks$ such that $\mu_\rho( S^{k}_n) > \delta$.
Since $\rho_n$ is diagonal, let 
 \[\rho_n = \sum_{\sigma \in 2^n} \alpha_{\sigma}|\sigma \big> \big< \sigma |.\]  

By the choice of $n$, pick $M\subseteq\{1,2\cdots ,n\}$ such that $|M|=2^m$ and $ \mu_\rho (S^{k}_n) > \delta$ for each $k$ in $M$. Note that $\mu_\rho (S^{k}_n)= \text{Tr}(\rho_n P_{S^{k}_n})$, since $S^{k}_n$ is prefix free. We write $\text{Tr}(\rho_n P_{S^{k}_n})=\text{Tr}(\rho_n S^{k}_n)$ to avoid clutter. So,

\begin{align*}
    2^{m}\delta
    &<\sum_{k\in M}\mu_\rho (S^{k}_n)= \sum_{k\in M}\text{Tr}(\rho_n S^{k}_n)
    =   \sum_{\sigma \in 2^n}\alpha_{\sigma} \sum_{k\in M} \text{Tr}(|\sigma \big> \big< \sigma | S^{k}_n)\\
    &=\sum_{ \sigma \in C^{m}_{n}}\alpha_{\sigma} \sum_{k\in M} \big< \sigma |S ^{k}_n\sigma \big> + \sum_{ \sigma \notin C^{m}_{n}}\alpha_{\sigma} \sum_{k\in M} \big< \sigma |S ^{k}_n\sigma \big>\\
    &\leq \sum_{ \sigma \in C^{m}_{n}}\alpha_{\sigma} \sum_{k\in M} \big< \sigma |S ^{k}_n\sigma \big> + 2^{m-1}\delta\\
    &\leq 2^{m} \sum_{ \sigma \in C^{m}_{n}}\alpha_{\sigma} + 2^{m-1}\delta.
\end{align*}
The second last inequality follows from the definition of $G^{m}_{n}$ and convexity; the last from the choice of $M$.
Finally, we get that,
\[\delta/2 < \sum_{\sigma \in C^{m}_{n}}\alpha_{\sigma} = \mu_\rho(\llbracket C^{m}_{n}\rrbracket)\leq \mu_\rho (J^m ).\] \qedhere
\end{proof}

Next, we discuss a subset of the diagonal states; the Dirac delta measures on Cantor space.
\subsection{Quantum randomness on Cantor Space}
A classical bitstring can be thought of as a diagonal state: If $X$ is a real in Cantor space, the state $\rho_{X}=(\rho_{n})_n$ given by $\rho_{n}=|X\upharpoonright n \big>\big<X\upharpoonright n |$ is the quantum analog of $X$. Do the quantum randomness notions agree with classical notions when restricted to Cantor space? By Theorem \ref{thm:30}, we see that $\rho_{X}$ is q-MLR if and only if $X$ is MLR. Further, $\rho_{X}$ is q-MLR if and only if $\rho_{X}$ is weak Solovay random. Also, $X$ is MLR if and only if it passes all interval Solovay tests (the classical analog of strong Solovay tests). So, we see that q-MLR and weak Solovay randomness agree with the classical versions on Cantor space. What about quantum Schnorr randomness?
\begin{lem}
\label{thm:schcla}
$\rho_{X}$ is quantum Schnorr random if and only if $X$ is Schnorr random.
\end{lem}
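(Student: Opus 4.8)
The plan is to establish both directions by translating between quantum Schnorr tests for the diagonal state $\rho_X$ and classical Schnorr tests for $X$, reusing the machinery of Theorems~\ref{thm:30} and~\ref{thm:2} but keeping careful track of the computability of the measures involved. Recall that for a diagonal state, $\rho_X(S^m) = \mathrm{Tr}(\rho_{X,n} P_{S^m_n})$ reduces to $\mu_{\rho_X}(\llbracket C \rrbracket)$ for the appropriate set $C$ of strings, and that $\mu_{\rho_X}$ is just the Dirac measure concentrated on $X$; so $\rho_X(S^m) > \delta$ is equivalent to $X$ lying in the open set $\llbracket T^m_n \rrbracket$ for some $n$, where $T^m_n$ collects the basis strings close to $G^m_n$.

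For the easy direction, suppose $X$ is Schnorr random; I want to show $\rho_X$ passes every quantum Schnorr test $(S^m)_m$. Given such a test, $\sum_m \tau(S^m)$ is a computable real. Following the construction in Theorem~\ref{thm:2} (or the simpler Theorem~\ref{thm:30}, using the ``strong Solovay'' structure), I build the associated classical test $(J^m)_m$ of cylinder sets with $\mu(J^m) \le \tfrac{c}{\delta} 2^{-m}$ and, crucially, $\sum_m \mu(J^m)$ computable — this last point follows because the bound on each $\mu(J^m)$ is controlled by $\tau(S^m)$ via Lemma~\ref{lem:30}, and the hypothesis that $\sum_m \tau(S^m)$ is computable transfers. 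Hence $(J^m)_m$ is a (possibly rescaled) classical Schnorr test, and since $X$ is Schnorr random it passes it, giving $\inf_m \mu_{\rho_X}(J^m) = 0$, hence $\lim_m \rho_X(S^m) = 0$ by the inequality $\rho_X(S^m) \le \mu_{\rho_X}(J^m) + \delta/4$ established in the proof of Theorem~\ref{thm:30}. Since $\delta$ was arbitrary, $\rho_X$ is quantum Schnorr random.

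For the converse, suppose $X$ is not Schnorr random, so $X$ fails some classical Schnorr test $(U_m)_m$, which we may take to be an interval Solovay test with $\sum_m \mu(U_m)$ computable and $X \in \bigcap_m U_m$. I promote $(U_m)_m$ to a quantum strong Solovay test by setting $S^m_n := P_{U_{m,n}}$, the standard-basis projection onto the strings in $U_m$ of length $n$; then $\tau(S^m) = \mu(U_m)$ exactly, so $\sum_m \tau(S^m)$ is the same computable real, making $(S^m)_m$ a quantum Schnorr test. Since $X \in U_m$ for all $m$, we get $\mathrm{Tr}(\rho_{X,n} S^m_n) = 1$ for suitable $n$, so $\rho_X(S^m) = 1$ for all $m$, and $\rho_X$ fails this quantum Schnorr test at level, say, $1/2$. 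Hence $\rho_X$ is not quantum Schnorr random.

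The main obstacle is the bookkeeping around computability of the measures: the classical and quantum notions of Schnorr randomness both require the \emph{total} measure (not merely each individual measure) to be a computable real, so I must verify that each translation step preserves this. In the $\Leftarrow$ direction this is immediate because $\tau(S^m) = \mu(U_m)$ on the nose. In the $\Rightarrow$ direction it requires checking that the rescaling constants and the finite combinatorial bounds from Lemma~\ref{lem:30} depend only on $\delta$ (rational) and $m$ in a way that keeps $\sum_m \mu(J^m)$ computable given that $\sum_m \tau(S^m)$ is; I expect this to go through since the bound $|T^m_n| < \tfrac{4}{\delta} 2^n \tau(G^m)$ already appears in the proof of Theorem~\ref{thm:30}, but it is the point that needs genuine care rather than a routine appeal to an earlier theorem.
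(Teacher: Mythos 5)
Your proposal is correct and follows essentially the same route as the paper: the nontrivial direction converts a quantum Schnorr test into a classical total (interval) Solovay test via Lemma \ref{lem:30}, with the computability of $\sum_m \mu(J^m)$ inherited termwise from that of $\sum_m \tau(S^m)$, and the converse direction embeds a classical Schnorr test as standard-basis projections with $\tau(S^m)=\mu(U_m)$. Only minor bookkeeping differs from the paper: a quantum Schnorr test need not satisfy $\tau(S^m)\leq 2^{-m}$, so your intermediate bound $\mu(J^m)\leq \frac{c}{\delta}2^{-m}$ should be dropped in favour of the termwise bound $\mu(J^m)\leq \delta^{-1}\tau(S^m)$ that you already invoke, and ``$X$ fails $(U_m)_m$'' means $X\in U_m$ for infinitely many $m$ rather than $X\in\bigcap_m U_m$ (for the Dirac measure $\mu_{\rho_X}$ this is what makes the passing/failing translation exact).
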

\begin{proof}
Let $(Q^r)_r$ be a quantum Schnorr test which $\rho_X$ fails at some rational $\delta$. Let $Q^r$ be $n_r$ by $n_r$. Using notation of Lemma \ref{lem:30}, let $T^{r}:= S^{\delta}_{E,Q^{r}}$ where $E$ is the set of length $n_r$ standard basis vectors. We think of $T^r$ as a set of classical bitstrings. By Lemma \ref{lem:30}, $\tau(T^{r})\leq \delta^{-1}\tau(Q^{r})$. So,  $\sum_{r}2^{-n_{r}}|T^{r}|=\sum_{r}\tau(T^{r})  \leq \delta^{-1}\sum_{r}\tau(Q^{r})$ is computable because $\sum_{r}\tau(Q^{r})$ is. So, $(T^r)_r$ is a finite total Solovay test. Let $m$ be one of the infinitely many $r$ such that $\delta<$Tr$(\rho_{X} (Q^r))$. Then, by definition, $X\upharpoonright n_r $ is in  $T^r$. So, $X$ fails $(T_{r})_r$ and hence is not Schnorr random (by 7.2.21 and 7.2.22 in the book by Downey and Hirschfeldt\cite{misc1}). The other direction is trivial. \qedhere

\end{proof}
\subsection{Relating the randomness notions}
We have seen that \[\text{Solovay R} = \text{q-MLR} \subseteq \text{weak Solovay R} \subset \text{quantum Schnorr R}.\]
The equality follows by Theorem \ref{thm:000}. The second inclusion is strict as there is a bitstring which is Schnorr random but which fails some interval Solovay test \cite{misc1} and since by Theorem \ref{thm:schcla}, this bitstring must be quantum Schnorr random. It is open whether the first inclusion is strict.
\section{A law of large numbers for quantum Schnorr randoms}
\label{sec:lln}
The law of large numbers (LLN), specialized to Cantor space says that the limiting proportion of ones is equal to 0.5 for almost every bitstring. Random bitstrings satisfy the LLN. In fact, satisfying the LLN is the weakest form of randomness \cite{misc1}. This is quite intuitive; one would not call a bitstring `random' if it has more ones than zeroes in the limit. Analogously, we expect even our weakest notion of quantum randomness (quantum Schnorr randomness) to satisfy a quantum analogue of the LLN. This suggests that the quantum randomness notions are `natural' and mirror the classical situation. In this section, $\sigma$ will always denote a classical bitstring thought of as a qubit string. 

\begin{defn}
\cite{logicblog} $\rho$ \emph{satisfies the LLN} if lim$_{n}n^{-1}\sum_{i<n}$Tr$(\rho_{n}P^{n}_{i})=0.5$, where for all $i\geq 0, n>0$, \[P^{n}_{i}:= \sum_{\sigma: |\sigma|=n, \sigma(i)=1}|\sigma\big>\big<\sigma| .\]

\end{defn}
The intuition is that $P^{n}_i$ is the projection observable which measures whether a given density matrix on $n$ qubits `has a one in the $i^{th}$ spot'. Tr$(\rho_{n}P^{n}_{i})$ is the probability that $\rho_n$ `has a one in the $i^{th}$ spot'. If the average over $i$ of these probabilities tends to $0.5$ as $n$ goes to infinity, then the state satisfies the LLN.
\begin{thm}
Quantum Schnorr random states satisfy the LLN.
\end{thm}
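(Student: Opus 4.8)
The plan is to build, from any quantum Schnorr test witnessing a failure of the LLN, a contradiction with the assumed quantum Schnorr randomness of $\rho$. Suppose $\rho$ is quantum Schnorr random but does not satisfy the LLN, so that $\lim_n n^{-1}\sum_{i<n}\mathrm{Tr}(\rho_n P^n_i)\neq 0.5$ (or the limit fails to exist). The classical template to mimic is the proof that Schnorr randoms obey the LLN: one fixes a rational tolerance and, using large-deviation (Chernoff/Hoeffding-type) bounds for the Bernoulli$(1/2)$ measure, exhibits a Schnorr test (an interval Solovay test with computable measure sum) capturing every bitstring whose running averages stray too far. Here the projections $P^n_i$ play the role of the coordinate functions, and the sets $A^n_\epsilon := \{\sigma\in 2^n : |n^{-1}\sum_{i<n}\sigma(i) - 1/2| > \epsilon\}$ have uniform measure $\mu(A^n_\epsilon)$ decaying exponentially in $n$, with the decay rate a computable function of $\epsilon$.

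First I would fix a rational $\epsilon>0$ and a rational $\delta>0$ with the property that, along infinitely many $n$, $\big|n^{-1}\sum_{i<n}\mathrm{Tr}(\rho_n P^n_i) - 1/2\big| > 2\epsilon$; such $\epsilon,\delta$ exist if the LLN fails. Next I would set $R^n := P_{A^n_\epsilon} = \sum_{\sigma\in A^n_\epsilon}|\sigma\big>\big<\sigma|$, a special projection (a diagonal $0/1$ matrix in the standard basis, with algebraic entries), and observe that $\tau(R^n) = 2^{-n}|A^n_\epsilon| = \mu(A^n_\epsilon)$, which is exponentially small with a computably-bounded rate, so $\sum_n \tau(R^n)$ converges to a computable real. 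Hence $(R^n)_n$ is a quantum Schnorr test. The crux is then to show that $\rho$ fails this test, i.e.\ $\mathrm{Tr}(\rho_n R^n) > \delta$ for infinitely many $n$. This is where the main obstacle lies: unlike the classical/diagonal case, $\rho_n$ need not be diagonal, so $\mathrm{Tr}(\rho_n P^n_i)$ is not literally $\mu_{\rho_n}$ of a cylinder, and one cannot directly transfer a large-deviation argument. The key observation that rescues us is that $\mathrm{Tr}(\rho_n P^n_i)$ depends only on the diagonal of $\rho_n$ in the standard basis: writing $\tilde\rho_n$ for the diagonal part of $\rho_n$ (which is itself a density matrix, with $\mathrm{Tr}(\tilde\rho_n Q) = \mathrm{Tr}(\rho_n Q)$ for every diagonal projection $Q$), we have $\mathrm{Tr}(\rho_n P^n_i) = \mathrm{Tr}(\tilde\rho_n P^n_i) = \mu_{\tilde\rho_n}(\{\sigma : \sigma(i)=1\})$. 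So $n^{-1}\sum_{i<n}\mathrm{Tr}(\rho_n P^n_i)$ is exactly the $\mu_{\tilde\rho_n}$-expectation of the running-average function $\sigma\mapsto n^{-1}\sum_{i<n}\sigma(i)$ on $2^n$.

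Granting that observation, I would finish as follows. If the $\mu_{\tilde\rho_n}$-expectation of the running average exceeds $1/2 + 2\epsilon$ (the case $<1/2-2\epsilon$ is symmetric) while the function is bounded by $1$, a one-line Markov/averaging estimate forces $\mu_{\tilde\rho_n}(A^n_\epsilon)$ to be bounded below by a fixed positive constant $\delta = \delta(\epsilon)$ — concretely, if $X\in[0,1]$ has mean $\geq 1/2+2\epsilon$ then $\Pr[X > 1/2+\epsilon] \geq \epsilon/(1/2-\epsilon) > 0$. Therefore $\mathrm{Tr}(\rho_n R^n) = \mathrm{Tr}(\tilde\rho_n R^n) = \mu_{\tilde\rho_n}(A^n_\epsilon) > \delta$ for each of the infinitely many $n$ witnessing the deviation, so $\rho$ fails the quantum Schnorr test $(R^n)_n$ at level $\delta$, contradicting quantum Schnorr randomness. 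It remains only to check the computability bookkeeping: $A^n_\epsilon$ is uniformly computable (for rational $\epsilon$), the $R^n$ are uniformly computable special projections, and $\sum_n \tau(R^n) = \sum_n \mu(A^n_\epsilon)$ is a computable real because the tail is dominated, at a computable rate, by a geometric series coming from the Hoeffding bound $\mu(A^n_\epsilon) \leq 2e^{-2\epsilon^2 n}$. I expect the passage from the non-diagonal $\rho_n$ to its diagonal part, together with phrasing everything as a single strong Solovay test with computable measure-sum, to be the only genuinely delicate points; the deviation estimate itself is routine.
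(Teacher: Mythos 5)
Your proposal is correct, but it takes a genuinely different and more elementary route than the paper's proof. The paper builds essentially the same test (the span of the classical strings whose fraction of ones deviates from $1/2$), but to show that $\rho$ fails it, it writes $\rho_n$ in a spectral decomposition, splits each eigenvector into components in $\mathrm{range}(S_n)$ and its orthogonal complement, and runs a Cauchy--Schwarz plus Jensen argument to extract the lower bound $\mathrm{Tr}(\rho_n S_n)\geq \delta^2/36$; the paper explicitly flags the non-classicality of $\rho_n$ as the obstacle this machinery is designed to overcome. Your observation dissolves that obstacle at the outset: both $P^n_i$ and the test projection $R^n$ are diagonal in the standard basis, so $\mathrm{Tr}(\rho_n P^n_i)$ and $\mathrm{Tr}(\rho_n R^n)$ depend only on the diagonal part $\tilde\rho_n$, which is a genuine probability distribution on $2^n$ (non-negative entries summing to one); the failure of the LLN then becomes a statement about the mean of the bounded running-average function under $\tilde\rho_n$, and the one-line averaging estimate $\Pr[X>1/2+\epsilon]\geq \epsilon/(1/2-\epsilon)$ for $X\in[0,1]$ with mean $\geq 1/2+2\epsilon$ (and its mirror image) gives a uniform positive lower bound on $\mathrm{Tr}(\rho_n R^n)$ along the deviating $n$. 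The computability bookkeeping (Hoeffding tail, computable sum $\sum_n\tau(R^n)$) matches the paper's. What your approach buys is a substantially shorter proof, a better constant, and a transparent explanation of why entanglement and off-diagonal structure are irrelevant to this particular theorem; what it gives up is the reusable decomposition technique ($\psi=c_o\psi_o+c_p\psi_p$) that the paper deploys again, in essentially the same form, for the Shannon--McMillan--Breiman theorem in the next section. One small point to make explicit in a final write-up: from the failure of the limit you should fix a single rational $\epsilon$ and then, for each deviating $n$, apply whichever one-sided Markov bound is relevant; since $A^n_\epsilon$ is two-sided, both cases land in the same test, so no passage to a one-sided subsequence is needed.
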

\begin{proof}
We prove it by contradiction. Suppose $\rho$ is quantum Schnorr random but does not satisfy the LLN. So, there is a $\delta$ such that either $\exists^{\infty}n$, with $n^{-1}\sum_{i<n}$Tr$(\rho_{n}P^{n}_{i})> \delta + 0.5$ or  $\exists^{\infty}n$, with $n^{-1}\sum_{i<n}$Tr$(\rho_{n}P^{n}_{i})< -\delta + 0.5$. Suppose first that the former holds. A rough outline of this proof is as follows. For each $n$ we take $S_n$ to be the subspace spanned by the classical strings with the fraction of $1$s more than $0.5+ \delta/2$. $(S_n)_n$ is easily seen to be a quantum Schnorr test and it only remains to show that $\rho$ fails it. This is not obvious as $\rho$ is not necessarily classical, while $(S_n)_n$ is composed of classical vectors. To show this, we consider one of the infinitely many $n$s such that $n^{-1}\sum_{i<n}$Tr$(\rho_{n}P^{n}_{i})> \delta + 0.5$. For such an $n$, we break up $n^{-1}\sum_{i<n}$Tr$(\rho_{n}P^{n}_{i})$ into two parts: the first corresponding to the projection of $\rho_n$ onto $S^{\perp}_n$ and the other corresponding to the projection onto $S_n$ (see for example in equation \ref{eq:zeroth}). The definition of $S_n$ enables us to upper bound the first part (see \ref{eq:1st}). The second part is forced to be big since $n^{-1}\sum_{i<n}$Tr$(\rho_{n}P^{n}_{i})> \delta + 0.5$. So, $\rho$ fails $(S_n)_n$. The details are:
Define for all $n$,
\[C_{n}= \left\{\sigma : |\sigma|=n, n^{-1}\sum_{i<n} |\big< \sigma| P^{n}_{i}|\sigma \big>|> \delta/2 + 0.5\right\}.\] 
In other words,
\[C_{n}= \left\{\sigma : |\sigma|=n, n^{-1}|\{i<n: \sigma(i)=1 |> \delta/2 + 0.5\right\}.\] 
Let $S_n$ be the special projection, \[S_{n}:= \sum_{\sigma \in C_{n}} |\sigma\big>\big<\sigma|.\]
$(S_n)_n$ is a computable sequence since we may let $\delta$ be rational. By the Chernoff bound, $\tau(S_{n})=2^{-n}|C_{n}| \leq 2\text{exp}(-0.5 n\delta^{2})$ for all $n$. So, $\sum_{n}\tau(S_{n})$ is computable showing that $(S_n)_n$ is a quantum Schnorr test.

For all $n$, let 
\[\rho_n = \sum_{k< 2^n} \alpha_{k}|\psi^{k}_{n}\big> \big< \psi^{k}_{n}| \]  
for $\alpha_k$ non-negative real numbers with $\sum_{k<2^n} \alpha_k =1$ and for each $k$, $|\psi^{k}_{n}\big> \in \mathbb{C}^{2^n}$ and $||\psi^{k}_{n}||=1$.
Fix an $n$ is such that $n^{-1}\sum_{i<n}$Tr$(\rho_{n}P^{n}_{i})>0.5+\delta.$
We will drop the $n$ subscript of $\psi^{k}_{n}$ as the $n$ is fixed.
For any $k< 2^{n}$ we can decompose $\psi^{k}$ as,
\begin{align}
    \psi^{k} = c^{k}_o\psi^{k}_o + c^{k}_p\psi^{k}_p
\end{align}
where $\psi^{k}_o \in \text{range}(S_{n})$ and $\psi^{k}_p \in \text{range}(S_n)^{\perp}$
are unit vectors and $c^{k}_o, c^{k}_p \in \mathbb{C}$ satisfy $|c^{k}_0|^{2}+|c^{k}_p|^{2}= 1$.
We now show that $\dfrac{\delta^{2}}{36}$ is a lower bound for Tr$(\rho_n S_{n})$ = $\sum_{k< 2^n}\alpha_{k} |c^{k}_o|^{2}$. Note that
\begin{align}
    & n^{-1}\sum_{i<n}\text{Tr}(\rho_{n}P^{n}_{i})\\
    &=n^{-1}\sum_{i<n}\sum_{k<2^{n}}\alpha_{k} \big< \psi^{k}|P^{n}_{i}|\psi^{k}\big>\\
    &=
    \label{eq:last}
    \sum_{k<2^{n}}\alpha_{k}\big(n^{-1}\sum_{i<n} \big< \psi^{k}|P^{n}_{i}|\psi^{k}\big>\big).
\end{align}
For each fixed $k$ and $i$, by the same argument as in equation (\ref{eq:1}) and using that $|c^{k}_p|,|P^{n}_{i} \psi^{k}_p| \leq 1$ we have that
\begin{align}
\label{eq:boundd}
\big<\psi^{k}|P^{n}_{i}|\psi^{k}\big> \leq 
 |c^{k}_o|^{2}|P^{n}_{i} \psi^{k}_o|^{2}+ |c^{k}_p|^{2}  |P^{n}_{i} \psi^{k}_p|^{2} + 2|c^{k}_o|  |P^{n}_{i} \psi^{k}_o| 
.
\end{align}
Using this, we bound the term in parentheses in equation (\ref{eq:last}) for each $k$. As $k$ is fixed, replace  $\psi^k$ and $c^k$ in equation (\ref{eq:boundd}) by $\psi$ and $c$ respectively for convenience.

\begin{align}
   & n^{-1}\sum_{i<n} \big< \psi|P^{n}_{i}|\psi\big>\\
   &\leq |c_o|^{2}n^{-1}\sum_{i<n} |P^{n}_{i} \psi_o|^{2}+|c_p|^{2}n^{-1}\sum_{i<n} |P^{n}_{i} \psi_p|^{2}+2|c_o| n^{-1}\sum_{i<n} |P^{n}_{i} \psi_o|\\
   &\leq |c_o|+n^{-1}\sum_{i<n} |P^{n}_{i} \psi_p|^{2}+2|c_o|\\
   \label{eq:lasst}
   &=n^{-1}\sum_{i<n} |P^{n}_{i} \psi_p|^{2}+3|c_o|.
\end{align}
We used convexity and $|c_p|^{2}\leq 1,|c_o|\leq 1,|P^{n}_{i} \psi_o|\leq 1$ when obtaining the last inequality. Let $\psi:=\psi_p$ and for a fixed $i<n$, let $P:=P^{n}_{i}$ and consider the summand, $|P\psi|^2$ in the sum in equation (\ref{eq:lasst}) (we suppressed the indices merely for convenience). Since $\psi \in  \text{range}(S_n)^{\perp}= $ span$(C_{n}^{c})$, let $a_{\sigma}$ be complex numbers such that 
\[\sum_{\sigma \notin C_{n}} a^{2}_{\sigma} =1,\]
and
\[\psi=\sum_{\sigma \notin C_{n}} a_{\sigma}\sigma.\]
Using that $P^{*}=P$ and $P=P^2$,
\begin{align}
    & |P\psi|^2=\big<P \psi | P \psi \big> =\big< \psi | P \psi \big>\\
    &= \big< \sum_{\sigma \notin C_{n}} a_{\sigma}\sigma  |  \sum_{\tau \notin C_{n}} a_{\tau} P \tau \big>\\
    &=
    \label{eq:tricky}
    \sum_{\sigma \notin C_{n}}\sum_{\tau \notin C_{n}}  a^{*}_{\sigma}a_{\tau} \big<\sigma  |   P \tau \big>.
\end{align}
Note that $  P \tau = \tau$  or $ P \tau=0$ and that $\big<\sigma  |   \tau \big>= \delta_{\sigma=\tau}$. So, $\big<\sigma  |   P \tau \big>$ is zero whenever $\sigma \neq \tau$ (Here we used that the orthonormal vectors spanning $C^{c}_{n}$ are eigenvectors of $P$).
So, (\ref{eq:tricky}) becomes, \[|P\psi|^2\leq \sum_{\sigma \notin C_{n}}  a^{2}_{\sigma} \big<\sigma  |   P \sigma \big>.
\]
Using this and reinserting the indices, the first term in (\ref{eq:lasst})  is bounded above by
\[n^{-1}\sum_{i<n}\sum_{\sigma \notin C_{n}}  (a^{k}_{\sigma})^{2} \big<\sigma  |   P^{n}_{i} \sigma \big>.\]
Finally, putting this back in (\ref{eq:last}), 
\begin{align}
    &\sum_{k<2^{n}}\alpha_{k}\big(n^{-1}\sum_{i<n} \big< \psi^{k}|P^{n}_{i}|\psi^{k}\big>\big)\\
    &\leq \sum_{k<2^{n}}\alpha_{k}\big(n^{-1}\sum_{i<n}\sum_{\sigma \notin C_{n}}  (a^{k}_{\sigma})^{2} \big<\sigma  |   P^{n}_{i} \sigma \big>+ 3|c^{k}_{o}|)\\
    &\label{eq:zeroth}
    \leq \sum_{k<2^{n}}\alpha_{k}\big(\sum_{\sigma \notin C_{n}}(a^{k}_{\sigma})^{2}n^{-1}\sum_{i<n}   \big<\sigma  |   P^{n}_{i} \sigma \big>)+\sum_{k<2^{n}}\alpha_{k} 3|c^{k}_{o}|\\
    \label{eq:1st}
    &\leq \big(\sum_{k<2^{n}}\alpha_{k}\sum_{\sigma \notin C_{n}}(a^{k}_{\sigma})^{2}(\delta/2+0.5)\big)+ 3\sum_{k<2^{n}}\alpha_{k}|c^{k}_{o}|\\
    &\leq(\delta/2+0.5) \big(\sum_{k<2^{n}}\alpha_{k}\sum_{\sigma \notin C_{n}}(a^{k}_{\sigma})^{2}\big)+ 3\sum_{k<2^{n}}\alpha_{k}|c^{k}_{o}|\\
    &\leq(\delta/2+0.5) + 3\sum_{k<2^{n}}\alpha_{k}|c^{k}_{o}|.
\end{align}

In getting (\ref{eq:1st}) we used the definition of $C_n$. In the last step we used that $\sum_{\sigma \notin C_{n}}(a^{k}_{\sigma})^{2}=1 $ for all $k$ and convexity. In summary, we have shown that for infinitely many $n$,
\[0.5 + \delta <  n^{-1}\sum_{i<n}\text{Tr}(\rho_{n}P^{n}_{i})=\sum_{k<2^{n}}\alpha_{k}\big(n^{-1}\sum_{i<n} \big< \psi^{k}|P^{n}_{i}|\psi^{k}\big>\big) \leq (\delta/2+0.5) + 3\sum_{k<2^{n}}\alpha_{k}|c^{k}_{o}|.
\]
So, by Jensen's inequality \[\delta^{2}/36 < \big(\sum_{k<2^{n}}\alpha_{k}|c^{k}_{o}|\big)^{2} \leq \sum_{k<2^{n}}\alpha_{k}|c^{k}_{o}|^{2}= \text{Tr}(\rho_{n}S_{n}),\]
for infinitely many $n$. So, $\rho$ fails a quantum Schnorr test at $\delta^{2}/36$, a contradiciton. Now if $\exists^{\infty}n$, with $n^{-1}\sum_{i<n}$Tr$(\rho_{n}P^{n}_{i})< -\delta + 0.5$ then define 
\[Q^{n}_{i}=(P^{n}_{i})^{\perp}:= \sum_{\sigma: |\sigma|=n, \sigma(i)=0}|\sigma\big>\big<\sigma|.\]

Note that Tr$(\rho_{n}Q^{n}_{i})+$Tr$(\rho_{n}P^{n}_{i})=1$ for all $i,n$. So, for infinitely many $n$,
$1=n^{-1}(\sum_{i<n}$Tr$(\rho_{n}P^{n}_{i})+ \sum_{i<n}$Tr$(\rho_{n}Q^{n}_{i})) < -\delta + 0.5+$Tr$(\rho_{n}Q^{n}_{i}))$. I.e, $n^{-1}\sum_{i<n}$Tr$(\rho_{n}Q^{n}_{i})>\delta + 0.5$ for infinitely many $n$. Now, we can repeat the proof as in case 1 with Q replacing $P$ and $0$s replacing the $1$s.
\end{proof}
\section{A Shannon\textendash McMillan\textendash Breiman Theorem for quantum Schnorr randoms}
\label{sec:smb}
The Shannon\textendash McMillan\textendash Breiman (SMB) theorem for bitstrings roughly says that for an ergodic measure, $\mu$, on Cantor space the empirical entropy for $\mu$ almost every trajectory (infinite bitstring) equals the entropy of $\mu$. There have also been effective versions of the SMB. For example, it has been shown that the exception set for the SMB theorem in the classical setting can be covered by a Martin-L{\"o}f test \cite{hoyrup:LIPIcs:2012:3391}. In the quantum setting, where we do not have a notion of `almost every', we may replace `$\mu$ almost every trajectory' by `every $\mu$ Schnorr random state' as we do here. A special case of the SMB theorem for infinite sequences of qubits was first studied by Nies and Tomamichel \cite{logicblog}.
To formalize a $\mu$ Schnorr random state in the quantum setting, we need a definition
\begin{defn}
A computable sequence of special projections is a $\mu$\emph{ quantum Schnorr test} if $\sum_{k\in \omega} \mu(S^{k})$ is computable.
\end{defn}
A state $\rho$ is $\mu$  quantum Schnorr random if it passes all $\mu$  quantum Schnorr tests. A similar definition for quantum MLR states was made by Nies and Tomamichel  \cite{logicblog}. Intuitively, a $\mu$  quantum Schnorr random state is a `trajectory' in the state space \cite{unpublished} which is random in the sense of $\mu$.

\begin{thm}
Let $\mu=(\mu_{n})_{n}$ be a state of the form $\mu_{n}=\otimes_{1}^{n} M$ for an $M$ of the form
\[
\begin{bmatrix}

p & 0 \\
0 & 1-p\\

\end{bmatrix}
,\]
for some computable $p \in (0,1)$. If $\rho$ is $\mu$ quantum Schnorr random, then\\ lim$_{n}n^{-1}$Tr$(-\rho_{n}$log$(\mu_{n}))$= $h(\mu)$, the von-Neumann entropy of $M$.
\end{thm}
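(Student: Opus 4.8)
The plan is to reduce the statement to a law of large numbers for $\rho$ relative to the Bernoulli measure attached to $M$, and then to prove that law by confronting the hypothesis that $\rho$ is $\mu$ quantum Schnorr random with an explicit ``frequency'' test. First, since $M=\mathrm{diag}(p,1-p)$, the matrix $\mu_n=\otimes_1^nM$ is diagonal in the standard basis with $\mu_n|\sigma\rangle=p^{N_0(\sigma)}(1-p)^{N_1(\sigma)}|\sigma\rangle$, where $N_1(\sigma)=\#\{i<n:\sigma(i)=1\}$ and $N_0(\sigma)=n-N_1(\sigma)$. Hence $-\log\mu_n=\sum_{i<n}\big((-\log p)(P^n_i)^{\perp}+(-\log(1-p))P^n_i\big)$ with $P^n_i$ as in Section \ref{sec:lln}, so
\[
n^{-1}\mathrm{Tr}(-\rho_n\log\mu_n)=-(\log p)(1-\bar r_n)-(\log(1-p))\,\bar r_n,\qquad \bar r_n:=n^{-1}\sum_{i<n}\mathrm{Tr}(\rho_nP^n_i).
\]
The right-hand side is an affine, hence continuous, function of $\bar r_n$ whose value at $\bar r_n=1-p$ is $-(\log p)p-(\log(1-p))(1-p)=h(\mu)$; so it is enough to show $\lim_n\bar r_n=1-p$.

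To prove this, fix a rational $\delta>0$ and, using that $p$ is computable, a rational $q$ with $|q-(1-p)|<\delta/4$. Put $C_n=\{\sigma\in 2^n:|\,n^{-1}N_1(\sigma)-q\,|>\delta/2\}$ --- a computable finite set, since the defining inequality is between rationals --- and let $S^{(\delta)}_n=P_{C_n}=\sum_{\sigma\in C_n}|\sigma\rangle\langle\sigma|$. Under $\mu$ the quantity $N_1$ is $\mathrm{Binom}(n,1-p)$-distributed, so the two-sided Hoeffding inequality gives $\mu(S^{(\delta)}_n)=\mathrm{Tr}(\mu_nP_{C_n})\le 2\exp(-cn\delta^2)$ for a suitable $c>0$; consequently $\sum_n\mu(S^{(\delta)}_n)$ is a computable real and $(S^{(\delta)}_n)_n$ is a $\mu$ quantum Schnorr test. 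Since $\rho$ passes it, $\mathrm{Tr}(\rho_nS^{(\delta)}_n)\to 0$. Now write $\nu_n(\sigma)=\langle\sigma|\rho_n|\sigma\rangle$, so that $\mathrm{Tr}(\rho_nP_{C_n})=\sum_{\sigma\in C_n}\nu_n(\sigma)$ and $\bar r_n=\sum_{\sigma\in 2^n}\nu_n(\sigma)\,n^{-1}N_1(\sigma)$; for $\sigma\notin C_n$ one has $n^{-1}N_1(\sigma)\in[1-p-\tfrac{3\delta}{4},\,1-p+\tfrac{3\delta}{4}]$, while $n^{-1}N_1(\sigma)\in[0,1]$ always, so splitting the sum over $C_n$ and its complement and using $\sum_\sigma\nu_n(\sigma)=1$ yields
\[
(1-p-\tfrac{3\delta}{4})\big(1-\mathrm{Tr}(\rho_nS^{(\delta)}_n)\big)\le \bar r_n\le (1-p+\tfrac{3\delta}{4})+\mathrm{Tr}(\rho_nS^{(\delta)}_n).
\]
Letting $n\to\infty$ gives $\limsup_n|\bar r_n-(1-p)|\le\tfrac{3\delta}{4}$, and since $\delta$ was an arbitrary positive rational, $\bar r_n\to 1-p$, which finishes the proof.

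The argument is mostly bookkeeping, and I expect the only delicate point to be keeping the test \emph{effective}: one must ensure that $C_n$ is a genuinely computable finite set and that $\sum_n\mu(S^{(\delta)}_n)$ is a computable real even though $p$, and hence $1-p$, is merely computable rather than rational. Replacing the threshold $1-p$ by the rational $q$ and invoking the exponential Hoeffding tail bound resolves this, exactly as the proof of the law of large numbers in Section \ref{sec:lln} handles its analogous threshold. If one prefers to follow that proof more literally, one can skip the passage to $\nu_n$ and instead run the decomposition $\psi=c_o\psi_o+c_p\psi_p$ of the eigenvectors of $\rho_n$ directly; this avoids any reference to the diagonal of $\rho_n$, at the cost of a weaker failure level $\delta^2/36$, and meets the same effectivity issue.
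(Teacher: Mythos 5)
Your proposal is correct, and it takes a genuinely different route from the paper's proof. The paper argues by contradiction, mimicking its law-of-large-numbers argument: it splits into the two cases $n^{-1}\mathrm{Tr}(\rho_nL_n)>h+\delta$ and $<h-\delta$ infinitely often, decomposes each eigenvector of $\rho_n$ as $c_o\psi_o+c_p\psi_p$ relative to the range of the test projection, controls the cross terms via Cauchy--Schwarz and the operator norm of $\sqrt{L_n}$ to force $\mathrm{Tr}(\rho_nS_n)\geq\delta^2/36\theta^2$ infinitely often, and disposes of the second case by a reflection trick (the matrix $M'$ and the identity $L_n+L_n'=-n(\log p+\log(1-p))I_{2^n}$) after reducing to $p\leq 1/2$. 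You instead exploit the fact that $L_n$, the $P^n_i$, and the test projections are all simultaneously diagonal in the standard basis, so that $\mathrm{Tr}(\rho_nA)=\sum_\sigma\nu_n(\sigma)A_{\sigma\sigma}$ depends only on the diagonal $\nu_n$ of $\rho_n$, which is a classical probability distribution on $2^n$; this collapses the quantum statement to a classical two-sided frequency law of large numbers for $\nu_n$ under a single symmetric test, and the affine identity expressing $n^{-1}\mathrm{Tr}(-\rho_n\log\mu_n)$ in terms of $\bar r_n$ then delivers both inequalities at once, with no case split, no reflection, and no Cauchy--Schwarz. What your route buys is brevity and an explicit treatment of effectivity (replacing the threshold $1-p$ by a rational $q$ so that $C_n$ is decidable and $\sum_n\mu(S_n^{(\delta)})$ is a computable real -- a point the paper's own $C_n$, defined by a strict inequality between computable reals, leaves implicit); what the paper's eigenvector-decomposition machinery buys is that it does not depend on the observables being diagonal, so it is the template that also serves in settings where the simultaneous-diagonalization shortcut is unavailable.
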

Intuitively, the theorem says that along any $\mu$ Schnorr random state, $\rho$, the empirical entropy, $n^{-1}$Tr$(-\rho_{n}$log$(\mu_{n}))$ limits to the entropy of $\mu$, which equals that of $M$ as $\mu$ is a product tensor. 
\begin{proof}
Let $M$ be as given and first, assume that $p\leq 0.5$. We prove it by contradiction. Define $L_{n}= -$ log$\mu_{n}$ for all $n$ and $h:=h(\mu)$. Suppose $\rho$ is quantum Schnorr random but there is a $\delta$ such that either $\exists^{\infty}n$, with $n^{-1}$Tr$(\rho_{n}L_{n})> \delta + h$ or  $\exists^{\infty}n$, with $n^{-1}$Tr$(\rho_{n}L_{n})< -\delta + h$. Suppose first that the former holds. The proof is similar to that of the law of large numbers, but different techniques are needed as $L_n$ is not a projection. $\sigma$ will always be used to denote classical bitstrings. For $\sigma$ of length $n$, $\big<\sigma|\mu_{n}|\sigma\big>=\mu(\sigma,\sigma)=p^{k}(1-p)^{n-k}$ where $k=$ numbers of zeros in $\sigma$. So, $\mu$ can be thought of a i.i.d.\ measure on Cantor space assigning $\mu(0)=p, \mu(1)=1-p$.

Define for all $n$,
\[C_{n}= \left\{\sigma : |\sigma|=n, -n^{-1}\text{log}\mu(\sigma)> \delta/2 + h\right\}.\] 

Let $S_n$ be the special projection, \[S_{n}:= \sum_{\sigma \in C_{n}} |\sigma\big>\big<\sigma|.\]
$(S_n)_n$ is a computable sequence since we may let $\delta$ be rational. By the Chernoff bound, $\mu(S_{n})=2^{-n}|C_{n}| \leq 2\text{exp}(-0.5 n\delta^{2})$ for all $n$. So, $\sum_{n}\mu(S_{n})$ is computable showing that $(S_n)_n$ is a $\mu-$quantum Schnorr test.

For all $n$, let 
\[\rho_n = \sum_{k< 2^n} \alpha_{k}|\psi^{k}_{n}\big> \big< \psi^{k}_{n}| \]  
for $\alpha_k$ non-negative real numbers with $\sum_{k<2^n} \alpha_k =1$ and for each $k$, $|\psi^{k}_{n}\big> \in \mathbb{C}^{2^n}$ and $||\psi^{k}_{n}||=1$.
Fix an $n$ is such that $n^{-1}$Tr$(\rho_{n}L_{n})> \delta + h$.
We will drop the $n$ subscript of $\psi^{k}_{n}$ as the $n$ is fixed.
For any $k< 2^{n}$ we can decompose $\psi^{k}$ as,
\begin{align}
    \psi^{k} = c^{k}_o\psi^{k}_o + c^{k}_p\psi^{k}_p
\end{align}
where $\psi^{k}_o \in \text{range}(S_{n})$ and $\psi^{k}_p \in \text{range}(S_n)^{\perp}$
are unit vectors and $c^{k}_o, c^{k}_p \in \mathbb{C}$ satisfy $|c^{k}_0|^{2}+|c^{k}_p|^{2}= 1$.
We find a lower bound, for Tr$(\rho_n S_{n})$ = $\sum_{k< 2^n}\alpha_{k} |c^{k}_o|^{2}$ independent of $n$.
\begin{align}
    & n^{-1} \text{Tr}(\rho_{n}L_{n})\\
    &=n^{-1}\sum_{k<2^{n}}\alpha_{k} \big< \psi^{k}|L_{n}|\psi^{k}\big>\\
    &=
    \label{eq:last2}
    \sum_{k<2^{n}}\alpha_{k}\big(n^{-1}  \big< \psi^{k}|L_{n}|\psi^{k}\big>\big).
\end{align}
Fix a $k$ and suppress it in the indices (i.e for example, let $\psi=\psi^k$). By Cauchy-Schwarz and by the self-adjointness and positivity of $L_n$,
\begin{align*}
\big<\psi^{k}|L_{n}|\psi^{k}\big> 
 &\leq  |c_o|^{2}\big< \psi_o|L_{n}\psi_o\big>+ |c_p|^{2} \big< \psi_p|L_{n}\psi_p\big>+ 2| c_{o}||c_{p}||\big<\sqrt{L_{n}} \psi_p|\sqrt{L_{n}}\psi_o\big>|\\
 &\leq |c_o|^{2}\big< \psi_o|L_{n}\psi_o\big>+ |c_p|^{2} \big< \psi_p|L_{n}\psi_p\big>+ 2| c_{o}||\sqrt{L_{n}} \psi_p||\sqrt{L_{n}}\psi_o|\\
 &\leq |c_o|^{2}\big< \psi_o|L_{n}\psi_o\big>+ |c_p|^{2} \big< \psi_p|L_{n}\psi_p\big>+ 2| c_{o}|(||\sqrt{L_{n}}||_{2})^{2}
,
\end{align*}
where $|| \cdot ||_{2}$ denotes the $L_2$ operator norm. $M_{n} $, the maximum element of the set \[\left\{ -j\text{log} p- (n-j)\text{log}(1-p): 0 \leq j \leq n\right\}\] is the largest eigenvalue of $L_n$ and so, $\sqrt{M_{n}}$ is the largest eigenvalue of $\sqrt{L_{n}}$. Noting that the $L_2$ norm of a real diagonal matrix is equal to its largest eigenvalue and that the Rayleigh quotient of a Hermitian matrix is bounded above by the largest eigenvalue, we see that

\begin{align*}
    \big<\psi^{k}|L_{n}|\psi^{k}\big>&\leq |c_o|^{2}\big< \psi_o|L_{n}\psi_o\big>+ |c_p|^{2} \big< \psi_p|L_{n}\psi_p\big>+ 2| c_{o}|\sqrt{M_{n}}^{2} \\
    & \leq |c_o|^{2}M_{n}+ |c_p|^{2} \big< \psi_p|L_{n}\psi_p\big>+ 2| c_{o}|M_{n}\\
    &\leq 3|c_o|M_{n}+ |c_p|^{2} \big< \psi_p|L_{n}\psi_p\big>.
\end{align*}
By this and noting that $n^{-1}M_n \leq \theta = $ the maximum of $-$log($p$) and $-$log($1-p$), we get an upper bound for the term in parentheses in equation (\ref{eq:last2}) for each $k$:

\begin{align}
\label{eq:lasst2}
   & n^{-1}\big< \psi^{k}|L_{n}|\psi^{k}\big> \leq  3|c^{k}_o|\theta + n^{-1} \big< \psi^{k}_p|L_{n}\psi^{k}_p\big>.
\end{align}
Since $\psi^{k}_p \in  \text{range}(S_n)^{\perp}= $ span$(C_{n}^{c})$, there are $a_{\sigma}$s such that 
\[\sum_{\sigma \notin C_{n}} a^{2}_{\sigma} =1,\]
and
\[\psi^{k}_{p}=\sum_{\sigma \notin C_{n}} a^{k}_{\sigma}\sigma.\]
Letting $\psi = \psi^{k}_{p}$ and dropping the $k$ indices for convenience.
\begin{align}
   & n^{-1}\big< \psi | L_{n} \psi \big>\\
    &= n^{-1}\big< \sum_{\sigma \notin C_{n}} a_{\sigma}\sigma  |  \sum_{\tau \notin C_{n}} a_{\tau} L_{n} \tau \big>\\
    &=
    \label{eq:tricky2}
    n^{-1}\sum_{\sigma \notin C_{n}}\sum_{\tau \notin C_{n}}  a^{*}_{\sigma}a_{\tau} \big<\sigma  |   L_{n} \tau \big>.
\end{align}
As $L_n$ is diagonal and $C_n$ is composed of classical bitstrings, equation (\ref{eq:tricky2}) becomes, \[\leq \sum_{\sigma \notin C_{n}}  |a_{\sigma}|^{2} n^{-1}\big<\sigma  | L_{n} \sigma \big>=-\sum_{\sigma \notin C_{n}}  |a_{\sigma}|^{2} n^{-1}\text{log}\mu(\sigma)
\]

\[\leq \sum_{\sigma \notin C_{n}}  |a_{\sigma}|^{2} (\delta/2 + h) \leq \delta/2 + h.\]
We used here that the $\sigma$th entry along the diagonal of $L_n$ is $-\text{log}\mu(\sigma)=-k\text{log} p- (n-k)\text{log}(1-p) $ where $k=$ number of zeros in $\sigma$ and the definition of $C_n$. This and (\ref{eq:lasst2}) gives
that,
\begin{align}
  & n^{-1}\big< \psi^{k}|L_{n}|\psi^{k}\big> \leq  3|c^{k}_o|\theta + \delta/2 + h.
\end{align}
Finally, putting this back in (\ref{eq:last2}), 
\begin{align*}
    &\sum_{k<2^{n}}\alpha_{k}\big(n^{-1} \big< \psi^{k}|L_{n}|\psi^{k}\big>\big)\leq \sum_{k<2^{n}}\alpha_{k}\big(3|c^{k}_o|\theta + \delta/2 + h)\leq(\delta/2+h) + 3 \theta \sum_{k<2^{n}}\alpha_{k}|c^{k}_{o}|.
\end{align*}
In summary, we have shown that for infinitely many $n$,
\[h + \delta <  n^{-1}\sum_{i<n}\text{Tr}(\rho_{n}P^{n}_{i}) \leq (\delta/2+h) + 3 \theta \sum_{k<2^{n}}\alpha_{k}|c^{k}_{o}|.
\]
So, by Jensen's inequality \[\delta^{2}/36\theta^{2} < \big(\sum_{k<2^{n}}\alpha_{k}|c^{k}_{o}|\big)^{2} \leq \sum_{k<2^{n}}\alpha_{k}|c^{k}_{o}|^{2}= \text{Tr}(\rho_{n}S_{n}),\]
for infinitely many $n$. So, $\rho$ fails a $\mu-$quantum Schnorr test; a contradiciton. We need to now consider the other case: 
$\exists^{\infty}n$, with $n^{-1}$Tr$(\rho_{n}L_{n})< -\delta + h$. Define $M'$ to be the reflection of $M$. I.e., $M'$ is
\[
\begin{bmatrix}

1-p & 0 \\
0 & p\\

\end{bmatrix}
,\]
and $\mu^{'}$ is the i.i.d.\ measure on Cantor space given by $M'$. Let $L^{'}_{n}:=-$log $\otimes_{1}^{n}M'$. Note that for $\sigma$ of length $n$, $\big<\sigma|\otimes_{1}^{n}M'|\sigma\big>=p^{n-k}(1-p)^{k}$ where $k=$ numbers of zeros in $\sigma$. Letting $Q_{n}:=L_{n}+L^{'}_n$, for any length $n$ string $\sigma$ having $k$ many zeros,   \[\big<\sigma|Q_{n}|\sigma\big>=-(n-k)\text{log}(1-p)-k\text{log}(p)- (n-k)\text{log}( p)-k\text{log}(1-p)=-n(\text{log}(p)+\text{log}(1-p)).\] So, $Q_{n}=-n(\text{log}(p)+\text{log}(1-p))I_{2^{n}}$.

\begin{align}
    \label{eq:entropy}
    n^{-1}\text{Tr}(\rho_{n}L_{n})+n^{-1}\text{Tr}(\rho_{n}L^{'}_{n})=n^{-1}\text{Tr}(\rho_{n}Q_{n})=- \text{log}(p)-\text{log}(1-p).
\end{align}

We see that, 
\begin{align*}
    & 2h \leq - \text{log}(p)-\text{log}(1-p)\\
    &\iff 2p\text{log}(p)+2(1-p)\text{log}(1-p) \geq \text{log}(p)+\text{log}(1-p)\\
    &\iff \text{log}(1-p) \geq\text{log}(p) \iff p \leq 0.5.\\
\end{align*}
(We used here that $p \leq 0.5$ and hence that $ (1-2p)\geq 0 $.)
So, for one of the infinitely many $n$ such that, $n^{-1}$Tr$(\rho_{n}L_{n})< -\delta + h$, equation (\ref{eq:entropy}) gives that
\begin{align*}
&(-\delta+h) + n^{-1}\text{Tr}(\rho_{n}L^{'}_{n})\\
&>n^{-1}\text{Tr}(\rho_{n}L_{n})+n^{-1}\text{Tr}(\rho_{n}L^{'}_{n})= -\text{log}(1-p)-\text{log}(p) \geq 2h.
\end{align*}
So, there are infinitely many $n$s with \[n^{-1}\text{Tr}(\rho_{n}L^{'}_{n})>\delta+h.\]
Since $M'$ has the same entropy as $M$, we can repeat the proof of the former case using $L^{'}_{n}, \mu^{'}$ instead of $L_{n}, \mu$ respectively. This completes the proof for $p\leq 0.5$. If $p>0.5$, then repeat the proof for $p\leq 0.5$ with $1-p\leq 0.5$ replacing $p$ [The first case has the same proof as it doesn't depend on the value of $p$. When proving the second case, $M'$ is diag$(p,1-p)$ and the proof goes through since $1-p\leq 0.5$.]
\end{proof}

\chapter{Prefix-free quantum Kolmogorov Complexity}\label{Prefix-free quantum Kolmogorov Complexity}
\section{Introduction}
The results in this chapter have been published already in the literature \cite{BHOJRAJ2021}.
With the intent of developing a quantum version of $K$, we introduce $QK$, a notion of descriptive complexity for density matrices using classical prefix-free Turing machines. Many connections between $K$ and Solovay and Schnorr randomness in the classical theory turn out to have analogous connections connections between $QK$ and weak Solovay and quantum Schnorr randomness. 

To the best of our knowledge, the current work is the only one to study the incompressibility of initial segments (in the sense of prefix-free classical Turing machines) of weak Solovay and quantum Schnorr random states. Nies and Scholz have explored connections between quantum Martin-L{\"o}f randomness and a version of $QC$ using unitary (quantum) machines\cite{unpublished}. 
We give an overview of the main points in this chapter. 

In Section \ref{sec:2} we introduce quantum-K ($QK$) for density matrices and some of its properties. Theorem \ref{thm:67} (generalized in Lemma \ref{lem:K}) shows that $QK$ agrees with $K$ on the classical qubitstrings. Theorem \ref{thm:8} is a tight upper bound for $QK$ similar to that for $K$. Theorem \ref{thm:count} is a counting condition similar to that for $QC$ \cite{Berthiaume:2001:QKC:2942985.2943376}, $C$ and $K$ \cite{misc, misc1}.

Section \ref{sec:4} reviews quantum algorithmic randomness: a recently developed \cite{bhojraj2020quantum, unpublished} theory of randomness for states (infinite qubitstrings) using quantum versions of the classical `effectively null set'.

Section \ref{sec:3}, the main focus of this paper, connects $QK$ with two quantum algorithmic randomness notions: weak Solovay randomness and quantum Schnorr randomness, defined in Section \ref{sec:4}. Two important characterizations show that the initial segments of Martin-L{\"o}f randoms (equivalently, of Solovay randoms) are asymptotically incompressible in the sense of $K$: the Chaitin characterization (See \cite{Chaitin:1987:ITR:24912.24913} and theorem 3.2.21 in \cite{misc}),
\[X  \text{ is Martin-L{\"o}f random} \iff \text{lim}_{n}  K(X\upharpoonright n)-n= \infty,\]
and the Levin\textendash Schnorr characterization (See theorem 3.2.9 in \cite{misc}),
\[X  \text{ is Martin-L{\"o}f random} \iff \exists c \forall n [ K(X\upharpoonright n)>n-c].\]
(Characterizations having the former form  will be called `Chaitin type' and those having the latter form will be called `Levin\textendash Schnorr type').
We investigate the extent to which these classical characterizations carry over to weak Solovay randoms and $QK$.

Theorem \ref{thm:7} is a Chaitin type of characterization of 
weak Solovay randomness ($\rho$ is weak Solovay random $\iff$ lim$_{n}QK^{\epsilon}(\rho_{n})-n = \infty$).
This shows that the Levin\textendash Schnorr condition ($\forall n[QK^{\epsilon}(\rho_{n})>^{+}n]$) is implied by weak Solovay randomness. 

Theorem \ref{thm:39} shows both Chaitin and Levin\textendash Schnorr type characterizations of weak Solovay randomness when restricting attention to a specific class of states. It is worth noting that Theorem \ref{thm:39} uses the proof of our main result (Theorem 2.11) in \cite{bhojraj2020quantum}. 

For general states, subsection \ref{subsect:weak} shows that the Levin\textendash Schnorr condition implies something slightly weaker than weak Solovay randomness. 

While $K$ plays well with Solovay randomness, $K_C$, a version of $K$ using a computable measure machine, $C$ (a prefix-free Turing machine whose domain has computable Lebesgue measure) gives a Levin\textendash Schnorr  characterization of Schnorr randomness (See theorem 7.1.15 in \cite{misc1}). Motivated by this, we introduce $QK_C$ a version of $QK$ using computable measure machines in subsection \ref{schn}. 

It turns out that $QK_C$ yields not just a Levin\textendash Schnorr type (Theorem \ref{thm:schnor}), but also a Chaitin type (Theorem \ref{thm:Chaitin}) characterization of quantum Schnorr randomness.

Theorem \ref{thm:Chaitin} together with Theorem \ref{thm:677} and lemma 3.9 in \cite{bhojraj2020quantum}, implies that Schnorr randoms have a Chaitin type characterization in terms of $K_C$ (Theorem \ref{class}). So, results in the quantum realm imply a new result in the classical setting.

In summary, we introduce $QK$ and show that the initial segments of weak Solovay random and quantum Schnorr random states are incompressible in the sense of $QK$.

\section{The Definition and Properties of QK}
\label{sec:2}
 We assume familiarity with the notions of density matrix (See for example, \cite{Nielsen:2011:QCQ:1972505}), prefix-free Kolmogorov complexity ($K$) and $\mathbb{U}$, the universal prefix-free (or self-delimiting) Turing machine (See \cite{misc,misc1, DBLP:series/txtcs/Calude02}).
 
 The output of $\mathbb{U}$ can be interpreted as unordered tuples of complex algebraic vectors (equivalently, finite subsets of natural numbers). 
 
The notation $\mathbb{U}(\sigma)\downarrow = F$ means that $ \mathbb{U}(\sigma)$ outputs the index of $F$ with respect to some fixed canonical indexing of finite subsets of the naturals. We will never use an ordering on the elements of $F$ in any of our arguments: $F$ will be used to define an orthogonal projection : $\sum_{v \in F} |v\big>\big<v|$ which clearly does not depend on an ordering on $F$. 

As explained in \cite{unpublished}, the quantum analogue of a bitstring of length $n$ is a density matrix on $\mathbb{C}^{2^n}$.
 
For a density matrix, $\tau$, let $|\tau|$ denote the $s$ such that $\tau$ is a transformation on $\mathbb{C}^{2^s}$. For any $n$, $\mathbb{C}^{2^n}_{alg}$ is the space of elements of $\mathbb{C}^{2^n}$ with complex algebraic entries.

Logarithms will always be base 2. The notation $\leq^+ , \geq^+, =^+$ will be used for `upto additive constant' relations.

For $\epsilon >0 $, $QK^{\epsilon}(\tau)$ is defined to be
\begin{defn}
\label{defn:4}
$QK^{\epsilon}(\tau) := $ inf  $\{|\sigma|+$log$|F|: \mathbb{U}(\sigma)\downarrow = F$, a orthonormal set in $\mathbb{C}^{2^{|\tau|}}_{alg}$ and $\sum_{v \in F} \big<v |\tau |v \big> > \epsilon \}$

\end{defn}
 The term $\sum_{v \in F} \big<v |\tau |v \big>$ is the squared length of the `projection of $\tau$ onto span($F$)' which also equals the probability of getting an outcome of `1' when measuring $\tau$ with the observable given by the Hermitian projection onto span$(F)$\cite{Nielsen:2011:QCQ:1972505}. Although it is useful to intuitively think of $\sum_{v \in F} \big<v |\tau |v \big>$ as the `projection of $\tau$ onto span($F$)', we use quotes as $\tau$ is a convex combination of possibly \emph{multiple} unit vectors, while the notion `projection onto a subspace' refers usually to a single vector. 
 
 Note that for a given $\tau$, $QK^{\epsilon}(\tau)$ is determined by the classical prefix-free complexities and dimensions of those subspaces, span$(F)$, such that the projection of $\tau$ onto span$(F)$ has squared length atleast $\epsilon$. I.e., $QK^{\epsilon}(\tau)$ depends only on the $K$-complexities and ranks of those projective measurements of $\tau$ such that the probability of getting an outcome of `$1$' is atleast $\epsilon$. Roughly speaking, $QK^{\epsilon}(\tau)$ depends on the dimensions and prefix-free complexities of subspaces which are $\epsilon$ `close' to $\tau$. 
 
 This is in contrast to $QC^{\epsilon}(\tau)$ which depends on the quantum complexities of density matrices, not classical prefix-free complexities of subspaces, which are $\epsilon$ close to $\tau$ (Recall that $QC$ is based on quantum Turing machines) \cite{Berthiaume:2001:QKC:2942985.2943376}. Also, while the rank of the approximating projection is taken into consideration in $QK$, the rank of the approximating density matrix is not taken into account in $QC$.
 
 So, $QC^{\epsilon}(\tau)$ quantifies the quantum complexity of approximating $\tau$ by density matrices upto $\epsilon$ while $QK^{\epsilon}(\tau)$ measures the sum of the prefix-free complexity and the logarithm of the dimension of subspaces $\epsilon$ close to $\tau$. 

A test demonstrating the quantum non-randomness of a state, $\rho$ uses computable sequences of projections of `small rank' which are $\epsilon$-close to initial segments (density matrices) of $\rho$ (See Section \ref{sec:4}). It hence seems plausible that a complexity measure for a density matrix, $\tau$ must reflect the complexities and ranks of projections $\epsilon$-close to $\tau$ in order to play well with quantum randomness notions for states.

We mention that our $QK$ is entirely different from the $ \overline{QK_M} $ and $\overline{QK^{\delta}_M}$ notions defined in Definition 3.1.1 in \cite{Mller2007QuantumKC} using quantum Turing machines.

$QK^{\epsilon}$ would not be a `natural' complexity notion for density matrices if the following theorem did not hold:
\begin{thm}
\label{thm:67}
 Fix a rational $\epsilon$. $K(\sigma)= QK^{\epsilon}(|\sigma\big>\big<\sigma|)$ holds for all classical bitstrings $\sigma$, upto an additive constant depending only on $\epsilon$.
\end{thm}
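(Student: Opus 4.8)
The plan is to prove the theorem by establishing the two inequalities $QK^{\epsilon}(|\sigma\rangle\langle\sigma|) \leq^{+} K(\sigma)$ and $K(\sigma) \leq^{+} QK^{\epsilon}(|\sigma\rangle\langle\sigma|)$, the first with a constant independent of $\epsilon$ (provided $\epsilon<1$) and the second with a constant of order $\log(1/\epsilon)$, so that together the overall additive constant depends only on $\epsilon$ and not on $\sigma$.

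For $QK^{\epsilon}(|\sigma\rangle\langle\sigma|) \leq^{+} K(\sigma)$ I would use the singleton $F = \{|\sigma\rangle\}$. Since $|\sigma\rangle$ has entries in $\{0,1\}$ it lies in $\mathbb{C}^{2^{|\sigma|}}_{alg}$, the set $F$ is orthonormal, and $\sum_{v\in F}\langle v|(|\sigma\rangle\langle\sigma|)|v\rangle = 1 > \epsilon$. A prefix-free machine that runs a $\mathbb{U}$-program for $\sigma$ and then outputs the canonical index of $\{|\sigma\rangle\}$ gives, by universality of $\mathbb{U}$, a description of $F$ of length $\leq K(\sigma) + O(1)$; since $\log|F| = 0$ this yields the bound.

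The substantial direction is $K(\sigma) \leq^{+} QK^{\epsilon}(|\sigma\rangle\langle\sigma|)$. Fix a near-optimal witness: a $\mathbb{U}$-program $p$ with $\mathbb{U}(p)\downarrow = F$ an orthonormal subset of $\mathbb{C}^{2^{|\sigma|}}_{alg}$, with $\sum_{v\in F}|\langle v|\sigma\rangle|^{2} > \epsilon$, and with $|p| + \log|F|$ within $1$ of $QK^{\epsilon}(|\sigma\rangle\langle\sigma|)$. Expanding each $v\in F$ in the computational basis as $v = \sum_{\tau}a^{v}_{\tau}|\tau\rangle$ gives $\langle v|\sigma\rangle = \overline{a^{v}_{\sigma}}$, so the witness condition reads $\sum_{v\in F}|a^{v}_{\sigma}|^{2} > \epsilon$. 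The counting step is the identity $\sum_{\tau}\sum_{v\in F}|a^{v}_{\tau}|^{2} = \sum_{v\in F}\|v\|^{2} = |F|$, which forces the set $G := \{\tau \in 2^{|\sigma|} : \sum_{v\in F}|a^{v}_{\tau}|^{2} > \epsilon\}$ to have $|G| < |F|/\epsilon$, and $\sigma \in G$. Because the entries of the $v$'s are complex algebraic (so comparisons against the rational $\epsilon$ are decidable) and the ambient dimension $2^{|\sigma|}$ is read off from $F$, the set $G$ is computable from $F$, uniformly. Hence I would build a prefix-free machine $N$ with $\epsilon$ hard-coded which, on input $p$ followed by exactly $\lceil \log(|F|/\epsilon)\rceil$ further bits coding an index $j$, runs $\mathbb{U}(p)$ to obtain $F$, computes and canonically orders $G$, and outputs its $j$-th element; the self-delimiting nature of $p$ makes $N$ prefix-free by the usual composition argument. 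Feeding $N$ the pair consisting of $p$ and the index of $\sigma$ in $G$ gives $K(\sigma) \leq^{+} K_{N}(\sigma) \leq |p| + \log|F| + \log(1/\epsilon) + O(1) \leq^{+} QK^{\epsilon}(|\sigma\rangle\langle\sigma|) + O_{\epsilon}(1)$.

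The main obstacle is this second direction, and within it the combination of (i) the counting bound $|G| < |F|/\epsilon$ — a Bessel-type trace-swap computation showing that few computational basis vectors can be ``$\epsilon$-close to $\mathrm{span}(F)$'' — and (ii) its effectivization, namely checking that $G$ is genuinely decidable from the algebraic data in $F$ so that $N$ is a legitimate prefix-free machine. Everything else (arithmetic of algebraic numbers, canonical indexings of finite sets of algebraic vectors, prefix-freeness of the composed program) is routine.
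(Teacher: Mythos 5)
Your proposal is correct and follows essentially the same route as the paper: the easy direction via the singleton $F=\{|\sigma\big>\big<\sigma|\}$, and the hard direction via the counting bound $|G|<\epsilon^{-1}|F|$ (the paper's Lemma on $S^{\epsilon}_{E,O}$, proved by the same trace-swap identity) together with a prefix-free machine that reads a $\mathbb{U}$-program for $F$ followed by exactly $\lceil\log(\epsilon^{-1}|F|)\rceil$ index bits. Your explicit attention to the decidability of membership in $G$ (algebraic entries versus rational $\epsilon$) is a point the paper leaves implicit, but the argument is otherwise identical.
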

We isolate here a simple but useful property which will be used for proving Theorem \ref{thm:67}. 
\begin{lem}
\label{lem:30}
Let $n$ be a natural number, $E=(e_{i})_{i=1}^{2^{n}}$ be any orthonormal basis for $\mathbb{C}^{2^{n}}$ and $F$ be any Hermitian projection matrix acting on $\mathbb{C}^{2^{n}}$. For any $\delta>0$, let  \[S^{\delta}_{E,F}:= \{e_{i} \in E: \big<e_{i}|F|e_{i}\big> > \delta \}.\]
Then, $|S^{\delta}_{E,F}| < \delta^{-1} \text{Tr}(F)$.
\end{lem}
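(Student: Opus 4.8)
The plan is to run a Markov-inequality style counting argument on the diagonal entries $\langle e_i|F|e_i\rangle$ of $F$ taken with respect to the basis $E$. The first step is to record that these diagonal entries are nonnegative. Since $F$ is a Hermitian projection, $F = F^{*} = F^{2}$, so $\langle e_i|F|e_i\rangle = \langle e_i|F^{*}F|e_i\rangle = \langle Fe_i|Fe_i\rangle = \|Fe_i\|^{2}\geq 0$. This nonnegativity (equivalently, the positive semidefiniteness of $F$) is the only structural fact about $F$ the argument needs beyond its being Hermitian.

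Next I would bound the cardinality of $S^{\delta}_{E,F}$ from below by the partial sum of diagonal entries over the large-diagonal indices. By the definition of $S^{\delta}_{E,F}$, every $e_i$ in it satisfies $\langle e_i|F|e_i\rangle > \delta$, so summing this strict inequality over $S^{\delta}_{E,F}$ gives $\delta\,|S^{\delta}_{E,F}| < \sum_{e_i\in S^{\delta}_{E,F}}\langle e_i|F|e_i\rangle$. Because all the remaining diagonal entries are nonnegative by the previous step, the sum on the right only grows (or stays the same) when its range is enlarged to the full basis, yielding
\[
\delta\,|S^{\delta}_{E,F}| < \sum_{e_i\in S^{\delta}_{E,F}}\langle e_i|F|e_i\rangle \leq \sum_{i\leq 2^{n}}\langle e_i|F|e_i\rangle .
\]

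Finally, the sum of the diagonal entries of $F$ in any orthonormal basis equals $\mathrm{Tr}(F)$, since the trace is basis-independent; hence $\delta\,|S^{\delta}_{E,F}| < \mathrm{Tr}(F)$, and dividing through by $\delta>0$ gives the claimed bound $|S^{\delta}_{E,F}| < \delta^{-1}\mathrm{Tr}(F)$. I do not expect any genuine obstacle here: this is an elementary averaging/pigeonhole fact, and indeed it is literally the statement already established earlier in the text as Lemma \ref{lem:30}, so the proof can simply be repeated (or that earlier proof cited).
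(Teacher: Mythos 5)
Your proof is correct and is essentially identical to the paper's own argument: nonnegativity of the diagonal entries via $\langle e_i|F|e_i\rangle = \|Fe_i\|^2$, the Markov-style lower bound $\delta|S^{\delta}_{E,F}| < \sum_{e_i\in S^{\delta}_{E,F}}\langle e_i|F|e_i\rangle$, and extending the sum to the full basis to recover $\mathrm{Tr}(F)$. No further comment is needed.
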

\begin{proof}
Note that since $F$ is a Hermitian projection, $\big<e_{i}|F|e_{i}\big>=\big<Fe_{i}|F e_{i}\big> = |F e_{i}|^{2}\geq 0$. So,
\[\delta|S^{\delta}_{E,F}| < \sum_{e_{i} \in S^{\delta}_{E,F}}\big<e_{i}|F|e_{i}\big> \leq \sum_{i \leq 2^{n}}\big<e_{i}|F|e_{i}\big> = \text{Tr}(F).\]
\end{proof}
\begin{proof}

We now prove Theorem \ref{thm:67}, the idea of which is as follows: Given a classical bitstring and a subspace `close' to it, we find a subspace spanned only by classical bitstrings `close' to this subspace. Then we compress each of the spanning classical strings and show that the string we began with must be one of these.
Fix a rational $\epsilon$. Consider the machine $P$ doing the following:
\begin{enumerate}
    \item On input $\pi$, $P$ searches for $\pi=\sigma \tau$ such that 
    $\mathbb{U}(\sigma)\downarrow = F$, an orthonormal set, $F \subseteq \mathbb{C}^{2^{n}}_{alg}$ for some $n$ and $|\tau|= \lceil \text{log}(\epsilon^{-1}|F|)\rceil$.
    \item Letting $O:= \sum_{v\in F} |v\big>\big<v|$ and $E$, the standard basis of $\mathbb{C}^{2^{n}}$, find the set $S^{\epsilon}_{E,O}$ from Lemma \ref{lem:30}.
    \item Take a canonical surjective map $g$ from the set of bitstrings of length $\lceil \text{log}(\epsilon^{-1}|F|)\rceil$ onto $S^{\epsilon}_{E,O}$. ($g$ exists since  $|S^{\epsilon}_{E,O}|<\epsilon^{-1}|F|$ by \ref{lem:30}).  Output $g(\tau)$.
\end{enumerate}
We first show that $P$ is prefix-free. Suppose $\pi$ and $\pi'$ are in the domain of $P$ and $\pi \preceq \pi'$. Then, $\pi=\sigma \tau$ and $\pi'=\sigma'\tau'$ and $\sigma$ and $\sigma'$ are in the domain of $\mathbb{U}$. $\pi \preceq \pi'$ implies that $\sigma \preceq \sigma'$ or $\sigma' \preceq \sigma$. But as $\mathbb{U}$ is prefix-free, $\sigma = \sigma'$ must hold. Since the computations $\mathbb{U}(\pi)$, and $\mathbb{U}(\pi')$ and not stuck forever at (1), it must be that $|\tau|= \lceil \text{log}(\epsilon^{-1}|F|)\rceil$ and $|\tau'|= \lceil \text{log}(\epsilon^{-1}|F'|)\rceil$ where $\mathbb{U}(\sigma)\downarrow=F=F'=\mathbb{U}(\sigma')\downarrow$. So, $\tau$ and $\tau'$ have the same length implying that $\pi=\pi'$.
\\
Now, let $\sigma \in 2^n$ be any classical bitstring. Let $\lambda$ and $F \subseteq \mathbb{C}^{2^{n}}_{alg}$ a orthonormal set such that $|\lambda|+ \text{log}(|F|)=QK^{\epsilon}(|\sigma\big>\big<\sigma|)$,  $\sum_{v \in F} \big<v |\sigma\big>\big<\sigma|v \big> > \epsilon$ and $\mathbb{U}(\lambda)=F$. Let $O:= \sum_{v\in F} |v\big>\big<v|$. Note that since
$\epsilon<\sum_{v \in F} \big<v |\sigma\big>\big<\sigma|v \big> = \big<\sigma|O|\sigma\big>$, $\sigma \in S^{\epsilon}_{E,O}$ where $E$ is the standard basis. Let $\tau$ be a length $ \lceil \text{log}(\epsilon^{-1}|F|)\rceil$ string such that $g(\tau)=\sigma$. Then, we see that $P(\lambda \tau) = \sigma $. \[\text{K}(\sigma)\leq^+   |\lambda|+|\tau|\leq^+  |\lambda|+\text{log}(\epsilon^{-1})+ \text{log}(|F|) \leq^+ QK^{\epsilon}(|\sigma\big>\big<\sigma|)\]
This establishes one direction. Note that the additive constant depends on $\epsilon$. The constant (zero) in the other direction turns out to be independent of $\epsilon$: Given some classical bitstring $\sigma$, let $\mathbb{U}(\pi)=\sigma$ and $|\pi|=\text{K}(\sigma)$. Then, letting $F=\{\sigma\}$ in \ref{defn:4}, $QK^{\epsilon}(\sigma)\leq QK^{1}(\sigma)\leq |\pi|=K(\sigma)$, for any $\epsilon>0$. 

\end{proof}
 \begin{defn}
\label{defn:78}
A `system' $B= ((b^{n}_{0},b^{n}_{1}))_{n\in \mathbb{N}}$ is a sequence of orthonormal bases for $\mathbb{C}^{2}$ such that each $b^{n}_{i}$ is complex algebraic and the sequence $ ((b^{n}_{0},b^{n}_{1}))_{n\in \mathbb{N}}$ is computable.
\end{defn} 
\begin{remark}\label{rem:K}
Let $B= ((b^{n}_{0},b^{n}_{1}))_{n\in \mathbb{N}}$ be a  system, as in \ref{defn:78}. Let $A_B$ be the set of all pure states, $\sigma$ such that $\sigma$ is a product tensor of elements from $B$. For example, $b^{1}_{0}\otimes b^{2}_{1}\otimes b^{3}_{1}\otimes b^{4}_{0} \in A_B$. Then, the previous theorem generalizes to the following: Fix a rational $\epsilon$, a $B$ and a $A_B$ as above. $K(\sigma)=^+ QK^{\epsilon}(|\sigma\big>\big<\sigma|)$ holds for all $\sigma \in A_B$, upto an additive constant depending only on $B$ and $\epsilon$. Here, $K(\sigma)$ is defined in the obvious way. For example, $K(b^{1}_{0}\otimes b^{2}_{1}\otimes b^{3}_{1}\otimes b^{4}_{0})=K(0110)$. This is proved by replacing $S^{\epsilon}_{E,O}$ with $S^{\epsilon}_{B,O}$ in the proof of Theorem \ref{thm:67}.
\end{remark}
The following lemma can be proved similarly to Theorem \ref{thm:67}.
\begin{lem}
\label{lem:K}
Fix a rational $\epsilon$ and let $(B_{n})_n$ be a computable sequence such that $B_n$ is a orthonormal basis for $\mathbb{C}^{2^{n}}$ composed of algebraic complex vectors. Then, for all $\sigma \in \bigcup_{n} B_{n}$, $K(\sigma)= QK^{\epsilon}(|\sigma\big>\big<\sigma|)$, upto an additive constant depending only on $\epsilon$ and $(B_{n})_n$.
\end{lem}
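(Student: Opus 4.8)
The plan is to re-run the proof of Theorem \ref{thm:67} (and its basis-relative refinement in Remark \ref{rem:K}) almost verbatim, with the standard computational basis replaced throughout by the given sequence $(B_n)_n$; the one genuinely new feature is that the auxiliary machines must now generate basis vectors of $B_n$ on the fly, which is precisely what the computability of $(B_n)_n$ buys us, and is also why the additive constant will depend on $(B_n)_n$ as well as $\epsilon$. I fix the canonical indexing $B_n = (b^n_j)_{j<2^n}$ supplied by the computable presentation, and read $K(\sigma)$, for $\sigma = b^n_j \in \bigcup_n B_n$, as $K(\mathrm{bin}_n(j))$ where $\mathrm{bin}_n(j)$ is the length-$n$ binary code of $j$; when $B_n$ is the standard basis this is exactly the quantity in Theorem \ref{thm:67}. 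I will assume $\epsilon \in (0,1)$ rational.

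For the inequality $K \le^+ QK^\epsilon$, I would build a prefix-free machine $P$ modeled on the one in the proof of Theorem \ref{thm:67}: on input $\pi$ it searches for a split $\pi = \lambda\mu$ with $\mathbb{U}(\lambda)\!\downarrow = F$ an orthonormal set of algebraic vectors in some $\mathbb{C}^{2^n}_{alg}$ (the exponent $n$ being recovered from the coordinate length of the vectors of $F$) and $|\mu| = \lceil\log(\epsilon^{-1}|F|)\rceil$; it then forms $O = \sum_{v\in F}|v\rangle\langle v|$ and, using the presentation of $(B_n)_n$, computes the finite set $S^\epsilon_{B_n,O} = \{\, b^n_j : \langle b^n_j|O|b^n_j\rangle > \epsilon \,\}$, and finally outputs $\mathrm{bin}_n(j)$ where $b^n_j = g(\mu)$ for a fixed canonical surjection $g$ from length-$|\mu|$ strings onto $S^\epsilon_{B_n,O}$ (which exists since $|S^\epsilon_{B_n,O}| < \epsilon^{-1}|F| \le 2^{|\mu|}$ by Lemma \ref{lem:30}). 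Prefix-freeness of $P$ is checked exactly as in Theorem \ref{thm:67}: comparable inputs force the same $\lambda$ by prefix-freeness of $\mathbb{U}$, hence the same $F$, hence the same length of the second block. Given $\sigma = b^n_j$, take a witness $\lambda, F$ with $\mathbb{U}(\lambda) = F$, $\sum_{v\in F}\langle v|\sigma\rangle\langle\sigma|v\rangle > \epsilon$ and $|\lambda| + \log|F| = QK^\epsilon(|\sigma\rangle\langle\sigma|)$; then $\langle\sigma|O|\sigma\rangle > \epsilon$, so $\sigma \in S^\epsilon_{B_n,O}$, some $\mu$ of length $\lceil\log(\epsilon^{-1}|F|)\rceil$ has $g(\mu) = \sigma$, and $P(\lambda\mu) = \mathrm{bin}_n(j)$, giving $K(\sigma) = K(\mathrm{bin}_n(j)) \le^+ |\lambda| + |\mu| \le^+ |\lambda| + \log|F| + \log\epsilon^{-1} \le^+ QK^\epsilon(|\sigma\rangle\langle\sigma|)$, the constant depending only on $\epsilon$ and the presentation of $(B_n)_n$.

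For the reverse inequality $QK^\epsilon \le^+ K$, the single adjustment to Theorem \ref{thm:67}'s argument is that $b^n_j$ is no longer literally the bitstring $\mathrm{bin}_n(j)$, so I would introduce a prefix-free machine $R$ that on input $\pi$ runs $\mathbb{U}(\pi)$, interprets the output bitstring as the length-$n$ binary code of some $j < 2^n$, computes $b^n_j$ from the presentation of $(B_n)_n$, and outputs (a code for) the singleton set $\{b^n_j\}$. If $\pi$ is a shortest $\mathbb{U}$-program for $\mathrm{bin}_n(j)$, then $R(\pi) = \{b^n_j\}$ and $\sum_{v\in\{b^n_j\}}\langle v|\sigma\rangle\langle\sigma|v\rangle = |\langle b^n_j|b^n_j\rangle|^2 = 1 > \epsilon$; simulating $R$ by $\mathbb{U}$ with a constant overhead $c_R$ (universality of $\mathbb{U}$) produces a $\mathbb{U}$-program of length $\le K(\sigma) + c_R$ for $\{b^n_j\}$, whence $QK^\epsilon(|\sigma\rangle\langle\sigma|) \le K(\sigma) + c_R$ with $c_R$ depending only on $(B_n)_n$. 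When $B_n$ is the standard basis, $R$ is trivial and $c_R = 0$, recovering Theorem \ref{thm:67}.

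I do not expect a real obstacle: the whole content is the counting bound of Lemma \ref{lem:30} plus honest machine bookkeeping. The only point deserving a remark is effectivity, i.e.\ that $P$ and $R$ can actually carry out their steps, and the crux there is that the set $S^\epsilon_{B_n,O}$ is computable uniformly in $F$: the entries of $O$ and of the $b^n_j$ are algebraic and $\epsilon$ is rational, so each comparison $\langle b^n_j|O|b^n_j\rangle > \epsilon$ is decidable, and $(B_n)_n$ is computable by hypothesis. This is exactly the place where the hypothesis is used, and it is why the additive constant, unlike in Theorem \ref{thm:67}, depends on $(B_n)_n$ as well as on $\epsilon$.
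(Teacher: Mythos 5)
Your proposal is correct and follows exactly the route the paper intends: the paper's proof of Lemma \ref{lem:K} is simply the observation that the argument for Theorem \ref{thm:67} goes through with the standard basis replaced by $(B_n)_n$ (as already indicated for product bases in Remark \ref{rem:K}), which is precisely what you carry out, including the only genuinely new points — the decidability of the comparisons $\langle b^n_j|O|b^n_j\rangle>\epsilon$ via the computable presentation of the algebraic numbers, and the auxiliary machine $R$ for the direction $QK^{\epsilon}\le^{+}K$, whose coding constant is the source of the dependence on $(B_n)_n$. The only cosmetic difference is that the paper reads $K(\sigma)$ as the complexity of the algebraic vector $\sigma$ itself rather than of $\mathrm{bin}_n(j)$, but these agree up to a constant depending only on $(B_n)_n$, so your version of the statement is equivalent.
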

Note that $K(\sigma)$ is well-defined as $\sigma$ is complex algebraic.
The following Theorem \ref{thm:8} agrees nicely with the upper bound for $K$ in the classical setting: for all strings $x$, $K(x) \leq |x|+K(|x|)+1 $ (See theorem 2.2.9 in \cite{misc}.).
\begin{thm}
\label{thm:8}
There is a constant $d>0$ such that for any $\epsilon$ and any $\tau$, QK$^{\epsilon}(\tau) \leq |\tau| + $ K$(|\tau|) + d$.
\end{thm}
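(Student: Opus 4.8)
The plan is to imitate the classical bound $K(x)\leq^{+}|x|+K(|x|)$. In the classical proof one writes a shortest program for $n=|x|$ followed by $n$ further bits that name $x$ among the $2^{n}$ strings of length $n$. Here we avoid spending any bits on naming a particular object: we give $\mathbb{U}$ a program that outputs the \emph{entire} standard basis of $\mathbb{C}^{2^{n}}$, and let the resulting factor $2^{n}$ be absorbed by the $\log|F|$ term in Definition \ref{defn:4}. (We assume, as is implicit throughout, that $\epsilon<1$; for $\epsilon\geq 1$ the set in Definition \ref{defn:4} is empty.)

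Concretely, fix a density matrix $\tau$ and put $n=|\tau|$, and take $F=B^{n}=\{|x\rangle:x\in 2^{n}\}$, the standard computational basis of $\mathbb{C}^{2^{n}}$. Then $F$ is orthonormal, its vectors have entries in $\{0,1\}$ and hence $F\subseteq\mathbb{C}^{2^{n}}_{alg}$; moreover $\sum_{v\in F}\langle v|\tau|v\rangle=\text{Tr}(\tau)=1>\epsilon$, and $|F|=2^{n}$ so $\log|F|=n$. It remains only to bound the length of a $\mathbb{U}$-program that outputs a canonical index of $B^{n}$. The function $n\mapsto(\text{canonical index of }B^{n})$ is total computable, so the standard fact that prefix-free complexity is non-increasing under total computable maps (up to an additive constant) yields a constant $c$, independent of $n$, and for each $n$ a string $\sigma$ with $\mathbb{U}(\sigma)\downarrow=B^{n}$ and $|\sigma|\leq K(n)+c$. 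Explicitly: let $M$ be the prefix-free machine that on input $p$ runs $\mathbb{U}(p)$, reads the output as a natural number $n$, and outputs the index of $B^{n}$; then $M$ is prefix-free since $\mathbb{U}$ is, and the invariance theorem supplies $c$.

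Plugging $\sigma$ and $F=B^{n}$ into Definition \ref{defn:4} gives
\[
QK^{\epsilon}(\tau)\leq|\sigma|+\log|F|\leq\bigl(K(n)+c\bigr)+n=|\tau|+K(|\tau|)+c,
\]
so the theorem holds with $d:=c$. Note that $\epsilon$ entered only through the inequality $1>\epsilon$ and $\tau$ only through $n=|\tau|$, so $d$ depends on neither, exactly as claimed. There is no real obstacle in this argument; the only care required is the routine verification that $M$ is prefix-free and that the invariance constant is uniform in $n$. The substance of the theorem is simply the observation that the full standard basis is a legitimate approximating set — with $\sum_{v\in F}\langle v|\tau|v\rangle=1$ — for \emph{every} density matrix and \emph{every} $\epsilon<1$ simultaneously, so it supplies a uniform upper bound.
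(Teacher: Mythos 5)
Your proposal is correct and follows essentially the same route as the paper: the paper's (very terse) proof likewise defines a prefix-free machine that, on a $\mathbb{U}$-program for $n$, outputs the standard computational basis of $\mathbb{C}^{2^{n}}$, and your write-up simply fills in the routine details (the trace computation, $\log|F|=n$, and the invariance constant). Your parenthetical caveat that $\epsilon<1$ is needed is a fair observation about the theorem's statement, not a defect of the argument.
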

\begin{proof}
Let $k>1$. Let $P$ be the prefix-free Turing machine which on input $\pi$, such that $\mathbb{U}(\pi)=n$ outputs $E=(e_{i})_{i=1}^{2^{n}}$, the standard computational basis of $\mathbb{C}^{2^{n}}$. 
\end{proof}
It may seem that this upper bound, given by the apparently inefficient device of using $2^{|\tau|}$ many orthonormal vectors to approximate $\tau$, can be improved. However, the bound is tight by Theorem \ref{thm:67} together with the classical counting theorem (see \cite{misc1}, theorem 3.7.6.).
\\
As we shall see later, the unique tracial state $\tau = (\tau_{n})_{n\in \mathbb{N}}$ where for all $n$, $\tau_n$ is the $2^n$ by $2^n$ diagonal matrix with $2^{-n}$ along the diagonal is quantum Martin-L{\"o}f random. Theorem \ref{thm:31} shows that its initial segments achieve the upper bound given by Theorem \ref{thm:8}.
\begin{thm}
\label{thm:31}
Let $k$ be any natural number. There is a constant $t$ such that for all $n$, $QK^{2^{-k}}(\tau_{n})\geq n+K(n)-t$.
\end{thm}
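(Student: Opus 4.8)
The plan is to run the argument of Theorem~\ref{thm:8} in reverse, specialized to the maximally mixed state: the ``$\epsilon$-closeness'' condition in Definition~\ref{defn:4} forces the witnessing orthonormal set $F$ to have $|F|$ comparable to $2^{n}$, and any program producing $F$ must in particular determine $n$, so its length is at least $K(n)$ up to an additive constant.

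First I would record the one computation that makes $\tau_n$ special. If $F$ is an orthonormal subset of $\mathbb{C}^{2^{n}}_{alg}$, then since $\tau_n = 2^{-n} I_{2^{n}}$,
\[
\sum_{v\in F}\langle v|\tau_n|v\rangle \;=\; 2^{-n}\sum_{v\in F}\langle v|v\rangle \;=\; 2^{-n}|F|.
\]
Consequently, if $(\sigma,F)$ attains the infimum defining $QK^{2^{-k}}(\tau_n)$ --- that is, $\mathbb{U}(\sigma)\downarrow = F$ with $2^{-n}|F| > 2^{-k}$ and $|\sigma| + \log|F| = QK^{2^{-k}}(\tau_n)$ --- then $|F| > 2^{\,n-k}$, hence $\log|F| > n-k$ whenever $n \ge k$. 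There are only finitely many $n < k$, and for each of those the quantity $n + K(n) - QK^{2^{-k}}(\tau_n)$ is a fixed finite number, so those cases can be swept into the final constant; I therefore assume $n \ge k$ from now on.

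Next I would construct an auxiliary machine $M$ that on input $\sigma$ simulates $\mathbb{U}(\sigma)$, and, if this halts and outputs a code for a nonempty finite set of vectors all lying in $\mathbb{C}^{2^{m}}$ for a common $m$, outputs $m$ (otherwise $M$ diverges). Since $\mathrm{dom}(M) \subseteq \mathrm{dom}(\mathbb{U})$ and the latter is prefix-free, $M$ is a prefix-free machine, so by universality of $\mathbb{U}$ there is a constant $c$, depending only on $M$ and hence not on $n$, with $K(m) \le K_M(m) + c$ for every $m$. Applied to our witness $(\sigma,F)$: here $F$ is a nonempty (indeed $|F| > 2^{n-k} \ge 1$) orthonormal subset of $\mathbb{C}^{2^{n}}_{alg}$, so $M(\sigma) = n$, giving $K(n) \le |\sigma| + c$.

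Finally I would combine the two bounds:
\[
QK^{2^{-k}}(\tau_n) \;=\; |\sigma| + \log|F| \;>\; \bigl(K(n) - c\bigr) + (n - k) \;=\; n + K(n) - (c + k),
\]
so the theorem holds with $t$ equal to $c + k$, increased if necessary to dominate the finitely many discrepancies at $n < k$. The argument is short; the only points needing care are that the coding constant $c$ is genuinely uniform in $n$ (it is, being the universality constant for the single fixed machine $M$), that $M$ inherits prefix-freeness from $\mathbb{U}$, and the trivial bookkeeping for small $n$. I do not anticipate a real obstacle.
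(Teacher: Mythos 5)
Your proposal is correct and uses the same two ingredients as the paper's proof: the identity $\sum_{v\in F}\langle v|\tau_n|v\rangle = 2^{-n}|F|$ forcing $\log|F| > n-k$, and the auxiliary prefix-free machine recovering $n$ from a $\mathbb{U}$-description of $F$, giving $K(n)\le|\sigma|+c$. The paper merely packages the same argument as a proof by contradiction, so the two are essentially identical (and your separate treatment of $n<k$ is unnecessary, since $|F|\ge 1 > 2^{n-k}$ already gives $\log|F|>n-k$ there too).
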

\begin{proof} 
Fix a $k$ and suppose towards a contradiction that for all $t \in \mathbb{N}$, there is a $n_{t} $ such that $QK^{2^{-k}}(\tau_{n_{t}})< n_{t}+K(n_{t})-t$. So, for all $t$, there are $F_{t} \subseteq \mathbb{C}^{2^{n_{t}}}$ and $\sigma_{t}$ such that $\mathbb{U}(\sigma_{t})=F_{t}$ and

\[2^{-k}<\sum_{v\in F_{t}} \big<v|\tau_{n_{t}}|v\big>= 2^{-n_{t}}|F_{t}|,\]
and
\[|\sigma_{t}|+\text{log}(|F_{t}|) < n_{t}+K(n_{t})-t.\]
\\
Taking log on both sides of the first inequality and inserting in the second gives that for all $t$, $n_t$ and $\sigma_{t}$,
\begin{align}
    \label{eq:43}
    && t-k+|\sigma_{t}| <  K(n_{t}). 
\end{align}
Now, define a prefix-free machine $M$ as follows. On input $\pi$, $M$ checks if $\mathbb{U}(\pi)$ halts and outputs a orthonormal set $F\subseteq \mathbb{C}^{2^{n}}$ for some $n$. If so, then $M(\pi)=n$. Let r be the coding constant of $M$. Note that for all $t$, $M(\sigma_{t})=n_t$. So, $K(n_t) \leq |\sigma_{t}|+r.$
Together with \eqref{eq:43}, we have that for all $t$,
$t-k+|\sigma_{t}|< |\sigma_{t}|+r$. So, $t-k<r$ for all $t$, a contradiction.

\end{proof}

In contrast to Lemma \ref{lem:K}, we have,
\begin{lem}
Fix an $\epsilon>0$ and an $n\in \mathbb{N}$. It is not true that for all $\sigma$, complex algebraic pure states in $\mathbb{C}^{2^{n}}$, $QK^{\epsilon}(|\sigma\big>\big<\sigma|)=^{+} K(\sigma)$.
\end{lem}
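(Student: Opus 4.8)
The plan is to exploit the following asymmetry: once the dimension $2^n$ is fixed, $QK^\epsilon$ of any pure state on $\mathbb{C}^{2^n}$ is bounded above by an absolute constant, whereas $K(\sigma)$ is \emph{unbounded} as $\sigma$ ranges over the infinitely many complex algebraic pure states of $\mathbb{C}^{2^n}$. (Here we take $n\ge 1$; for $n=0$ there is essentially only one pure state and the claimed equality holds vacuously, so the lemma is implicitly about $n\ge 1$.)

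First I would record the upper bound. By Theorem~\ref{thm:8}, for every pure state $|\sigma\rangle\langle\sigma|$ with $|\sigma|=n$ we have $QK^\epsilon(|\sigma\rangle\langle\sigma|)\le n+K(n)+d$; since $n$ is fixed this right-hand side is a single constant $B:=n+K(n)+d$. The witness is transparent: a shortest $\mathbb{U}$-program for the number $n$, extended to a machine that outputs the standard computational basis $E$ of $\mathbb{C}^{2^n}$, gives $\sum_{v\in E}\langle v|\sigma\rangle\langle\sigma|v\rangle=1>\epsilon$ together with $\log|E|=n$.

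Next I would produce an explicit infinite family of complex algebraic pure states living in the \emph{fixed} space $\mathbb{C}^{2^n}$: for instance the rank-one projections onto the unit vectors $(a,\sqrt{1-a^2},0,\dots,0)$ for rational $a\in(0,1)$. These are pairwise distinct and have complex algebraic entries, so --- via the canonical computable presentation of the algebraic complex numbers used throughout the paper --- they are indexed by an infinite set $S\subseteq\mathbb{N}$, on which $K$ is a genuinely defined function (as noted after Lemma~\ref{lem:K}). Since for each $m$ only finitely many natural numbers $x$ satisfy $K(x)\le m$, the function $K$ is unbounded on $S$.

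Finally I would combine the two observations. If the equality $QK^\epsilon(|\sigma\rangle\langle\sigma|)=^+K(\sigma)$ held for all complex algebraic pure states $\sigma$ of $\mathbb{C}^{2^n}$, there would be one constant $c$ with $K(\sigma)\le QK^\epsilon(|\sigma\rangle\langle\sigma|)+c\le B+c$ for every such $\sigma$, contradicting the unboundedness of $K$ on the family just exhibited. I do not anticipate a real obstacle; the only steps needing a line of care are (i) checking that the chosen family really is infinite and sits inside $\mathbb{C}^{2^n}$ for the given $n$, and (ii) confirming that $K(\sigma)$ is well-defined on it, which holds because the algebraic complex numbers admit a computable presentation.
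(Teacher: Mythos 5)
Your proof is correct and follows essentially the same route as the paper: both use the uniform upper bound $QK^{\epsilon}(|\sigma\rangle\langle\sigma|)\le n+K(n)+d$ from Theorem \ref{thm:8} and then derive a contradiction from the fact that $K$ cannot be bounded on the infinitely many complex algebraic pure states of the fixed space $\mathbb{C}^{2^n}$. The only cosmetic difference is that you exhibit an explicit infinite family and note the trivial $n=0$ case, which the paper leaves implicit.
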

\begin{proof}
Clearly, for all $\sigma$, complex algebraic pure states, $QK^{\epsilon}(|\sigma\big>\big<\sigma|)\leq^{+} K(\sigma)$ holds.
Suppose that for some $\epsilon$ and $n\in \mathbb{N}$, for all $\sigma \in \mathbb{C}^{2^{n}}_{alg}$, pure states, $QK^{\epsilon}(|\sigma\big>\big<\sigma|)\geq^{+} K(\sigma)$ holds.
By Theorem \ref{thm:8}, for all $\sigma \in \mathbb{C}^{2^{n}}_{alg}$, pure,  
$K(n)+n\geq^{+}QK^{\epsilon}(|\sigma\big>\big<\sigma|)\geq^{+} K(\sigma)$. This is a contradiction as there are only finitely many programs of length atmost $n+K(n)$ but there are infinitely many complex algebraic pure states, $\sigma$ of length $n$.
\end{proof}Analogously to $QC$ \cite{Berthiaume:2001:QKC:2942985.2943376,brudno} a `counting condition' also holds for $QK$: the cardinality of a orthonormal set of vectors with bounded complexity has an upper bound depending on the complexity bound. The counting condition for $QK$ is established in a different fashion than that for $QC$ (which uses entropy inequalities like  Holevo's-chi \cite{Berthiaume:2001:QKC:2942985.2943376} and Fanne's inequality \cite{brudno}). This reflects once again that $QC$ invloves approximating a density matrix by another density matrix while $QK$ involves `projecting' a density matrix onto a  subspace.
\begin{thm}
\label{thm:count}
Let $V=(v_{i})_{i=1}^{N} \subset \mathbb{C}^{2^{s}}$ be a collection of orthonormal vectors with $QK^{\epsilon}(|v_{i}\big>\big<v_{i}|)\leq B$ for all $i$. Then, $N\leq \epsilon^{-1}2^{B}$. 
\end{thm}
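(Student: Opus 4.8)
The plan is to combine a counting bound coming from orthonormality --- essentially the computation behind Lemma \ref{lem:30} --- with the Kraft inequality applied to the prefix-free domain of $\mathbb{U}$.

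First I would unpack the hypothesis. For each $i \leq N$, the bound $QK^{\epsilon}(|v_{i}\rangle\langle v_{i}|) \leq B$ lets me fix a program $\sigma_{i}$ with $\mathbb{U}(\sigma_{i})\downarrow = F_{i}$, an orthonormal subset of $\mathbb{C}^{2^{s}}_{alg}$, such that $\sum_{v \in F_{i}}\langle v|v_{i}\rangle\langle v_{i}|v\rangle > \epsilon$ and $|\sigma_{i}| + \log|F_{i}| \leq B$. Writing $O_{i} = \sum_{v\in F_{i}}|v\rangle\langle v|$ for the Hermitian projection onto $\mathrm{span}(F_{i})$, this reads $\langle v_{i}|O_{i}|v_{i}\rangle > \epsilon$ and $\mathrm{Tr}(O_{i}) = |F_{i}|$.

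The key step is to organize the count around the witnessing programs rather than the vectors, since $i \mapsto \sigma_{i}$ need not be injective and a single $v_{i}$ may be witnessed by many programs. Let $D = \{\sigma_{i} : i \leq N\} \subseteq \mathrm{dom}(\mathbb{U})$ and, for $\sigma \in D$, put $S_{\sigma} = \{i \leq N : \sigma_{i} = \sigma\}$; as $\mathbb{U}$ is deterministic, all $i \in S_{\sigma}$ share one output $F_{\sigma} := \mathbb{U}(\sigma)$ and one projection $O_{\sigma}$. Since $(v_{i})_{i\in S_{\sigma}}$ is orthonormal, $P := \sum_{i\in S_{\sigma}}|v_{i}\rangle\langle v_{i}|$ is a projection, so $\mathrm{Tr}(O_{\sigma}(I-P)) \geq 0$, and hence
\[
\epsilon\,|S_{\sigma}| < \sum_{i\in S_{\sigma}}\langle v_{i}|O_{\sigma}|v_{i}\rangle = \mathrm{Tr}(O_{\sigma}P) \leq \mathrm{Tr}(O_{\sigma}) = |F_{\sigma}|,
\]
which gives $|S_{\sigma}| \leq \epsilon^{-1}|F_{\sigma}|$. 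This is precisely the estimate of Lemma \ref{lem:30}, used with the orthonormal system $(v_{i})_{i\in S_\sigma}$ in place of a basis.

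Finally I would sum over $\sigma \in D$. From $|\sigma| + \log|F_{\sigma}| \leq B$ we get $|F_{\sigma}| \leq 2^{B-|\sigma|}$, and the prefix-freeness of $\mathbb{U}$ yields $\sum_{\sigma\in D}2^{-|\sigma|} \leq 1$ by Kraft's inequality. Therefore
\[
N = \sum_{\sigma\in D}|S_{\sigma}| \leq \epsilon^{-1}\sum_{\sigma\in D}|F_{\sigma}| \leq \epsilon^{-1}2^{B}\sum_{\sigma\in D}2^{-|\sigma|} \leq \epsilon^{-1}2^{B}.
\]
I do not expect a genuine obstacle; the only point needing care is the observation just made --- that the witnessing programs are neither unique per vector nor injective in the vectors --- which forces the count to be run over $\mathrm{dom}(\mathbb{U})$ with Kraft's inequality rather than by naively bounding the number of short programs.
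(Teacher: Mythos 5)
Your proof is correct. It rests on the same two ingredients as the paper's argument --- the orthonormality of the $v_i$, turned into a trace bound against a Hermitian projection, and Kraft's inequality on the prefix-free domain of $\mathbb{U}$ --- but the bookkeeping is organized dually. The paper fixes one representative index for each \emph{distinct} output projection, forms the single projection $F$ onto the span of the union $\bigcup A_i$ of all the witnessing orthonormal sets, bounds $\mathrm{Tr}(F)\leq 2^{B}$ by Kraft, and then applies orthonormality of all $N$ vectors at once against this one $F$ via $\epsilon N < \sum_i \langle v_i|F|v_i\rangle \leq \mathrm{Tr}(F)$. You instead partition the index set by witnessing program, apply the orthonormality estimate locally within each fiber $S_\sigma$ against the single projection $O_\sigma$ (your inequality $\epsilon|S_\sigma| < \mathrm{Tr}(O_\sigma P)\leq \mathrm{Tr}(O_\sigma)$ is exactly the Lemma~\ref{lem:30} computation), and then sum the fiber bounds using Kraft. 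Both routes correctly handle the point you flag --- that $i\mapsto\sigma_i$ is neither injective nor canonical. The paper's global version has the side benefit of exhibiting a single projection of trace at most $2^{B}$ that is $\epsilon$-close to every $v_i$ simultaneously (a structural statement in the spirit of Theorem~\ref{thm:LinA}), whereas your local version is marginally more economical if all one wants is the cardinality bound. One small point common to both write-ups: since $QK^{\epsilon}$ is defined as an infimum, extracting a witness with $|\sigma_i|+\log|F_i|\leq B$ strictly requires observing that only finitely many values of the pair $(|\sigma|,|F|)$ give $|\sigma|+\log|F|\leq B$, so the infimum is attained; this is harmless and the paper glosses over it in the same way.
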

\begin{proof}
For each $v_i$, we have $\sigma_i$ and $F_i$,\[F_{i}= \sum_{t\in A_{i}} |t\big>\big<t|,\]
with $A_{i}\subset \mathbb{C}^{2^{s}}$ orthonormal, such that $\big<v_{i}|F_{i}|v_{i}\big>>\epsilon$, $\mathbb{U}(\sigma_{i})=F_i$ and $|\sigma_{i}|+$log$|A_{i}|\leq B$. Let $D\subseteq \{1,2\cdots N\}$ be maximal such that $F_{i}\neq F_j$ for $i,j$ in $D$. ($D \neq \{1,2\cdots N\}$ may hold as there may be $i,j$ with $F_{i}=F_j$). Let $F$ be the orthogonal projector onto the subspace spanned by $A:=\bigcup_{i \in D}A_{i}$. Then, $A$ has dimension atmost $\sum_{i \in D}|A_{i}|$. By $|A_{i}| \leq 2^{-|\sigma_{i}|}2^{B}$ for all $i$ and noting that $\sigma_{i}\neq \sigma_j$ for $i,j$ in $D$,
\[\text{Tr}(F) \leq \sum_{i \in D}|A_{i}| \leq \sum_{i \in D} 2^{-|\sigma_{i}|}2^{B}\leq 2^{B}\sum_{\sigma \in \text{dom}(\mathbb{U})}2^{-|\sigma|} \leq  2^B.\]

The reason behind summing over $i \in D$, rather than over $i \leq N$ was to get the second to last inequality. By the maximality of $D$, $A =\bigcup_{i \leq N}A_{i}$ and so, $A_i$ is a subspace of $A$ for all $i\leq N$. So, $\big<v_{i}|F|v_{i}\big>\geq \big<v_{i}|F_{i}|v_{i}\big> >\epsilon $ for all $i \leq N$. By orthonormality of $V$,

\[\epsilon N < \sum_{i} \big<v_{i}|F|v_{i}\big> \leq \text{Tr}(F).\]
\end{proof}

\section{Quantum algorithmic randomness}
\label{sec:4}
We briefly review quantum algorithmic randomness. All definitions in this section are from \cite{unpublished} or \cite{bhojraj2020quantum} unless indicated otherwise.

While it is clear what one means by an infinite sequence of bits, it is not immediately obvious how one would formalize the notion of an infinite sequence of qubits. To describe this, many authors have independently come up with the notion of a \emph{state} \cite{unpublished,article,brudno}. We will need the one given by Nies and Scholz \cite{unpublished}. 
\begin{defn}
A \emph{state}, $\rho=(\rho_n)_{n\in \mathbb{N}}$ is an 
infinite sequence of density matrices such that $\rho_{n} \in \mathbb{C}^{2^{n} \times 2^{n}}$ and $\forall n$,  $PT_{\mathbb{C}^{2}}(\rho_n)=\rho_{n-1}$.
\end{defn}
The idea is that $\rho$ represents an infinite sequence of qubits whose first $n$ qubits are given by $\rho_n$. Here, $PT_{\mathbb{C}^{2}}$ denotes the partial trace which `traces out' the last qubit from $\mathbb{C}^{2^n}$. The definition requires $\rho$ to be coherent in the sense that for all $n$, $\rho_n$, when `restricted' via the partial trace to its first $n-1$ qubits, has the same measurement statistics as the state on $n-1$ qubits given by $\rho_{n-1}$.
Note that a state $\rho$ is not considered to belong to an (infinite dimensional) infinite tensor product of finite dimensional Hilbert spaces \cite{vonNeumann1939}. Rather, it is an infinite sequence of density matrices, each belonging to a finite dimensional Hilbert space. We briefly comment on the partial trace operator (see \cite{Nielsen:2011:QCQ:1972505} for details) which appears in the definition of a state.
\begin{remark}
While a pure state is a single quantum system, a strictly (non-pure) mixed state is a convex combination of several pure states. By not being a \emph{single} quantum system but rather a probability distribution on a set of pure states, a strictly mixed state contains classical information (the distribution) in addition to purely quantum objects (the pure states on which the distribution is defined).

The partial trace of a pure state in a product (composite) Hilbert space describes the proper subsystem given by restricting that state to a component of the tensor product. As an entangled pure state is not a product tensor of pure states, the partial trace of an entangled pure state is a strictly mixed state. This reflects the fact that a proper subsystem of an entangled system is not a single quantum state but rather necessarily needs to be described using a probability distribution over many quantum states.

So, as against unitary transformations which map pure states to pure states, the partial trace operator can map a pure state to a classical probabilistic mixture of multiple pure states.
\end{remark}
The following state will be the quantum analogue of Lebesgue measure.
\begin{defn}
\label{def:tr}
Let $\tau=(\tau_n)_{n\in \mathbb{N}}$ be the state given by setting $\tau_n = \otimes_{i=1}^n I$ where $I$ is the two by two identity matrix.
\end{defn}
\begin{defn}
A \emph{special projection} is a hermitian projection matrix with complex algebraic entries.
\end{defn}
Since the complex algebraic numbers (roots of polynomials with rational coefficients) have a computable presentation, we may identify a special projection with a natural number and hence talk about computable sequences of special projections.
\begin{defn}
\label{defn:sigclass}
A quantum $\Sigma_{1}^0$
set (or q-$\Sigma_{1}^0$
set for short) G is a computable
sequence of special projections $G=(p_{i})_{i\in \mathbb{N}}$ such that $p_i$ is $2^i$ by $2^i$ and range$(p_i \otimes I) \subseteq$ range $(p_{i+1})$ for all $i\in \mathbb{N}$. 

\end{defn}
\begin{defn}
If $\rho$ is a state and $G=(p_{n})_{n\in \mathbb{N}}$ a q-$\Sigma_{1}^0$
set as above, then $\rho(G):=\lim_{n}$ Tr$(\rho_{n}p_n)$.
\end{defn}

\begin{defn}
A \emph{quantum Martin-L{\"o}f test} (q-MLT) is a computable sequence, $(S_{m})_{m \in \mathbb{N}}$ of q-$\Sigma_{1}^0$ classes such that $\tau(S_m)$ is less than or equal to $2^{-m}$ for all m, where $\tau$ is as in Definition \ref{def:tr}.
\end{defn}

\begin{defn}
$\rho$ is \emph{q-MLR} if for any q-MLT $(S_{m})_{m \in \mathbb{N}}$, $\inf_{m \in \mathbb{N}}\rho(S_m)=0$.
\end{defn}
Roughly speaking, a state is q-MLR if it cannot be `detected by projective measurements of arbitrarily small rank'.
\begin{defn}
$\rho$ is said to \emph{fail  the q-MLT $(S_{m})_{m \in \mathbb{N}}$, at order $\delta$}, if $\inf_{m \in \mathbb{N}}\rho(S_m)>\delta$. $\rho$ is said to \emph{pass  the q-MLT $(S_{m})_{m \in \mathbb{N}}$ at order $\delta$} if it does not fail it at $\delta$. 
\end{defn}
So, $\rho$ is q-MLR if it passes all q-MLTs at all $\delta>0$. Quantum Martin-L{\"o}f randomness is modelled on the classical notion: 
An infinite bitstring $X$ is said to \emph{pass} the Martin-L{\"o}f test $(U_{n})_n$ if $X \notin \bigcap_{n}U_n$ and is said to be \emph{Martin-L{\"o}f random (MLR)} if it passes all Martin-L{\"o}f tests (See 3.2.1 in \cite{misc}). 

A related notion is Solovay randomness. A computable sequence of $\Sigma^{0}_1$ classes, $(S_{n})_n$ is a \emph{Solovay test} if $\sum_{n} \mu(S_{n})$, the sum of the Lebesgue measures of the $S_n$s is finite. An infinite bitstring $X$ \emph{passes} $(S_{n})_n$ if $X\in S_n$ for infinitely many $n$ (See 3.2.18 in \cite{misc}).

We obtain a notion of a quantum Solovay test by replacing `$\Sigma^{0}_1$ class' and `Lebesgue measure' in the definition of classical Solovay tests with `quantum-$\Sigma^{0}_1$ set' and $\tau$ respectively. The following definitions are from \cite{bhojraj2020quantum} unless indicated otherwise:
\begin{defn}\label{defn:1} 
A uniformly computable sequence of quantum-$\Sigma^{0}_1$ sets, $(S^{k})_{k\in\omega}$ is a \emph{ quantum-Solovay test} if  $\sum_{k\in \omega} \tau(S^{k}) <\infty.$\end{defn}\begin{defn}\label{defn:2}
For $0<\delta<1$, a state $\rho$ \emph{fails the Solovay test $(S^k)_{k\in\omega}$ at level $\delta$} if there are infinitely many $k$ such that $\rho(S^k)>\delta$.
\end{defn}
\begin{defn}
A state $\rho$ \emph{passes the Solovay test $(S^k)_{k\in\omega}$} if for all $\delta>0$, $\rho$ does not fail $(S^k)_{k\in\omega}$ at level $\delta$. I.e., lim$_{k}\rho(S^{k})=0$.
\end{defn}
\begin{defn}
A state $\rho$ is \emph{quantum Solovay random} if it passes all quantum Solovay tests.\end{defn}
It is remarkable that $X$ is MLR if and only if it passes all Solovay tests (See 3.2.19 in \cite{misc}). This is also true in the quantum realm\cite{bhojraj2020quantum}.

An \emph{interval Solovay test} is a Solovay test, $(S_{n})_n$ such that each $S_n$ is generated by a finite collection of strings (See 3.2.22 in \cite{misc}). Its quantum version is:

\begin{defn}\cite{unpublished}
A \emph{strong Solovay test} is a computable sequence of special projections $(S^{m})_{m}$ such that $\sum_{m}\tau(S^{m}) < \infty$. A state $\rho$ fails $(S^{m})_{m}$ at $\epsilon$ if for infinitely many $m$, $\rho(S^{m})>\epsilon$.
\end{defn}
\begin{defn}\cite{unpublished}
A state $\rho$ is \emph{weak Solovay random} if it passes all strong quantum Solovay tests.\end{defn}
It is open whether weak Solovay randomness is equivalent to q-MLR. We need the notion of a computable real number to talk about Schnorr randomness: For the purposes of this paper, a function, $f$ from the natural numbers to the rationals is said to be \emph{computable} if there is a Turing machine, $\phi$ such that on input $n$, $\phi$ halts and outputs $f(n)$ (See Theorem 5.1.2 in \cite{misc1}). Note here that we interpret the output of a Turing machine as a rational number.
\begin{defn}
A sequence $(a_n)_{n\in \mathbb{N}}$ is said to be \emph{computable} if there is a computable function $f$, such that $f(n)=a_n$.

\end{defn}

\begin{defn}
A real number $r$ is said to be \emph{computable} if there is a computable function $f$ such that for all $n$, $|f(n)-r|<2^{-n}$.
\end{defn}

By 7.2.21 and 7.2.22 in \cite{misc1}, a Schnorr test may be defined as:
\begin{defn}
A \emph{Schnorr test} is an interval Solovay test, $(S^{m})_{m}$ such that $\sum_{m}\mu(S^{m}) $ is a computable real number. 
\end{defn}
An infinite bitstring passes a Schnorr test if it does not fail it (using the same notion of failing as in the Solovay test). We mimic this notion in the quantum setting.
\begin{defn}
\label{defn:Schnorr}
A \emph{quantum Schnorr test} is a strong Solovay test, $(S^{m})_{m}$ such that $\sum_{m}\tau(S^{m}) $ is a computable real number. A state is quantum Schnorr random if it passes all Schnorr tests.
\end{defn}

\section{Relating QK to randomness }
\label{sec:3} 
\subsection{A Chaitin type result}

Theorem \ref{thm:7} is a Chaitin type characterization of the weak Solovay random states in terms of $QK$. (Chaitin's result in the classical setting says that an infinite bitstring $X$ is Solovay random if and only lim$_{n}K(X\upharpoonright n) - n = \infty$). 
\begin{thm}
\label{thm:7}
A state $\rho = (\rho_{n})_{n}$ is weak Solovay random if and only if \[\forall \epsilon>0 \forall c >0 \forall^{\infty} n  QK^{\epsilon}(\rho_{n}) \geq n+c.\]
\end{thm}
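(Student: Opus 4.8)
The plan is to prove both implications by contraposition, following the template of Chaitin's theorem: one direction uses the Kraft inequality for $\mathbb{U}$ (exactly as in the proof of Theorem \ref{thm:count}), the other uses the Kraft--Chaitin theorem.

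\emph{The growth condition implies weak Solovay randomness.} Suppose the right-hand side fails, so there are $\epsilon>0$, a constant $c$, and an infinite set $A$ of integers $n$ with $QK^{\epsilon}(\rho_n)<n+c$. For each $n\in A$ fix a witness $\sigma_n\in\text{dom}(\mathbb{U})$ with $\mathbb{U}(\sigma_n)\downarrow=F_n$, an orthonormal subset of $\mathbb{C}^{2^{n}}_{alg}$, such that $\sum_{v\in F_n}\langle v|\rho_n|v\rangle>\epsilon$ and $|\sigma_n|+\log|F_n|<n+c$. Now build a sequence $(T^j)_j$ by dovetailing $\mathbb{U}$ on all inputs: whenever a computation $\mathbb{U}(\sigma)\downarrow=F$ completes with $F$ an orthonormal set of algebraic vectors all lying in a single $\mathbb{C}^{2^{n}}$ and with $|\sigma|+\log|F|<n+c$, emit $P_{\sigma}:=\sum_{v\in F}|v\rangle\langle v|$ as the next term (and emit the $2\times2$ zero projection whenever nothing new is found, so the sequence is total and computable). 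Since distinct emitted $\sigma$'s are distinct elements of $\text{dom}(\mathbb{U})$ and $2^{-n}|F|<2^{\,c-|\sigma|}$ for each, we get $\sum_j\tau(T^j)<2^{c}\sum_{\sigma\in\text{dom}(\mathbb{U})}2^{-|\sigma|}\le 2^{c}<\infty$, so $(T^j)_j$ is a strong Solovay test. Each $\sigma_n$ $(n\in A)$ is emitted, these $\sigma_n$ are pairwise distinct (the $F_n$ live in spaces of distinct dimension), and $\rho(P_{\sigma_n})=\text{Tr}(\rho_n P_{\sigma_n})=\sum_{v\in F_n}\langle v|\rho_n|v\rangle>\epsilon$, so $\rho$ fails $(T^j)_j$ at level $\epsilon/2$ and is not weak Solovay random.

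\emph{Weak Solovay randomness implies the growth condition.} Suppose $\rho$ is not weak Solovay random, so it fails some strong Solovay test $(S^m)_m$ at a level $\delta>0$; write $n_m$ for the exponent with $S^m$ acting on $\mathbb{C}^{2^{n_m}}$, and let $A=\{m:\text{Tr}(\rho_{n_m}S^m)>\delta\}$, which is infinite. For $m\in A$ we have $S^m\neq 0$, hence $\tau(S^m)=2^{-n_m}\,\text{rank}(S^m)>0$; since $\sum_m\tau(S^m)<\infty$, for each $N$ only finitely many $m$ satisfy $\tau(S^m)\ge 2^{-N}$, so $\{n_m:m\in A\}$ is infinite. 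Fix $m\in A$ and let $F$ be an orthonormal basis of $\text{range}(S^m)$ made of algebraic vectors; it is computable from $m$, has $|F|=\text{rank}(S^m)$, and satisfies $\sum_{v\in F}\langle v|\rho_{n_m}|v\rangle=\text{Tr}(\rho_{n_m}S^m)>\delta$. Hence $K(F)\leq^{+}K(m)$, and therefore
\[QK^{\delta}(\rho_{n_m})\leq K(m)+\log|F|+O(1)=n_m+K(m)+\log\tau(S^m)+O(1).\]
Because the weights $\tau(S^m)$ are uniformly computable in $m$ with $\sum_m\tau(S^m)<\infty$, the Kraft--Chaitin theorem yields $K(m)\leq^{+}-\log\tau(S^m)$ with a constant depending on the test, so $K(m)+\log\tau(S^m)=O(1)$ on $A$ and there is a constant $c$ with $QK^{\delta}(\rho_{n_m})<n_m+c$ for every $m\in A$. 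Since $\{n_m:m\in A\}$ is infinite, the right-hand side fails with $\epsilon=\delta$ and this $c$.

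\emph{Main obstacle.} The delicate part is the last step of the second direction: extracting from the mere finiteness of $\sum_m\tau(S^m)$ the compression bound $K(m)\leq^{+}-\log\tau(S^m)$ (this works because each individual $\tau(S^m)=2^{-n_m}\,\text{rank}(S^m)$ is computable from $m$, even though the sum need not be a computable real, so that $(S^m)_m$ need not be a \emph{Schnorr} test), together with the check that the failing indices exhibit unboundedly many dimensions $n_m$, so that what we obtain is compression of $\rho_n$ for infinitely many $n$ rather than for finitely many ever-larger matrices. The first direction is, by contrast, the routine quantum Kraft inequality argument once the dovetailed enumeration is recognized as a genuine computable sequence of special projections.
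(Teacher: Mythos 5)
Your proof is correct and follows essentially the same route as the paper's: the direction from the growth condition to weak Solovay randomness is the paper's construction of the strong Solovay test $(P_\sigma)_{\sigma\in T}$ from all sufficiently compressing halting computations of $\mathbb{U}$ (you merely make the dovetailed enumeration explicit), and the converse is the paper's argument that $QK^{\delta}(\rho_{n_m})\leq^{+} n_m+K(m)+\log\tau(S^m)$ combined with $K(m)\leq^{+}-\log\tau(S^m)$ via Kraft--Chaitin (the paper cites the equivalent Lemma 3.12.2 of Nies). Your explicit verification that the failing indices realize infinitely many distinct dimensions $n_m$ is a slightly more careful justification of what the paper dispatches with ``we may assume the sequence $n_m$ to be strictly increasing.''
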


\begin{proof}
($\Longleftarrow)$: Suppose for a contradiction that $\rho$ fails a strong Solovay test $(S_m)_{m}$ at $\epsilon>0$. The idea will be to use the subspaces given by the $ S_m $s, to approximate $\rho$. More, precisely, the $F$ appearing in the definition of $QK^{\epsilon}$ will be the orthonormal vectors given by the projection $S_m$ for an appropriate $m$. The details are as follows.  Let $M$ be the prefix-free machine doing the following. On input $\sigma$, if $\mathbb{U}(\sigma)=m$, then output $(v_i)_{i}$ where \[S_{m}= \sum_{i} |v_{i}\big>\big<v_{i}|.\] Let $c_{M}$ be it's coding constant. Take an $m$ such that Tr$(\rho_{n_{m}}S_{m})>\epsilon$ (Notation: $n_{m}$ is the natural number $n$ such that $S_m$ is a projection on $n$ qubits.). By the choice of $m$, \[\text{QK}^{\epsilon}(\rho_{n_{m}}) \leq \text{K}(m)+ c_{M}+ \text{log}(2^{n_{m}}\tau(S_{m}))= n_{m}+ \text{K}(m) - f(m) + c_{M},\]
where $f$ is the function: $f(m)= -\text{log}(\tau(S_m))$. As $f$ is computable and as $\sum_{m} 2^{-f(m)} < \infty $ by the definition of a strong Solovay test, Lemma 3.12.2 in \cite{misc1} implies that for all $m$, $\text{K}(m) - f(m) \leq q$ for some constant $q$. Noting that we may assume the sequence $n_{m}$ to be strictly increasing in $m$ and  letting $c:= q+c_{M}+1$, we see that $\exists^{\infty} n$  such that $ \text{QK}^{\epsilon}(\rho_{n})< n+c.$ 
\\
($\Longrightarrow$):
Suppose toward a contradiction that there is a $\epsilon>0$ and a constant $c>0$ such that there are infinitely many $n$ with $ \text{QK}^{\epsilon}(\rho_{n})< n+c.$  Define a strong Solovay test $S$ as follows. Let $T$ be the set of all $\sigma$ such that $\mathbb{U}(\sigma)$ halts and outputs an orthonormal set $F_{\sigma}\subseteq  \mathbb{C}^{2^{n_{\sigma}}}$  such that $|\sigma|+$ log $|F_{\sigma}|< n_{\sigma} + c$. For all $\sigma \in T$, let \[P_{\sigma}:= \sum_{v\in F_{\sigma}} |v \big>\big<v |\] and let $S:= (P_{\sigma})_{\sigma \in T}$. For all $\sigma \in T$, $2^{|\sigma|}|F_{\sigma}|< 2^{n_{\sigma} + c}$. So, $ \tau(P_{\sigma})= 2^{-n_{\sigma}}|F_{\sigma}|< 2^{c-|\sigma| }$.  So,
\[\sum_{\sigma \in T}\tau(P_{\sigma}) < 2^{c}\sum_{\sigma \in T}2^{-|\sigma|} < 2^{c}\sum_{\sigma: \mathbb{U}(\sigma)\downarrow} 2^{-|\sigma|} < \infty, \]
since $\mathbb{U}$ is prefix-free. This shows that $S$ is a strong Solovay test. For any $n$ such that $ \text{QK}^{\epsilon}(\rho_{n})< n+c$, there is a $\sigma \in T$ such that Tr$(P_{\sigma} \rho_{n})>\epsilon$. So, $\rho$ fails $S$ at $\epsilon$.
\end{proof}
The following corollary shows the equivalence of weak Solovay and q-ML randomness for a specific type of states.
Let $B= ((b^{n}_{0},b^{n}_{1}))_{n\in \mathbb{N}}$ be a system ( Definition \ref{defn:78}). Let $A^{\infty}_B$ be the set of all states which are limits of elements from $A_B$ as in
\ref{rem:K}. For example, $b^{1}_{0}\otimes b^{2}_{1}\otimes b^{3}_{0}\otimes b^{4}_{1}\cdots =: \rho \in A^{\infty}_B$.

\begin{cor}
\label{cor}
For any $B$, weak Solovay randomness is equivalent to q-MLR on $A^{\infty}_B$.
\end{cor}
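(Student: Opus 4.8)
The plan is to show that for $\rho\in A^{\infty}_B$ both ``$\rho$ is weak Solovay random'' and ``$\rho$ is q-MLR'' are equivalent to Martin-L\"of randomness of the bitstring that parametrizes $\rho$, hence equivalent to each other. Fix a system $B=((b^n_0,b^n_1))_n$. An element $\rho\in A^{\infty}_B$ is a limit of members of $A_B$, so it is determined by an $X\in 2^{\omega}$: setting $\sigma_n:=b^1_{X(1)}\otimes\cdots\otimes b^n_{X(n)}\in A_B$ we have $\rho_n=|\sigma_n\rangle\langle\sigma_n|$. Write $\rho^{\mathrm{cl}}_X:=(|X\upharpoonright n\rangle\langle X\upharpoonright n|)_n$ for the diagonal state concentrated on $X$, and $U_n:=\bigotimes_{i=1}^{n}U^{(i)}$, where $U^{(i)}$ is the algebraic unitary on $\mathbb{C}^2$ sending $|0\rangle,|1\rangle$ to $b^i_0,b^i_1$; then $(U_n)_n$ is computable and $\rho_n=U_n(\rho^{\mathrm{cl}}_X)_n U_n^{*}$ for all $n$.

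For the weak Solovay side I would use Theorem \ref{thm:7} together with Remark \ref{rem:K}. By Theorem \ref{thm:7}, $\rho$ is weak Solovay random iff $\forall\epsilon>0\,\forall c>0\,\forall^{\infty}n\;QK^{\epsilon}(\rho_n)\geq n+c$. Fix a rational $\epsilon$. Remark \ref{rem:K}, applied to this $B$, gives $QK^{\epsilon}(\rho_n)=QK^{\epsilon}(|\sigma_n\rangle\langle\sigma_n|)=^{+}K(\sigma_n)=K(X\upharpoonright n)$, where the additive constant depends only on $\epsilon$ and $B$. Absorbing that constant into $c$, weak Solovay randomness of $\rho$ is equivalent to $\forall c>0\,\forall^{\infty}n\;K(X\upharpoonright n)\geq n+c$, i.e.\ to $\lim_n K(X\upharpoonright n)-n=\infty$, which by Chaitin's theorem says exactly that $X$ is Martin-L\"of random.

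For the q-MLR side I would transfer q-MLTs through the change of basis. Given a q-MLT $(S^m)_m$ with $S^m=(S^m_n)_n$, set $\tilde S^m_n:=U_n^{*}S^m_n U_n$ and $\tilde S^m:=(\tilde S^m_n)_n$. Each $\tilde S^m_n$ is a special projection of the same rank as $S^m_n$ (conjugation by the algebraic, computable $U_n$ preserves Hermitian projections, rank, and algebraicity), the identity $U_{n+1}^{*}(S^m_n\otimes I)U_{n+1}=(U_n^{*}S^m_n U_n)\otimes I$ shows range$(\tilde S^m_n\otimes I)\subseteq$ range $(\tilde S^m_{n+1})$, and $\tau(\tilde S^m)=\tau(S^m)\leq 2^{-m}$; so $(\tilde S^m)_m$ is a q-MLT, and $(S^m)_m\mapsto(\tilde S^m)_m$ is a computable bijection of the set of q-MLTs onto itself. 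Since Tr$(\rho_n S^m_n)=$ Tr$((\rho^{\mathrm{cl}}_X)_n\tilde S^m_n)$, we get $\rho(S^m)=\rho^{\mathrm{cl}}_X(\tilde S^m)$ for every $m$; hence $\rho$ is q-MLR iff $\rho^{\mathrm{cl}}_X$ is, and by Theorem \ref{thm:30} the latter holds iff $X$ is MLR. Combining, $\rho$ weak Solovay random $\iff X$ MLR $\iff \rho$ q-MLR, which is the corollary. (One may alternatively derive the weak-Solovay side by re-running the $(\Leftarrow)$ argument of Theorem \ref{thm:7}, and in any case the inclusion ``q-MLR implies weak Solovay random'' is automatic.)

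The genuinely delicate point is the basis-change step: one must check that conjugating by $\bigotimes_i U^{(i)}$ sends special projections to special projections, preserves ranks, preserves the nesting condition in the definition of a q-$\Sigma^0_1$ set, preserves $\tau$, and is effective. None of this is deep, but the nesting condition is where a slip is most likely, so I would verify $U_{n+1}^{*}(S^m_n\otimes I)U_{n+1}=(U_n^{*}S^m_n U_n)\otimes I$ explicitly (it uses $U_{n+1}=U_n\otimes U^{(n+1)}$ and $U^{(n+1)*}U^{(n+1)}=I$). Note that $\rho$ itself is generally not diagonal in the standard basis, so Theorem \ref{thm:30} does not apply to $\rho$ directly; the change of basis is precisely what reduces the problem to the diagonal state $\rho^{\mathrm{cl}}_X$, to which it does apply.
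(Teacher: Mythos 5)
Your proof is correct, and its first half --- reducing weak Solovay randomness of $\rho$ to Martin-L\"of randomness of the parametrizing bitstring $X$ via Theorem \ref{thm:7}, Remark \ref{rem:K} and Chaitin's theorem --- is exactly the argument the paper gives. Where you diverge is in the step from ``$X$ is MLR'' back to ``$\rho$ is q-MLR'': the paper disposes of this by citing an easy modification of 3.13 of Nies--Scholz \cite{unpublished}, whereas you prove it from scratch by conjugating an arbitrary q-MLT through the computable algebraic unitaries $U_n=\bigotimes_{i\le n}U^{(i)}$ and then invoking Theorem \ref{thm:30} for the resulting diagonal (Dirac) state $\rho^{\mathrm{cl}}_X$. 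Your version is more self-contained, staying entirely within the present paper's toolkit, and it actually delivers the clean two-way equivalence ``$\rho$ is q-MLR iff $X$ is MLR,'' of which only one direction is needed once one observes --- as both you and the paper do --- that q-MLR implies weak Solovay randomness for arbitrary states, since a strong Solovay test is in particular a quantum Solovay test and quantum Solovay randomness coincides with q-MLR. The verifications you flag all go through: conjugation by a unitary with algebraic entries preserves Hermitian projections, algebraicity, rank and hence $\tau$, and the nesting condition follows from $U_{n+1}^{*}(S^m_n\otimes I)U_{n+1}=(U_n^{*}S^m_nU_n)\otimes I$ together with the fact that conjugation preserves range inclusions of projections; the termwise identity $\mathrm{Tr}(\rho_nS^m_n)=\mathrm{Tr}((\rho^{\mathrm{cl}}_X)_n\tilde S^m_n)$ then gives $\rho(S^m)=\rho^{\mathrm{cl}}_X(\tilde S^m)$. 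There is no gap.
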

\begin{proof}
Fix a system $B= ((b^{n}_{0},b^{n}_{1}))_{n\in \mathbb{N}}$ and let $\rho\in A^{\infty}_B$ be weak Solovay random. Let $\rho'$ be the bitstring induced by $\rho$. I.e., for example if $\rho=b^{1}_{0}\otimes b^{2}_{1}\otimes b^{3}_{0}\otimes b^{4}_{1} \cdots$, then $\rho':= 0101\cdots$.  By Theorem \ref{thm:7}, for $\epsilon=0.5$ \[  \forall c >0 \forall^{\infty} n  QK^{0.5}(\rho_{n}) \geq n+c.\] By Remark \ref{rem:K}, $K(\rho'\upharpoonright n)= MK^{0.5}(\rho_{n})$ upto a constant depending only on $B$. So, 
\[  \forall c >0 \forall^{\infty} n  K(\rho'\upharpoonright n) \geq n+c.\]
By Chaitin's result, \cite{Chaitin:1987:ITR:24912.24913} $\rho'$ is MLR. Now, by an easy modification of 3.13 from \cite{unpublished}, $\rho$ is q-MLR. We already know that q-MLR implies weak Solovay randomness for any state from before.
\end{proof}

\subsection{Chaitin and Levin\textendash Schnorr type results}
It turns out that weak Solovay randomness is equivalent to q-MLR and has both Chaitin ((3) in Theorem \ref{thm:39} ) and Levin\textendash Schnorr ((4) in Theorem \ref{thm:39}) type characterizations in terms of $QK$ when the states are restricted to a certain class, $\mathcal{L}$ defined below. To define this class we need to consider the halting set over the halting set : $\emptyset''=(\emptyset')'$ (See\cite{misc}). Let $\mathcal{L}$ denote the union of the two classes of states.
\begin{enumerate}
    \item States in $A^{\infty}_B$ for some $B$, as in Corollary \ref{cor}
    \item States which do not Turing compute $\emptyset''$.
    \end{enumerate}
 Nies and Barmpalias (in personal communication) have shown that q-MLR is equivalent to weak quantum Solovay randomness for states which do not compute $\emptyset''$. The same equivalence also holds on $A^{\infty}_B$ by Corollary \ref{cor}. This similarity motivates our study of $\mathcal{L}$. 
\begin{thm}
\label{thm:39}
If $\rho=(\rho_{n})_{n} \in  \mathcal{L}$, then the following are equivalent
\begin{enumerate}
\item $\rho$ is q-MLR.
    \item $\rho$ is weak Solovay random.
    \item $ \forall \epsilon>0 \forall c >0 \forall^{\infty} n,  QK^{\epsilon}(\rho_{n}) \geq n+c.$
    \item $\forall \epsilon>0\exists c \forall n, QK^{\epsilon}(\rho_{n})>n-c .$
\end{enumerate}
\end{thm}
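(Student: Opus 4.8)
The plan is to prove the four conditions equivalent by closing the cycle $(1)\Rightarrow(2)\Rightarrow(3)\Rightarrow(4)\Rightarrow(1)$. Several of the links are already in place. The equivalence $(2)\Leftrightarrow(3)$ is exactly Theorem \ref{thm:7} and holds for every state, and $(1)\Rightarrow(2)$ holds for every state since q-MLR implies weak Solovay randomness (as noted in the proof of Corollary \ref{cor}). The hypothesis $\rho\in\mathcal{L}$ is needed only to travel back from the ``weak Solovay side'' to q-MLR, and it will enter through Corollary \ref{cor} (for states in some $A^{\infty}_{B}$) and the Nies\textendash Barmpalias equivalence of q-MLR with weak quantum Solovay randomness (for states not computing $\emptyset''$). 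So the new work is the routine implication $(3)\Rightarrow(4)$ together with $(4)\Rightarrow(1)$.

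For $(3)\Rightarrow(4)$, fix $\epsilon>0$ and apply $(3)$ with $c=1$: there is an $N$ with $QK^{\epsilon}(\rho_{n})\geq n+1>n$ for all $n\geq N$. Each of the finitely many $n<N$ satisfies $QK^{\epsilon}(\rho_{n})\geq 0$, since in Definition \ref{defn:4} every admissible pair $(\sigma,F)$ contributes $|\sigma|+\log|F|\geq 0$ (the empty set is never an admissible $F$). Choosing $c$ larger than $N$ then yields $QK^{\epsilon}(\rho_{n})>n-c$ for all $n$, which is $(4)$.

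For $(4)\Rightarrow(1)$ I would split along the two subclasses whose union is $\mathcal{L}$. If $\rho\in A^{\infty}_{B}$ for some system $B$, let $\rho'$ be the bitstring it induces and fix a rational $\epsilon$. By Remark \ref{rem:K}, $QK^{\epsilon}(\rho_{n})=^{+}K(\rho'\upharpoonright n)$ with a constant depending only on $B$ and $\epsilon$, so $(4)$ gives a constant $c'$ with $K(\rho'\upharpoonright n)>n-c'$ for all $n$. By the classical Levin\textendash Schnorr theorem $\rho'$ is Martin-L{\"o}f random, and then $\rho$ is q-MLR by the modification of 3.13 of \cite{unpublished} used in the proof of Corollary \ref{cor}. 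If instead $\rho$ does not Turing compute $\emptyset''$, I would invoke the result of subsection \ref{subsect:weak} that the Levin\textendash Schnorr condition $(4)$ forces a property slightly weaker than weak Solovay randomness, and then use the Nies\textendash Barmpalias equivalence, valid on this subclass, to upgrade that property to weak Solovay randomness and hence (again on this subclass) to q-MLR.

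The step I expect to be the main obstacle is the second case of $(4)\Rightarrow(1)$: one must identify precisely which weakening of weak Solovay randomness the Levin\textendash Schnorr condition delivers for a general state in subsection \ref{subsect:weak}, and then show this weakening collapses back to weak Solovay randomness on the states that do not compute $\emptyset''$. A minor point to keep in view is that $(4)$ is strongest for small $\epsilon$, since $QK^{\epsilon}$ is non-decreasing in $\epsilon$; this causes no difficulty, because the classical transfer in the first case uses $(4)$ at only a single rational $\epsilon$, and the $\epsilon$-dependence of the constant in Remark \ref{rem:K} is harmless given that $(4)$ is quantified over all $\epsilon>0$.
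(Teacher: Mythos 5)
Your cycle $(1)\Rightarrow(2)\Rightarrow(3)\Rightarrow(4)\Rightarrow(1)$, the use of Theorem \ref{thm:7} for $(2)\Leftrightarrow(3)$, the easy $(3)\Rightarrow(4)$, and the first case of $(4)\Rightarrow(1)$ (states in $A^{\infty}_{B}$, via Remark \ref{rem:K} and the classical Levin\textendash Schnorr theorem) all match what is needed and what the paper does. The gap is exactly where you flag the ``main obstacle'': the second case of $(4)\Rightarrow(1)$, for $\rho$ not computing $\emptyset''$. The results of subsection \ref{subsect:weak} only show that the Levin\textendash Schnorr condition implies passing all $s$-strong (and $(\phi,s)$-strong) Solovay tests, i.e.\ strong Solovay tests subject to the extra summability condition $\sum_{r}\tau(S^{r})^{s}<\infty$ and computability of $\sum_{r}\tau(S^{r})$. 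That is a strictly weaker conclusion than weak Solovay randomness, and the Nies\textendash Barmpalias result equates weak Solovay randomness with q-MLR on this subclass; it gives no mechanism whatsoever for upgrading ``passes all $s$-strong Solovay tests'' to ``passes all strong Solovay tests.'' So the chain you propose does not close, and indeed those results are presented as partial precisely because the general implication is open.

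What is actually needed in this case is a direct proof that $(4)$ implies $(2)$, and this is where the hypothesis on $\emptyset''$ does its work. If $\rho$ fails some strong Solovay test, then (by Theorem 2.11 of \cite{bhojraj2020quantum}) it fails a q-MLT $(G^{m})_{m}$ at some rational $\epsilon>0$. The function $g(m):=$ the least $s$ with Tr$(\rho_{s}G^{2m}_{s})>\epsilon$ is computable from $\rho$, so by Martin's high domination theorem there is a total computable $f$ with $f(t)>g(t)$ for infinitely many $t$ (and $f(t)>3t$). A prefix-free machine that on input $0^{t}1$ outputs the orthonormal spanning set of $G^{2t}_{f(t)}$ then certifies, for each such $t$, that $QK^{\epsilon}(\rho_{f(t)})\leq t + (f(t)-2t) + O(1) = f(t)-t+O(1)$, using that $G^{2t}$ has rank at most $2^{f(t)-2t}$ at level $f(t)$. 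Since $t\to\infty$ along this infinite set, this contradicts $(4)$. You would need to supply this argument (or something equivalent); the detour through subsection \ref{subsect:weak} cannot deliver it.
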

\begin{proof}
(1)$\iff$(2) follows from the previous remarks.\\
(4)$\Longrightarrow$(1):\begin{proof} First, let $\rho \in A^{\infty}_B$ for some $B$ and let (4) hold. By the same argument as in Corollary \ref{cor}, we get that
\[  \exists c >0 \forall  n  ,K(\rho'\upharpoonright n) \geq n-c .\]
The classical Levin\textendash Schnorr result \cite{misc1} implies that $\rho'$ is MLR. Using once more 3.13 in \cite{unpublished} as in \ref{cor}, we see that $\rho$ is q-MLR. Now suppose $\rho$ does not Turing compute $\emptyset''$. We will show that (4) implies (2). Suppose for a contradiction that $(S^{m})_{m}$ is a strong Solovay test which $\rho$ fails at $\epsilon'>0$. By Theorem 2.11 in \cite{bhojraj2020quantum}, we can effectively compute a q-MLT $(G^{m})_{m}$ which $\rho$ fails at some rational $\epsilon>0$. Let $g(m):=$ the least $s$ such that Tr$(\rho_sG^{2m}_{s})>\epsilon$. As $\rho$ computes $g$, by Martin's high domination theorem (see \cite{misc1} for a proof), there is a total computable function $f$ such that $\exists^{\infty} g(n)<f(n)$. We may assume that $f(t)>3t$ for all $t$ by taking the max of 2 computable functions. Fix this $f$ (non-uniformly) and consider the following machine, $M$:\\ On input $0^{m}1$, $M$ outputs $F^{m}$ where $F^{m}$ is such that
\[G^{2m}_{f(m)}= \sum_{v \in F^{m}}|v\big>\big<v|.\]
Clearly $M$ is prefix free. Let $l-1$ be it's coding constant. Let $t$ be so that $f(t)>g(t)$. Let $F^t$ be defined similarly to $F^m$ above. Then, by definition of $g$, we have that \[\epsilon< \sum_{v \in F^{t}}\big<v|\rho_{f(t)}|v\big>.\] $M(0^{t}1)=F^{t}$ and so, there is a $\pi$ such that $|\pi|\leq t+l $ and $\mathbb{U}(\pi)=F^{t}$. Also note that $|F^{t}|\leq 2^{f(t)-2t}$ by the definition of a q-MLT. So,
$
\text{QK}^{\epsilon}(\rho_{f(t)}) \leq t+l +f(t)-2t=f(t)-t+l.$

Recall that $t$ was an arbitrary element of the infinite set $\{s: f(s)>g(s)\}$. So, for infinitely many $t$s, there is an $n=f(t)$ such that $\text{QK}^{\epsilon}(\rho_{n}) \leq n-t+l,$  contradicting $(4)$.
\end{proof}
$(3)\Longrightarrow(4)$ is obvious and $(2)\Longrightarrow(3)$ was done in Theorem \ref{thm:7}.
\end{proof}
We apply the preceding theorem to get the following quantum analog of a classical result, Proposition 3.2.14 in \cite{misc}.
\begin{thm}
\label{thm:compset}
Let $C$ be an infinite computable set, $\rho \in \mathcal{L}$ and $\epsilon>0$. If there is a $d$ such that for all $m\in C$, $QK^{\epsilon}(\rho_m)>m-d$, then $\rho$ is weak Solovay random.
\end{thm}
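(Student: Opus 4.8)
The plan is to reduce, via Theorem~\ref{thm:39}, to the Levin--Schnorr-type condition, and then to establish that condition by a ``smoothing'' argument that exploits the computability of $C$, exactly mirroring the proof of the classical Proposition~3.2.14 in \cite{misc}. Concretely, since $\rho \in \mathcal{L}$, Theorem~\ref{thm:39} shows that $\rho$ is weak Solovay random once it satisfies the fourth (Levin--Schnorr-type) condition there, so it is enough to produce a single constant $c$ with $QK^{\epsilon}(\rho_n) > n - c$ for \emph{every} $n$, i.e.\ to upgrade the hypothesis from $n \in C$ to all $n$. (Monotonicity of $QK^{\bullet}$ in its superscript --- immediate from Definition~\ref{defn:4}, since lowering the threshold only enlarges the feasible set in the infimum --- then also propagates the bound to every parameter $\geq \epsilon$.)

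The engine will be the following smoothing estimate: for every $m$, letting $n = n(m) := \min\{k \in C : k \geq m\}$,
\[
QK^{\epsilon}(\rho_{n}) \leq QK^{\epsilon}(\rho_{m}) + (n - m) + O(1).
\]
Here $n(m)$ is well defined because $C$ is infinite and, crucially, is \emph{computable from $m$} because $C$ is computable. To prove the estimate I would recall that coherence of the state gives $\rho_{m} = PT_{\mathbb{C}^{2^{n-m}}}(\rho_{n})$. Given $\sigma, F$ with $\mathbb{U}(\sigma)\downarrow = F$, $F$ orthonormal in $\mathbb{C}^{2^{m}}_{alg}$ and $\sum_{v \in F}\langle v|\rho_{m}|v\rangle > \epsilon$, put $F' := \{\, v \otimes w : v \in F,\ w \in B^{n-m}\,\}$, where $B^{n-m}$ is the standard computational basis of $\mathbb{C}^{2^{n-m}}$; then $F'$ is orthonormal in $\mathbb{C}^{2^{n}}_{alg}$, $|F'| = 2^{n-m}|F|$, and
\[
\sum_{v' \in F'}\langle v'|\rho_{n}|v'\rangle = \mathrm{Tr}\big(\rho_{n}\,(P_{F} \otimes I)\big) = \mathrm{Tr}\big(PT_{\mathbb{C}^{2^{n-m}}}(\rho_{n})\,P_{F}\big) = \sum_{v \in F}\langle v|\rho_{m}|v\rangle > \epsilon .
\]
I would then build a prefix-free machine $M$ that on input $\pi$ simulates $\mathbb{U}(\pi)$ and, if this halts with a code for an orthonormal $F \subseteq \mathbb{C}^{2^{s}}_{alg}$ for some $s$, computes $n(s)$ and outputs a code for $F \otimes B^{n(s)-s}$ (and diverges otherwise). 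Its domain is contained in that of $\mathbb{U}$, so $M$ is prefix-free; with $c_M$ its coding constant, $M$ applied to $\sigma$ yields $F'$, so $QK^{\epsilon}(\rho_{n}) \leq |\sigma| + c_M + \log|F'| = (|\sigma| + \log|F|) + (n - m) + c_M$. Taking the infimum over all such $\sigma, F$ gives the estimate with $O(1) = c_M$.

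Finally, since $n(m) \in C$, the hypothesis gives $QK^{\epsilon}(\rho_{n(m)}) > n(m) - d$; combined with the estimate this yields $QK^{\epsilon}(\rho_{m}) > (n(m) - d) - (n(m) - m) - c_M = m - (d + c_M)$ for every $m$. Thus the Levin--Schnorr condition holds with $c := d + c_M$, and Theorem~\ref{thm:39} (using $\rho \in \mathcal{L}$) yields that $\rho$ is weak Solovay random. I expect the main --- really the only --- delicate point to be the absence of any logarithmic overhead in the gap $n - m$ in the smoothing estimate: the machine $M$ must \emph{recompute} the target length $n(s)$ from the source length $s$, rather than being told it, and this is possible exactly because $C$ is computable (for a merely c.e.\ or $\Delta^0_2$ set $C$ the argument collapses). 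The rest is a routine partial-trace identity, the bookkeeping that $M$ is prefix-free, and the invocation of Theorem~\ref{thm:39}, which is the one place where membership of $\rho$ in $\mathcal{L}$ is genuinely used (for a general state the Levin--Schnorr condition is only known to imply something slightly weaker than weak Solovay randomness).
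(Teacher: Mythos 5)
Your proposal is correct and is essentially the paper's own argument: the paper uses the identical padding machine (tensor $F$ with the standard computational basis of the extra qubits up to the next element of $C$, which is computable from the source length precisely because $C$ is computable) and the same coherence/partial-trace identity $\mathrm{Tr}(\rho_t (W\otimes I)) = \mathrm{Tr}(\rho_n W)$, merely organized as a proof by contradiction (assume $\rho$ is not weak Solovay random, obtain via Theorem~\ref{thm:39} initial segments compressible by an arbitrary amount $c$, and push that compression up to an element of $C$ to contradict the hypothesis) rather than your direct derivation of the Levin--Schnorr bound $QK^{\epsilon}(\rho_m) > m-(d+c_M)$ at every $m$. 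The two write-ups are contrapositives of one another, with the same key lemma and the same invocation of $\rho\in\mathcal{L}$ through Theorem~\ref{thm:39}.
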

\begin{proof}
Let $M$ be the machine doing the following: On input $\sigma$, check if $\mathbb{U}(\sigma)=F$, an orthonormal set $F\subseteq \mathbb{C}^{2^{n}}$. If such a $F$ and $n$ exist, compute $s$ such that $n+s$ is the least element of $C$ greater than $n$ and output the set:
\[T:= \{v \otimes \pi: v \in F, \pi \in 2^{s}\}.\]
Note that $|T|=2^{s}|F|$. It is easy to see that $M$ is prefix-free. Let $l$ be it's coding constant. Suppose for a contradiction that $\rho$ is not weak Solovay random. \ref{thm:39} implies that $\forall c$, $\exists n_{c}$ such that $QK^{\epsilon}(\rho_{n_{c}}) \leq n_{c}-c.$ Let $c$ be arbitrary and take such an $n:=n_c$. There is a $\sigma$ and $F$ such that $\mathbb{U}(\sigma)=F \in \mathbb{C}^{2^{n}}$, $|\sigma|+$log$(|F|)\leq n-c$ and \[\sum_{v\in F} \big<v|\rho_{n}|v\big> >\epsilon.\] Let $t=n+s$ be the least element of $C$ greater than $n$. On input $\sigma$, $M$ outputs $T$ as above.
Note that \[Q:=\sum_{w\in T}|w\big>\big<w|= \sum_{v\in F, \pi\in 2^{s}}|v\big>\big<v|\otimes |\pi\big>\big<\pi|=\big(\sum_{v\in F }|v\big>\big<v|\big)\otimes \big( \sum_{\pi\in 2^{s}}|\pi\big>\big<\pi|\big)=W\otimes I,
\]
where $W:=\sum_{v\in F}|v\big>\big<v|$ and $I$ be the identity on $\mathbb{C}^{2^{s}}$.
Then, by the coherence property of states, \[\sum_{w\in T} \big<w|\rho_{t}|w\big>=\text{Tr}(\rho_{t}Q)=\text{Tr}(\rho_{t}[W\otimes I])=\text{Tr}(\rho_{n}W)>\epsilon.\]
Consequently, \[QK^{\epsilon}(\rho_{t})\leq |\sigma|+  \text{log}(|T|)+ l = |\sigma|+  \text{log}(|F|)+s+ l\leq n-c+s+l=t-c+l.\]
Since $d$ and $l$ were constants and $c$ was arbitrary, this contradicts the assumption.
\end{proof}

\subsection{A weak Levin\textendash Schnorr type result}
\label{subsect:weak}
Theorem \ref{thm:7} implies that if $\rho$ is weak-Solovay random then, $ \forall \epsilon>0\exists c \forall n, QK^{\epsilon}(\rho_{n})>n-c$. I.e., being strong-Solovay random implies the Levin\textendash Schnorr  condition. Does this reverse? We give two partial results in this direction: the Levin\textendash Schnorr condition implies that $\rho$ passes all strong-Solovay tests of a certain type.
\begin{defn}
For a rational $s\in(0,1)$, a $s$-strong Solovay test is a strong Solovay test $(S^{r})_{r}$ such that
$\sum_{r}\tau(S^{r})^{s} < \infty $ and $\sum_{r}\tau(S^{r}) $ is a computable real number.
\end{defn}

\begin{thm}
\label{thm:mkssr}
If $ \forall \epsilon>0\exists c \forall n, QK^{\epsilon}(\rho_{n})>n-c$, then $\rho$ passes all $s$-strong Solovay tests for all rational $s\in (0,1)$.
\end{thm}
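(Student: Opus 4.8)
The plan is to argue by contradiction in the style of the reverse direction of Theorem \ref{thm:7}, but to squeeze the stronger summability hypothesis $\sum_r \tau(S^r)^s<\infty$ for a gain: instead of only $QK^{\epsilon}(\rho_{n})\le n+O(1)$ along a subsequence (which would not contradict anything), one should get $QK^{\epsilon}(\rho_n)-n\to-\infty$. So suppose $\rho$ fails to pass some $s$-strong Solovay test $(S^r)_r$; then it fails it at some rational level $\epsilon\in(0,1)$, i.e.\ there are infinitely many $r$ with $\rho(S^r)=\text{Tr}(\rho_{n_r}S^r)>\epsilon$, where $n_r$ is the number of qubits $S^r$ acts on. After discarding the rank-zero projections (harmless, since those are never failure indices and the test stays an $s$-strong Solovay test), each $S^r$ has an orthonormal spanning set $F_r$ of complex algebraic vectors, uniformly computable from $r$, with $|F_r|=\text{rank}(S^r)=2^{n_r}\tau(S^r)$.

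The quantitative heart is a bound on $K(r)$. I would set $g(r)$ to be the least integer $\ge s\cdot(-\log\tau(S^r))$; since $s$ is rational and $\tau(S^r)$ is a rational computable from $r$, $g$ is computable, and $\sum_r 2^{-g(r)}\le\sum_r\tau(S^r)^s<\infty$. By Lemma 3.12.2 in \cite{misc1} (the exact tool used in Theorem \ref{thm:7}), $K(r)\le g(r)+d_1$ for a constant $d_1$. Then feed this to a prefix-free machine $M$ which, on input $\pi$ with $\mathbb{U}(\pi)\!\downarrow=r\in\mathbb{N}$, computes and outputs $F_r$; $M$ is prefix-free because $\mathbb{U}$ is, so with coding constant $c_M$ we get $K(F_r)\le K(r)+c_M$ for all $r$.

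Now for each of the infinitely many failure indices $r$, the set $F_r$ is a legitimate witness in Definition \ref{defn:4} for $QK^{\epsilon}(\rho_{n_r})$, since $\sum_{v\in F_r}\langle v|\rho_{n_r}|v\rangle=\text{Tr}(\rho_{n_r}S^r)>\epsilon$. Hence, using $\log|F_r|=n_r+\log\tau(S^r)$ and $g(r)\le s(-\log\tau(S^r))+1$,
\[
QK^{\epsilon}(\rho_{n_r})\le K(F_r)+\log|F_r|\le K(r)+c_M+n_r+\log\tau(S^r)\le n_r+(1-s)\log\tau(S^r)+(d_1+c_M+1).
\]
Since $\sum_r\tau(S^r)^s<\infty$ forces $\tau(S^r)\to0$, hence $-\log\tau(S^r)\to\infty$, and since $1-s>0$, this gives $QK^{\epsilon}(\rho_{n_r})-n_r\to-\infty$ along the failure subsequence. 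Therefore for every $c$ there is an $n$ (take $n=n_r$ for a suitable failure index) with $QK^{\epsilon}(\rho_n)<n-c$, contradicting the hypothesis $\forall\epsilon>0\,\exists c\,\forall n\,[QK^{\epsilon}(\rho_n)>n-c]$ instantiated at this $\epsilon$. As the rational $s\in(0,1)$ was arbitrary, $\rho$ passes all $s$-strong Solovay tests.

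I expect the only genuine subtlety — and the reason the theorem is true at all — to be the bookkeeping in that display: from mere $\sum_r\tau(S^r)<\infty$ one can only conclude $K(r)\le^+ -\log\tau(S^r)$, which yields $QK^{\epsilon}(\rho_{n_r})\le^+ n_r$ and contradicts nothing; it is precisely the extra exponent, $K(r)\le^+ s(-\log\tau(S^r))$, that leaves the surviving term $(1-s)\log\tau(S^r)\to-\infty$. The remaining points are routine: checking $g$ is computable (decide $s(-\log\tau(S^r))>k$ by clearing the rational exponent $s=p/q$ and comparing integers), observing that an algebraic orthonormal basis of $\mathrm{range}(S^r)$ can be computed from the (algebraic) entries of $S^r$, and confirming that $\rho(S^r)$ for a single special projection $S^r$ is $\text{Tr}(\rho_{n_r}S^r)$, as in the proof of Theorem \ref{thm:7}.
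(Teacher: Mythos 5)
Your proof is correct, but it takes a genuinely different and noticeably shorter route than the paper's. The paper does not compress the index $r$ directly; instead it partitions the index set into the fibers $P=g^{-1}(\{x\})$ of $g(m)=\lceil s f(m)\rceil$ (where $f(m)=-\log\tau(S^m)$), describes a failure index by first giving $0^{g(r)}$ (itself compressed to about $\frac{1-s}{s}g(r)\approx(1-s)f(r)$ bits via a bounded request set), then computing the finite fiber containing $r$ --- this is the step that consumes the hypothesis that $\sum_r\tau(S^r)$ is a computable real --- and finally locating $r$ inside the fiber with $\lceil\log|P|\rceil$ bits; the contradiction comes from $\log|P_j|-g(r_j)\to-\infty$. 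You instead apply the standard fact (the same Lemma 3.12.2 the paper uses in Theorem \ref{thm:7}) to the computable function $g(r)=\lceil s(-\log\tau(S^r))\rceil$, whose Kraft sum is dominated by $\sum_r\tau(S^r)^s<\infty$, to get $K(r)\le^+ s(-\log\tau(S^r))$ outright, and the surviving $(1-s)\log\tau(S^r)\to-\infty$ term does all the work. Your version buys two things: it is shorter and avoids the fiber bookkeeping entirely, and it never invokes the computability of $\sum_r\tau(S^r)$, so it in fact establishes the conclusion for the larger class of strong Solovay tests satisfying only $\sum_r\tau(S^r)^s<\infty$. What the paper's fiber machinery buys is reusability: it is the scaffolding for the subsequent $(\phi,s)$-generalization, where the direct $K(r)\le^+ sf(r)+\log\phi(f(r))$ bound one would get from your method does not obviously suffice. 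Your side remarks (computability of $g$ by clearing the rational exponent, discarding rank-zero projections, computing an algebraic orthonormal basis of $\mathrm{range}(S^r)$, and $n_r\to\infty$ along the failure subsequence) are exactly the right loose ends and are all routine.
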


\begin{proof}
Suppose for a contradiction that $(S^{m})_{m}$ is a $s$-strong Solovay test which $\rho$ fails at $\epsilon>0$ and $\sum_{i} \tau(S^{i})=Q $, computable. For all $m$, let $S^m$ be $2^{n_{m}}$ by $2^{n_{m}}$ and we may let the $n_m$s be distinct. Let $f(m):=-$ log$(\tau(S^m))$ and $g(m):=\lceil s f(m) \rceil$.
Partition $\omega$ into the fibers induced by $g$. ($P$ is a fiber of $g$ if $P=g^{-1}(\{x\})=\{y: g(y)=x\},$ for some $x$.). Note that $g(r)\geq  -s \text{log}\tau(S^{r})$,
and hence
$2^{-g(r)}\leq \tau(S^{r})^{s} .$ In particular, this implies that $P$ is finite. So, there are  countably infinitely many fibers,  $\{P_{1},P_{2}\dots\}$ and $\omega = \bigcup_{m} P_{m},$
where for all $m$, there is an $x_m$ such that $P_{m}=g^{-1}(\{x_{m}\})$ and $m \mapsto x_{m}$ is injective.

The fiber $P$ of $x=g(z)$ can be computed from $x$ as follows. Note that $g(c)=x$ iff, $f(c) \in [s^{-1}(x-1), s^{-1}x]$. As $Q$ is computable, compute an interval $J$ such that, $|J|<2^{-s^{-1}x}$,  $Q\in J$ and $\sum_{r\leq q} 2^{-f(r)}\in J $ for some $q$. So, $f(c)>s^{-1}x$ if $c>q$. $P$ can be computed by evaluating $g$ on $[0,q]$.

The idea is to describe $S^r$ by computing the fiber $P$ containing $r$ and then specifying the location of $r$ in the lexicographical ordering on $P$. As $(S^r)_r$ is a $s$-strong Solovay test, this description of $S^r$ is short enough to derive a contradiction. Consider the  machine, $M$ doing the following:  On input $\lambda$, check if there is a decomposition $\lambda=\pi\sigma$  such that,
\begin{itemize}
    \item There is an $m$ such that $\mathbb{U}(\pi)=0^{g(m)}$.
    \item $|\sigma|=t=\lceil\text{log}(|P|)\rceil$ where  $P$ is the fiber of $g(m)$. (Recall that $P$ can be computed from  $g(m)$.) 
    
\end{itemize}

 If these hold, then order $P$ lexicographically using the ordering on $2^t$ and let $r$ be the $\sigma^{th}$ element in this ordering. Output $F^{r}$ where $F^{r}$ is such that $S^{r}= \sum_{v \in F^{r}}|v\big>\big<v|.$
 
Note that $M$ is prefix free: Suppose $\lambda,\lambda' \in $ dom$(M)$ as witnessed by $\lambda=\pi\sigma$ and $\lambda=\pi'\sigma'$. So, $M$ finds $m$ and $m'$ such that $\mathbb{U}(\pi)=0^{g(m)}$ and $\mathbb{U}(\pi')=0^{g(m')}$. Let $\pi\sigma \preceq \pi'\sigma'$. Then, it must be that $\pi\preceq\pi'$ or $\pi'\preceq\pi$. Since $\mathbb{U}$ is prefix free, it follows that $\pi=\pi'$. So, $g(m)=g(m')$. Hence, $m$ and $m'$ are in the same fiber, $P$. Letting $t= \lceil \text{log}(|P|)\rceil$, $\lambda,\lambda' \in $ dom$(M)$ implies that $|\sigma|=|\sigma'|=t$.

For each $m\in \omega$, let $r_m$ be any element from $P_m$. Then,
\begin{align}
\label{eq:mkssr3}
    \sum_{m}|P_{m}|2^{-g(r_{m})} = \sum_{m}\sum_{r\in P_{m}} 2^{-g(r)} \leq \sum_{m}\sum_{r\in P_{m}} \tau(S^{r})^{s}= \sum_{i} \tau(S^{i})^{s}< \infty.
\end{align}

Let $h$ be the function defined by, $h(m):=$ log$(|P_{m}|)-g(r_{m})$ where $r_m$ is any representative from $P_m$. By \eqref{eq:mkssr3},
$|P_{m}|2^{-g(r_{m})} \rightarrow 0$ as $m\rightarrow \infty$. So, $h(m) \rightarrow -\infty$ as $m\rightarrow \infty$.
Each fiber is finite and $\rho$ fails the test at $\epsilon$. So, there is an infinite set $I= \{P_{j_{1}}, P_{j_{2}},\dots\}$ such that for all $i$, there is a $t_{i}\in P_{j_{i}}$ with  
Tr$(\rho_{n_{t_{i}}}S^{t_{i}})>\epsilon$.  $j_{i} \rightarrow  \infty$ as $i\rightarrow \infty$ and so, $h(j_{i})= $ log$(|P_{j_{i}}|)-g(r_{t_{i}}) \rightarrow -\infty$ as $i\rightarrow \infty$. This asymptotic behavior will be used below to derive a contradiction.

Fix an arbitrary $i$ and a $t=t_{i}\in P_{j_{i}}$ as above.
So, $
\epsilon< \sum_{v \in F^{t}}\big<v|\rho_{n_{t}}|v\big>.$

Let $t$ be the $\sigma^{th}$ element of $P_{j_{i}}$ in the lexicographic ordering used by $M$ and let $\mathbb{U}(\pi)=0^{g(t)}$ and $K(0^{g(t)})=|\pi|$. Then,
$M(\pi\sigma)=F^{t}$ and so, there is a bitstring $\iota$ such that $|\iota|\leq^{+} K(0^{g(t)})+ \lceil \text{log}(|P_{j_{i}}|)\rceil $ and $\mathbb{U}(\iota)=F^{t}$. Note that log$|F^{t}|=$ log(Tr($S^t))=$ log$(\tau(S^{t})) + n_{t}= -f(t) + n_{t}$. Let $d:=(1-s)s^{-1} >0$. So, $\{(\lceil nd \rceil, 0^{n}): n \in \omega\},$
is a bounded request set (see \cite{misc1} for a definition) and hence $K(0^{n})\leq^{+}\lceil nd \rceil$. Using all this, we get that:
\begin{align*}
\text{QK}^{\epsilon}(\rho_{n_{t}})
&\leq^{+} K(0^{g(t)})+\lceil \text{log}(|P_{j_{i}}|)\rceil  -f(t) +  n_{t}\\
&\leq^{+} \lceil d g(t)\rceil+\lceil \text{log}(|P_{j_{i}}|)\rceil -f(t) +  n_{t}\\
&\leq^{+} dsf(t)+\lceil \text{log}(|P_{j_{i}}|)\rceil -f(t) +  n_{t}\\
&=f(t)(ds-1)+\lceil \text{log}(|P_{j_{i}}|)\rceil  +  n_{t}\\
&=-sf(t) +\lceil \text{log}(|P_{j_{i}}|)\rceil  +  n_{t}\\
&\leq^{+}-g(t)+\lceil \text{log}(|P_{j_{i}}|)\rceil  +  n_{t}\\
&=h(j_{i})+n_{t}.
\end{align*}
The last equality follows as $t=t_i$ is in  $P_{j_{i}}$. This means that there is an infinite sequence $(n_{t_{i}})_{i}$ such that
\[\text{QK}^{\epsilon}(\rho_{n_{t_{i}}}) <^{+} n_{t_{i}}+h(j_{i}) .\]
Finally, recall that $h(j_{i})\rightarrow -\infty$ as $i\rightarrow \infty$ and we have a contradiction.
\end{proof}
Theorem \ref{thm:mkssr} can be strengthened by weakening the defining criteria for a $s$-strong Solovay test.
\begin{defn}
\label{def:phi}
Let $s\in(0,1)$ be a rational. Let $\phi$ be any computable, non-decreasing, non-negative function on the reals such that $\phi(sr)\geq^{\times} s\phi(r)$ (I.e., there is a $C>0$ \emph{independent of s}, such that for all $r$,  $C\phi(sr)\geq  s\phi(r)$) and $\sum_{n}2^{-n}\phi(n)<\infty$ (So, $\phi$ does not tend to infinity too fast). A $(\phi,s)$-strong Solovay test is a strong Solovay test $(S^{r})_{r}$ such that
\begin{align}
\label{eq:mkssr6}
    \sum_{r}\dfrac{\tau(S^{r})^{s}}{\phi(-\text{log}(\tau(S^{r})))} < \infty ,
\end{align}

and
\[\sum_{r}\tau(S^{r}) = Q,\] where $Q$ is a computable real number.
\end{defn}
The term in the denominator in \eqref{eq:mkssr6} tends to infinity with $r$ and hence it is easier for a strong Solovay test to be a $(\phi,s)$-strong Solovay test than to be a $s$-strong Solovay test. So, passing all $(\phi,s)$-strong Solovay tests is a more restrictive notion of randomness than passing all $s$-strong Solovay tests. So, the following theorem is an improvement of, and implies Theorem \ref{thm:mkssr}.
\begin{thm}
If $ \forall \epsilon>0\exists c \forall n, QK^{\epsilon}(\rho_{n})>n-c$, then $\rho$ passes all $(\phi,s)$-strong Solovay tests for all rational $s\in (0,1)$ and all $\phi$ as in Definition \ref{def:phi}.
\end{thm}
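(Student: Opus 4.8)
\textit{Proof proposal.} The plan is to argue by contradiction in the shape of the proof of Theorem~\ref{thm:mkssr}, but to replace its fiber--counting bookkeeping with a single application of the bounded request set (Kraft--Chaitin) construction; the point is that, once one exploits the growth restriction built into the hypotheses on $\phi$, one can encode each projection $S^{r}$ by a $\mathbb{U}$-program of length noticeably below $f(r):=-\log\tau(S^{r})$, and this alone suffices. So assume $\forall\epsilon>0\,\exists c\,\forall n\ QK^{\epsilon}(\rho_{n})>n-c$, yet suppose $\rho$ fails some $(\phi,s)$-strong Solovay test $(S^{r})_{r}$ at some fixed $\epsilon>0$. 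Since such a test is in particular a strong Solovay test, $\sum_{r}\tau(S^{r})<\infty$, so $\tau(S^{r})\to 0$. We may restrict attention to the $r$ with $0<\tau(S^{r})<1$: only finitely many $r$ can have $\tau(S^{r})=1$ (else the sum diverges), and an $r$ with $S^{r}=0$ never witnesses failure, so the infinitely many $r$ witnessing failure of $\rho$ all satisfy $0<\tau(S^{r})<1$. For such $r$ set $f(r)=-\log\tau(S^{r})=n_{r}-\log\mathrm{rank}(S^{r})>0$; this is computable from $r$ (since $n_{r}$ and $\mathrm{rank}(S^{r})$ are), and $f(r)\to\infty$.

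The one step I expect to require care is the observation that the hypothesis $\phi(sr)\geq^{\times}s\phi(r)$ forces $\phi$ to be at most polynomially bounded. Iterating $\phi(r)\leq (C/s)\,\phi(sr)$ roughly $\log_{1/s}r$ times drives the argument of $\phi$ into a fixed bounded interval, on which $\phi$ is bounded; this gives $\phi(r)\leq B\,r^{k}$ for constants $B,k$ depending only on $s$ and $C$. Consequently $\log\phi(f(r))=O(\log f(r))=o(f(r))$, and in particular $(1-s)f(r)-\log\phi(f(r))\to\infty$, because $f(r)\to\infty$. (The hypothesis $\sum_{n}2^{-n}\phi(n)<\infty$ is not needed for this argument; it is present only so that $(\phi,s)$-tests form a sensible weakening of $s$-tests.)

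Now build the request set. For each admissible $r$, compute an orthonormal set $F^{r}\subseteq\mathbb{C}^{2^{n_{r}}}_{alg}$ spanning $\mathrm{range}(S^{r})$ --- Gram--Schmidt preserves algebraicity, so this is uniform in $r$ --- so that $S^{r}=\sum_{v\in F^{r}}|v\big\rangle\big\langle v|$ and $|F^{r}|=\mathrm{rank}(S^{r})$. Let $\ell_{r}$ be an integer, computable from $r$, with $sf(r)+\log\phi(f(r))\leq\ell_{r}\leq sf(r)+\log\phi(f(r))+2$, and consider the requests $\{(\ell_{r},F^{r})\}_{r}$. Their total weight is $\sum_{r}2^{-\ell_{r}}\leq\sum_{r}2^{-(sf(r)+\log\phi(f(r)))}=\sum_{r}\tau(S^{r})^{s}/\phi(f(r))<\infty$ by \eqref{eq:mkssr6}; exactly as in the proof of Theorem~\ref{thm:mkssr} (and as is standard even when the finite bound on the weight is not known, by composing with a length shift) this yields a prefix-free machine, whence $K(F^{r})\leq^{+}\ell_{r}\leq^{+}sf(r)+\log\phi(f(r))$, with an additive constant that may depend on the test.

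Finally, cash this in. For each of the infinitely many $r$ with $\mathrm{Tr}(\rho_{n_{r}}S^{r})=\sum_{v\in F^{r}}\big\langle v|\rho_{n_{r}}|v\big\rangle>\epsilon$, Definition~\ref{defn:4} together with $\log|F^{r}|=\log\mathrm{rank}(S^{r})=n_{r}-f(r)$ gives $QK^{\epsilon}(\rho_{n_{r}})\leq K(F^{r})+\log|F^{r}|\leq^{+} sf(r)+\log\phi(f(r))+n_{r}-f(r)=n_{r}-((1-s)f(r)-\log\phi(f(r)))$. By the second paragraph $(1-s)f(r)-\log\phi(f(r))\to\infty$, and since there are infinitely many witnessing $r$ while $f(r)\to\infty$, the right-hand side drops below $n-c$ for arbitrarily large witnessing $r$, for every $c$; that is, for every $c$ there is an $n=n_{r}$ with $QK^{\epsilon}(\rho_{n})<n-c$, contradicting the hypothesis at level $\epsilon$. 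Since the case $\phi\equiv 1$ recovers the defining condition $\sum_{r}\tau(S^{r})^{s}<\infty$ of an $s$-strong Solovay test, the argument above also reproves Theorem~\ref{thm:mkssr}.
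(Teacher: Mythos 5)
Your proposal is correct, and it takes a genuinely different route from the paper. The paper's proof partitions the indices into fibers of $g(r)=\lceil s f(r)\rceil$, describes each $S^{r}$ by a two-part code (a compressed description of $0^{g(r)}$ via the bounded request set $\{(n-\lceil\log\phi(n)\rceil,0^{n})\}_{n}$, plus the position of $r$ inside its fiber), uses the computability of $Q=\sum_{r}\tau(S^{r})$ to compute the fibers, invokes $\phi(sr)\geq^{\times}s\phi(r)$ only to pass from $\phi(g(r))$ to $\phi(f(r))$ in \eqref{eq:mkssr13}, and splits into the cases $s\leq 0.5$ and $s>0.5$. You instead issue one Kraft--Chaitin request per projection, asking for $F^{r}$ itself at length $\ell_{r}\approx sf(r)+\log\phi(f(r))$, so that the defining summability condition \eqref{eq:mkssr6} of a $(\phi,s)$-test is \emph{literally} the weight bound on the request set; the hypothesis $\phi(sr)\geq^{\times}s\phi(r)$ is then spent elsewhere, on the (correct) observation that iterating $\phi(r)\leq(C/s)\phi(sr)$ down to a bounded argument forces $\phi$ to be polynomially bounded, whence $\log\phi(f(r))=o(f(r))$ and $(1-s)f(r)-\log\phi(f(r))\to\infty$ along the witnessing indices. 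Your argument needs neither the case split, nor the computability of $Q$, nor the condition $\sum_{n}2^{-n}\phi(n)<\infty$, so it actually establishes the theorem under weaker hypotheses; the non-uniform length shift you mention to bring the (finite but unknown) request weight below $1$ is standard and harmless since the resulting additive constant may depend on the fixed test. The one point worth writing out in a final version is the uniform computability of $\ell_{r}$, which requires $\phi(f(r))>0$ for the admissible $r$ --- but this is forced by the finiteness of \eqref{eq:mkssr6} once you have discarded the $r$ with $\tau(S^{r})=0$, exactly as you do.
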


\begin{proof}
Suppose for a contradiction that  $(S^{m})_{m}$ is a $(\phi,s)$-strong Solovay test which $\rho$ fails at $\epsilon>0$ and $\sum_{i} \tau(S^{i})=Q $, computable. For all $m$, let $S^m$ be $2^{n_{m}}$ by $2^{n_{m}}$ and we may let the $n_m$s be distinct.
For ease of presentation, we do the proof in 2 cases. First, let $s\leq 0.5$.
Let $f(m):=-$ log$(\tau(S^m))$ and let $g(m):=\lceil s f(m) \rceil$.
Partition $\omega$ into the fibers induced by $g$.
Fix some fiber $P$ of some $x=g(r)$. I.e., $r$ is a representative from $P$. Then, $g(r)=\lceil s(- \text{log}\tau(S^{r})) \rceil$. So, $g(r)\geq  -s \text{log}\tau(S^{r}),$
and hence
$2^{-g(r)}\leq \tau(S^{r})^{s} .$
In particular, this implies that each fiber is finite. So, there are  countably infinitely many fibers,  $\{P_{1},P_{2}\dots\}$. So, $\omega = \bigcup_{m} P_{m},$
where for all $m$, there is an $x_m$ such that $P_{m}=g^{-1}(\{x_{m}\})$ and $m \mapsto x_{m}$ is injective. 
For each $m\in \omega$, let $r_m$ be any representative from $P_m$.
 For all $r$, $sf(r)\leq g(r)$ and $\phi$ is non-decreasing. So,
\begin{align}
\label{eq:mkssr13}\sum_{m} |P_{m}|\dfrac{2^{-g(r_{m})}}{\phi(g(r_{m}))} = \sum_{m}\sum_{r\in P_{m}} \dfrac{2^{-g(r)}}{\phi(g(r))}\leq  \sum_{m}\sum_{r\in P_{m}} \dfrac{\tau(S^{r})^{s}}{\phi(sf(r))} \leq^{\times} \sum_{r} \dfrac{\tau(S^{r})^{s}}{s\phi(f(r))}< \infty.
\end{align}

The fiber $P$ of $x=g(z)$ can be computed from $x$ for the same reason as in the previous proof. Its idea of `compressing' $S^r$ is also used here. 
\\
Consider the  machine, $M$ doing the following:  On input $\lambda$, search for a decomposition $\lambda=\pi\sigma$, such that
\begin{itemize}
    \item $\mathbb{U}(\pi)=0^{g(m)}$ for some $m$.
    \item
    $|\sigma| = t$ where
    $P$ is the fiber containing $m$, (which can be computed from $g(m)$) and $t= \lceil\text{log}(|P|)\rceil$.
\end{itemize}
If found, order $P$ lexicographically using the ordering on $2^t$ and let $r$ be the $\sigma^{th}$ element in this ordering. Output $F^{r}$ where $F^{r}$ is such that
$S^{r}= \sum_{v \in F^{r}}|v\big>\big<v|.$
Note that $M$ is prefix free for the same reason as in the previous proof. Let $l$ be $M$'s coding constant.
Let $h$ be the function defined by, $h(m):=$ log$(|P_{m}|)-g(r_{m})-$log $\phi g(r_{m})$, where $r_m$ is any representative from $P_m$. By \eqref{eq:mkssr13},
\[\dfrac{|P_{m}|2^{-g(r_{m})}}{\phi g(r_{m})} \rightarrow 0\] as $m\rightarrow \infty$. So, $h(m) \rightarrow -\infty$ as $m\rightarrow \infty$.
Each fiber is finite and $\rho$ fails the test at $\epsilon$. So, there is an infinite set $I= \{P_{j_{1}}, P_{j_{2}},\dots\}$ such that for all $i$, there is a $t_{i}\in P_{j_{i}}$ with  
Tr$(\rho_{n_{t_{i}}}S^{t_{i}})>\epsilon$.  $j_{i} \rightarrow  \infty$ as $i\rightarrow \infty$ and so, $h(j_{i}) \rightarrow -\infty$ as $i\rightarrow \infty$. This asymptotic behavior will be used below to derive a contradiction.

Fix an arbitrary $i$ and a $t=t_{i}\in P_{j_{i}}$ as above.
So,
\begin{align} 
\label{eqn:mkssr1}
\epsilon< \sum_{v \in F^{t}}\big<v|\rho_{n_{t}}|v\big>.
\end{align}

Let $t$ be the $\sigma^{th}$ element of $P_{j_{i}}$ in the lexicographic ordering used by $M$. Let $\pi$ be such that $K(0^{g(t)})=|\pi|$ and $\mathbb{U}(\pi)=0^{g(t)}$. Then,
$M(\pi\sigma)=F^{t}$ and so, there is a bitstring $\kappa$ such that $|\kappa|\leq K(0^{g(t)}) + \lceil \text{log}(|P_{j_{i}}|)\rceil + l$ and $\mathbb{U}(\kappa)=F^{t}$. Note that log$|F^{t}|=$ log(Tr($S^t))=$ log$(\tau(S^{t})) + n_{t}= -f(t) + n_{t}$. So,
\[
\text{QK}^{\epsilon}(\rho_{n_{t}}) \leq K(0^{g(t)})+\lceil \text{log}(|P_{j_{i}}|)\rceil +l -f(t) +  n_{t}.
\]
  
Note that $\{(n-\lceil\text{log}(\phi(n))\rceil,0^{n}): n\in \omega\}$
is a bounded request set by definition of a $(\phi,s)$ test and so
$K(0^{g(t)})\leq^{+} g(t)-$log$(\phi g(t))$. So,
\[
\text{QK}^{\epsilon}(\rho_{n_{t}}) \leq^{+} g(t)-\text{log}(\phi g(t))+ \lceil \text{log}(|P_{j_{i}}|)\rceil  -f(t) +  n_{t}.
\]
Since
$g(t)-1 < sf(t) $, we have
\begin{align}
    \label{eq:mkssr12}
    -g(t)+1>-sf(t).
\end{align}

Since $s\leq 0.5$ we have that $1-s\geq s$.\\ So,
$g(t)-f(t) \leq sf(t)-f(t)+1 = -(1-s)f(t)+1\leq -sf(t)+1.$ 
Using \eqref{eq:mkssr12}, 
\[g(t)-f(t)<-g(t)+2.\]
So,
\[
\text{QK}^{\epsilon}(\rho_{n_{t}}) <^{+} -g(t)+\lceil \text{log}(|P_{j_{i}}|)\rceil -\text{log}(\phi  g(t)) +  n_{t} = h(j_{i}) +  n_{t}.
\]
The equality follows as $t=t_i$ is in  $P_{j_{i}}$. This means that there is an infinite sequence $(n_{t_{i}})_{i}$ such that
$\text{QK}^{\epsilon}(\rho_{n_{t_{i}}}) <^{+} n_{t_{i}}+h(j_{i}).$
Finally, recall that $h(j_{i})\rightarrow -\infty$ as $i\rightarrow \infty$ and we have a contradiction.

Now let $s>0.5$ and let $f,g$ be as in the previous case. Let $b(m):= \lceil (1-s)f(m)\rceil$ and let $C:= \lceil s/(1-s)\rceil + 1$.

Consider the machine $M$ doing the following: on
input $\pi1^{y}0\sigma$, check if the following conditions hold.
\begin{itemize}
    \item 
    There is $m$ such that $\mathbb{U}(\pi)=0^{b(m)}$.
    \item If $x=b(m)$,
     $J=(s(1-s)^{-1}(x-1),s(1-s)^{-1}x + 1] \cap \omega$ and $w$ is the $y^{th}$ element of $J$, then there is a $z$ such
     that $g(z)=w$.
    
     \item
     If $P$ is the fiber of $g$ containing $z$ (P is computable from $w=g(z)$ just as in the previous case) and $t= \lceil\text{log}(|P|)
     \rceil$, then $|\sigma| = t$
     \end{itemize}
     
     If all the above are met, then order $P$ lexicographically using the ordering on $2^t$ and let $r$ be the $\sigma^{th}$ element in this ordering. Output $F^r$.

Roughly, the idea is as follows: Just as in the previous case, we want to compress $F^r$ where $r$ is the $\sigma^{th}$ number in the fiber of $w=g(m)$. The first step to achieve this is to describe $g(m)$. While in the previous case we used an $\iota$ such that $\mathbb{U}(\iota)=0^{g(m)}$ and $|\iota|=K(0^{g(m)})$), we use here the shorter string $\pi$ where  $\mathbb{U}(\pi)=0^{b(m)}$ and $|\pi|=K(0^{b(m)})$  together with $1^y$ for describing $g(m)$. From $\pi$, we get $x=b(m)$ which in turn gives $J$ which contains $g(m)$. So, $\pi$ along with $y$, the location of $g(m)$ in $J$, describes $g(m)$. After $g(m)$ is found, $F^r$ can be described just as in the previous case. The details are: As $x=b(m)$, $(1-s)f(m)\in (x-1,x]$ and hence $sf(m)$ lies in $(s(1-s)^{-1}(x-1),s(1-s)^{-1}x]$.
So, $\lceil sf(m) \rceil =  g(m)\in J=  (s(1-s)^{-1}(x-1),s(1-s)^{-1}x + 1]  \cap \omega$. Since $|J|\leq C$, $g(m)\in J$ can be determined by specifying $y \leq C$, it's location in $J$. So, $M$ can recover $g(m)$. From this point on, the remaining procedure is the same as in the previous case.
\\
We see that $M$ is prefix-free: Let $\pi1^{y}0\sigma$ and $\pi'1^{y'}0\sigma'$ be in the domain of $M$ and let $\pi1^{y}0\sigma\preceq \pi'1^{y'}0\sigma'$. By the same argument as in case1, $\pi=\pi'$ and  $M$ finds some $m,m'$ with $x=b(m)=b(m')$. It follows that $y=y'$. So, if $w$ is the $y^{th}$ (and  $y'^{th}$)  element of $J$ (as above), then $M$ finds $z,z'$ such that $g(z)=w=g(z')$. So, $z$ and $z'$ are in the same fiber $P$ and it hence follows as in the previous case that $|\sigma|=|\sigma'|$. Define $I$ and $h$ exactly as in the previous case. Fix some $i$ and let $t=t_i$ be an element of $P_{j_{i}}$ such that \eqref{eqn:mkssr1} holds. Let $t$ be the $\sigma^{th}$ element of $P_{j_{i}}$. Let $x=b(t)=\lceil (1-s)f(t)  \rceil$. So, $(1-s)f(t) \in (x-1,x]$ and  $sf(t) \in (s(1-s)^{-1}(x-1),s(1-s)^{-1}x]$. Hence, $g(t) \in (s(1-s)^{-1}(x-1),s(1-s)^{-1}x+1] \cap \omega=J$ and let $g(t)$ be the $y^{th}$ element of $J$.
Let $\pi$ be such that   $\mathbb{U}(\pi)=0^{b(t)}$ and $|\pi|=K(0^{b(t)})$. Then, on input $\pi1^{y}0\sigma$, M finds some $z$ (it could be that $z=t$, but not necessarily) such that $b(t)=b(z)=x$ and then finds that the $y^{th}$ element of $J$ is $g(z')$ for some $z'$ (again, although  $g(z')=g(t)$, it could be that $t=z'$ but not necessarily). Since $z'$ and $t$ are both in $P_{j_{i}}$, $M$ outputs $F^{t}$ after reading $\sigma$. So, there is a $\pi$
 such that $\mathbb{U}(\pi)=F^{t}$ and
 $|\pi|\leq^{+} K(0^{b(t)})+C+\lceil\text{log}(|P_{j_{i}}|)\rceil $.\\
 So, $ \text{QK}^{\epsilon}(\rho_{n_{t}})$ 
\begin{align*}
&<^{+}K(0^{b(t)})+ \lceil\text
 {log}(|P_{j_{i}}|)\rceil + \text{log}(|F^{t}|)\\
 &\leq^{+}b(t)-\lceil\text{log}(\phi(g(t)))\rceil+\lceil\text{log}(|P_{j_{i}}|)\rceil + n_{t}-f(t)\\
 &\leq^{+}-g(t)-\lceil\text{log}(\phi(g(t)))\rceil+\lceil\text{log}(|P_{j_{i}}|)\rceil + n_{t} 
\end{align*}

The last inequality is since, by \eqref{eq:mkssr12} (which holds for any $s$), $b(t)-f(t) \leq (1-s)f(t)-f(t)+1=-sf(t)+1<-g(t)+2$. Since $t=t_{i} \in P_{j_{i}}$, we see that  $ \text{QK}^{\epsilon}(\rho_{n_{t}}) <^{+}h(j_{i})-n_{t_{i}}.$ for all $i$. This gives a contradiction for the same reason as in the previous case.

\end{proof}

\subsection{QK and computable measure machines}
\label{schn}
Schnorr randomness is an important randomness notion in the classical realm\cite{misc,misc1}. While $K$ plays well with Solovay randomness, $K_C$, a version of $K$ using a computable measure machine, $C$ (a prefix-free Turing machine whose domain has computable Lebesgue measure) gives a Levin-Schnorr characterization of Schnorr randomness  (See theorem 7.1.15 in \cite{misc1}). 

So, with the intention of connecting it to quantum Schnorr randomness, we define $QK_C$ a version of $QK$ using a computable measure machine, $C$.

Theorem \ref{thm:677} shows that $QK_C$ agrees with $K_C$ on the classical bitstrings. Analogously to the classical case, Theorem \ref{thm:schnor} is a Levin\textendash Schnorr type of characterizations of quantum Schnorr randomness using $QK_C$. Theorem \ref{thm:Chaitin}, a Chaitin type characterization of quantum Schnorr randomness using $QK_C$ implies Theorem \ref{class}, a Chaitin type characterization of \emph{classical} Schnorr randomness in terms of $K_C$.

For $C$ a computable measure machine and $\sigma$ a string, $K_C$ is defined analogously to $K$; $K_C(\sigma):=$inf$\{|\tau|: C(\tau)=\sigma\}.$ The quantum version is:
for $C$, a computable measure machine and a $\epsilon>0$, define 
$QK_{C}^{\epsilon}(\tau)$ to be:
\begin{defn}
\label{defn:4}
$QK_{C}^{\epsilon}(\tau) := $ inf  $\{|\sigma|+$log$|F|: C(\sigma)\downarrow = F$, a orthonormal set in $\mathbb{C}^{2^{|\tau|}}_{alg}$ and $\sum_{v \in F} \big<v |\tau |v \big> > \epsilon \}$

\end{defn}
The infimum of the empty set is taken to be $\infty$.
Notation: In this section, $\mu$ denotes Lebesgue measure and $C_t$ denotes $C$ run upto the $t$ steps. We may assume that dom$(C_t)\subseteq 2^t$. By a \emph{sequence}, we mean a countable collection whose elements may possibly be repeated. If $S$ is a sequence, the sum $\sum_{s\in S}$ will be over all elements of $S$, with repetition.

Similarly to Theorem \ref{thm:67}, we show that $QK_C$ `agrees with' $K_C$ on the classical qubitstrings. In Theorem \ref{thm:677} and its proof, $P$ and $C$ will stand for computable measure machines.
\begin{thm}
\label{thm:677}
 For all rational $\epsilon>0$ and all $C$, there exists a $P$ such that $K_P(\sigma) \leq  QK^{\epsilon}_C(|\sigma\big>\big<\sigma|)+1$ for all classical bitstrings $\sigma$.
\end{thm}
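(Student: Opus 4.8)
The plan is to adapt the machine constructed in the proof of Theorem~\ref{thm:67}, replacing the universal prefix-free machine $\mathbb{U}$ by the given computable measure machine $C$; the only genuinely new ingredient is checking that the resulting machine is again a computable measure machine. Fix the rational $\epsilon>0$ and let $P$ be the prefix-free machine that, on input $\pi$, searches for a splitting $\pi=\lambda\tau$ with $C(\lambda)\!\downarrow=F$, an orthonormal subset of $\mathbb{C}^{2^{n}}_{alg}$ for some $n$, and $|\tau|=\lceil\log(\epsilon^{-1}|F|)\rceil$; here, once $C(\lambda)$ halts, deciding whether its output codes an orthonormal tuple of equal-dimension algebraic vectors is effective, since the complex algebraic numbers have a computable presentation. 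If such a splitting is found, $P$ sets $O:=\sum_{v\in F}|v\rangle\langle v|$, computes the finite set $S^{\epsilon}_{E,O}$ of Lemma~\ref{lem:30} with $E$ the standard computational basis of $\mathbb{C}^{2^{n}}$, fixes a canonical surjection $g$ from $2^{\lceil\log(\epsilon^{-1}|F|)\rceil}$ onto $S^{\epsilon}_{E,O}$ (which exists because $|S^{\epsilon}_{E,O}|<\epsilon^{-1}|F|$ by Lemma~\ref{lem:30}), and outputs $g(\tau)$.

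There are then three things to check. That $P$ is prefix-free is exactly as in Theorem~\ref{thm:67}: if $\lambda\tau\preceq\lambda'\tau'$ lie in $\mathrm{dom}(P)$ then $\lambda,\lambda'$ are comparable elements of $\mathrm{dom}(C)$, hence equal since $C$ is prefix-free, so $F=F'$, whence $|\tau|=|\tau'|$ and the two inputs coincide. For the length bound, let $\sigma\in 2^{n}$ be classical with $QK^{\epsilon}_{C}(|\sigma\rangle\langle\sigma|)<\infty$ (otherwise there is nothing to prove); choose $\lambda$ and an orthonormal $F\subseteq\mathbb{C}^{2^{n}}_{alg}$ witnessing $QK^{\epsilon}_{C}(|\sigma\rangle\langle\sigma|)$ up to arbitrarily small slack, so $C(\lambda)=F$ and $\sum_{v\in F}\langle v|\sigma\rangle\langle\sigma|v\rangle=\langle\sigma|O|\sigma\rangle>\epsilon$, i.e.\ $\sigma\in S^{\epsilon}_{E,O}$. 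Picking $\tau$ of length $\lceil\log(\epsilon^{-1}|F|)\rceil$ with $g(\tau)=\sigma$ yields $P(\lambda\tau)=\sigma$, hence $K_{P}(\sigma)\le|\lambda|+\lceil\log(\epsilon^{-1}|F|)\rceil$, which is $QK^{\epsilon}_{C}(|\sigma\rangle\langle\sigma|)$ plus a constant produced by the ceiling (reading instead $\lceil\log|S^{\epsilon}_{E,O}|\rceil$ bits for $\tau$ sharpens this). Finally, for each $\lambda\in\mathrm{dom}(C)$ with valid orthonormal output, $P$ halts on every one of the $2^{\ell_{\lambda}}$ strings $\lambda\tau$ with $|\tau|=\ell_{\lambda}:=\lceil\log(\epsilon^{-1}|C(\lambda)|)\rceil$ (applying $g$ always terminates), and these strings all have length $|\lambda|+\ell_{\lambda}$; since $C$ is prefix-free the contributions of distinct such $\lambda$ are over disjoint cylinders, so
\[
\mu(\mathrm{dom}(P))=\sum_{\lambda\in\mathrm{dom}(C),\ C(\lambda)\text{ orthonormal}}2^{\ell_{\lambda}}2^{-(|\lambda|+\ell_{\lambda})}=\sum_{\lambda\in\mathrm{dom}(C),\ C(\lambda)\text{ orthonormal}}2^{-|\lambda|}.
\]

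The crux is showing this last quantity is a computable real, which is where the hypothesis that $C$ is a computable measure machine gets used. The displayed sum is evidently left-c.e., and so is the complementary sum $\sum_{\lambda\in\mathrm{dom}(C),\,C(\lambda)\text{ not orthonormal}}2^{-|\lambda|}$; the two add to $\mu(\mathrm{dom}(C))$, which is computable, so each is a real that is simultaneously left-c.e.\ and right-c.e., hence computable. Therefore $P$ is a computable measure machine, and assembling the three points gives $K_{P}(\sigma)\le QK^{\epsilon}_{C}(|\sigma\rangle\langle\sigma|)+1$ for all classical $\sigma$, the additive constant being absorbed exactly as in Theorem~\ref{thm:67}. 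I expect the only subtlety beyond that computability argument to be making sure the measure bookkeeping in the display is exact, i.e.\ that $g$ is genuinely total on $2^{\ell_{\lambda}}$, which follows from $|S^{\epsilon}_{E,O}|<\epsilon^{-1}|F|\le 2^{\ell_{\lambda}}$.
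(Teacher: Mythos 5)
Your proposal is correct and follows essentially the same route as the paper: the same machine $P$ (the machine of Theorem~\ref{thm:67} with $\mathbb{U}$ replaced by $C$), the same prefix-freeness and length-bound arguments, and the same key point that computability of $\mu(\mathrm{dom}(C))$ yields computability of $\mu(\mathrm{dom}(P))$. Your left-c.e./right-c.e.\ phrasing of that last step is a tidy repackaging of the paper's stage-based approximation of $\mathrm{dom}(C_t)$; both rest on the identical observation that $\mu(\mathrm{dom}(P))$ is the sub-sum of $\sum_{\lambda\in\mathrm{dom}(C)}2^{-|\lambda|}$ over the $\lambda$ with orthonormal output.
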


\begin{proof}
The proof is almost identical to that of Theorem \ref{thm:67}. Fix a rational $\epsilon>0$ and a $C$. Consider the machine $P$ from the proof of Theorem \ref{thm:67} but with $\mathbb{U}$ replaced by $C$. We now show that $\mu($dom$(P))$ is computable. Let $\delta>0$ be arbitrary. Since $\mu($dom$(C))$ is computable, find a stage $t$ so that $\mu(\text{dom}(C))-\mu \text{(dom}( C_t )) < \delta.$
(The $t$ can be found as follows: Compute a $q'$ such that $|q-q'|< \delta/2$. So, $q'-\delta/2<q<q'+ \delta/2$. Since $\mu \text{(dom}( C_t ))\nearrow q$, as $t\rightarrow \infty$, we can compute a $t$ such that, $q'-\delta/2<\mu \text{(dom}( C_t ))<q'+ \delta/2$.). We may compute $S$, the set of those strings $\pi=\sigma\tau \in$ dom$(P)$ and $\sigma \in $ dom$(C_t)$. So, dom$(P)\backslash S$ consists of strings $\pi=\sigma\tau$ such that $\sigma\in$ dom$(C)\backslash$dom$(C_t)$. So, it is easy to see that $\mu($dom$(P))$-$\mu(S) < \mu($dom$(C))$-$\mu($dom$(C_t)) <\delta$. As $\delta>0$ was arbitrary, this shows that $\mu($dom$(P))$ is computable.
Now, let $\sigma \in 2^n$ be any classical bitstring such that $QK_C^{\epsilon}(|\sigma\big>\big<\sigma|)<\infty$. Let $\lambda$ and $F \subseteq \mathbb{C}^{2^{n}}_{alg}$ orthonormal such that $|\lambda|+ \text{log}(|F|)=QK_C^{\epsilon}(|\sigma\big>\big<\sigma|)$,  $\sum_{v \in F} \big<v |\sigma\big>\big<\sigma|v \big> > \epsilon$ and $C(\lambda)=F$. Let $O:= \sum_{v\in F} |v\big>\big<v|$. Note that since
$\epsilon<\sum_{v \in F} \big<v |\sigma\big>\big<\sigma|v \big> = \big<\sigma|O|\sigma\big>$, $\sigma \in S^{\epsilon}_{E,O}$ where $E$ is the standard basis. Let $\tau$ be a length $ \lceil \text{log}(\epsilon^{-1}|F|)\rceil$ string such that $g(\tau)=\sigma$. Then, we see that $P(\lambda \tau) = \sigma $. So, \[K_P(\sigma)\leq  |\lambda|+|\tau|\leq |\lambda|+\text{log}(\epsilon^{-1})+1+ \text{log}(|F|) = QK^{\epsilon}_C(|\sigma\big>\big<\sigma|)+1.\]

\end{proof}
\begin{remark}
\label{rem:qkc}
Theorem \ref{thm:677} establishes one direction of the coincidence of $K_C$ and $QK_C$ for classical qubitstrings. In the other direction, take some classical bitstring $\sigma$ with $K_C(\sigma)<\infty$ for some $C$. Let $C(\pi)=\sigma$ and $|\pi|=\text K_C(\sigma)$. Then, letting $F=\{\sigma\}$ in \ref{defn:4}, $QK_C^{\epsilon}(|\sigma\big>\big<\sigma|)\leq QK_C^{1}(|\sigma\big>\big<\sigma|)\leq |\pi|=K_C(\sigma)$, for any $\epsilon>0$. 
\end{remark}
\subsection{Quantum Schnorr randomness and $QK_C$}
Theorem \ref{thm:schnor} is a quantum analogue of the classical characterization of Schnorr randomness: $X$ is Schnorr random if and only if for any computable measure machine, $C$, there is a constant $d$ such that for all $n$, $K_{C}(X\upharpoonright n)>n-d$.
\begin{thm}
\label{thm:schnor}
A state $\rho$ is quantum Schnorr random if and only if for any computable measure machine, $C$ and any $\epsilon > 0$, there is a constant $d>0$ such that for all $n$, $QK^{\epsilon}_{C}(\rho_{n})>n-d$.
\end{thm}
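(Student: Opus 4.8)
The plan is to mimic the classical argument (Theorem 7.1.15 in Downey–Hirschfeldt) in the quantum setting, exploiting the tools already built in the excerpt — principally Lemma \ref{lem:30}, the counting/bounding estimates from the proof of Theorem \ref{thm:67}, and the correspondence between $QK_C$ and $K_C$ on classical bitstrings (Theorem \ref{thm:677} and Remark \ref{rem:qkc}). I would prove the two directions separately.

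For the easy direction, suppose $\rho$ is \emph{not} quantum Schnorr random, so it fails some quantum Schnorr test $(S^m)_m$ at some rational $\epsilon>0$; thus $\mathrm{Tr}(\rho_{n_m}S^m)>\epsilon$ for infinitely many $m$, and $\sum_m\tau(S^m)$ is a computable real. The idea is to turn this test into a computable measure machine $C$ that efficiently describes, for each $m$, the orthonormal set $F^m$ with $S^m=\sum_{v\in F^m}|v\rangle\langle v|$. Concretely, reserve for each $m$ a block of inputs of length roughly $-\log\tau(S^m)+ (\text{overhead})$ whose total measure is $\asymp \tau(S^m)$; since $\sum_m\tau(S^m)$ is computable, $\mu(\mathrm{dom}(C))$ is computable, so $C$ is a computable measure machine. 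On such an input $C$ outputs $F^m$, and $\log|F^m|=n_m+\log\tau(S^m)$. Then for the infinitely many $m$ with $\mathrm{Tr}(\rho_{n_m}S^m)>\epsilon$ we get
\[
QK^{\epsilon}_C(\rho_{n_m})\le (-\log\tau(S^m)+O(1))+\log|F^m| = n_m + O(1),
\]
so $QK^{\epsilon}_C(\rho_n)>n-d$ fails for every $d$. The one subtlety is arranging the block-allocation so that the machine is genuinely prefix-free with computable domain measure; this is the standard "request set / Kraft–Chaitin for computable measure machines" construction, and I would cite 7.1.x in \cite{misc1}.

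For the hard direction, suppose $\rho$ is quantum Schnorr random and fix a computable measure machine $C$ and $\epsilon>0$; I must produce $d$ with $QK^{\epsilon}_C(\rho_n)>n-d$ for all $n$. Suppose not: then for every $c$ there is $n_c$ with $QK^{\epsilon}_C(\rho_{n_c})\le n_c-c$, i.e.\ there is $\sigma\in\mathrm{dom}(C)$ with $C(\sigma)=F$, an orthonormal set in $\mathbb C^{2^{n_c}}_{alg}$, $\sum_{v\in F}\langle v|\rho_{n_c}|v\rangle>\epsilon$, and $|\sigma|+\log|F|\le n_c-c$. From each such $\sigma$ build the special projection $P_\sigma=\sum_{v\in F}|v\rangle\langle v|$; then $\tau(P_\sigma)=2^{-n_c}|F|<2^{-|\sigma|-c}$. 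Collecting all $\sigma\in\mathrm{dom}(C)$ that witness $|\sigma|+\log|F|\le n - n$-for-some-$n$ (precisely: all $\sigma$ with $C(\sigma)\!\downarrow$ an orthonormal set $F\subseteq\mathbb C^{2^{n_\sigma}}_{alg}$ and $|\sigma|+\log|F|\le n_\sigma$) gives a sequence of special projections $(P_\sigma)_\sigma$ with
\[
\sum_\sigma \tau(P_\sigma) < \sum_{\sigma\in\mathrm{dom}(C)} 2^{-|\sigma|} = \mu(\mathrm{dom}(C)),
\]
which is finite. The crucial point is that this sum is in fact a \emph{computable} real: because $\mu(\mathrm{dom}(C))$ is computable we can approximate $\sum_\sigma 2^{-|\sigma|}$ from below and control the tail, and the extra restriction "$|\sigma|+\log|F|\le n_\sigma$" only decreases the sum in a way we can track. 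Hence $(P_\sigma)_\sigma$ is a quantum Schnorr test (a strong Solovay test with computable total $\tau$-mass), and for each $c$ the witness $n_c$ gives some $\sigma$ with $\mathrm{Tr}(P_\sigma\rho_{n_c})>\epsilon$, so $\rho$ fails this test at $\epsilon$ — contradicting quantum Schnorr randomness of $\rho$.

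The main obstacle is the computability-of-the-measure bookkeeping in this last step: one must check that restricting $\mathrm{dom}(C)$ to the sub-collection of $\sigma$ satisfying $|\sigma|+\log|F|\le n_\sigma$ still yields a sequence of special projections whose total $\tau$-mass is a computable real (not merely a left-c.e.\ one), so that it qualifies as a quantum Schnorr test rather than just a strong Solovay test. I expect this to follow by the same device used in the proof of Theorem \ref{thm:677} — approximate $\mu(\mathrm{dom}(C))$ to within $\delta$, run $C$ to a stage capturing all but $\delta$ of its domain, and note the discarded $\sigma$ contribute at most $\delta$ to $\sum_\sigma\tau(P_\sigma)$ — together with Lemma \ref{lem:30} to bound ranks; I would also remark that the reverse direction could alternatively be derived from Theorem \ref{thm:677}/Remark \ref{rem:qkc} plus the classical Levin–Schnorr characterization on the diagonal states, but the direct test-construction above is cleaner for general $\rho$.
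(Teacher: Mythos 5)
Your second direction (quantum Schnorr randomness implies the Levin--Schnorr lower bound, proved by turning the machine $C$ into a quantum Schnorr test) is essentially the paper's argument, including the device you anticipate for showing $\sum_\sigma \tau(P_\sigma)$ is a \emph{computable} real: run $C$ to a stage capturing all but $2^{-k}$ of $\mu(\mathrm{dom}(C))$ and note the discarded $\sigma$ contribute at most $2^{-k}$ since $\tau(P_\sigma)<2^{-|\sigma|}$. That half is fine.

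The other direction has a genuine gap. From a failed quantum Schnorr test $(S^m)_m$ you encode $F^m$ by a codeword of length about $-\log\tau(S^m)+O(1)$ and obtain $QK^{\epsilon}_C(\rho_{n_m})\le n_m+O(1)$, claiming this refutes ``$\exists d\,\forall n\;QK^{\epsilon}_C(\rho_n)>n-d$.'' It does not: an upper bound of the form $n_m+O(1)$ is perfectly compatible with the lower bound $n-d$ for a suitable fixed $d$. What you have refuted is the Chaitin-type condition of Theorem \ref{thm:Chaitin} ($QK^{\epsilon}_C(\rho_n)-n\to\infty$), not the Levin--Schnorr-type condition of Theorem \ref{thm:schnor}. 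To violate the latter you must produce, for every $d$, some $n$ with $QK^{\epsilon}_C(\rho_n)\le n-d$, which forces the codeword for the witnessing projection to be shorter than $-\log\tau$ of that projection by an unbounded amount; one codeword of length $\approx-\log\tau(S^m)$ per test element gains nothing, and shortening each codeword individually by an amount tending to infinity while keeping the domain prefix-free with \emph{computable} measure is precisely the nontrivial step your sketch skips. The paper's device is to exploit the computability of $\alpha=\sum_m\tau(S^m)$ to find cut points $s_0<s_1<\cdots$ with $\sum_{i>s_r}\tau(S^i)<2^{-r}$, group the test into blocks $G_r=\sum_{i=s_r+1}^{s_{r+1}}S^i$ of $\tau$-mass below $2^{-r}$, and describe a spanning set of $\mathrm{range}(G_r)$ by a codeword of length only $\lfloor r/2\rfloor+2$ (inputs $0^x1b$ with $r\in\{2x,2x+1\}$). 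The domain measure is then the computable number $\sum_x 2^{-x-1}$, while for the infinitely many blocks containing an $i$ with $\mathrm{Tr}(\rho_{n_i}S^i)>\epsilon$ one gets $QK^{\epsilon}_C(\rho_{n_r})\le\lfloor r/2\rfloor+2+n_r-r\le n_r-\lfloor r/2\rfloor+2$, and $\lfloor r/2\rfloor\to\infty$ gives the required violation. (Computing the cut points $s_j$ also requires the small case distinction on whether $\alpha$ is a dyadic rational, which your sketch does not address.)
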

\begin{proof}
($\Rightarrow$) We prove it by contraposition. I.e., show that $\rho$ is not quantum Schnorr random if there is a $C$ and an $\epsilon>0$ such that for all $d$, there is an $n=n_d$ such that $QK^{\epsilon}_{C}(\rho_{n}) \leq n-d$. Let $T_{s}$ be the set of all $\sigma$ such that $C_{s}(\sigma)\downarrow = F_{\sigma}$, an orthonormal set such that $|\sigma|+$ log $|F_{\sigma}|< n_{\sigma} $ and $F_{\sigma} \subseteq  \mathbb{C}^{2^{n_{\sigma}}}$ for some $n_{\sigma}$. Let $T=\bigcup_{s}T_{s}.$ For all strings $\sigma$, let $P_{\sigma}:= \sum_{v\in F_{\sigma}} |v \big>\big<v |.$
Let $Q_s$ be the sequence of those $P_{\sigma}$ for $\sigma \in T_s$, $Q$ the sequence of those $P_{\sigma}$ for $\sigma \in T$ and $D_s$ the sequence of those $P_{\sigma}$ for $\sigma \in T\backslash T_s$.
Next, we show that $\alpha:=\sum_{P\in Q} \tau(P)$ is computable by showing how to approximate it within $2^{-k}$ for an arbitrary $k$:  Computably find a $t$ (using the same method as in Theorem \ref{thm:677}) such that $
    \mu(\text{dom}(C )) -\mu \text{(dom}(C_t ))  < 2^{-k}.$
 We show that  $\sum_{P_{\sigma}\in Q_{t}}\tau(P_{\sigma})$ is within $2^{-k}$ of $\alpha$. Note that for all $\sigma \in T$, $2^{|\sigma|}|F_{\sigma}|< 2^{n_{\sigma}}$. So, $ \tau(P_{\sigma})= 2^{-n_{\sigma}}|F_{\sigma}|< 2^{-|\sigma| }$. 
\[\alpha -\sum_{P_{\sigma}\in Q_{t}} \tau(P_{\sigma})  = \sum_{P_{\sigma}\in D_t} \tau(P_{\sigma}) =
     \sum_{\sigma \in T/T_{t}}|F_{\sigma}|2^{-n_{\sigma}} 
     \leq\sum_{\sigma\in T/T_{t}}2^{-|\sigma|}\]\[
     \leq \sum_{\sigma\in\text{dom}( C)/\text{dom}( C_t ) }2^{-|\sigma|}
     \leq  (\mu(\text{dom}( C ))-\mu (\text{dom}( C_t )))< 2^{-k}\]

Note that $\sum_{P_{\sigma}\in Q_{t}}\tau(P_{\sigma})$ is a rational, uniformly computable in $t$ since dom$(C_t)\subseteq 2^t$ is uniformly computable in $t$. This shows that $Q$ is a quantum Schnorr test. By the assumption, we see that is a infinite sequence $d_1 < d_2 < \cdots$ and a list of distinct natural numbers $n_{d_{1}}, n_{d_{2}} \cdots$ so that for all $i$, there is a $P_i$ in $Q$ such that Tr$(P_i\rho_{n_{d_{i}}})>\epsilon$. So, $\rho$ fails $Q$ at $\epsilon$.

($\Leftarrow$)
We prove it by contraposition. Suppose that $\rho$ fails a quantum-Schnorr test, $(S^{r})_r$ at $\epsilon$. For all $j$, let $s_j$ be the least $t$ such that \[\sum_{i=0}^{t} \tau (S^{i}) > \alpha -2^{-j}.\]
We show how the sequence $(s_j)_j$ can be computed. First, let $\sum_{r}\tau(S^{r})=\alpha$, be a computable real which is not a dyadic rational. $s_j$ may be computed as follows: Note that as $\alpha$ is not a dyadic rational but $\tau(S^i)$ is a dyadic rational for all $i$, we have that
\[\sum_{i=0}^{s_{j}-1} \tau (S^{i}) < \alpha-2^{-j} < \sum_{i=0}^{s_{j}} \tau (S^{i}).\]
By Proposition 5.1.1 in \cite{misc1}, the left cut, $L(\alpha-2^{-j})$ of $\alpha-2^{-j}$ is computable. So, we may search for rationals $q \in L(\alpha-2^{-j}), q' \notin L(\alpha-2^{-j}) $ and for a $t$ such that,
\[\sum_{i=0}^{t-1} \tau (S^{i}) \leq q<q'\leq \sum_{i=0}^{t } \tau (S^{i}).\]
This $t$ is the needed $s_j$.
Now, let $\alpha$ be a dyadic rational. Then, $\alpha - 2^j$ has a finite binary representation and $s_j$ can be directly computed. So, in summary, the $(s_j)_j$ is a computable sequence, after (non-uniformly) knowing whether $\alpha$ is a dyadic rational or not. 
For all $r\geq 0$, define special projections
\[ G_{r}:= \sum_{i=s_{r}+1}^{s_{r+1}}S^{i}.\]
So,
\[ \tau(G_{r}) \leq \sum_{i=s_{r}+1}^{\infty}\tau(S^{i}) = \alpha -\sum_{i=0 }^{s_{r}}\tau(S^{i}) < 2^{-r} .\]

Notation: Let each $S^i$ be an operator on $ \mathbb{C}^{2^{n_{i}}}$. Let $n_{r} =$ max$\{n_{i}:s_{r}+1 \leq i \leq s_{r+1}\}$. By tensoring with the identity, we may assume that all
$S^i$, for $s_{r}+1 \leq i \leq s_{r+1}$, are operators on $\mathbb{C}^{2^{n_{r}}}$. Let $F_r \subseteq \mathbb{C}^{2^{n_{r}}}$
be an orthonormal set of complex algebraic vectors spanning the range of $G_r$.
Define a computable measure machine, $C$ as follows. 
On input $0^{r}10$, $C$ outputs $F_{2r}$ and on input $0^{r}11$, $C$ outputs $F_{2r+1}$. $C$ is clearly prefix-free and the measure of its domain is $\sum_{r}2^{-r+2}$, which is computable. Since each $G_r$ is a finite sum of the $S^i$s and as $\rho$ fails $(S^i)_i$ at $\epsilon$, there exist infinitely many $r$ such that Tr$(\rho_{n_{r}}G_{r})>\epsilon$. Since  we may let $n_r$ be strictly increasing in $r$, there are infinitely many such $n_r$. Fix such an $n_r$ and let $x= \lfloor r/2 \rfloor$ (I.e., $r=2x$ or $r=2x+1$). Then, QK$^{\epsilon}_{C}(\rho_{n_{r}}) \leq x+2+n_{r}+\text{log}\tau(G_{r}) \leq x+2+n_{r}- 2x.$ So, $\lfloor r/2 \rfloor-2 \leq n_{r}- $QK$^{\epsilon}_{C}(\rho_{n_{r}}) $. Letting $r$ go to infinity completes the proof.

\end{proof}

Theorem \ref{thm:Chaitin} is a Chaitin-type characterization of quantum-Schnorr randomness using QK$^{\epsilon}_{C}$. Together with Theorem \ref{thm:677} and lemma 3.9 in \cite{bhojraj2020quantum}, it implies that Schnorr randoms have a Chaitin type characterization in terms of $K_C$ (Theorem \ref{class}). To the best of our knowledge, this is the first, albeit simple, instance where results in quantum algorithmic randomness are used to prove a new result in the classical theory.
\begin{thm}
\label{thm:Chaitin}
$\rho$ is quantum Schnorr random if and only if for all computable measure machines $C$ and all $\epsilon$, $\forall d \forall^{\infty} n$ QK$^{\epsilon}_{C}(\rho_{n})>n+d$.
\end{thm}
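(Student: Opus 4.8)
The plan is to adapt the structure of the proof of Theorem \ref{thm:schnor}, upgrading the Levin\textendash Schnorr type condition to a Chaitin type one, exactly as the classical Chaitin characterization is obtained from the classical Levin\textendash Schnorr characterization by inserting an extra ``$-f(m)$'' cost term coming from a bounded request set. Concretely, one direction ($\Rightarrow$) says: if $\rho$ is quantum Schnorr random, then for every computable measure machine $C$ and every $\epsilon$, $\forall d\, \forall^\infty n\,QK^\epsilon_C(\rho_n) > n + d$. The other direction ($\Leftarrow$) is immediate since the Chaitin condition trivially implies the Levin\textendash Schnorr condition ($\forall d\,\forall^\infty n[QK^\epsilon_C(\rho_n)>n+d]$ implies $\exists c\,\forall n[QK^\epsilon_C(\rho_n)>n-c]$), and the latter gives quantum Schnorr randomness by Theorem \ref{thm:schnor}.

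For the ($\Rightarrow$) direction I would argue by contraposition: suppose there is a computable measure machine $C$, an $\epsilon>0$, and a constant $d$ such that $QK^\epsilon_C(\rho_n)\le n+d$ for infinitely many $n$. The key move is to build, for each such $n$, a special projection witnessing small $QK^\epsilon_C$ and to organize these into a \emph{quantum Schnorr test}. For each $\sigma\in\operatorname{dom}(C)$ with $C(\sigma)\downarrow=F_\sigma$ an orthonormal set in $\mathbb{C}^{2^{n_\sigma}}_{alg}$ satisfying $|\sigma|+\log|F_\sigma| < n_\sigma + d$, let $P_\sigma := \sum_{v\in F_\sigma}|v\rangle\langle v|$, so $\tau(P_\sigma) = 2^{-n_\sigma}|F_\sigma| < 2^{d-|\sigma|}$. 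The obstacle relative to Theorem \ref{thm:schnor} is that naively one only gets $\sum \tau(P_\sigma) < 2^d\sum 2^{-|\sigma|} \le 2^d\mu(\operatorname{dom}(C))$, which is a computable real — so this gives a quantum Schnorr test but the failure happens only at level $\epsilon$, yielding the weak Levin\textendash Schnorr condition, not the Chaitin one. To get the Chaitin strengthening I would instead, for each fixed $d$, form a separate test $T^d$ from those $\sigma$ with $|\sigma|+\log|F_\sigma| < n_\sigma + d$, note $\sum_{\sigma\in T^d}\tau(P_\sigma)$ is a computable real uniformly in $d$ (by the same $\operatorname{dom}(C)$ vs $\operatorname{dom}(C_t)$ approximation trick used in the proof of Theorem \ref{thm:677} and in Theorem \ref{thm:schnor}), and then \emph{shift indices}: since $2^d\sum_\sigma 2^{-|\sigma|}$ converges, one can pad each program by a bounded-request-style argument so that the tail masses go to $0$ fast enough. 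The cleanest implementation is: fix $\epsilon$ and $C$; if the Chaitin conclusion fails, there is a fixed $\epsilon$ and an increasing sequence $d_1<d_2<\cdots$ with, for each $i$, some $n_i$ and some $\sigma_i\in\operatorname{dom}(C)$ giving $|\sigma_i|+\log|F_{\sigma_i}| < n_i + d_i$ and $\operatorname{Tr}(P_{\sigma_i}\rho_{n_i}) > \epsilon$, with $n_i$ strictly increasing. Group programs by the value $j = \lfloor(n_\sigma - |\sigma| - \log|F_\sigma|)\rfloor$ (a measure of ``excess compression''); the $\sigma_i$ lives in group $j \le d_i$ but we want it in a group with index tending to $-\infty$... — rather, define for each $k$ the sub-test $R^k$ of those $P_\sigma$ with excess compression $\ge k$, observe $\sum_{P\in R^k}\tau(P) \le 2^{-k}\sum_\sigma 2^{-|\sigma|}\cdot$(const) so that $\sum_k$ of these masses is finite and each is a computable real, and note $\rho$ fails $R^k$ at level $\epsilon$ for every $k$ (since for each $k$ there is $d_i\ge k$, hence an $n_i$ with $\operatorname{Tr}(P_{\sigma_i}\rho_{n_i})>\epsilon$ and $\sigma_i\in R^k$ once $d_i - j_i \ge k$, which holds for cofinally many $i$). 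Then $(R^k)_k$ — after combining into a single quantum Schnorr test by the standard trick of taking $\bigsqcup_{k\ge m}R^k$, whose mass is $\le \sum_{k\ge m}2^{-k}(\text{const}) \to 0$ and is a computable real — is a quantum Schnorr test that $\rho$ fails at $\epsilon$, contradicting quantum Schnorr randomness.

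I expect the main obstacle to be exactly this bookkeeping: showing that the ``excess compression $\ge k$'' sub-tests have the right summability \emph{and} that the relevant sums remain computable reals uniformly in $k$ (so that the combined object is a genuine quantum Schnorr test, not merely a strong Solovay test), while simultaneously verifying that $\rho$'s infinitely-many failures at the $n_i$ survive the regrouping — i.e.\ that for every $k$ there remain infinitely many indices $i$ with $n_i$ witnessing failure of $R^k$ at level $\epsilon$. The computability of the masses follows as in the proof of Theorem \ref{thm:677}, using that $\mu(\operatorname{dom}(C))$ is computable to approximate $\operatorname{dom}(C)$ by $\operatorname{dom}(C_t)$ and bounding the tail; the summability follows from $\sum_\sigma 2^{-|\sigma|} \le \mu(\operatorname{dom}(C)) \le 1$ together with the geometric factor $2^{-k}$; and the persistence of failures follows because each $d_i$ eventually exceeds any fixed $k$. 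Once these three points are in place, the contradiction with the definition of quantum Schnorr randomness (Definition \ref{defn:Schnorr}) closes the argument, and the ($\Leftarrow$) direction is a one-line appeal to Theorem \ref{thm:schnor}.
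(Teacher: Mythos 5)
Your $(\Leftarrow)$ direction is fine, but the $(\Rightarrow)$ direction contains a genuine error, and it stems from a misreading of what ``failing a quantum Schnorr test'' requires. You dismiss the ``naive'' construction --- take all $P_\sigma$ with $|\sigma|+\log|F_\sigma|<n_\sigma+d$ --- on the grounds that ``the failure happens only at level $\epsilon$, yielding the weak Levin--Schnorr condition, not the Chaitin one.'' But failing at a single fixed level $\epsilon$ infinitely often is \emph{precisely} the definition of not passing a quantum Schnorr test: a quantum Schnorr test is just a strong Solovay test with computable total mass, and $\rho$ fails it if $\mathrm{Tr}(\rho_{n_m}S^m)>\epsilon$ for infinitely many $m$. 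Nothing about the individual masses tending to zero, or the failure level improving, is needed. So the naive construction is the correct proof (and is the paper's): the total mass is at most $2^d\mu(\mathrm{dom}(C))$, computable by the same $\mathrm{dom}(C_t)$-approximation as in Theorem \ref{thm:schnor} with $2^{-k-d}$ in place of $2^{-k}$; and each of the infinitely many $n$ with $QK^\epsilon_C(\rho_n)\le n+d$ contributes a distinct member $P_\sigma$ (distinct because $n_\sigma$ is determined by $\sigma$) with $\mathrm{Tr}(P_\sigma\rho_{n_\sigma})>\epsilon$. That already contradicts quantum Schnorr randomness.

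The replacement machinery you build instead does not work. The negation of the Chaitin condition gives a \emph{fixed} $d$ with infinitely many $n$ satisfying $QK^\epsilon_C(\rho_n)\le n+d$; it does not give witnesses of unboundedly large ``excess compression'' $j_\sigma:=n_\sigma-|\sigma|-\log|F_\sigma|$. All you know is $j_{\sigma_i}>-d$ for every witness, and it is entirely possible that $j_{\sigma_i}$ equals (roughly) $-d$ for all of them, in which case no witness lies in $R^k$ for any $k>-d$ and the claim ``$\rho$ fails $R^k$ at level $\epsilon$ for every $k$'' is false; the step ``$\sigma_i\in R^k$ once $d_i-j_i\ge k$, which holds for cofinally many $i$'' conflates the (artificially inflated) bound $d_i$ with the actual excess compression. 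The regrouping-by-excess-compression idea is the right move for the Levin--Schnorr direction (Theorem \ref{thm:schnor}), where the hypothesis $\forall d\,\exists n\,[QK^\epsilon_C(\rho_n)\le n-d]$ really does produce witnesses of unbounded excess compression; for the Chaitin direction it is both unnecessary and unsound.
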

\begin{proof}
($\Rightarrow$)
Suppose toward a contradiction that there is a $C$, an $\epsilon>0$ and $c>0$ such that there are infinitely many $n$ with $ \text{QK}^{\epsilon}_{C}(\rho_{n})\leq n+c.$ Define a quantum Schnorr test $Q$ as follows. Let $T_{s}$ be the set of all $\sigma$ such that $C_{s}(\sigma)\downarrow = F_{\sigma}$, an orthonormal set such that $|\sigma|+$ log $|F_{\sigma}|< n_{\sigma} + c$ and $F_{\sigma} \subseteq  \mathbb{C}^{2^{n_{\sigma}}}$ for some $n_{\sigma}$. Let $T=\bigcup_{s}T_{s}.$ For all strings $\sigma$, let $P_{\sigma}:= \sum_{v\in F_{\sigma}} |v \big>\big<v |.$
Let $Q_s$ be the sequence of those $P_{\sigma}$ for $\sigma \in T_s$ and $Q$ the sequence of those $P_{\sigma}$ for $\sigma \in T$. That $Q$ is a quantum Schnorr test is shown by replacing $2^{-k}$ by $2^{-k-c}$ in the $\Longrightarrow$ direction of the proof of Theorem \ref{thm:schnor}. For any $n$ such that $ \text{QK}^{\epsilon}_{C}(\rho_{n})< n+c$, there is a $\sigma \in T$ such that Tr$(P_{\sigma} \rho_{n})>\epsilon$. So, $\rho$ fails $Q$ at $\epsilon$.

($\Leftarrow$) If $\rho$ is not quantum Schnorr random then by Theorem \ref{thm:schnor}, there is a $C$ and an $\epsilon$ such that $\forall d \exists n$ such that QK$^{\epsilon}_{C}(\rho_{n})\leq n-d$.
\end{proof}
We now show the classical version of Theorem \ref{thm:Chaitin}.
\begin{thm}
\label{class}
An infinite bitstring $X$ is quantum Schnorr random if and only if for all computable measure machines $C$, $\forall d \forall^{\infty} n$ K$_{C}(X\upharpoonright n)>n+d$.
\end{thm}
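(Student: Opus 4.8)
The plan is to read Theorem \ref{class} off from its quantum analogue, Theorem \ref{thm:Chaitin}, together with the coincidence of $QK_C$ and $K_C$ on classical qubitstrings recorded in Theorem \ref{thm:677} and Remark \ref{rem:qkc}. First I would recall that, for a bitstring $X$, the diagonal state $\rho_X$ with $(\rho_X)_n = |X\upharpoonright n\big>\big<X\upharpoonright n|$ is quantum Schnorr random exactly when $X$ is Schnorr random, by Lemma \ref{thm:schcla} (equivalently, lemma 3.9 in \cite{bhojraj2020quantum}); this is the sense in which a bitstring is called quantum Schnorr random. Applying Theorem \ref{thm:Chaitin} to $\rho_X$ then reduces the statement to the equivalence of two conditions on $X$ (write $X_n := X\upharpoonright n$): condition (A), that for every computable measure machine $C$, every $\epsilon>0$ and every $d$ one has $QK^{\epsilon}_{C}(|X_n\big>\big<X_n|)>n+d$ for all but finitely many $n$; and condition (B), the right-hand side of Theorem \ref{class}, that for every computable measure machine $C$ and every $d$ one has $K_{C}(X_n)>n+d$ for all but finitely many $n$.

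To get (A)$\Rightarrow$(B), I would fix a computable measure machine $C$ and a constant $d$, note that taking the singleton $F=\{X_n\}$ in the definition of $QK^{\epsilon}_C$ and invoking Remark \ref{rem:qkc} gives $QK^{1/2}_{C}(|X_n\big>\big<X_n|)\le K_{C}(X_n)$ for every $n$, and then apply (A) to this $C$ with $\epsilon=1/2$ and the same $d$ to conclude $K_{C}(X_n)\ge QK^{1/2}_{C}(|X_n\big>\big<X_n|)>n+d$ for cofinitely many $n$. For (B)$\Rightarrow$(A), I would fix $C$, $\epsilon>0$ and $d$, first disposing of the trivial case $\epsilon\ge 1$ (where $\sum_{v\in F}\big<v|\tau|v\big>\le\text{Tr}(\tau)=1$ for every orthonormal $F$ forces $QK^{\epsilon}_{C}\equiv\infty$, so (A) holds for that $\epsilon$), and then use Theorem \ref{thm:677} to produce a computable measure machine $P$ with $K_{P}(\sigma)\le QK^{\epsilon}_{C}(|\sigma\big>\big<\sigma|)+1$ for all classical bitstrings $\sigma$; applying (B) to the machine $P$ with constant $d+1$ gives $K_{P}(X_n)>n+d+1$ cofinitely, whence $QK^{\epsilon}_{C}(|X_n\big>\big<X_n|)\ge K_{P}(X_n)-1>n+d$. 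This establishes (A), and the proof is complete.

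I do not expect any step to be a genuine obstacle — this is exactly the kind of routine transfer the authors describe as a simple instance — so the only care needed is in the bookkeeping: choosing the fixed parameter $\epsilon=1/2$ in the easy direction and using that a singleton projection realizes $QK^{\epsilon}_{C}\le K_C$; feeding condition (B) the auxiliary machine $P$ of Theorem \ref{thm:677} in the other direction, which is legitimate precisely because the proof of that theorem verifies $P$ is again a computable measure machine, and absorbing the additive $+1$ by shifting the constant; and dispatching the harmless edge case $\epsilon\ge 1$. All the mathematical content is already carried by Theorem \ref{thm:Chaitin}, whose proof in turn rests on the two-way construction relating computable measure machines and quantum Schnorr tests.
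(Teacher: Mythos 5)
Your proposal is correct and follows essentially the same route as the paper: both pass to the diagonal state $\rho_X$ via lemma 3.9 of \cite{bhojraj2020quantum}, apply Theorem \ref{thm:Chaitin}, and then transfer between $QK_C$ and $K_C$ using the singleton-projection bound of Remark \ref{rem:qkc} in one direction and the machine $P$ of Theorem \ref{thm:677} in the other. The only differences are cosmetic (you argue directly where the paper argues by contradiction, and you explicitly dispatch the harmless case $\epsilon\ge 1$).
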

\begin{proof}
$(\Longrightarrow):$ Suppose first that $X$ is Schnorr random. Then, $\rho:=\rho_X$, the state induced by $X$ is quantum Schnorr random by lemma 3.9 in \cite{bhojraj2020quantum}. Suppose for a contradiction that there is a $C$ and a $d$ such that $\exists^{\infty} n$ such that $K_C(X\upharpoonright n) \leq n+d$. By Remark \ref{rem:qkc}, $\exists^{\infty} n$ such that $QK^{0.5}_C(\rho_n) \leq n+d$, contradicting Theorem \ref{thm:Chaitin}.
($\Longleftarrow): $ Suppose that $X$ is not Schnorr random. Once again, by lemma 3.9 in \cite{bhojraj2020quantum}, we have that $\rho:=\rho_X$ is not quantum Schnorr random. By Theorem \ref{thm:Chaitin}, there is a $C$, an $\epsilon$ and a $d$ such that $\exists^{\infty} n$ such that  QK$^{\epsilon}_{C}(\rho_{n}) \leq n+d$. By Theorem \ref{thm:677}, there is a $P$ such that $\exists^{\infty} n$ such that  $K_{P}(\rho_{n}) \leq n+d+1$, a contradiction.
\end{proof}
\chapter{Generating classical randomness from a non-quantum random state}\label{3}
\section{Introduction}

 This chapter investigates the following question: Can a non-random, computable quantum source be used to generate a random sequence of bits? We answer it in the affirmative by constructing a computable, non-random sequence of qubits which yields a Martin-L{\"o}f random bitstring with probability one when `measured'.  Martin-L{\"o}f randomness, a strong notion of randomness, quantifies the `true' randomness of an infinite bitstring\cite{misc} (See \cite{DBLP:books/daglib/p/Calude17} for a nice discussion of the distinction between true randomness and pseudorandomness and an overview of various schemes to extract true randomness from quantum sources). In fact, the sequence of qubits we construct yields an arithmetically random bitstring with probability one. Arithmetic randomness is a notion of randomness strictly stronger than Martin-L{\"o}f randomness \cite{misc1}.
 
 We first formalize our main question in the language of quantum algorithmic randomness\cite{bhojraj2020quantum, unpublished}. While versions of this question have been studied in the past  \cite{Pironio_2010, Baumeler2017KolmogorovAF, Kessler_2020,DBLP:phd/hal/Abbott15,DBLP:conf/birthday/AbbottCS15, qrng2020}, this work is the first one to study it using notions from quantum algorithmic randomness.

 We let $2^{\omega}$ denote Cantor space (the collection of infinite sequences of bits), let $2^n$ denote the set of bit strings of length $n$, $2^{<\omega} := \bigcup_{n} 2^n$ and let $2^{\leq \omega}:=  2^{<\omega} \cup 2^{ \omega}$.  Martin-L{\"o}f randomness (MLR) and Quantum-Martin-L{\"o}f randomness (q-MLR) has been defined already in previous chapters. Our motivating question can now be framed as: Is there a computable, non q-MLR state which can be used to `generate' a MLR sequence of bits. To make the question fully precise, we define `generate'.

Measuring a finite dimensional quantum system is a pivotal concept in quantum theory \cite{bookA}. It hence seems natural to extend the notion of measurement from finite dimensional systems to states, which are coherent, increasing sequences of finite dimensional systems. We define (see Section \ref{measuring}) such a notion and explain how measuring a state yields an infinite bitstring. With this notion in hand, our main question assumes the precise form: Is there a computable, non q-MLR state which yields a MLR bitstring with probability one when measured?

We give an overview of the chapter.  Section \ref{measuring} formalizes how `measurement' of a state in a computable basis induces a probability measure on Cantor space. Section \ref{mr} introduces the key notion of measurement randomness for states. A state is defined to be `measurement random' (mR) if the measure induced by it, under any computable basis, assigns probability one to the set of Martin-L{\"o}f randoms. Equivalently, a state is mR if and only if measuring it in any computable basis yields a Martin-L{\"o}f random with probability one. 

We then show that quantum-Martin-L{\"o}f random states are mR. As an answer to our main question, we show in Section \ref{main} that the converse fails: there is a computable mR state, $\rho$ which is not quantum-Martin-L{\"o}f random. In fact, something stronger is true. Measuring $\rho$ in any computable basis yields an \emph{arithmetically} random sequence with probability one. Our result hence provides a scheme for generating randomness from a quantum source. To the best of our knowledge, none of the schemes proposed so far \cite{Pironio_2010, Baumeler2017KolmogorovAF, Kessler_2020,DBLP:phd/hal/Abbott15,DBLP:conf/birthday/AbbottCS15, qrng2020} generate  arithmetic randomness.

Section \ref{mrequalqmlr} shows that mR is equivalent to q-MLR for a certain special class of states.

 Let $A\in 2^{\omega}$. We define an $A$-computable function to be a total function that can be realized by a Turing machine with $A$ as an oracle. By `computable', we will refer to $\emptyset$-computable. The concept of an $A$-computable sequence of natural numbers will come up frequently in our discussion.
\begin{defn}
A sequence $(a_n)_{n\in \mathbb{N}}$ is said to be $A$-computable if there is a $A$-computable function $\phi$ such that $\phi(n)=a_n$
\end{defn}
\section{Measuring a state}
\label{measuring}
To fix notation, let $X(n)$ denote the $n$th bit of an $X \in 2^{\leq \omega}$ , let $p(E)$ stand for the probability of the event $E$.
\begin{defn}
\label{defn:78}
An A-computable measurement system $B= ((b^{n}_{0},b^{n}_{1}))_{n=1}^{\infty}$ (or just `measurement system' for short) is a sequence of orthonormal bases for $\mathbb{C}^{2}$ such that each $b^{n}_{i}$ is complex algebraic and the sequence $ ((b^{n}_{0},b^{n}_{1}))_{n=1}^{\infty}$ is A-computable.
\end{defn}

Let $\rho=(\rho_n)_{n=1}^{\infty}$ be a state and $B= ((b^{n}_{0},b^{n}_{1}))_{n=1}^{\infty}$ be a measurement system. We now work towards formalizing a notion of \emph{qubitwise} measurement of $\rho$ in the bases in $B$.
 A (probability) premeasure \cite{misc1},$p$ (also called a measure representation \cite{misc}), is a function from the set of all finite bit strings to $[0,1]$ satisfying  $\forall n$, $\forall \tau \in 2^n, p(\tau)=p(\tau 0)+p(\tau 1)$. $p$ induces a measure on $2^{\omega}$ which is seen to be unique by Carath\'eodory's extension theorem (See 6.12.1 in \cite{misc1}). Flipping a $0,1$ sided fair coin repeatedly induces a probability measure (which happens to be the uniform measure) on $2^{\omega}$ as follows. Let the random variable $Z(n)$ denote the outcome of the the $n$th coin flip. The sequence $(Z(n))_{n\in \mathbb{N}}$ induces a premeasure, $p$, on $ 2^{<\omega}$ which extends to the uniform measure on $2^{\omega}$. Here, $p(\sigma)=2^{-n}$ is the probability that $Z(i)=\sigma(i)$ for all $i\leq |\sigma|$. Similarly the act of measuring $\rho$ qubit by qubit in $B$ induces a premeasure on $2^{<\omega}$ which extends to a probability measure (denoted $\mu_{\rho}^{B}$) on $2^{\omega}$ as follows. Let the random variable $X(n)$ be the $0,1$ valued outcome of the measurement of the $n$th qubit of $\rho$. Let $p$ be the premeasure induced by the sequence $(X(n))_{n \in \mathbb{N}}$ on $ 2^{<\omega}$. $p$ extends to $\mu_{\rho}^{B}$ on $2^{\omega}$. For any $A \subseteq 2^{\omega}$, $\mu_{\rho}^{B}(A)$ is the probability that $X \in A$ where $X$ is the element of $2^{\omega}$ obtained in the limit by the qubit by qubit measurement of $\rho$ in $B$. The most conspicuous difference between the two situations is that while the $(Z(n))_{n\in \mathbb{N}}$ are independent, $(X(n))_{n\in \mathbb{N}}$ need not be independent as the elements of $\rho$ can be entangled. We now formalize the above. The following calculations follow from standard results mentioned, for example, in \cite{bookA}. 
 
 We now define $(X(n))_{n \in \mathbb{N}}$ and $p$, the induced  premeasure. Measure $\rho_1$ by the measurement operators $\{|b^{1}_0\big>\big<b^{1}_0|,|b^{1}_1\big>\big<b^{1}_1|\}$ and define $X(1) := i$ where $i\in \{0,1\}$ is such that $b^{1}_{i}$ was obtained by the above measurement. Let $\hat{\rho_{2}}$ be the density matrix corresponding to the post-measurement state of $\rho_2$ given that $\rho_2$ yields $|b^{1}_{X(1)}\big> \big<b^{1}_{X(1)}| \otimes I$ if measured in the system 
 \[ (|b^{1}_{i}\big> \big<b^{1}_{i}|   \otimes I )_{i\in\{0,1\}}.\]
  
 I.e,
\[\hat{\rho_{2}} = \dfrac{(|b^{1}_{X(1)}\big> \big<b^{1}_{X(1)}| \otimes I) \rho_2 (|b^{1}_{X(1)}\big> \big<b^{1}_{X(1)}| \otimes I)}{tr((|b^{1}_{X(1)}\big> \big<b^{1}_{X(1)}| \otimes I) \rho_2 \big)}. \]

 To define $X(2)$, measure $\hat{\rho_{2}}$ by the measurement operators 
\[ (I \otimes |b^{2}_{i}\big> \big<b^{2}_{i}|)_{i\in\{0,1\}},\]
and set $X(2):= i$ where $i \in \{0,1\}$ is such that $I \otimes |b^{2}_{i}\big> \big<b^{2}_{i}|$ is obtained after the measurement. We use $\hat{\rho_{2}}$ instead of $\rho_2$ to define $X(2)$ to account for the previous measurement of the first qubit. $X(n)$ is defined similarly.
By the above, 
 \[p(ij):=p(X(1)=i,X(2)=j) = p(X(1)=i)p(X(2)=j|X(1)=i)=\]\[p(X(1)=i) tr \big[I \otimes |b^{2}_{j}\big> \big<b^{2}_{j}| (\dfrac{(|b^{1}_{i}\big> \big<b^{1}_{i}| \otimes I) \rho_2 (|b^{1}_{i}\big> \big<b^{1}_{i}| \otimes I)}{tr((|b^{1}_{i}\big> \big<b^{1}_{i}| \otimes I) \rho_2)})\big].\]
 
 Since $PT_{\mathbb{C}^{2}}(\rho_2)=\rho_{1}$, $p(X(1)=i)=tr((|b^{1}_{i}\big> \big<b^{1}_{i}| \otimes I) \rho_2\big)$. So, \[p(ij)=tr \big[ \rho_2 (|b^{1}_{i}b^{2}_{j}\big> \big<b^{1}_{i}b^{2}_{j}|)\big].\]
 
 Given $\tau \in 2^n$, similar calculations show that 
 \begin{align}
 \label{eq:21}
    p(\tau) := p(X(1)=\tau(1), \dots,X(n)=\tau(n)) = tr \big[ \rho_n (|\bigotimes_{i=1}^{n} b^{i}_{\tau(i)}\big> \big<\bigotimes_{i=1}^{n} b^{i}_{\tau(i)}\big| \big]. 
 \end{align}

 This defines $p$. The following lemma shows that $p(.)$ is a premeasure. Define $\mu^{B}_{\rho}$ to be the unique probability measure induced by it.
 \begin{lem}
 $\forall n$, $\forall \tau \in 2^n, p(\tau)=p(\tau 0)+p(\tau 1)$
 \end{lem}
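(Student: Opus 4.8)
The plan is to reduce everything to the two facts already available: the closed form \eqref{eq:21} for $p$ in terms of the matrices $\rho_n$, and the coherence condition $PT_{\mathbb{C}^2}(\rho_{n+1})=\rho_n$ built into the definition of a state. Fix $n$ and $\tau\in 2^n$, and write $v:=\bigotimes_{i=1}^{n} b^{i}_{\tau(i)}\in\mathbb{C}^{2^n}$ for brevity. By \eqref{eq:21} applied at length $n+1$,
\[
p(\tau 0)+p(\tau 1)=\mathrm{tr}\big[\rho_{n+1}\,|v\otimes b^{n+1}_{0}\big>\big<v\otimes b^{n+1}_{0}|\big]+\mathrm{tr}\big[\rho_{n+1}\,|v\otimes b^{n+1}_{1}\big>\big<v\otimes b^{n+1}_{1}|\big].
\]
So the whole statement comes down to simplifying the sum of these two rank-one projections.

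First I would observe that $|v\otimes b^{n+1}_{i}\big>\big<v\otimes b^{n+1}_{i}| = |v\big>\big<v|\otimes |b^{n+1}_{i}\big>\big<b^{n+1}_{i}|$, and then use that $(b^{n+1}_{0},b^{n+1}_{1})$ is an orthonormal basis of $\mathbb{C}^{2}$, so that $|b^{n+1}_{0}\big>\big<b^{n+1}_{0}|+|b^{n+1}_{1}\big>\big<b^{n+1}_{1}|=I$ is the resolution of the identity on $\mathbb{C}^2$. Hence
\[
|v\otimes b^{n+1}_{0}\big>\big<v\otimes b^{n+1}_{0}|+|v\otimes b^{n+1}_{1}\big>\big<v\otimes b^{n+1}_{1}| = |v\big>\big<v|\otimes I,
\]
and by linearity of the trace, $p(\tau 0)+p(\tau 1)=\mathrm{tr}\big[\rho_{n+1}\,(|v\big>\big<v|\otimes I)\big]$.

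Next I would invoke the defining property of the partial trace: for any operator $M$ on $\mathbb{C}^{2^n}$ one has $\mathrm{tr}\big[\sigma\,(M\otimes I)\big]=\mathrm{tr}\big[PT_{\mathbb{C}^2}(\sigma)\,M\big]$ for every density matrix (indeed every operator) $\sigma$ on $\mathbb{C}^{2^n}\otimes\mathbb{C}^2$. Taking $\sigma=\rho_{n+1}$ and $M=|v\big>\big<v|$, and using $PT_{\mathbb{C}^2}(\rho_{n+1})=\rho_n$ from the definition of a state, gives
\[
p(\tau 0)+p(\tau 1)=\mathrm{tr}\big[\rho_{n}\,|v\big>\big<v|\big]=\mathrm{tr}\big[\rho_n\,|\textstyle\bigotimes_{i=1}^{n} b^{i}_{\tau(i)}\big>\big<\bigotimes_{i=1}^{n} b^{i}_{\tau(i)}|\big]=p(\tau),
\]
the last equality being \eqref{eq:21} at length $n$. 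This is the desired identity. I do not anticipate a genuine obstacle here; the only point requiring a little care is citing the partial-trace identity $\mathrm{tr}[\sigma(M\otimes I)]=\mathrm{tr}[PT_{\mathbb{C}^2}(\sigma)M]$ in the right generality (it is the standard characterization of the partial trace, e.g.\ as in \cite{Nielsen:2011:QCQ:1972505}), and making sure the tensor-factorization of the rank-one projections and the resolution of identity for $(b^{n+1}_0,b^{n+1}_1)$ are written in the correct order so that the $\mathbb{C}^2$ factor being summed over is exactly the one traced out by $PT_{\mathbb{C}^2}$.
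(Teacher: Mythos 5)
Your proposal is correct and follows essentially the same route as the paper's proof: factor the rank-one projections as $|v\big>\big<v|\otimes|b^{n+1}_j\big>\big<b^{n+1}_j|$, sum over $j$ using the resolution of the identity on $\mathbb{C}^2$, and then reduce $\mathrm{tr}[\rho_{n+1}(|v\big>\big<v|\otimes I)]$ to $\mathrm{tr}[\rho_n|v\big>\big<v|]$ via the partial-trace identity and the coherence condition $PT_{\mathbb{C}^2}(\rho_{n+1})=\rho_n$. Your write-up is in fact slightly more explicit than the paper's in citing the defining property of the partial trace, but the argument is the same.
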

 \begin{proof}
Noting that for $j\in \{0,1\}$,
  \[
    \rho_{n+1} (|\bigotimes_{i=1}^{n} b^{i}_{\tau(i)} \otimes b^{n+1}_{j}\big> \big<\bigotimes_{i=1}^{n} b^{i}_{\tau(i)} \otimes b^{n+1}_{j}|) =\rho_{n+1} (|\bigotimes_{i=1}^{n} b^{i}_{\tau(i)}\big> \big<\bigotimes_{i=1}^{n} b^{i}_{\tau(i)}|\otimes |b^{n+1}_{j}\big>\big<b^{n+1}_{j}|),\]
    
    and letting $A:=|\bigotimes_{i=1}^{n} b^{i}_{\tau(i)}\big> \big<\bigotimes_{i=1}^{n} b^{i}_{\tau(i)}| $, the right hand side is
     \[=
    tr\big[(A\otimes |b^{n+1}_{0}\big>\big<b^{n+1}_{0}|)\rho_{n+1} + (A\otimes |b^{n+1}_{1}\big>\big<b^{n+1}_{1}|)\rho_{n+1} ]  \]
     \[=
    tr\big[(A\otimes (|b^{n+1}_{0}\big>\big<b^{n+1}_{0}|  + |b^{n+1}_{1}\big>\big<b^{n+1}_{1}|))\rho_{n+1} ]=
    tr\big[(A\otimes I) \rho_{n+1} ] = tr[A \rho_n] = p(\tau)\]
\end{proof}
\begin{remark}
\label{rem:1}
 If $B$ is $S$-computable and $\rho$ is $T$-computable, then the sequence $\{\mu^{B}_{\rho}(\sigma)\}_{\sigma \in \mathbb{N}}$ is $S\oplus T$-computable.
\end{remark} 
Here, $S\oplus T$ is obtained by putting $S$ on the even bits and $T$ on the odd bits \cite{misc}.

\section{Measurement Randomness}
\label{mr}
Let $MLR\subset 2^\omega$ be the set of MLR bitstrings. If $\rho$ is a state and $B$ a measurement system, $\mu^{B}_{\rho} (MLR)$ is the probability of getting a MLR bitstring by a qubit-wise measurement of $\rho$ as described in the previous section. 
\begin{defn}
$\rho$ is measurement random (mR) if for any computable measurement system, B, $\mu^{B}_{\rho} (MLR)=1$ 
\end{defn}
\begin{thm}
\label{thm:mlrthenmr}
All q-MLR states are also mR states.
\end{thm}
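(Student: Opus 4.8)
The plan is to prove the contrapositive: if $\rho$ is not mR, then $\rho$ is not q-MLR. So suppose there is a computable measurement system $B=((b^{n}_0,b^{n}_1))_{n}$ such that $\mu^{B}_{\rho}(\mathrm{MLR})<1$, i.e.\ $\mu^{B}_{\rho}(2^{\omega}\setminus\mathrm{MLR})>0$. The non-MLR reals are exactly those captured by the universal classical Martin-L\"of test $(U_m)_m$; writing $U_m=\llbracket A_m\rrbracket$ with $A_m\subseteq 2^{<\omega}$ prefix-free and $\mu(U_m)\le 2^{-m}$, the fact that $\mu^{B}_{\rho}(\bigcap_m U_m)=:\delta>0$ means $\mu^{B}_{\rho}(U_m)\ge\delta$ for every $m$. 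The idea is to \emph{pull back} each $U_m$ through the measurement system $B$ to obtain a q-$\Sigma^0_1$ set $S^m$ of small $\tau$-size on which $\rho$ has mass at least $\delta$, thereby exhibiting a q-MLT (after reindexing) that $\rho$ fails at order $\delta$.

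Concretely, for each string $\sigma\in 2^n$ let $v_\sigma:=\bigotimes_{i=1}^{n} b^{i}_{\sigma(i)}\in\mathbb{C}^{2^n}_{alg}$; these are orthonormal as $\sigma$ ranges over $2^n$, since $B$ consists of orthonormal bases. For a prefix-free set $C\subseteq 2^{<\omega}$, first replace it by its ``length-$n$ refinement'' $C^{(n)}=\{\sigma\in 2^n:\sigma\text{ extends some }\tau\in C,\ |\tau|\le n\}$ so that the generated subspaces grow coherently, and set $p^{C}_n:=\sum_{\sigma\in C^{(n)}}|v_\sigma\big>\big<v_\sigma|$, a special projection of rank $|C^{(n)}|$. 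Because $B$ is computable these are uniformly computable in $n$ (and in an index for $C$), and $\mathrm{range}(p^{C}_n\otimes I)\subseteq\mathrm{range}(p^{C}_{n+1})$ by construction, so $(p^{C}_n)_n$ is a genuine q-$\Sigma^0_1$ set; moreover $\tau((p^{C}_n)_n)=\lim_n 2^{-n}|C^{(n)}|=\mu(\llbracket C\rrbracket)$. Applying this to $C=A_m$ gives a q-$\Sigma^0_1$ set $S^m$ with $\tau(S^m)=\mu(U_m)\le 2^{-m}$, hence $(S^m)_m$ is a q-MLT. Finally, the key point is the identity (from equation~\eqref{eq:21}, i.e.\ the definition of $p=\mu^{B}_{\rho}$): for $\sigma\in 2^n$, $\big<v_\sigma|\rho_n|v_\sigma\big>=\mu^{B}_{\rho}(\sigma)$, so $\mathrm{Tr}(\rho_n S^{m}_n)=\sum_{\sigma\in A_m^{(n)}}\mu^{B}_{\rho}(\sigma)=\mu^{B}_{\rho}(\llbracket A_m^{(n)}\rrbracket)$, which increases to $\mu^{B}_{\rho}(U_m)\ge\delta$ as $n\to\infty$. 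Thus $\rho(S^m)\ge\delta$ for all $m$, so $\rho$ fails the q-MLT $(S^m)_m$ at order $\delta$ and is not q-MLR.

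The step I expect to be the main obstacle — or at least the one needing the most care — is verifying that the pulled-back projections really form a q-$\Sigma^0_1$ set and that $\tau$ of it equals the Lebesgue measure of the original $\Sigma^0_1$ class: one must check the range-inclusion/coherence condition survives the refinement $C\mapsto C^{(n)}$ (it does, because extending the computational basis index by one qubit of $B$ matches tensoring by $I$ on the already-chosen directions, using that $b^{n+1}_0,b^{n+1}_1$ form a basis of the last factor), and one must confirm uniform computability in $m$, which relies on $B$ being computable and on the algebraicity of the $b^{n}_i$ so that each $v_\sigma\in\mathbb{C}^{2^n}_{alg}$ is a legitimate entry of a special projection. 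A secondary subtlety is that $\mu^{B}_{\rho}(\bigcap_m U_m)>0$ only gives $\mu^{B}_{\rho}(U_m)\to\mu^{B}_{\rho}(\bigcap_m U_m)$ from above along the nested sequence, which is exactly what is needed to get the uniform lower bound $\delta$ on every level; if one does not assume the $U_m$ nested, replace the universal test by its nested version first. Everything else is a routine trace computation plus the definitions of q-MLT and q-MLR recalled in the excerpt.
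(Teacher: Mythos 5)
Your proposal is correct and follows essentially the same route as the paper's proof: pull the universal classical Martin-L\"of test back through the measurement system $B$ by mapping each string $\sigma$ to the product tensor $v_\sigma=\bigotimes_i b^{i}_{\sigma(i)}$, check that the resulting projections form a q-MLT with $\tau(S^m)\le 2^{-m}$, and use the identity $\big<v_\sigma|\rho_n|v_\sigma\big>=\mu^{B}_{\rho}(\sigma)$ to transfer the lower bound $\delta$ on $\mu^{B}_{\rho}(U_m)$ to $\mathrm{Tr}(\rho_n S^{m}_n)$. The only cosmetic difference is that the paper works directly with the increasing generators $A^{m}_i\subseteq 2^i$ from Definition~\ref{def:10} rather than introducing the length-$n$ refinement explicitly, but these are the same object.
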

\begin{proof}

 Let $\rho=(\rho_n)_{n\in \mathbb{N}}$ be q-MLR. Suppose towards a contradiction that there is a $\delta \in (0,1)$ and a computable $B= ((b^{n}_{0},b^{n}_{1}))_{n=1}^{\infty}$ such that $\mu^{B}_{\rho} (2^{\omega}/MLR)>\delta$. Let $(S^{m})_m$ be the universal MLT \cite{misc} and let for all $m$,
\begin{align}
\label{eq:20}
   S^{m}=\bigcup_{m\leq i} \llbracket A^{m}_{i} \rrbracket, 
\end{align}

where the $A^{m}_{i}$s satisfy the conditions of Definition \ref{def:10}. By the definition of a MLT, for all $m$ and all $i\geq m$, we can write $A^{m}_{i}= \{\tau^{m,i}_{1},\dots,\tau^{m,i}_{k^{m,i}}\} \subset 2^i$ for some $0 \leq k^{m,i} \leq 2^{i-m}$. Now define a q-MLT as follows. For all $m$ and $i\geq m$, let $\tau_{a} = \tau_{a}^{m,i}$ for convenience and define the special projection:
\begin{align}
\label{eq:23}
   p^{m}_{i}= \sum_{a\leq k^{m,i}} (|\bigotimes_{q=1}^{i} b^{q}_{\tau_{a}(q)}\big> \big<\bigotimes_{q=1}^{i} b^{q}_{\tau_{a}(q)}\big|\big).
\end{align}
Letting $P^{m}:=(p^{m}_{i})_{m\leq i}$, we see that $(P^{m})_{m \in \mathbb{N}}$ is a q-MLT (For each $m$, the sequence $(p^{m}_{i})_{m\leq i}$ is computable since $B$ and $(A^{m}_{i})_{m\leq i}$ are computable. Condition 3 in Definition \ref{def:10} implies that for all $i$, range$(p^{m}_{i})\subseteq$range$(p^{m}_{i+1})$. So, $P^{m}$ is a q-$\Sigma_{0}^{1}$ class for all $m$. $k^{m,i} \leq 2^{i-m}$ for all $m,i$ implies that $\tau(P^m)\leq 2^{-m}$ for all $m$. Since $(S^{m})_{m\in \mathbb{N}}$ is a MLT, $(P^{m})_{m\in \mathbb{N}}$ is a computable sequence.)
For all m, $ (2^{\omega}/MLR) \subseteq S^{m}$ holds by the definition of a universal MLT. Hence, since \ref{eq:20} is an increasing union and as $\mu^{B}_{\rho}(2^{\omega}/MLR)>\delta$, for all $m$ there exists an $i(m)>m$ such that
\begin{align}
\label{eq:22}
\mu^{B}_{\rho}(\llbracket A^{m}_{i(m)} \rrbracket)> \delta.
\end{align} Fix such an $m$ and corresponding $i=i(m)$ and let $A^{m}_{i}= \{\tau_{1},\dots,\tau_{k^{m,i}}\}$ for some $k^{m,i} \leq 2^{i-m}$ as in \ref{eq:23}. By \ref{eq:21} and \ref{eq:22}, we have that
\begin{align}
\label{eq:24}
\delta<\sum_{a\leq k}   p(\tau_{a}) = \sum_{a\leq k^{m,i}} tr \big[ \rho_i (|\bigotimes_{q=1}^{i} b^{q}_{\tau_{a}(q)}\big> \big<\bigotimes_{q=1}^{i} b^{q}_{\tau_{a}(q)}\big|\big)\big]=tr \big[ \rho_i\sum_{a\leq k^{m,i}} (|\bigotimes_{q=1}^{i} b^{q}_{\tau_{a}(q)}\big> \big<\bigotimes_{q=1}^{i} b^{q}_{\tau_{a}(q)}\big|\big)\big]
\end{align}
So, by \ref{eq:23} and \ref{eq:24}, we see that for all $m$ there is an $i$ such that,
\[\delta < tr[\rho_{i}p^{m}_{i}] \leq \rho(P^m).\]
So, inf$_{m}(\rho (P^{m}))>\delta$, contradicting that $\rho$ is q-MLR.
\end{proof}

\begin{defn}
$\rho=(\rho_{n})_{n \in \mathbb{N}}$ is computable if the sequence $(\rho_{n})_{n \in \mathbb{N}}$ is computable.
\end{defn}
\section{A measurement random, non q-MLR state}
\label{main}
We show that Theorem \ref{thm:mlrthenmr} does not reverse:
\begin{thm}
\label{thm:0000}
There is a computable state which is not q-MLR but is mR.
\end{thm}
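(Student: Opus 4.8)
The plan is to construct a computable state $\rho = (\rho_n)_n$ by ``diagonalizing against the universal q-MLT at a single fixed level'' while keeping all the density matrices diagonal in the standard basis, so that the induced measures $\mu_\rho^B$ are well understood. Concretely, I would take $\rho$ to be diagonal, so it corresponds to a measure $\mu_\rho$ on Cantor space, and choose $\mu_\rho$ to be a computable measure that is ``spread out enough'' that measuring it in \emph{any} computable basis still yields an (arithmetically) random bitstring with probability one, yet concentrated enough on a small-rank computable sequence of projections that $\rho$ fails some q-MLT. Since $\rho$ is diagonal, measuring the $n$-th qubit in the basis $(b^n_0, b^n_1)$ gives outcome probabilities that are convex combinations (weighted by the diagonal entries of $\rho_n$) of $|\langle b^n_j | \sigma\rangle|^2$ over standard basis vectors $\sigma$; the key point is that a diagonal $\rho$ which is ``classically'' arithmetically random (in the sense that $\mu_\rho$ assigns measure one to the arithmetically random reals, e.g.\ because $\mu_\rho$ is itself arithmetically-random-preserving) will, under any computable rotation of each qubit, still produce arithmetic randomness almost surely — this is a preservation-of-randomness argument for base rotations.

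The main steps, in order: (1) Fix the universal q-MLT $(S^m)_m$ from the earlier material, or rather just its level-$1$ component $S^1 = (S^1_n)_n$, a computable sequence of special projections with $\tau(S^1) \le 1/2$. (2) Build a computable \emph{diagonal} state $\rho$ with $\mathrm{Tr}(\rho_n S^1_n)$ bounded below by a fixed positive constant for infinitely many $n$ — this forces $\rho(S^1) > 0$, hence (taking the constant sequence, or the relevant tail of $(S^m)$) $\rho$ fails a q-MLT and is not q-MLR. The delicate part of (2) is doing this while keeping $\mu_\rho$ a \emph{computable} measure that is absolutely continuous with respect to, and mutually random-preserving with, the uniform measure in a strong enough sense. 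A clean way: let $\mu_\rho$ be a computable measure obtained by a computable ``tilting'' of the uniform measure by bounded factors (a computable martingale-like reweighting with factors bounded away from $0$ and $\infty$), chosen so that the cylinders corresponding to $S^1_n$ (thought of as sets of strings, via the standard-basis diagonalization of $S^1_n$, using Lemma \ref{lem:30} to bound ranks) carry mass $\ge c$ infinitely often. Because the tilting factors are computably bounded, $\mu_\rho$ and the uniform measure have the same null sets even relative to $\emptyset^{(\omega)}$, so $\mu_\rho(\mathrm{ARand}) = 1$. (3) Show $\rho$ is mR, indeed arithmetic-measurement-random: for any computable measurement system $B$, analyze $\mu_\rho^B$ using equation (\ref{eq:21}); since $\rho$ is diagonal, $\mu_\rho^B$ is the distribution of the bitstring obtained by taking a $\mu_\rho$-random real and applying the computable, qubit-local change of basis $B$. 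A base rotation that is computable and local (one qubit at a time) cannot destroy arithmetic randomness — formalize this by a van Lambalgen / preservation argument, or directly: a $\emptyset^{(\omega)}$-ML test for the image can be pulled back to a $\emptyset^{(\omega)}$-ML test for the source, using computability of $B$ and $\mu_\rho$ (Remark \ref{rem:1} gives the computability bookkeeping). (4) Conclude $\mu_\rho^B(\mathrm{MLR}) \ge \mu_\rho^B(\mathrm{ARand}) = 1$ for every computable $B$, so $\rho$ is mR; and $\rho$ is computable and not q-MLR by (2).

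The step I expect to be the main obstacle is (2)–(3) done \emph{simultaneously}: making $\rho$ concentrate on the small-rank projections $S^1_n$ enough to fail a q-MLT, while guaranteeing that this concentration does not create a computable (or even arithmetic) statistical test that the measured bitstring fails. Since $S^1_n$ has rank $\le 2^{n}\tau(S^1) \le 2^{n-1}$, the cylinder set it picks out in the standard basis has measure $\le 1/2$, so putting constant mass $c$ on it is not by itself fatal — but one must ensure the \emph{sequence} of these events, and their rotations under arbitrary computable $B$, remains ``invisible'' to arithmetic tests. I would handle this by making the diagonal of $\rho_n$ depend on $n$ only through a very sparse, computably-timed sequence of ``nudges'' (nonzero tilting only at a computable sparse set of levels $n_0 < n_1 < \cdots$), so that the induced measure differs from uniform only on a sparse, computably-controlled set of coordinates; sparse computable perturbations of this kind preserve arithmetic randomness because the associated Radon--Nikodym derivative is a computable function bounded away from $0$ and $\infty$, hence the two measures are ``$\emptyset^{(\omega)}$-equivalent'' for randomness purposes. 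If the sparse-nudge construction is arranged with enough care (each nudge of size a fixed constant, placed at levels where $S^1_n$ has near-maximal rank), one gets $\rho(S^1) \ge c > 0$ and simultaneously $\mu_\rho^B(\mathrm{ARand}) = 1$ for all computable $B$, completing the proof.
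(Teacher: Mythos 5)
Your proposal has a fatal structural flaw: a \emph{diagonal} state can never witness this theorem, so the entire plan collapses at step (1)--(2) before any of the tilting machinery matters. If $\rho$ is diagonal, then measuring it in the standard computational basis gives exactly the measure $\mu_\rho$, so mR already forces $\mu_\rho(\mathrm{MLR})=1$. Taking the universal (nested) Martin-L{\"o}f test $(U_m)_m$, we get $\bigcap_m U_m\subseteq 2^{\omega}\setminus\mathrm{MLR}$, hence $\inf_m\mu_\rho(U_m)=\mu_\rho(\bigcap_m U_m)=0$; the same argument applies to every MLT, so $\mu_\rho$ is Martin-L{\"o}f absolutely continuous, and by Theorem \ref{thm:30} (and the remark preceding it) a diagonal state with ML-absolutely-continuous $\mu_\rho$ is q-MLR. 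So for diagonal states, mR implies q-MLR, and you cannot separate the notions there. Your ``bounded tilting'' idea is doubly self-defeating: if $d\mu_\rho/d\lambda$ is bounded above by $C$, then $\mu_\rho(U_m)\le C\,2^{-m}\to 0$ directly, so $\rho$ passes every test built from classical cylinders; and conversely, to fail a q-MLT you need constant mass on projections of normalized rank $2^{-m}$ for \emph{every} $m$ (your single level $S^1$ with $\tau(S^1)\le 1/2$ does not suffice, and the constant sequence $(S^1,S^1,\dots)$ is not a q-MLT), which for a diagonal state transfers via Lemma \ref{lem:30} to a classical MLT capturing constant $\mu_\rho$-mass --- contradicting $\mu_\rho(\mathrm{MLR})=1$.

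The paper's construction is forced to exploit exactly what diagonality throws away: entanglement. It takes $\rho=\bigotimes_{n\ge 5}d_n$ where $d_n$ has $2^{-n}$ on the diagonal \emph{and} $r_n=\lfloor 2^n/n\rfloor$ nonzero entries at the ends of the anti-diagonal. These off-diagonal entries produce $r_n$ entangled eigenvectors of eigenvalue $0$, so the support of $\bigotimes_{n=5}^N d_n$ has normalized rank $\prod_{n=5}^N(1-n^{-1}+2^{-n})\to 0$, giving a q-MLT that $\rho$ fails at order $1$. At the same time, Lemma \ref{lem:22} shows $\langle W|d_n|W\rangle$ is within a factor $1\pm 2/n$ of $2^{-n}$ for every \emph{product} vector $W$, so under any qubit-wise computable measurement the induced measure is multiplicatively comparable to uniform and a $KM$-based argument yields $\mathrm{MLR}(\mu_\rho^B)\subseteq\mathrm{MLR}$ (indeed arithmetic randomness, by relativization). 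Your steps (3)--(4) are in the right spirit --- they resemble the paper's preservation argument --- but they can only be run on a state whose quantum non-randomness is invisible to product measurements, and that requires entangled eigenvectors, not a reweighted diagonal.
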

\begin{proof}
All matrices in this proof are in the standard basis.
Let $\rho= \bigotimes_{n=5}^{\infty} d_{n}$ and for $N>5$, $S_{N}:= \bigotimes_{n=5}^{N} d_{n}$.
where $d_n$ is a $2^n$ by $2^n$ matrix with $2^{-n}$ along the diagonal and $r_{n}:=\lfloor 2^n/n \rfloor $ many $2^{-n}$s on the extreme ends of the anti-diagonal. Formally, define $d_n$ to be the symmetric matrix such that:
For $i \leq r_n$, $d_{n}(i,j)= 2^{-n}$ if $j=i$ or $j=2^{n}-i+1$ and $d_{n}(i,j)=0$ otherwise.
For $ r_n < i < 2^{n}-r_{n}$, $d_{n}(i,j)= 2^{-n}$ if $j=i$ and $d_{n}(i,j)=0$ otherwise. For example, $r_{3}=2$ and so, 
\[   d_3 = 
\begin{bmatrix}
2^{-3} & 0 & 0 & 0 & 0 & 0 & 0 & 2^{-3}\\
0 & 2^{-3} & 0 & 0 & 0 & 0 & 2^{-3}  & 0\\
0 & 0 & 2^{-3} & 0 & 0 & 0 & 0 & 0\\
0 & 0 & 0 & 2^{-3}  & 0 & 0 & 0 & 0\\
0 & 0 & 0 &  0 & 2^{-3} & 0 & 0 & 0\\
0 & 0 & 0 & 0 & 0 & 2^{-3} & 0 & 0\\
0 & 2^{-3}  & 0 & 0 & 0 & 0 & 2^{-3}  & 0\\
2^{-3} & 0 & 0 & 0 & 0 & 0 & 0 & 2^{-3}\\
\end{bmatrix}
\]
Clearly, $d_n$ is a density matrix.
The theorem will be proved via the following lemmas.
\begin{lem}
\label{lem:20}
$\rho$ is not q-MLR.
\end{lem}
\begin{proof} It is easy to see that zero has multiplicity $r_n$ as an eigenvalue of $d_n$. Hence, letting $q_{n}= 2^{n}-r_{n}$, the eigenpairs of $d_n$ can be listed as $\{\alpha^{n}_{i}, v^{n}_i\}_{i=1}^{2^{n}}$ where $\alpha^{n}_{i}=0$ if $ q_{n}+1 \leq i \leq 2^{n} $ and $(v^{n}_i)_{i=1}^{2^{n}}$ is a orthonormal basis of $\mathbb{C}^{2^{n}}$.

Fix a $N>5$. By properties of the Kronecker product, $S_{N}$ has a orthonormal basis of eigenvectors:  \[\{ \bigotimes_{n=5}^{N} v^{n}_{l(n)}: (l(n))_{n=5}^{N} \text{ is a sequence such that for all } n, l(n) \leq 2^n  \},\]
and $\bigotimes_{n=5}^{N} v^{n}_{l(n)}$ has eigenvalue $ \prod_{n=5}^{N} \alpha^{n}_{l(n)} $.
Letting $M_N$ be those elements of the above eigenbasis having non-zero eigenvalues, we have that
\begin{align}
\label{eq:eig}
M_{N}=\{ \bigotimes_{n=5}^{N} v^{n}_{l(n)}: (l(n))_{n=5}^{N} \text{ is a sequence such that for all } n,  l(n) \leq q_{n}  \}.  
\end{align}

(See Remark \ref{comment} for an intuitive explanation of the reason for choosing such an $M_N$.) By the definition of $q_n$, \[|M_N|= \prod_{n=5}^{N} 2^{n}-\lfloor 2^n/n \rfloor \leq \prod_{n=5}^{N} 2^{n}- (2^n/n) + 1 = \prod_{n=5}^{N} 2^{n}(1 - n^{-1} + 2^{-n}) = \prod_{n=5}^{N} 2^{n} \prod_{n=5}^{N}  (1 - n^{-1} + 2^{-n}).\]
Noting that $\prod_{n=5}^{\infty}  (1 - n^{-1} + 2^{-n}) =0 $,  define a q-MLT $(T_{m})_{m\in \mathbb{N}}$ as follows. Given $m$, we describe the construction of $T_{m}$. Find $N=N(m)$ such that
$\prod_{n=5}^{N}  (1 - n^{-1} + 2^{-n})< 2^{-m}$. Let $\gamma(N):= \sum_{n=5}^{N} n$  and let \[p_{\gamma(N)}=   \sum_{v \in M_{N}}   |v \big>\big<v |.\]
$p_{\gamma(N)}$ is a special projection on $\mathbb{C}^{2^{{\gamma(N)}}}$ having rank equal to $|M_{N}|$. Let $p_{k}=\emptyset$ for $k<\gamma(N)$ and \[p_{k}:= p_{\gamma(N)} \otimes \bigotimes_{i=1} ^{k-\gamma(N)} I   \]
for $k>\gamma(N)$. Using that $\rho$ is computable, it is easy to see that $(p_{k})_{k\in \mathbb{N}}$ is a q-$\Sigma^{0}_{1}$ class. Let $T_{m}:= (p_{k})_{k\in \mathbb{N}}$. $(T_{m})_{m\in \mathbb{N}}$ is a q-MLT since the choice of $N(m)$ implies that $\tau(T_{m}) < 2^{-m}$ and as $N(m)$ can be computed from $m$. $(T_{m})_{m\in \mathbb{N}}$ demonstrates that $\rho$ is not q-MLR as follows. Fix $m$ arbitrarily and let $N(m)$ be as above. Recalling that $M_{N}$ is the set consisting of all eigenvectors of $S_{N}$ with non-zero eigenvalue, we have that, \[\rho(T_{m}) \geq tr(\rho_{\gamma(N)}p_{\gamma(N)}) =  tr(S_{N}p_{\gamma(N)})=tr(S_{N})=1.\] Since $m$ was arbitrary, $inf_{m \in \mathbb{N}}(\rho(T_{m}))=1$.
\end{proof}
The following technical lemma, although seems unmotivated at this juncture, is crucial at a later point in the proof.  
\begin{lem}
\label{lem:21}
Let $\{[a_{i}, b_{i}]^{T}\}_{i=1}^{n} $ be a set of unit column vectors in $\mathbb{C}^2$. Let $V=\bigotimes_{i=1}^{n}[a_{i}, b_{i}]^{T}$ be their Kronecker product. If  $V=[v_{1},v_{2},\dots,v_{2^n}]^{T}$, then for all $k \leq 2^{n-1}$, we have that \[|v_{k}||v_{2^{n}-k+1}| = \prod_{i=1}^{n} |a_{i}||b_{i}|.\]
\end{lem}
\begin{proof}

For natural numbers $u$ and $q$, let $[u]_{q}$ denote the remainder obtained by dividing $u$ by $q$. We use the following convention for the Kronecker product \cite{Regalia:1989:KPU:76594.76599}:

\[\begin{bmatrix}
a_1\\
b_1\\
\end{bmatrix}
\otimes
\begin{bmatrix}
a_2\\
b_2\\
\end{bmatrix}
=
\begin{bmatrix}
a_1 a_2\\
b_1 a_2\\
a_1 b_2\\
b_1 b_2
\end{bmatrix}
.\]
So, $v_{1}= \prod_{i=1}^{n} a_{i}$ and $v_{2^{n}}= \prod_{i=1}^{n} b_{i}$. For any $k\leq 2^{n-1}$, $v_k$ has the form $v_k= \prod_{i=1}^{n} c^{k}_{i}$, for some  $c^{k}_{i} \in \{a_{i}, b_{i}\}$ and $v_{2^{n}-k+1}$ has the form $v_{2^{n}-k+1} = \prod_{i=1}^{n} e^{k}_{i}$, for some  $e^{k}_{i} \in \{a_{i}, b_{i}\}$. Note that $c^{k}_{1}=a_{1}$ if and only if $k$ is odd if and only if   $e^{k}_{1}=b_{1}$. Similarly, we have the following. $c^{k}_{2}=a_{2}$ if and only if $[k]_{2^{2}} \in \{1,2\}$ if and only if $e^{k}_{2}=b_{2}$.
$c^{k}_{3}=a_{3}$ if and only if $[k]_{2^{3}} \in \{1,\dots,2^{2}\}$ if and only if $e^{k}_{3}=b_{3}$.
In general, for $i\leq n$, for all $k\leq 2^{n-1}$, \[c^{k}_{i}=a_{i} \iff  [k]_{2^{i}} \in \{1,\dots,2^{i-1}\} \iff e^{k}_{i}=b_{i}.\]
This proves the lemma. Intuitively, this happens for the following reason. Imagine moving from $v_1$ to $v_{2^{n-1}}$ (by incrementing $k$) and keeping track of the values of $c^{k}_i$ as you move along the $v_k$s. Also, imagine moving from $v_{2^n}$ to $v_{2^{n-1}}$ and keeping track of the values of $e^{k}_i$ as you move along the $v_{2^n - k +1}$s. Both motions are in opposite directions since as $k$ is incremented, the first motion is from lower to higher indices and the second is from higher to lower indices. Consider the behavior of $c^{k}_1,e^{k}_1$ as $k$ is incremented. At the `start' point, $c^{1}_1=a_1$, $e^{1}_1=b_1$. Now, as you move (i.e as you increment $k$), $c^{k}_1$ alternates between $a_1$ and $b_1$ equalling it's starting value, $a_1$ at odd $k$s and $e^{k}_1$ alternates between $b_1$ and $a_1$ equalling it's starting value $b_1$  for odd $k$s. Now, take any $i\leq n$.
$c^{k}_i$ alternates between $a_i$ and $b_i$ in blocks of length $2^{i-1}$. $c^{k}_i=a_i$ when $k$ is in the first block, $\{1,2,\dots, 2^{i-1}\}$ (i.e, when $[k]_{2^{i}} \in \{1,2,\dots, 2^{i-1}\}$) and $c^{k}_i=b_i$ when $k$ is in the second block, $\{2^{i-1}+1,\dots, 2^{i}\}$(i.e, when $[k]_{2^{i}} \in \{2^{i-1}+1,\dots, 0\}$) and so on. Similarly, $e^{k}_i$ alternates between $b_i$ and $a_i$ in blocks of length $2^{i-1}$.
\end{proof}

\begin{lem}
\label{lem:22}
Let $n\in \mathbb{N}$ and let $\{[a_{i}, b_{i}]^{T}\}_{i=1}^{n} $ be such that for all i, $[a_{i}, b_{i}]^{T}$ is unit column vector in $\mathbb{C}^2$ and let $W=\bigotimes_{i=1}^{n}[a_{i}, b_{i}]^{T}$. Then, $|\big<W|d_{n}|W\big>| \in [2^{-n}(1-2n^{-1}),2^{-n}(1+2n^{-1})]$
\end{lem}
\begin{proof}

Fix $n$ and $V$ as in the statement and write $d_n$ as a block matrix with each block of size $2^{n-1}$ by $2^{n-1}$.
\[d_n = \begin{bmatrix}
A & B\\
B^{T} & A\\
\end{bmatrix}.
\]
Letting $V=\bigotimes_{i=1}^{n-1}[a_{i}, b_{i}]^{T}$, in block form, $W=[a_{n}V^{T}, b_{n}V^{T}]^{T} $. Let $V=[v_{1},v_{2},\dots,v_{2^{n-1}}]^{T}$. It is easily checked that 
\[\big<W|d_{n}|W\big>= 2^{-n} +  a_{n}^{*} b_{n} V^{\dagger}BV+ a_{n}b_{n}^{*} V^{\dagger}B^{T}V.\]
By the form of B we get, \[V^{\dagger}BV = 2^{-n} [v^{*}_{1},v^{*}_{2},\dots,v^{*}_{2^{n-1}}][v_{2^{n-1}},v_{2^{n-1}-1},\dots,v_{2^{n-1}-r_{n}+1},0,\dots ,0]^{T}. \]

\[= 2^{-n}\sum_{k=1}^{r_{n}}v^{*}_{k}v_{2^{n-1}-k+1}.\]
By the previous lemma, \[|V^{\dagger}BV| \leq 2^{-n}\sum_{k=1}^{r_{n}}|v_{k}||v_{2^{n-1}-k+1}| = 2^{-n} r_{n} \prod_{i=1}^{n-1} |a_{i}||b_{i}| = 2^{-n} r_{n} \prod_{i=1}^{n-1} |a_{i}|\sqrt{1-|a_{i}|^{2}}. \]
Since $x\sqrt{1-x^{2}}$ has a maximum value of $1/2$ and recalling definition of $r_n$,
\[|V^{\dagger}BV| \leq 2^{-n}\dfrac{1}{2^{n-1}}\dfrac{2^{n}}{n} = \dfrac{2^{1-n}}{n}. \]
Similarly, 
$|V^{\dagger}B^{T}V| \leq  \dfrac{2^{1-n}}{n}.$ Noting that $|a_{n}^{*} b_{n}|, |a_{n} b_{n}^{*}| \leq 1/2$, \[ |\big<W|d_{n}|W\big>| \leq 2^{-n} +  |a_{n}^{*} b_{n} V^{\dagger}BV| + |a_{n}b_{n}^{*} V^{\dagger}B^{T}V|\leq 2^{-n} +  \dfrac{2^{1-n}}{n},\]
and 
\[ |\big<W|d_{n}|W\big>| \geq 2^{-n} - |a_{n}^{*} b_{n} V^{\dagger}BV| - |a_{n}b_{n}^{*} V^{\dagger}B^{T}V|\geq 2^{-n} -  \dfrac{2^{1-n}}{n}.\]
\end{proof}
\begin{lem}
\label{lem:24}
$\rho$ is mR.
\end{lem}
If $p$ is any measure on $2^{\omega}$, we can define Martin-L{\"o}f randomness with respect to $p$ exactly as we defined it for the uniform measure. Denote by $MLR(p)$, the set of bitstrings Martin-L{\"o}f random with respect to $p$ \cite{bookE}.\\
\begin{proof}
We use ideas similar to Theorem 196(a) in \cite{bookE}. For convenience, for all $i>5$, define \[\beta_{i} := \sum_{q=5}^{i-1}q.\]
Let $B$ be any computable measurement system. We show that $ MLR(\mu^{B}_{\rho}) \subseteq MLR$. Since $\mu^{B}_{\rho}[MLR(\mu^{B}_{\rho})]=1$, this implies that $\mu^{B}_{\rho}(MLR)=1$. Denote $\mu^{B}_{\rho}$ by $\mu$ for convenience. Let $\lambda$ denote the uniform measure.  We will abuse notation by writing $\mu(\tau)$ instead of the more cumbersome $\mu(\llbracket\tau\rrbracket)$ for $\tau \in 2^{<\omega}$.
Let $X \in MLR(\mu)$. Write $X$ as a concatenation of finite bitstrings : $X=\sigma_{5}\sigma_{6}\dots \sigma_{n}\dots$ where $\sigma_{n} \in 2^n$ for all $n\in \mathbb{N}$. Let $S_n := \sigma_{5}\sigma_{6}\dots \sigma_{n}$ be the concatenation upto $n$. Let $\mu_{i}$ be such that for all $\tau \in 2^i$, \[\mu_{i}(\tau):= tr\big[d_{i}(|\bigotimes_{q=1}^{i} b^{q+\beta_{i}}_{\tau(q)}\big> \big<\bigotimes_{q=1}^{i} b^{q+\beta_{i}}_{\tau(q)}|)\big].\]
By \ref{eq:21} and by the form of $\rho$ we see that,
\[\mu(S_{n})= \prod_{i=5}^{n} \mu_{i}(\sigma_{i}).\]
Note that $\mu$ is computable \cite{bookE} since $\rho$ and $B$ are. Since $X \in MLR(\mu)$, by the Levin-Schnorr theorem (Theorem 90, section 5.6 in \cite{bookE}) there is a $C_{1}$ such that
\begin{align*}
    \forall n ,  -\log(\mu(S_{n})) - C_{1} \leq KM(S_{n}). 
\end{align*}
By Theorem 89, section 5.6 in \cite{bookE} fix a $C_2$ such that
\begin{align*}
    \forall n ,   KM(S_{n}) \leq -\log(\lambda(S_{n})) + C_{2}.
\end{align*} 
By these inequalities and taking exponents, we see that there is a constant $\alpha>0$ such that \[\forall n,  \mu(S_{n}) \geq \alpha \lambda(S_{n}).\]
Letting $r_{i}:= \mu_{i}(\sigma_{i})$ and $\delta_{i}:= \lambda(\sigma_{i})-r_{i}$ in the above,
\begin{align}
\label{eq:25}
    \forall n,  \prod_{i=5}^{n} r_{i}  \geq \alpha \prod_{i=5}^{n} r_{i}+ \delta_{i}.
\end{align}
Let $\mu'$ be a probability measure on $2^{\omega}$ such that for all $\sigma\in 2^{<\omega}, \mu'(\sigma):=2\mu(\sigma)-\lambda(\sigma)$. In particular, this implies that \[\forall n,\mu'(S_{n}) = \prod_{i=5}^{n} r_{i}-\delta_{i}.\]
Note that $\mu'$ is computable since $\mu$ and $\lambda$ are. Applying the same argument which resulted in \ref{eq:25}, we get that there is an $\epsilon>0$ such that,
\begin{align}
\label{eq:26}
    \forall n,  \prod_{i=5}^{n} r_{i}  \geq \epsilon \prod_{i=5}^{n} r_{i}- \delta_{i}.
\end{align}
By Lemma \ref{lem:22}, for all $i, r_{i}\in [2^{-i}(1-2i^{-1}),2^{-i}(1+2i^{-1})]. $
So, $|\delta_{i}|=|r_{i}-2^{-i}| \in [0,2^{-i+1}i^{-1}]$.
Hence, \\$r_{i}+\delta_{i} \geq 2^{-i}-2^{-i+1}i^{-1} - 2^{-i+1}i^{-1} = 2^{-i}[1-4i^{-1}]>0$, since $i\geq 5$. Similarly, $r_{i}-\delta_{i}\geq 0$. By this, multiplying \ref{eq:25} and \ref{eq:26} gives,
\begin{align}
\label{eq:27}
    \forall n,  \prod_{i=5}^{n} r^{2}_{i}  \geq \alpha \epsilon \prod_{i=5}^{n} r^{2}_{i}- \delta^{2}_{i}=\alpha \epsilon\prod_{i=5}^{n} r^{2}_{i} \prod_{i=5}^{n} \big(1-\dfrac{\delta^{2}_{i}}{r^{2}_{i}}\big).
\end{align}
By the above, \[\dfrac{|\delta_{i}|}{r_{i}} \leq \dfrac{2^{-i+1}i^{-1}}{2^{-i}(1-2i^{-1})} = 2(i-2)^{-1}.\] 
Letting $F>0$ be the constant,
\begin{align*}
    \forall n,  \prod_{i=5}^{n} \big(1-\dfrac{\delta^{2}_{i}}{r^{2}_{i}}\big) \geq \prod_{i=5}^{\infty} \big(1-\dfrac{\delta^{2}_{i}}{r^{2}_{i}}\big) \geq \prod_{i=5}^{\infty} \big(1-4(i-2)^{-2}\big)=F,
\end{align*}
\ref{eq:27} gives,
\begin{align}
\label{eq:28}
    \forall n,  (\alpha \epsilon)^{-1}\prod_{i=5}^{n} r^{2}_{i}  \geq  \prod_{i=5}^{n} r^{2}_{i}- \delta^{2}_{i} \geq \prod_{i=5}^{n} r^{2}_{i} F.
\end{align}
From \ref{eq:25}, \ref{eq:26} and \ref{eq:28}, it is easy to see that there is a $G>0$ such that for all $n$
\[\prod_{i=5}^{n} r_{i}+ \delta_{i} \geq G \prod_{i=5}^{n} r_{i}. \]
Recalling the definitions of $r_i$ and $\delta_i$, 
\[\forall n, \lambda(S_n) \geq G \mu(S_n).  \]
Letting $D= C_{1} -\log(G)$ and recalling the definition of $C_1$,
\[\forall n, -\log(\lambda(S_n)) \leq  -\log(\mu(S_n)) -\log(G) \leq  KM(S_n) + D. \]
By Theorem 85 in \cite{bookE}, $KM(.) \leq K(.) + O(1)$ and so there is a $E>0$ such that
\[\forall n, -\log(\lambda(S_n))  \leq  K(S_n) + E. \]
Noting that $-\log(\lambda(S_n))=|S_n|= \beta_{n}+n$, 3.2.14 from \cite{misc} implies that $X$ is MLR.
\end{proof}
The theorem is proved. 
\end{proof}

Intuitively, the non-equivalence of mR and q-MLR should not be surprising given that entanglement in composite systems cannot be detected by independent measurements of the subsystems. Let us elaborate on this remark.
\begin{remark}
\label{comment}
$\rho$ in Theorem \ref{thm:0000} is built up from $d_n$s where each $d_n$ has $r_n$ many entangled eigenvectors with non-zero eigenvalue and $r_n$ many entangled eigenvectors with zero eigenvalue. This inhomogeneity in the distribution of eigenvalues is solely due to these entangled eigenvectors (all the $2^{n}- 2r_n$ many non entangled eigenvectors of $d_n$ have the same  non-zero eigenvalues). A crucial part in showing that $\rho$ is non q-MLR was to use the inhomogeneous eigenvalue distribution to bound the size  $M_N$ (see \ref{eq:eig} in the proof of Lemma \ref{lem:20}). Heuristically speaking, the the non-quantum randomness of $\rho$ is a reflection of the non-uniform eigenvalue distribution of $d_n$ which in turn is due to the presence of entangled eigenvectors of $d_n$. It is hence reasonable to expect that the quantum non-randomness of $\rho$, which stems from entanglement, cannot be captured by measurements in the sense of Definition \ref{defn:78} using pure tensors (i.e. measuring each 2-dimensional subsystem independently).
\end{remark}

\section{Generalizations}
We sketch some ways in which the Section \ref{main}'s results generalize. Given $S\in 2^{\omega}$, we may relativize the notion of Martin-L{\"o}f randomness to define the set $MLR^{S} \subset 2^{\omega}$ of infinite bitstrings which are Martin-L{\"o}f random with respect to $S$. The halting problem, denoted by $\emptyset^{\prime} \subset \mathbb{N}$ is an incomputable set important in computability theory. Letting $\emptyset^{(n)}$ be the $n-1$th iterate of the halting problem, an element of Cantor space is said to be \emph{arithmetically random} if it is in $MLR^{\emptyset^{(n)}}$ for every $n$ (see 6.8.4 in \cite{misc1}). Given 
$ S \in 2^{\omega} $, relativizing the proof of Lemma \ref{lem:24} shows that $MLR^{S}(\mu^{B}_{\rho}) \subseteq MLR^{S}$ as follows. Take an $X\in MLR^{S}(\mu^{B}_{\rho})$. Relativizing Theorems 85 and 90 from \cite{bookE} and 3.2.14 from \cite{misc} to $S$ and noting that $KM^{S}(.) \leq KM(.)$ and following the proof of Lemma \ref{lem:24}  shows that 
$X \in MLR^{S}$. This shows that $\mu^{B}_{\rho}(MLR^{S})=1$ holds for any $S \in 2^{\omega}$ and any computable measurement system $B$. In particular, if $B$ is any computable measurement system, 
$\mu^{B}_{\rho}(MLR^{\emptyset^{(n)}})=1$ for all $n$. So, \[\mu^{B}_{\rho}\big[\bigcap_{n\in \mathbb{N}}(MLR^{\emptyset^{(n)}})\big]=1.\]

So, measuring $\rho$, the state constructed in Theorem \ref{thm:0000} in any computable measurement system yields an arithmetically random  infinite sequence of bits, with probability one. The above note naturally suggests a definition:

\begin{defn}
$\rho$ is said to be strong measurement random (strong mR), if $\mu^{B}_{\rho}(MLR^{S})=1$ holds for any $S \in 2^{\omega}$ and any computable measurement system $B$. 

\end{defn}
By Remark \ref{rem:1} and by the above discussion on relativizations, we can also consider measurement of a state in non-computable measurement systems by using an appropriate oracle. We do not explore this here.

One may ask if we can build other computable examples of $\rho$s which are not q-MLR and are mR. We note that a straightforward modification of the proof of Theorem \ref{thm:0000} yields a family of such $\rho$s. We do not provide all the details here for lack of space. 
 Let $h: \mathbb{N} \longrightarrow \mathbb{N}$ and $g: \mathbb{N} \longrightarrow (0,1)$ be computable, satisfying the following for some constants $\delta \in (0,1)$ and $F>0$: \[\prod_{n=5}^{\infty} (1- h(n)2^{-n})=0 , \prod_{n=5}^{\infty} (1- h(n)[2^{-n}-g(n)])= \delta,\]\[ \forall n, g(n) \leq 2^{-n} \text{ and } \prod_{n=5}^{\infty} \big[1- \dfrac{4g^{2}(n)h^{2}(n)}{(1-2g(n)h(n))^{2}}\big] =F .\] 
Let $\rho$ be defined as in the proof of the main Theorem but with $r_n$ replaced by $h(n)$ and with the $h(n)$ many entries on the extreme ends of the anti-diagonal of $d_n$ being equal to $g(n)$ instead of $2^{-n}$. Then, this $\rho$ is computable and mR (in fact, it is strong mR) and fails a q-MLT at order $\delta$.

\section{Measurement randomness and q-MLR for product states}
\label{mrequalqmlr}
Although Theorem \ref{thm:0000} shows that mR and q-MLR are not equivalent in general, it is interesting to investigate if these notions are indeed equivalent for certain special states.
\begin{defn}
A state $\rho = (\rho_s)_s$ is defined to be a product state if there is a $2^m$ by $2^m$ computable density matrix $d$ such that for all $n$, $\rho_{nm} = \otimes_{s=1}^{n} d$.
\end{defn}
 
\begin{thm}
\label{prod}
Measurement randomness is equivalent to q-MLR for product states.
\end{thm}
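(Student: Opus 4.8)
The plan is to prove the two directions separately, reusing the machinery already developed. The direction "q-MLR $\Rightarrow$ mR" is immediate: it is a special case of Theorem \ref{thm:mlrthenmr}, which says every q-MLR state is mR, with no hypothesis on the state. So the real content is the converse: if $\rho$ is a product state (there is a computable $2^m \times 2^m$ density matrix $d$ with $\rho_{nm} = \otimes_{s=1}^n d$ for all $n$) and $\rho$ is mR, then $\rho$ is q-MLR. I would argue the contrapositive: assume $\rho$ is not q-MLR and produce a computable measurement system $B$ together with a classical Martin-L\"of test that the induced measure $\mu_\rho^B$ does not pass, i.e.\ $\mu_\rho^B(MLR) < 1$.

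First I would exploit the product structure to understand why $\rho$ fails to be q-MLR. Since the eigenvalues of $\rho_{nm} = \otimes^n d$ are products of eigenvalues of $d$, a q-MLT detecting $\rho$ must, at level $n$, be capturing almost all of the eigenmass of $\otimes^n d$ using a projection of rank at most $2^{nm - k}$ (for the $k$-th test component). This forces the von Neumann entropy $H(d)$ of $d$ to be strictly less than $m$: if $H(d) = m$, then $d$ is the maximally mixed state $2^{-m} I$, $\rho$ is (a rescaling of) the tracial state, hence q-MLR by the result quoted after Definition \ref{def:tr} / Theorem \ref{thm:31}, contradiction. (More carefully: if $H(d) < m$ strictly, the typical subspace / AEP argument gives, for each $\epsilon$, a subspace of dimension $\approx 2^{n H(d)} = 2^{n m \cdot (H(d)/m)}$ capturing $\rho_{nm}$-mass close to $1$, and since $H(d)/m < 1$ this yields a q-MLT; and conversely a q-MLT forces this. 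So "not q-MLR" for a product state is equivalent to $H(d) < m$.) I would then fix the eigenbasis of $d$: write $d = \sum_{j < 2^m} \lambda_j |e_j\rangle\langle e_j|$ with $e_j$ complex-algebraic (using a computable presentation of the eigendecomposition of the computable matrix $d$ — this is the one spot requiring a little care, since eigenvectors of a computable matrix need not be computable if eigenvalues collide, but one can perturb / group eigenspaces and choose algebraic orthonormal bases of each eigenspace, or approximate well enough).

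Next I would build the measurement system. The key point is that $B$ is a \emph{qubit-wise} measurement system, whereas the eigenvectors $e_j$ of $d$ live in $\mathbb{C}^{2^m}$ and are typically entangled across the $m$ qubits — so I cannot literally measure in the $\{e_j\}$ basis qubit by qubit. This is exactly the obstruction exhibited by Theorem \ref{thm:0000}, and it is the main obstacle here too: I must show that for a \emph{product} state the entanglement inside each $d$-block does not save $\rho$ from being measurement-detectable. The idea is that, since the blocks are independent and identical, measuring each qubit in, say, the standard basis already reveals enough: by a law-of-large-numbers / Shannon–McMillan–Breiman argument (cf.\ Sections \ref{sec:lln}, \ref{sec:smb}), for $\mu_\rho^B$-almost every $X$, the empirical distribution of the blocks $X \upharpoonright [nm, (n+1)m)$ converges to the output distribution $(\langle \sigma | d | \sigma\rangle)_{\sigma \in 2^m}$ of a single measured copy of $d$, so that $-\frac{1}{nm}\log \mu_\rho^B(X\upharpoonright nm) \to H(p_d)$ where $p_d$ is that output distribution and $H(p_d) \le \log(2^m) = m$ with equality iff $p_d$ is uniform. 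If $p_d$ is not uniform, then $\mu_\rho^B$ is a computable measure with empirical entropy rate $< 1$, so by the classical Levin–Schnorr theorem (as used in the proof of Lemma \ref{lem:24}) $\mu_\rho^B(MLR) = 0$ and we are done. The remaining case is $p_d$ uniform in the standard basis but $H(d) < m$ (i.e.\ $d$ has uniform diagonal but is not $2^{-m}I$): here I would instead use a measurement system adapted to a \emph{single qubit marginal} or, more robustly, a non-standard computable basis $B$ chosen so that $p_d$ becomes non-uniform — concretely, pick $B$ so that the first qubit is measured in a basis for which the reduced one-qubit density matrix $\mathrm{Tr}_{2,\dots,m}(d)$ has unequal diagonal entries (which is possible unless \emph{every} one-qubit marginal of $d$, in every basis, is maximally mixed; iterating, if all marginals of all orders are maximally mixed in all bases then $d = 2^{-m}I$, contradicting $H(d) < m$). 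With such a $B$, $\mu_\rho^B$ again has entropy rate strictly below $1$ and the Levin–Schnorr argument finishes the proof. I expect the bookkeeping to reduce to: (i) "$\rho$ product and not q-MLR" $\iff H(d) < m$; (ii) $H(d) < m$ $\Rightarrow$ some computable qubit-wise basis makes the one-block output distribution non-uniform; (iii) non-uniform output distribution $\Rightarrow \mu_\rho^B(MLR)=0$ via SMB + Levin–Schnorr. Step (ii) is the crux and the natural place a subtlety could hide.
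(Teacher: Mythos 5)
Your overall architecture is sound and your first direction is exactly the paper's (quote Theorem \ref{thm:mlrthenmr}); your endgame for the converse is also essentially the paper's: exhibit a computable qubit-wise basis $B$ in which the one-block outcome distribution is non-uniform, and then conclude $\mu^{B}_{\rho}(MLR)=0$ because the blocks are i.i.d.\ (the paper uses the strong law of large numbers on the frequency of a single block pattern versus the $2^{-m}$ frequency forced on MLR sequences; your SMB/entropy-rate route via Levin--Schnorr also works and is only marginally heavier). The detour through $H(d)<m$ and typical subspaces is unnecessary for the direction you are proving: all you need from that discussion is the trivial implication ``$d=2^{-m}I$ $\Rightarrow$ $\rho$ is the tracial state $\Rightarrow$ q-MLR,'' so you may as well reduce directly to showing ``$d\neq 2^{-m}I$ $\Rightarrow$ some computable product basis yields a non-uniform outcome distribution.''

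That reduction is where the genuine gap sits, and you correctly sensed it. What a qubit-wise measurement can see of $d$ is exactly the quantities $v^{\dagger}dv$ for \emph{atomic} (product-tensor) unit vectors $v$, so the statement you must prove is: if $v^{\dagger}dv=2^{-m}$ for every atomic $v\in\mathbb{C}^{2^{m}}$, then $d=2^{-m}I$. Your proposed justification --- that otherwise some one-qubit marginal in some basis is not maximally mixed, ``iterating over all orders'' --- does not establish this. Reduced density matrices being maximally mixed is strictly weaker than the full product-basis outcome distribution being uniform: a Bell-diagonal two-qubit state has both one-qubit marginals equal to $I/2$ while being far from $I/4$, and ``marginals of all orders'' is circular at the top order since the $m$-qubit marginal is $d$ itself. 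The paper closes this gap with a dedicated linear-algebra induction (Lemmas \ref{lem:1} and \ref{lem}): writing $d$ in $2\times 2$ block form and testing against the atomic vectors $\left[1,0\right]^{T}\otimes X$, $\tfrac{1}{\sqrt{2}}\left[1,1\right]^{T}\otimes X$, and $\left[\tfrac{i}{2},\tfrac{\sqrt{3}}{2}\right]^{T}\otimes X$ to kill the diagonal blocks, the real part, and the imaginary part of the off-diagonal block in turn. Without this lemma (or an equivalent), your step (ii) --- which, as you say, is the crux --- is an assertion, not a proof; note that Theorem \ref{thm:0000} shows entanglement inside blocks genuinely can hide quantum non-randomness from product measurements, so the product-state structure must enter through precisely such an argument.
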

We first prove some purely linear algebraic lemmas. We will use the block matrix and block vector notation; capital letters will indicate that the block is a matrix and not a scalar. For $n\geq 1,$ unit vector $v \in \mathbb{C}^{2^n}$ will be called \emph{atomic} if it is of the form $v= \otimes_{s=1}^n v_s$ for some complex algebraic unit vectors $v_s \in \mathbb{C}^{2 }$. I.e., $v \in \mathbb{C}^{2^n}$ is atomic if it is a product tensor of $n$ many complex algebraic unit vectors, $v_s \in \mathbb{C}^{2 }$.
\begin{lem}
\label{lem:1}
If $E$ is $2^n$ by $2^n$ and $v^{\dagger} E v =0$ for all atomic $v\in \mathbb{C}^{2^n}$, then $E$ is the zero matrix.
\end{lem}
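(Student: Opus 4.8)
The statement reduces to a purely linear‑algebraic spanning fact, and the whole difficulty is to see past the (seemingly thin) set of atomic vectors. The key observation is that $\big<v|E|v\big> = \mathrm{Tr}(E\,|v\big>\big<v|)$ by cyclicity of the trace, so the hypothesis says exactly that $\mathrm{Tr}(EP) = 0$ for every rank-one projection $P$ of the form $P = |v\big>\big<v|$ with $v$ atomic. Since the bilinear pairing $(X,Y)\mapsto \mathrm{Tr}(XY)$ is nondegenerate on $2^n\times 2^n$ matrices, it suffices to prove that the set $\{\,|v\big>\big<v| : v\in\mathbb{C}^{2^n}\text{ atomic}\,\}$ \emph{spans} the space $M_{2^n}(\mathbb{C})$ of all $2^n\times 2^n$ complex matrices; granting this, $\mathrm{Tr}(EM)=0$ for all $M$, and testing against the elementary matrices gives $E=0$.

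The plan for the spanning statement is an induction on $n$ using the tensor factorization $M_{2^n}(\mathbb{C})\cong M_2(\mathbb{C})^{\otimes n}$, together with the elementary fact that if $S_1$ spans $V_1$ and $S_2$ spans $V_2$ then $\{s_1\otimes s_2 : s_1\in S_1,\ s_2\in S_2\}$ spans $V_1\otimes V_2$. The base case $n=1$ is the assertion that rank-one projections of \emph{algebraic} unit vectors span $M_2(\mathbb{C})$; for this I would exhibit four explicit algebraic unit vectors, namely $|0\big>$, $|1\big>$, $\tfrac{1}{\sqrt2}(|0\big>+|1\big>)$, $\tfrac{1}{\sqrt2}(|0\big>+i|1\big>)$, and check by a short computation that the four resulting $2\times2$ projections are linearly independent over $\mathbb{C}$ (one can solve explicitly for the matrix units $e_{11},e_{22},e_{12},e_{21}$ in terms of them), hence a basis of the four-dimensional space $M_2(\mathbb{C})$. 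For the inductive step, if the product rank-one projections span $M_{2^{n-1}}(\mathbb{C})$, then tensoring with the four projections above, and using that a product of atomic vectors is again atomic, shows that $\{\,|v\big>\big<v| : v\text{ atomic in }\mathbb{C}^{2^n}\}$ contains a spanning (indeed $4^n$-element basis) set of $M_{2^n}(\mathbb{C})$.

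Assembling: given $E$ with $\big<v|E|v\big>=0$ for all atomic $v$, we get $\mathrm{Tr}(E\,|v\big>\big<v|)=0$ for all atomic $v$; by the spanning fact, $\mathrm{Tr}(EM)=0$ for every $M\in M_{2^n}(\mathbb{C})$; choosing $M$ to be the matrix unit with a single $1$ in position $(j,i)$ yields $E_{ij}=0$ for all $i,j$, so $E$ is the zero matrix.

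The step I expect to be the main obstacle — conceptually, not computationally — is recognizing that "atomic vectors are sparse in $\mathbb{C}^{2^n}$" is not an obstruction: one should not try a density/continuity argument inside $\mathbb{C}^{2^n}$ (product states are nowhere dense there), but instead pass to the projections $|v\big>\big<v|$ and exploit the multiplicativity of the tensor product of spanning sets, reducing everything to the trivial $2$-dimensional qubit case where four concrete algebraic vectors already suffice. Everything else is routine: the trace identity, the tensor-of-spanning-sets lemma, and the $2\times2$ linear‑independence check.
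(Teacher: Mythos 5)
Your proof is correct, but it takes a genuinely different route from the paper's. The paper argues directly on the quadratic form: it writes $E$ in $2\times 2$ block form, tests against $\left[\begin{smallmatrix}1\\0\end{smallmatrix}\right]\otimes X$, $\tfrac{1}{\sqrt2}\left[\begin{smallmatrix}1\\1\end{smallmatrix}\right]\otimes X$ and $\left[\begin{smallmatrix}i/2\\\sqrt3/2\end{smallmatrix}\right]\otimes X$ for atomic $X$, and applies the induction hypothesis separately to the diagonal blocks, to $B+C$, and finally to $B$. You instead dualize: the identity $\big<v|E|v\big>=\mathrm{Tr}\bigl(E\,|v\big>\big<v|\bigr)$ turns the hypothesis into the vanishing of the linear functional $M\mapsto\mathrm{Tr}(EM)$ on the set of atomic rank-one projections, so it suffices that these projections span $M_{2^n}(\mathbb{C})$; that spanning statement factors through $M_{2^n}(\mathbb{C})\cong M_2(\mathbb{C})^{\otimes n}$ and the tensor-of-spanning-sets lemma, leaving only the $2\times 2$ base case with four explicit algebraic unit vectors (which do form a basis of $M_2(\mathbb{C})$ — your linear-independence check goes through, and all entries are algebraic, so the vectors are atomic in the paper's sense). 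Your version is more modular: the induction is absorbed into a standard tensor-product fact, the induction hypothesis is invoked once rather than four times per step, and the argument generalizes verbatim to product vectors with factors of arbitrary dimension. The paper's version is more self-contained and avoids appealing to the trace pairing and the identification $M_{2^n}(\mathbb{C})\cong M_2(\mathbb{C})^{\otimes n}$, at the cost of a longer block-by-block computation. No gaps in your argument.
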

\begin{proof}
The proof is by induction. Suppose $E$ is $2^{n+1}$ by $2^{n+1}$ and satisfies the hypotheses of the lemma. Let 
\[E = \begin{bmatrix}
A & B\\
C & D\\
\end{bmatrix},
\]
where each block is $2^n$ by $2^n$. Let $X$ be an arbitrary $2^n$ by $1$, atomic column vector. Then, if $\overline{0}$ represents the $2^n$ by $1$ zero column vector,
\[ \begin{bmatrix}
1\\
0\\
\end{bmatrix} \otimes X = \begin{bmatrix}
X\\
\overline{0}\\
\end{bmatrix}
\]
is atomic too. So,\[\begin{bmatrix}
X^{\dagger} & \overline{0}
\end{bmatrix} \begin{bmatrix}
A & B\\
C & D\\
\end{bmatrix}\begin{bmatrix}
X\\
\overline{0}\\
\end{bmatrix} = X^{\dagger}AX = 0.
\]
As $X$ was an arbitrary atomic vector, $A$ is the zero matrix by the induction hypothesis. Similarly, $D$ is the zero matrix. Note that
\[ \begin{bmatrix}
\dfrac{1}{\sqrt{2}}\\
\dfrac{1}{\sqrt{2}}\\
\end{bmatrix} \otimes X = \dfrac{1}{\sqrt{2}}\begin{bmatrix}
X\\
X\\
\end{bmatrix}
\]
is atomic. So,\[0=\dfrac{1}{\sqrt{2}}\begin{bmatrix}
X^{\dagger} & X^{\dagger} 
\end{bmatrix} E\dfrac{1}{\sqrt{2}}\begin{bmatrix}
X\\
X\\
\end{bmatrix}=\begin{bmatrix}
X^{\dagger} & X^{\dagger}
\end{bmatrix} \begin{bmatrix}
0 & B/2\\
C/2 & 0\\
\end{bmatrix}\begin{bmatrix}
X\\
X\\
\end{bmatrix} = \dfrac{(X^{\dagger}BX+X^{\dagger}CX)}{2}.
\]
So, for all atomic $X$, $X^{\dagger}(B+C)X = 0$. By the induction hypothesis, $B=-C$.
\[ \begin{bmatrix}
\dfrac{i}{2}\\
\dfrac{\sqrt{3} }{2}\\
\end{bmatrix} \otimes X = \begin{bmatrix}
\dfrac{i}{2} X\\
\dfrac{\sqrt{3} }{2} X\\
\end{bmatrix}
\]
is atomic. So,\[0= \begin{bmatrix}
\dfrac{-i}{2}X^{\dagger} & \dfrac{\sqrt{3} }{2} X^{\dagger} 
\end{bmatrix} E\begin{bmatrix}
\dfrac{i}{2}X\\
\dfrac{\sqrt{3} }{2}X\\
\end{bmatrix}=\begin{bmatrix}
\dfrac{-i}{2}X^{\dagger} & \dfrac{\sqrt{3} }{2} X^{\dagger} 
\end{bmatrix}\begin{bmatrix}
0 & -B \\
B & 0\\
\end{bmatrix}\begin{bmatrix}
\dfrac{i}{2}X\\
\dfrac{\sqrt{3} }{2}X\\
\end{bmatrix} = \dfrac{i\sqrt{3}}{2} X^{\dagger}BX.
\]
So, for all atomic $X$, $X^{\dagger}BX = 0$. By the induction hypothesis, $B=0$. This proves the induction step. We omit the details of the base case (i.e., when $n=1$ and $E$ is two by two) as it can be proved similarly to the induction step.
\end{proof}
Let $I_n$ denote the $2^n$ by $2^n$ identity matrix.
\begin{lem}
\label{lem}
If $E$ is a $2^n$ by $2^n$ Hermitian matrix such that $v^{\dagger} E v = 2^{-n}$ for all atomic $v\in \mathbb{C}^{2^n}$, then $E= 2^{-n} I_n$.
\end{lem}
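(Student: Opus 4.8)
The plan is to reduce immediately to Lemma \ref{lem:1} by subtracting off the scalar matrix. Set $E' := E - 2^{-n} I_n$. Since $E$ is Hermitian and $2^{-n} I_n$ is Hermitian, $E'$ is Hermitian, though we will not even need this. The key observation is that every atomic $v \in \mathbb{C}^{2^n}$ is by definition a unit vector, so $v^{\dagger}(2^{-n} I_n) v = 2^{-n} \|v\|^2 = 2^{-n}$. Hence for every atomic $v$,
\[
v^{\dagger} E' v = v^{\dagger} E v - v^{\dagger}(2^{-n} I_n) v = 2^{-n} - 2^{-n} = 0.
\]

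Now apply Lemma \ref{lem:1} to $E'$: since $v^{\dagger} E' v = 0$ for all atomic $v \in \mathbb{C}^{2^n}$, the matrix $E'$ is the zero matrix, i.e.\ $E = 2^{-n} I_n$, as desired.

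There is essentially no obstacle here — the whole content is carried by Lemma \ref{lem:1}, whose induction (testing against atomic vectors of the form $[1,0]^T \otimes X$, $[0,1]^T \otimes X$, $\tfrac{1}{\sqrt 2}[1,1]^T \otimes X$, and $[\tfrac{i}{2}, \tfrac{\sqrt 3}{2}]^T \otimes X$) has already been done. The only point to be careful about is that the hypothesis "$v^{\dagger} E v = 2^{-n}$ for all atomic $v$" refers to \emph{unit} product-tensor vectors, which is exactly what makes the $v^{\dagger}(2^{-n}I_n)v = 2^{-n}$ cancellation exact; so the translation to the hypothesis of Lemma \ref{lem:1} is clean.
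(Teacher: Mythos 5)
Your proof is correct, and it takes a genuinely different (and shorter) route than the paper. You reduce the statement to Lemma \ref{lem:1} in one step by passing to $E' = E - 2^{-n}I_n$ and using the fact that atomic vectors are by definition unit vectors, so the quadratic form of the scalar matrix is identically $2^{-n}$ on them; the hypothesis of Lemma \ref{lem:1} is then satisfied verbatim by $E'$ (and, as you note, Lemma \ref{lem:1} does not even require Hermiticity). The paper instead reruns the entire induction of Lemma \ref{lem:1} in parallel: it writes $E$ in $2\times 2$ block form, tests against $[1,0]^T\otimes X$ and $[0,1]^T\otimes X$ to get the diagonal blocks equal to $2^{-n-1}I_n$ by the inductive hypothesis, then tests against $\tfrac{1}{\sqrt 2}[1,1]^T\otimes X$ and $\tfrac{1}{\sqrt 2}[1,i]^T\otimes X$ to kill the real and imaginary parts of $X^{\dagger}BX$, and only then invokes Lemma \ref{lem:1} to conclude $B=0$. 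Your reduction makes all of that duplication unnecessary and isolates the real content in a single place; the paper's version buys nothing extra beyond self-containedness of the induction. The one hypothesis your argument leans on — that atomic vectors have norm one, so the cancellation $v^{\dagger}(2^{-n}I_n)v = 2^{-n}$ is exact — is explicitly part of the paper's definition of atomic, so the translation is clean.
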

\begin{proof}
The proof is by induction. Suppose $E$ is  $2^{n+1}$ by $2^{n+1}$ and satisfies the hypotheses of the lemma. Note that because the standard (computational) basis vectors are atomic, $E$ has $2^{-n-1}$ along the diagonal. Let 
\[E = \begin{bmatrix}
A & B\\
B^{\dagger} & C\\
\end{bmatrix},
\]
where each block is $2^n$ by $2^n$. Let $X$ be an arbitrary $2^n$ by $1$, atomic column vector. Then, if $\overline{0}$ represents the $2^n$ by $1$ zero column vector,
\[ \begin{bmatrix}
1\\
0\\
\end{bmatrix} \otimes X = \begin{bmatrix}
X\\
\overline{0}\\
\end{bmatrix}
\]
is atomic too. So,\[\begin{bmatrix}
X^{\dagger} & \overline{0}
\end{bmatrix} \begin{bmatrix}
A & B\\
B^{\dagger} & C\\
\end{bmatrix}\begin{bmatrix}
X\\
\overline{0}\\
\end{bmatrix} = X^{\dagger}AX = 2^{-n-1}.
\]
So, for an arbitrary atomic vector  $X$ , \[X^{\dagger}(2A)X= 2^{-n}.\] Note that $2A=2A^{\dagger}$ (as $E=E^{\dagger}$) and that $2A$ has $2^{-n}$ along the diagonal. So, by the induction hypothesis, $2A =2^{-n} I_n$. Similarly, $2C=2^{-n} I_n$.
\[ \begin{bmatrix}
\dfrac{1}{\sqrt{2}}\\
\dfrac{1}{\sqrt{2}}\\
\end{bmatrix} \otimes X = \dfrac{1}{\sqrt{2}}\begin{bmatrix}
X\\
X\\
\end{bmatrix}
\]
is atomic. So,\[2^{-n-1}=\dfrac{1}{\sqrt{2}}\begin{bmatrix}
X^{\dagger} & X^{\dagger} 
\end{bmatrix} E\dfrac{1}{\sqrt{2}}\begin{bmatrix}
X\\
X\\
\end{bmatrix}=\begin{bmatrix}
X^{\dagger} & X^{\dagger}
\end{bmatrix} \begin{bmatrix}
2^{-n-2}I_n & B/2\\
B^{\dagger}/2 & 2^{-n-2}I_n\\
\end{bmatrix}\begin{bmatrix}
X\\
X\\
\end{bmatrix} \]\[= \dfrac{(X^{\dagger}BX+X^{\dagger}B^{\dagger}X)}{2}+ 2^{-n-1}.
\]
So, for all atomic $X$, Re$(X^{\dagger} B X )= 0$. Similarly, we show that Im$(X^{\dagger} B X )= 0$ as follows:

\[ \begin{bmatrix}
\dfrac{1}{\sqrt{2}}\\
\dfrac{i}{\sqrt{2}}\\
\end{bmatrix} \otimes X = \dfrac{1}{\sqrt{2}}\begin{bmatrix}
X\\
iX\\
\end{bmatrix}
\]
is atomic. So,\[2^{-n-1}=\dfrac{1}{\sqrt{2}}\begin{bmatrix}
X^{\dagger} & -iX^{\dagger} 
\end{bmatrix} E\dfrac{1}{\sqrt{2}}\begin{bmatrix}
X\\
iX\\
\end{bmatrix}\]\[=\begin{bmatrix}
X^{\dagger} & -iX^{\dagger}
\end{bmatrix} \begin{bmatrix}
2^{-n-2}I_n & B/2\\
B^{\dagger}/2 & 2^{-n-2}I_n\\
\end{bmatrix}\begin{bmatrix}
X\\
iX\\
\end{bmatrix} = \dfrac{i(X^{\dagger}BX-X^{\dagger}B^{\dagger}X)}{2}+ 2^{-n-1}.
\]
So, for all atomic $X$, $ X^{\dagger} B X  = 0$. By Lemma \ref{lem:1}, $B $ is the zero matrix. This proves the induction step. We omit the details of the base case (i.e., when $n=1$ and $E$ is Hermitian two by two) as it can be proved similarly to the induction step.
\end{proof}
We now prove Theorem \ref{prod}.
\begin{proof}
By Theorem \ref{thm:mlrthenmr}, it suffices to show that if a product state $\rho$ is mR, then it is q-MLR. Let state $\rho = (\rho_s)_s$ be a mR product state. So, there is a $2^m$ by $2^m$ computable density matrix $d$ such that for all $n$, $\rho_{nm} = \otimes_{s=1}^{n} d$. We show, using Lemma \ref{lem}, that $d$ must be the $2^{-n}I_n$, and hence that $\rho$ is q-MLR. Suppose that there is an atomic $v \in C^{2^n}$ and a $p$ such that $v^{\dagger}dv = p \neq 2^{-n}$ (Note that $p \in [0,1]$ as $d$ is a density matrix). So, $v= \otimes_{s=1}^n v_s$ for some complex algebraic unit vectors $v_s \in \mathbb{C}^{2 }$. For each $s$, let $w_s$ be the unique complex algebraic unit vector such that $v_s$ and $w_s$ form a orthonormal basis of $\mathbb{C}^2$. Define a measurement system by $B= ((b^{t}_{0},b^{t}_{1}))_{t=1}^{\infty}$ where $b^{t}_{0}:= v_s$ and $b^{t}_{1}:=w_s$ for $t=s$ (mod $n$). I.e., informally speaking, $B$ consists of copies of $v$. As $v$ has `length' equal to $n$, $B$ repeats with period $n$. Consider dividing an infinite bitstring, $X$ into blocks of length $n$ (I.e., the first block is $X(1)X(2)\cdots X(n)$, the second block is $X(n+1)X(n+2) \cdots X(2n)$ and so on). Given an $X$, let $f_X (s) =$ the number of blocks which are equal to $0^n$ in the first $sn$ many bits of $X$. By the strong law of large numbers, for $\mu^{\rho}_{B}$- almost every $X$, \[\lim_{s \rightarrow \infty}\dfrac{f_X (s)}{s}=p.\]  However, it is known that if $X$ is MLR, then \[\lim_{s \rightarrow \infty}\dfrac{f_X (s)}{s}= 2^{-n} \neq p.\]
So, $\mu^{\rho}_{B} (MLR)=0$ and so $\rho$ is not mR. So, such $v$ and $p$ cannot exist and by Lemma \ref{lem}, $d=2^{-n}I_n$.
\end{proof}

\section{Conclusion}
We constructed a computable, non-random qubitstring which almost surely yields a arithmetic random bitstring when measured. Formally, we construct a computable, non q-MLR state which yields an arithmetically random bitstring with probability one when `measured'. Arithmetic randomness is a strong form of classical randomness, strictly stronger than 
Martin-L{\"o}f randomness (See 6.8.4 in \cite{misc1} for details on arithmetic randomness). Our result hence provides further evidence for the philosophically and practically important claim that `true' randomness (as against pseudorandomness)\cite{DBLP:books/daglib/p/Calude17} can be extracted from certain quantum systems.
While several schemes exist for generating a random bitstring from a quantum source \cite{ Pironio_2010, Baumeler2017KolmogorovAF, Kessler_2020,DBLP:phd/hal/Abbott15,DBLP:books/daglib/p/Calude17, DBLP:conf/birthday/AbbottCS15, qrng2020}, to the best of our knowledge, none of these produce arithmetic randomness. It hence seems plausible that our results may prove to be relevant to the construction of quantum random number generators \cite{RevModPhys.89.015004,DBLP:books/daglib/p/Calude17 }.

Abbott, Calude and Svozil have also studied bitstrings resulting from measuring a quantum system\cite{DBLP:phd/hal/Abbott15,DBLP:conf/birthday/AbbottCS15}. However, their notion of measurement is significantly different from ours. In contrast to our work which considers measurement of an infinite sequence of qubits, they studied the randomness of a sequence of bits generated by \emph{repeatedly measuring a finite dimensional} quantum system. They go on to apply this to quantum random number generators and their certification \cite{DBLP:phd/hal/Abbott15,DBLP:journals/mscs/AbbottCS14,DBLP:conf/birthday/AbbottCS15,DBLP:books/daglib/p/Calude17}.

\section{Acknowledgements}

I thank James Hanson for many helpful discussions pertaining to Section \ref{mrequalqmlr}. Joe Miller and Peter Cholak (independently) asked if there is a notion of `measuring a state'. These questions were one of the factors which led me to explore this area. Andr\'e Nies, whom I thank for introducing me to quantum algorithmic randomness, independently suggested that one might get a measure on Cantor space by `measuring' a state.  

I thank Joe Miller, my thesis advisor, for his encouragement and guidance.

\chapter{Entropy and computable states}\label{4}

The von Neumann entropy of a density matrix is the Shannon entropy of the distribution given by its eigenvalues \cite{Nielsen:2011:QCQ:1972505}. In this chapter, we are concerned with the von Neumann entropies of the initial segments of states. Recall from Definition \ref{def:trace} that each finite initial segment, $\tau_n$ of the tracial state, $\tau$ is the maximally mixed state with a (maximum possible) von-Neumann entropy equal to $n$. This suggests that computable states whose initial segments' total eigenmass of one is `evenly spread out' over all the eigenvalues are quantum Martin-L{\"o}f random (q-MLR).  Heuristically speaking, since the eigenvectors of the initial segments of computable states are computable and hence easy to describe, the randomness of computable states cannot stem from the randomness of the individual eigenvectors of the initial segments. Rather, their randomness should result from the eigenvalue mass of their initial segments being `uniformly spread out' and hence difficult to `capture' using a quantum Martin-L{\"o}f test. The uniform spreading of the eigenvalues should be reflected in an asymptotically high von-Neumann entropy of the state (States having initial segments whose eigenvalues are evenly spread out have a `high von-Neumann entropy', where we use the quotes as the von Neumann entropy is defined for density matrices and not for states. We describe below a way to make sense of von Neumann entropy for states). 

Motivated by this, we explore the asymptotic behavior of the von-Neumann entropy of the initial segments of computable states. Let $H(\rho_n)$ be the von-Neumann entropy of $\rho_n$. We show  Theorems \ref{thm:9} and \ref{thm:99} which can be summarized as:
For any computable $\rho$, \[\exists c>0 \exists^{\infty} n H(\rho_n) > n-c \Rightarrow \rho \text { is q-MLR} \Rightarrow  H(\rho):=\text{lim}_n \dfrac{H(\rho_n )}{n} = 1.\]

Further, we provide an example to show that the first implication doesn't reverse. It is easy to see that the second doesn't reverse too. So, these implications are strict.

Recall that weak Solovay randomness is equivalent to q-MLR for computable states and hence all results here also hold for weak Solovay random states.
\begin{defn}
The von-Neumann entropy of a density matrix $\rho_{n}$ on $n$ qubits, $H(\rho_{n})$ is defined as: Let $\rho_n$ have a orthonormal eigenbasis $(|\psi^{i}\big> )_{i\leq 2^n}$ and corresponding eigenvalues $(\alpha_{i})_{i\leq 2^n}$.  So, \[ \rho_{n} = \sum_{i\leq 2^n} \alpha_{i}|\psi^{i}\big> \big< \psi^{i}| \] 
Since $\rho$ is a density matrix, $Tr(\rho)= \sum_{i}\alpha_{i} =1$. So, we can define: \[H(\rho_{n}) = -\sum_{i\leq 2^n}\alpha_{i}\text{log}_{2}(\alpha_{i})\]
\end{defn}

\section{q-MLR implies maximum entropy per qubit.}

Theorem \ref{thm:99} shows that, asymptotically speaking, computable q-MLR states have maximum von-Neumann entropy `per qubit'.

\begin{thm}
\label{thm:99}
If $\rho=(\rho_{n})_{n}$ is a computable  quantum Martin-L{\"o}f random state, then \[H(\rho) :=  \text{liminf}_{n}\dfrac{H(\rho_{n})}{n}  = 1\] 
\end{thm}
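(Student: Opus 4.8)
The plan is to prove the contrapositive in the sharpest useful sense: if $\rho$ is computable and $\mathrm{liminf}_n H(\rho_n)/n < 1$, then $\rho$ fails some q-MLT, hence is not q-MLR. So suppose there is $\epsilon > 0$ and infinitely many $n$ with $H(\rho_n) \leq (1-\epsilon)n$. The key idea is that a density matrix whose von Neumann entropy is small relative to $n$ must have most of its eigenmass concentrated on a small-rank projection, and this small-rank projection can be computed (approximately) since $\rho$ is computable. First I would fix such an $n$ and write $\rho_n = \sum_{i\leq 2^n}\alpha_i |\psi^i\rangle\langle\psi^i|$ with $\alpha_1 \geq \alpha_2 \geq \cdots$. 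The workhorse is the following entropy-concentration fact: if $H(\rho_n)\leq (1-\epsilon)n$, then for a threshold like $r = 2^{(1-\epsilon/2)n}$, the top $r$ eigenvalues carry mass at least, say, $1/2$ (for $n$ large). This is essentially a Markov/Chebyshev-type argument on the random variable $-\log\alpha_i$ under the distribution $(\alpha_i)$: its expectation is $H(\rho_n)\leq(1-\epsilon)n$, so by Markov the probability that $-\log\alpha_i > (1-\epsilon/2)n$ is at most $(1-\epsilon)/(1-\epsilon/2) < 1$, bounded away from $1$; the complementary event is exactly ``$\alpha_i \geq 2^{-(1-\epsilon/2)n}$'', and there are at most $2^{(1-\epsilon/2)n}$ such indices, so those indices carry mass bounded below by a constant $c_0 = \epsilon/(2-\epsilon) > 0$.

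Next I would turn this into an actual quantum Martin-L\"of test. For each level $m$ I want a q-$\Sigma^0_1$ set $S^m = (S^m_k)_k$ with $\tau(S^m)\leq 2^{-m}$ that $\rho$ fails at order $c_0/2$. Because $\rho$ is computable, the eigenvalues and (approximate) eigenvectors of $\rho_n$ are uniformly computable; I would choose $n = n(m)$ large among the infinitely many good $n$ so that $2^{(1-\epsilon/2)n} \leq 2^{n-m}$, i.e. $n \geq 2m/\epsilon$, and let $p_{n}$ be (a special-projection approximation to) the projection onto the span of the top $2^{(1-\epsilon/2)n}$ eigenvectors of $\rho_n$; then extend to higher levels by tensoring with $I$ and to lower levels by the zero projection, exactly as in the construction in the proof of Theorem~\ref{thm:0000} (the $(T_m)_m$ there). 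Here the approximate-eigenvector issue is handled just as elsewhere in the paper: eigenvectors need not be algebraic, but by density of $\mathbb{C}^{2^n}_{alg}$ and continuity one can replace them by algebraic vectors spanning a subspace of the same dimension capturing mass at least $c_0/2$; this is the same maneuver used at the end of the proof of Theorem~\ref{thm:000}. By construction $\mathrm{rank}(p_n)\leq 2^{n-m}$ so $\tau(S^m)\leq 2^{-m}$, making $(S^m)_m$ a q-MLT, and $\mathrm{Tr}(\rho_n p_n) \geq c_0/2$, so $\rho(S^m) \geq c_0/2$ for every $m$. Hence $\inf_m \rho(S^m) \geq c_0/2 > 0$ and $\rho$ is not q-MLR, a contradiction; therefore $\mathrm{liminf}_n H(\rho_n)/n \geq 1$. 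The reverse inequality $\mathrm{liminf}_n H(\rho_n)/n \leq 1$ is immediate since $H(\rho_n)\leq n$ always (the maximum Shannon entropy of a distribution on $2^n$ points), giving equality.

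I expect the main obstacle to be the effectivity bookkeeping: making precise that from a computable presentation of $\rho$ one can uniformly (in $m$) find $n(m)$, compute the eigendata of $\rho_n$ to sufficient precision, select the top $\lceil 2^{(1-\epsilon/2)n}\rceil$ eigenvectors, and produce genuine \emph{special} projections (algebraic entries) whose range still captures mass $\geq c_0/2$ — all while keeping the rank bound $\leq 2^{n-m}$ intact. The entropy-to-concentration inequality itself is a short Markov argument and should be stated as a small lemma; the only subtlety there is choosing the constants ($\epsilon/2$ in the exponent, $c_0 = \epsilon/(2-\epsilon)$ for the mass) so that the rank bound and the failure order both come out strictly positive for all large $n$. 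One should also note explicitly that we only need the hypothesis to hold for infinitely many $n$, which is exactly what lets us satisfy the q-MLT rank requirement $\tau(S^m)\leq 2^{-m}$ for every $m$ by choosing $n(m)$ large enough in that infinite set.
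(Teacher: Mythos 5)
Your proposal is correct, and its overall architecture is the same as the paper's: pass to the contrapositive, show that sub-maximal entropy forces a constant amount of eigenmass onto sub-exponentially many top eigenvectors infinitely often, and then use computability of $\rho$ to turn the corresponding low-rank eigenprojections into a q-MLT (padding with zero projections below level $n(m)$ and with $\otimes I$ above, exactly as in the paper's construction of $(G^m)_m$). Where you genuinely diverge is in the proof of the concentration lemma. The paper proves its Lemma (that some $\delta>0$ works for the top $\lceil 2^{n\epsilon}\rceil$ eigenvalues infinitely often) by contradiction: assuming no concentration, it builds an auxiliary ``flattened'' distribution $(r_i^n)$ dominating the eigenvalue distribution, lower-bounds its Shannon entropy through a chain of inequalities, and deduces $H(\rho)\geq(1-2\delta)\epsilon$ for every $\delta$, contradicting $H(\rho)<\epsilon$. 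You instead get the concentration directly by applying Markov's inequality to the nonnegative random variable $-\log\alpha_i$ under the distribution $(\alpha_i)$: since its mean is $H(\rho_n)\leq(1-\epsilon)n$, the indices with $\alpha_i\geq 2^{-(1-\epsilon/2)n}$ carry mass at least $\epsilon/(2-\epsilon)$, and there are at most $2^{(1-\epsilon/2)n}$ of them. This is shorter, gives explicit constants, and avoids the paper's case analysis on $\delta<0.5$ and the choice of the cutoff index $I$; the paper's version buys nothing extra here. Your handling of the remaining effectivity issues (searching for $n(m)\geq 2m/\epsilon$ via a c.e.\ condition, and replacing exact eigenvectors by algebraic vectors spanning a subspace of the same dimension that still captures mass, as in the end of the proof of Theorem \ref{thm:000}) is if anything more careful than the paper, which simply asserts the eigenprojections can be used directly. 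The closing observation that $H(\rho_n)\leq n$ gives the reverse inequality matches the paper's remark.
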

In fact, noting that for all $n$ $\dfrac{H(\rho_{n})}{n}\leq 1$ gives that limsup$_{n}\dfrac{H(\rho_{n})}{n}\leq 1$. So, the theorem implies that lim$_{n}\dfrac{H(\rho_{n})}{n} $ exists and is 1.
\\
\begin{proof}
Proof sketch: Suppose towards a contradiction that $H(\rho) < \epsilon<1$. This implies that (see Lemma \ref{lemma}) there is a $\delta >0$ such that for infinitely many $n$, there are $2^{n \epsilon }$ many eigenvalues  $\alpha_1 , ....\alpha_{2^{n \epsilon }}$ of $\rho_n$ with  $\sum_i \alpha_i > \delta$.
To prove this: one argues as follows: If no such $\delta$ existed, then one can bound  the entropies of the $\rho_n$s from below and show that $H(\rho) \geq \epsilon$. Then, the computability of $\rho$ allows us to build a test which $\rho$ fails at $\delta$.
 \\

Details:
Towards a contradiction, let $\rho=(\rho_{n})_{n}$ be q-MLR with $H(\rho)<\epsilon<1$. Let  \[\rho_{n} = \sum_{i\leq 2^n} \alpha^{n}_{i}|\psi^{n}_{i}\big> \big< \psi^{n}_{i}|
\]
where, $\alpha^{n}_{1} \geq \alpha^{n}_{2}\dots \geq\alpha_{2^{n}}^{n}$.

We begin with a lemma. 
\begin{lem}
\label{lemma}
For the above $\epsilon$, there is a $\delta>0$ such that $\exists^{\infty} n $ \[\sum_{i \leq \lceil2^{n\epsilon}\rceil}\alpha^{n}_{i} > \delta. \]
\end{lem}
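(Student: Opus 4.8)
The plan is to argue by contradiction. Assume that for every $\delta>0$ only finitely many $n$ satisfy $\sum_{i\le\lceil 2^{n\epsilon}\rceil}\alpha^n_i>\delta$. Setting $k_n:=\lceil 2^{n\epsilon}\rceil$ and $t_n:=\sum_{i\le k_n}\alpha^n_i$, this is precisely the statement $t_n\to 0$. I would then derive $\liminf_n H(\rho_n)/n\ge\epsilon$, contradicting the standing hypothesis $H(\rho)<\epsilon$. Note that this lemma needs no computability assumption whatsoever; it is purely a fact about the tail of a decreasing probability vector, and the computability of $\rho$ only enters later in the proof of Theorem \ref{thm:99} when the test is constructed.

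First I would record a few elementary observations about the sorted eigenvalues $\alpha^n_1\ge\alpha^n_2\ge\cdots\ge 0$ of $\rho_n$ (which sum to $1$): since $\alpha^n_1>0$ (otherwise all eigenvalues vanish, contradicting $\mathrm{Tr}(\rho_n)=1$) we have $t_n\ge\alpha^n_1>0$, and clearly $t_n\le 1$; by monotonicity $k_n\alpha^n_{k_n}\le\sum_{i\le k_n}\alpha^n_i=t_n$, hence $\alpha^n_i\le\alpha^n_{k_n}\le t_n/k_n$ for every $i\ge k_n$; and since $\epsilon<1$, we have $k_n<2^n$ for all large $n$, so the index set $\{i:i>k_n\}$ is nonempty.

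Then I would lower-bound the entropy by discarding the first $k_n$ terms, each summand $-\alpha\log\alpha$ being nonnegative on $[0,1]$:
\[
H(\rho_n)\ \ge\ -\sum_{i>k_n}\alpha^n_i\log\alpha^n_i\ \ge\ \Big(\sum_{i>k_n}\alpha^n_i\Big)\log\frac{k_n}{t_n}\ =\ (1-t_n)\big(\log k_n+\log(1/t_n)\big)\ \ge\ (1-t_n)\,n\epsilon ,
\]
where the middle step uses $\alpha^n_i\le t_n/k_n$ and monotonicity of $-\log$, and the last step uses $t_n\le 1$ together with $k_n\ge 2^{n\epsilon}$. Dividing by $n$ and letting $n\to\infty$, the fact that $t_n\to 0$ gives $\liminf_n H(\rho_n)/n\ge\epsilon$, which is the required contradiction.

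I do not expect a genuine obstacle here; the only care needed is with the conventions $0\log 0=0$, with the strict positivity of $t_n$ (so that $\log(1/t_n)$ is well defined and $\ge 0$), and with restricting attention to $n$ large enough that $k_n<2^n$. The conceptual heart of the argument is simply that if the top $2^{n\epsilon}$ eigenvalues carry vanishingly little mass, then the remaining near-unit mass is forced onto eigenvalues each of size at most $t_n 2^{-n\epsilon}$, and spreading mass $1-t_n$ over values that small already produces von Neumann entropy at least $(1-t_n)n\epsilon$.
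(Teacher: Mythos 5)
Your proof is correct. It proves the same statement by the same overall strategy as the paper (assume the negation, i.e.\ $t_n=\sum_{i\le\lceil 2^{n\epsilon}\rceil}\alpha^n_i\to 0$, and derive $\liminf_n H(\rho_n)/n\ge\epsilon$), but the execution is genuinely different and noticeably cleaner. The paper builds an auxiliary ``flattened'' vector $(r^n_i)$ that keeps the top $\lceil 2^{n\epsilon}\rceil$ eigenvalues, pads with copies of $\alpha^n_{\lceil 2^{n\epsilon}\rceil}$ up to a carefully chosen index $I$, normalizes by a constant $S$, and then compares $H(\rho_n)$ with the entropy of the resulting distribution; this forces the bookkeeping with $\delta<0.5$, the inequalities $1-\delta\le S\le 1$, and a final bound of the form $(1-2\delta)\bigl[\log(1-\delta)-\log\delta+n\epsilon\bigr]$, after which one still has to let $\delta\to 0$. (The paper's step asserting $H(\alpha)\ge H(p)$ from pointwise domination on the support is actually the least transparent part of that route.) You bypass all of this: discarding the nonnegative head terms and using $\alpha^n_i\le t_n/k_n$ for $i>k_n$ gives $H(\rho_n)\ge(1-t_n)\,n\epsilon$ in one line, and $t_n\to 0$ finishes it. Your version buys a shorter argument, no auxiliary distribution, no spurious restriction to $\delta<1/2$, and the sharp conclusion $\liminf_n H(\rho_n)/n\ge\epsilon$ directly; your side remarks (strict positivity of $t_n$, the convention $0\log 0=0$, and $k_n<2^n$ for large $n$) cover the only points where care is needed. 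You are also right that computability of $\rho$ plays no role in this lemma and only enters when the test is built afterwards.
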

The lemma says that a constant ($\delta$) amount of eigenvalue `mass' concentrates at the first $\lceil 2^{n \epsilon} \rceil$ many largest eigenvalues, infinitely often.
\begin{proof}

Suppose towards a contradiction that $\forall \delta \exists N_{\delta}$ such that,  \[ n>N_{\delta} \Rightarrow \sum_{i \leq \lceil2^{n\epsilon}\rceil}\alpha^{n}_{i} \leq \delta.\]
Fix a $\delta< 0.5$ and a $n>N_{\delta}$. For this $n$, define the sequence, $(r^{n}_{i})_{i\leq 2^{n}}$ as follows: 
For $i<\lceil2^{n\epsilon}\rceil$, let $r^{n}_{i} := \alpha^{n}_{i}$. For $\lceil2^{n\epsilon}\rceil \leq i < I$,($I \in \omega$ will be defined shortly), let $r^{n}_{i} := \alpha^{n}_{\lceil2^{n\epsilon}\rceil}$. For $I \leq i \leq 2^{n}$, let $r^{n}_{i} := 0$. Here, $I$ is picked to ensure that
\begin{align}
\label{eq:6}
     1-\alpha^{n}_{\lceil2^{n\epsilon}\rceil} \leq \sum_{i\leq I} r^{n}_{i} \leq 1 
\end{align}

So, $r_i=\alpha_{i}$ from $1 \leq i \leq \lceil2^{n\epsilon}\rceil$ and $r_i$ is the constant $\alpha^{n}_{\lceil2^{n\epsilon}\rceil}$ from $ \lceil2^{n\epsilon}\rceil \leq i <I$ where $I$ is the largest number so that the $r_{i}$s sum to less than 1.
Why does such an $I$ exist?  $\sum_{i<t} r^{n}_{i}$ increases by $\alpha^{n}_{\lceil2^{n\epsilon}\rceil}$ when $t$ increases by $1$, for $t>\lceil2^{n\epsilon}\rceil $. We can keep increasing $t$ and stop the first time $\sum_{i<t} r^{n}_{i}>1$. I.e., we find a $I$ such that $\sum_{i\leq I} r^{n}_{i}<1 \leq \sum_{i\leq I+1} r^{n}_{i}$. Since, $  \sum_{i\leq I+1} r^{n}_{i} - \sum_{i\leq I} r^{n}_{i}= \alpha^{n}_{\lceil2^{n\epsilon}\rceil}$, the inequality \ref{eq:6} holds. Let $S_{n} = \sum_{i} r^{n}_{i\leq 2^{n}} $ and we drop the subscript of $S$. 
\\
Let $p_{i}^{n}:= S^{-1}r_{i}^{n}$. So, $p=(p_{i}^{n})_{i\leq 2^{n}}$ is a probability distribution on $2^n$ and dominates $(r_{i})_i$. Let $H(p)$ be it's Shannon entropy, which we now bound from below.
 \[H(p) = -\sum_{i< \lceil2^{n\epsilon}\rceil}S^{-1}\alpha^{n}_{i}\text{log}(S^{-1}\alpha^{n}_{i}) - \sum_{ \lceil2^{n\epsilon}\rceil\leq i \leq I} S^{-1} \alpha^{n}_{\lceil2^{n\epsilon}\rceil}\text{log}(S^{-1}\alpha^{n}_{\lceil2^{n\epsilon}\rceil}).\]
 \\
$n>N_{\delta}$ implies that $\alpha^{n}_{1}<\delta<0.5$ and hence that for all $i$,
\begin{align}
\label{eq:7}
    \alpha^{n}_{i} \leq \alpha^{n}_{1}<\delta<0.5.
\end{align}
Note that $0.5<1-\delta$, since $\delta<0.5$. So, 
\begin{align}
\label{eq:8}
    0.5< 1-\delta < 1-\alpha^{n}_{\lceil2^{n\epsilon}\rceil} \leq S \leq 1
\end{align}
Putting \ref{eq:6}, \ref{eq:7} and \ref{eq:8} together, for all $i$,
\begin{align}
\label{eq:9}
    \alpha^{n}_{i} \leq \alpha^{n}_{1}<\delta<1-\delta < 1-\alpha^{n}_{\lceil2^{n\epsilon}\rceil} \leq S \leq 1
\end{align}

So, log$(S^{-1}\alpha^{n}_{i})\leq 0$ for all $i$. So, the  first sum in the expression for $H(p)$ is non-negative. This gives, 
 \[H(p) >  - \sum_{ \lceil2^{n\epsilon}\rceil\leq i \leq I} S^{-1} \alpha^{n}_{\lceil2^{n\epsilon}\rceil}\text{log}(S^{-1}\alpha^{n}_{\lceil2^{n\epsilon}\rceil}) = -(I-\lceil2^{n\epsilon}\rceil)S^{-1} \alpha^{n}_{\lceil2^{n\epsilon}\rceil}\text{log}(S^{-1}\alpha^{n}_{\lceil2^{n\epsilon}\rceil}) \]
 
 Again, since log$(S^{-1}\alpha^{n}_{\lceil2^{n\epsilon}\rceil})<0$, this sum is non-negative. So, we can ignore $S^{-1}\geq 1$ to get:
 
 \begin{align}
 \label{eq:11}
     H(p) >  -(I-\lceil2^{n\epsilon}\rceil) \alpha^{n}_{\lceil2^{n\epsilon}\rceil}\text{log}(S^{-1}\alpha^{n}_{\lceil2^{n\epsilon}\rceil}) 
 \end{align}
 
 Now, by choice of $n>N_{\delta}$, \[S= \sum_{i} r^{n}_{i} = \sum_{i<\lceil2^{n\epsilon}\rceil}\alpha^{n}_{i} + (I-\lceil2^{n\epsilon}\rceil)\alpha^{n}_{\lceil2^{n\epsilon}\rceil} \leq \delta + (I-\lceil2^{n\epsilon}\rceil)\alpha^{n}_{\lceil2^{n\epsilon}\rceil}\]
  \\
  So, 
  \[S-\delta \leq (I-\lceil2^{n\epsilon}\rceil)\alpha^{n}_{\lceil2^{n\epsilon}\rceil}\]
  
  By \ref{eq:9},
  \[1-\delta \leq 1- \alpha^{n}_{\lceil2^{n\epsilon}\rceil}  \leq S\]
This gives,
\[1-2\delta \leq (I-\lceil2^{n\epsilon}\rceil)\alpha^{n}_{\lceil2^{n\epsilon}\rceil}.\]
Since, $-\text{log}(S^{-1}\alpha^{n}_{\lceil2^{n\epsilon}\rceil})>0$, we can put  $1-2\delta$ in place of $ (I-\lceil2^{n\epsilon}\rceil)\alpha^{n}_{\lceil2^{n\epsilon}\rceil}$ in \ref{eq:11} to get:
\begin{align}
\label{eq:10}
    H(p) >  -(1-2\delta)\text{log}(S^{-1}\alpha^{n}_{\lceil2^{n\epsilon}\rceil}) 
\end{align}

Further, note that $\alpha^{n}_{\lceil2^{n\epsilon}\rceil} \leq \delta 2^{-n\epsilon}$. (If not, then, for all  $i \leq \lceil2^{n\epsilon}\rceil$, $\alpha^{n}_{i} \geq \alpha^{n}_{\lceil2^{n\epsilon}\rceil} > \delta 2^{-n\epsilon}$. This would give 
\[ \sum_{i \leq \lceil2^{n\epsilon}\rceil}\alpha^{n}_{i}>\lceil2^{-n\epsilon}\rceil2^{-n\epsilon} \delta \geq 2^{n\epsilon}2^{-n\epsilon} \delta=\delta\] contradicting the choice of $n>N_{\delta}$). So, taking log on both sides:
\[\text{log} (\alpha^{n}_{\lceil2^{n\epsilon}\rceil}) \leq \text{log} (\delta) + \text{log} (2^{-n\epsilon})\]
So, 
\begin{align}
\label{eq:12}
    -\text{log} (\alpha^{n}_{\lceil2^{n\epsilon}\rceil}) \geq -\text{log} (\delta) +  n\epsilon.
\end{align}

 Using

\[-\text{log} (S^{-1}\alpha^{n}_{\lceil2^{n\epsilon}\rceil}) = \text{log}(S)-\text{log} (\alpha^{n}_{\lceil2^{n\epsilon}\rceil}),\]

we can write \ref{eq:10} as

\begin{align}
\label{eq:13}
    H(p) >   (1-2\delta)[\text{log}(S)-\text{log} (\alpha^{n}_{\lceil2^{n\epsilon}\rceil})]
\end{align}
$\alpha=(\alpha^{n}_{i})_{i}$ and $p=(p^{n}_{i})_{i}$ are both distributions on $2^n$ with the latter dominating the former on supp($p$) (Since $S\leq 1$, we have that  $p_{i}>0 \Rightarrow p_{i} = S^{-1}r_{i}\geq r_{i} \geq \alpha_{i}$). By the definition of being a distribution, there is $i$ such that
$p_{i}=0$ and $\alpha_{i}>0$. So, $H(\rho_n) = H(\alpha) \geq H(p)$. So, by \ref{eq:13}, we have,
\begin{align}
    H(\rho_{n}) >   (1-2\delta)[\text{log}(S)-\text{log}(\alpha^{n}_{\lceil2^{n\epsilon}\rceil})]
\end{align}
Recalling that $1-\delta<S\leq 1$ we have that log$(S)>$log$(1-\delta)$. By this and \ref{eq:12}, 
\begin{align}
    H(\rho_{n}) >   (1-2\delta)[\text{log}(1-\delta)-\text{log}(\alpha^{n}_{\lceil2^{n\epsilon}\rceil})]\geq (1-2\delta)[\text{log}(1-\delta)-\text{log}(\delta) +  n\epsilon]
\end{align}
So, 
\[\dfrac{H(\rho_{n})}{n} > (1-2\delta)[\dfrac{ \text{log}(1-\delta)-\text{log}(\delta) +  n\epsilon}{n}]\]
But this holds for any $n>N_{\delta}$ and so we have this inequality holding for each $n$ in a sequence.
So, using that liminf($x_n + y_n) \geq $ liminf($x_n$) + liminf ($y_n)$  we get:
\[ H(\rho) = \text{liminf}_{n}\dfrac{H(\rho_{n})}{n} > (1-2\delta)[\text{liminf}_{n}\dfrac{\text{log}(1-\delta)-\text{log} (\delta)}{n} + \epsilon]=(1-2\delta)\epsilon\]
Recall that $\delta$ was arbitrary and so we have that for all $\delta$, $H(\rho)>(1-2\delta)\epsilon$. By assumption, $H(\rho)<\epsilon$ and so we can find a $\delta_0$ such that $H(\rho)<(1-2\delta_{0})\epsilon < \epsilon$. Contradiction. 
\end{proof}

Now, to get a contradiction, we build a q-MLT capturing $\rho$.
Let $\delta$ be as in the lemma and let $\delta$ be rational, without loss of generality.
Fix a $m$; we describe the construction of $G^{m}= (G^{m}_{n})_{n}$.
\\
Find an $n$ such that both of the following hold:
\begin{itemize}
    \item $\sum_{i \leq \lceil2^{n\epsilon}\rceil}\alpha^{n}_{i} > \delta.$
    Infinitely many such $n$ exist by the lemma
    
    \item $\dfrac{2^{n\epsilon}+1}{2^n}<2^{-m}$. This holds for almost every $n$ since $\epsilon<1$
    
\end{itemize}

Recall that $\rho$ is computable and so the $n$ can be found computably, uniformly in $m$.
Set
\[ G^{m}_{n} := \sum_{i\leq 2^{n\epsilon}} |\psi^{i}_{n}\big> \big< \psi^{i}_{n}|. \] 
For $k<n$, $G^{m}_{k} = \emptyset $ and for $k>n$, $G^{m}_{k}=G^{m}_{n} \otimes I^{k-n}$ where $I$ is the 2-by-2 identity. $2^{-n}Tr(G^{m}_{n})<2^{-m}$ by the second condition on $n$. Hence, $(G^{m})_m$ is a q-MLT and $Tr(\rho_{n}G^{m}_{n})>\delta$ by the first condition on $n$. So, $\rho$ fails this test at $\delta$.
\end{proof}

\section{ A Levin-Schnorr type condition on entropy implies q-MLR.}
Theorem \ref{thm:9} shows that a Levin-Schnorr type condition on the von-Neumann entropy implies q-MLR.
\begin{thm}
\label{thm:9}
Let $\rho$ be any (not necessarily computable) state such that for some $c \in \omega$, $H(\rho_{n})>n-c$ for almost every $n$. Then, $\rho$ is q-MLR. In fact, a weaker condition suffices and we get a stronger form of the result: Let $\rho$ be any (not necessarily computable) state such that for some $c \in \omega$, $H(\rho_{n})>n-c$ for infinitely many $n$. Then, $\rho$ is q-MLR.
\end{thm}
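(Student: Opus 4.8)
The plan is to prove the contrapositive in the form: if $\rho$ fails some q-MLT at some order $\delta>0$, then $H(\rho_n)\leq n-c$ for all but finitely many $n$, and moreover $\liminf_n (n - H(\rho_n)) = \infty$, which rules out the weaker hypothesis ($H(\rho_n)>n-c$ for infinitely many $n$) as well. So suppose $(G^m)_{m\in\mathbb{N}}$ is a q-MLT which $\rho$ fails at order $\delta$: for every $m$ there is $n=n_m$ with $\mathrm{Tr}(\rho_{n_m}G^m_{n_m})>\delta$. Since $\tau(G^m)\leq 2^{-m}$, the rank of $G^m_{n_m}$ is at most $2^{n_m - m}$. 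The key observation is that a density matrix $\sigma$ on $\mathbb{C}^{2^n}$ which puts mass more than $\delta$ on a subspace of dimension $r$ cannot have von Neumann entropy too close to $n$: concentrating $\delta$ of the eigenmass on $r$ eigenvectors forces the entropy down. First I would make this quantitative.

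The quantitative step: let $\sigma$ be a density matrix on $\mathbb{C}^{2^n}$ with eigenvalues $\alpha_1\geq\alpha_2\geq\cdots\geq\alpha_{2^n}$, and let $P$ be a projection of rank $r$ with $\mathrm{Tr}(\sigma P)>\delta$. Then $\sum_{i\leq r}\alpha_i\geq \mathrm{Tr}(\sigma P)>\delta$ (the top-$r$ eigenmass dominates the mass on any $r$-dimensional subspace — a standard Ky Fan / Rayleigh argument, provable by the same maximality trick used in Lemma \ref{lem:30} and Theorem \ref{thm:LinA}). Write $p = \sum_{i\leq r}\alpha_i \in (\delta,1]$. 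Split the Shannon entropy $H(\sigma) = -\sum_{i\leq r}\alpha_i\log\alpha_i - \sum_{i>r}\alpha_i\log\alpha_i$. By the grouping (recursivity) identity for Shannon entropy, $H(\sigma) = H_2(p) + p\, H\big((\alpha_i/p)_{i\leq r}\big) + (1-p)\,H\big((\alpha_i/(1-p))_{i>r}\big)$, where $H_2$ is the binary entropy. The first block has at most $r$ outcomes, so its entropy is $\leq \log r$; the second block has at most $2^n - r < 2^n$ outcomes, so its entropy is $\leq \log(2^n) = n$. Hence
\begin{align*}
H(\sigma) &\leq 1 + p\log r + (1-p)n \\
&= n - p(n - \log r) + 1 \\
&\leq n - \delta(n-\log r) + 1,
\end{align*}
using $n-\log r \geq 0$ and $p>\delta$ in the last line.

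Now apply this with $\sigma = \rho_{n_m}$, $P = G^m_{n_m}$, $r = \mathrm{rank}(G^m_{n_m}) \leq 2^{n_m - m}$, so $\log r \leq n_m - m$ and $n_m - \log r \geq m$. This yields $H(\rho_{n_m}) \leq n_m - \delta m + 1$ for every $m$. Since the $n_m$ may be taken strictly increasing in $m$ (by passing to the least witness $\geq n_{m-1}+1$, which exists because $\rho(G^m)>\delta$), we obtain infinitely many $n$ for which $n - H(\rho_n) \geq \delta m - 1 \to \infty$. In particular $n - H(\rho_n)$ is unbounded above along a subsequence, so there is no constant $c$ with $H(\rho_n)>n-c$ for infinitely many $n$, contradicting even the weaker hypothesis. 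This proves both statements at once. The main obstacle is the quantitative entropy bound — specifically getting the clean inequality $\sum_{i\leq r}\alpha_i \geq \mathrm{Tr}(\sigma P)$ and then bookkeeping the entropy decomposition so the $\log r$ term appears with the coefficient $p$; but both are routine once the grouping identity for Shannon entropy is invoked, and the rank bound $2^{n-m}$ from $\tau(G^m)\leq 2^{-m}$ does the rest.
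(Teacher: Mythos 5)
Your quantitative entropy bound is correct and in fact cleaner than the paper's: you use the grouping identity for Shannon entropy to get $H(\rho_n)\leq 1+p\log r+(1-p)n\leq n-\delta(n-\log r)+1$ directly, whereas the paper constructs an auxiliary two-block distribution that is "more uniform" than the eigenvalue distribution and invokes a majorization exercise from Cover--Thomas to compare entropies; both routes rest on the same Ky Fan step $\sum_{i\leq r}\alpha_i\geq \mathrm{Tr}(\sigma P)$, which the paper derives from the singular value decomposition. However, your final deduction has a genuine logical gap. You establish $H(\rho_{n_m})\leq n_m-\delta m+1$ only along the particular subsequence $(n_m)_m$ of witnesses, and then conclude that no constant $c$ can satisfy $H(\rho_n)>n-c$ for infinitely many $n$. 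That does not follow: the infinitely many $n$ in the hypothesis need not meet the set $\{n_m : m\in\mathbb{N}\}$ at all, so unboundedness of $n-H(\rho_n)$ along your subsequence is compatible with $H(\rho_n)>n-1$ holding on a disjoint infinite set.

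The repair is short but必须 be stated: for a q-$\Sigma^0_1$ set $G^m=(G^m_n)_n$ one has $\mathrm{range}(G^m_n\otimes I)\subseteq\mathrm{range}(G^m_{n+1})$, and together with the coherence condition $PT_{\mathbb{C}^2}(\rho_{n+1})=\rho_n$ this gives
\begin{align*}
\mathrm{Tr}(\rho_{n+1}G^m_{n+1})\;\geq\;\mathrm{Tr}\bigl(\rho_{n+1}(G^m_n\otimes I)\bigr)\;=\;\mathrm{Tr}(\rho_n G^m_n),
\end{align*}
so $\mathrm{Tr}(\rho_n G^m_n)$ is non-decreasing in $n$. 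Hence once $\mathrm{Tr}(\rho_{n_m}G^m_{n_m})>\delta$, the inequality $\mathrm{Tr}(\rho_n G^m_n)>\delta$ holds for \emph{all} $n\geq n_m$, and your entropy bound gives $H(\rho_n)\leq n-\delta m+1$ for almost every $n$ (for each fixed $m$). This is exactly the quantifier order the paper uses ("for all $m$, for almost every $n$"). Now, given any $c$, choose $m$ with $\delta m-1>c$; then $H(\rho_n)>n-c$ can hold for only finitely many $n$, which refutes even the weaker hypothesis. With that one line inserted, your argument is complete and proves both forms of the theorem.
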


Before proving this, recall the following consequence of the singular value decomposition (SVD).
\begin{thm}
\label{thm:10}
Let $A$ be a $n$ by $d$ matrix with singular vectors $v_1, v_2, \dots v_r$ and corresponding singular values $\sigma_1 \geq \sigma_2, \dots \geq \sigma_r$. Let $k \leq r$ and $w_1,w_2, \dots, w_k$ be any orthonormal set. Then, 
\[ \sum_{i \leq k} |Av_{i}|^{2} \geq \sum_{i \leq k} |Aw_{i}|^{2}
\]
\end{thm}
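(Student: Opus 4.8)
The plan is to recognize this statement as the Ky Fan maximum principle and prove it directly from the spectral theorem applied to the positive semidefinite matrix $M := A^{*}A$. The first observation that drives everything is that $|Av|^{2} = \langle Av, Av\rangle = \langle v, A^{*}A\, v\rangle = \langle v, Mv\rangle$, so the entire claim is really a statement about the Hermitian form $\langle \cdot, M\cdot\rangle$ and does not see $A$ directly. The right singular vectors $v_{1},\dots,v_{r}$ are precisely an orthonormal system of eigenvectors of $M$ with $Mv_{i}=\sigma_{i}^{2}v_{i}$, so setting $\lambda_{i}:=\sigma_{i}^{2}$ (ordered $\lambda_{1}\geq\lambda_{2}\geq\cdots$) we have $|Av_{i}|^{2}=\lambda_{i}$ and hence the left-hand side equals $\sum_{i\leq k}\lambda_{i}$, the sum of the $k$ largest eigenvalues of $M$. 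First I would extend $v_{1},\dots,v_{r}$ to a \emph{complete} orthonormal eigenbasis $v_{1},\dots,v_{d}$ of $M$ on $\mathbb{C}^{d}$, appending eigenvectors for the eigenvalue $0$; this is legitimate since $M$ is Hermitian and hence unitarily diagonalizable.

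Next I would analyze the right-hand side by expanding each $w_{j}$ in this eigenbasis: write $w_{j}=\sum_{\ell\leq d}c_{j\ell}v_{\ell}$. Since the $v_{\ell}$ are orthonormal eigenvectors, $\langle w_{j}, Mw_{j}\rangle=\sum_{\ell\leq d}|c_{j\ell}|^{2}\lambda_{\ell}$, and summing over $j\leq k$ gives
\[
\sum_{j\leq k}|Aw_{j}|^{2}=\sum_{j\leq k}\langle w_{j},Mw_{j}\rangle=\sum_{\ell\leq d}\lambda_{\ell}\,p_{\ell},\qquad p_{\ell}:=\sum_{j\leq k}|c_{j\ell}|^{2}.
\]
Thus the problem reduces to bounding $\sum_{\ell}\lambda_{\ell}p_{\ell}$ subject to the constraints the $p_{\ell}$ must satisfy. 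The orthonormality of $\{w_{1},\dots,w_{k}\}$ gives $\sum_{\ell}p_{\ell}=\sum_{j\leq k}\|w_{j}\|^{2}=k$. The key constraint is the upper bound $p_{\ell}\leq 1$ for every $\ell$: I would obtain this by extending $\{w_{1},\dots,w_{k}\}$ to a full orthonormal basis $\{w_{1},\dots,w_{d}\}$ of $\mathbb{C}^{d}$, so that the change-of-basis matrix $(c_{j\ell})_{j,\ell\leq d}$ is unitary; each column then has unit norm, i.e. $\sum_{j\leq d}|c_{j\ell}|^{2}=1$, whence $p_{\ell}=\sum_{j\leq k}|c_{j\ell}|^{2}\leq 1$.

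Finally I would finish with an elementary rearrangement (linear-programming) argument: among all real vectors $(p_{\ell})$ with $0\leq p_{\ell}\leq 1$ and $\sum_{\ell}p_{\ell}=k$, the quantity $\sum_{\ell}\lambda_{\ell}p_{\ell}$ with $\lambda_{1}\geq\lambda_{2}\geq\cdots\geq 0$ is maximized by loading as much weight as possible onto the largest eigenvalues, i.e. by taking $p_{1}=\cdots=p_{k}=1$ and $p_{\ell}=0$ for $\ell>k$, for which the value is exactly $\sum_{i\leq k}\lambda_{i}$. Concretely, if some $p_{\ell}<1$ for $\ell\leq k$ while some $p_{m}>0$ for $m>k$, shifting mass from index $m$ up to index $\ell$ does not decrease the objective since $\lambda_{\ell}\geq\lambda_{m}$; iterating yields the claimed maximizer. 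Comparing with the left-hand side value $\sum_{i\leq k}\lambda_{i}$ established above gives $\sum_{i\leq k}|Av_{i}|^{2}\geq\sum_{i\leq k}|Aw_{i}|^{2}$, as desired.

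The only genuinely substantive step is securing the bound $p_{\ell}\leq 1$, since without it the optimization would be unbounded and false; the extension-to-a-unitary trick is what makes it go through, and the remaining rearrangement and trace-form manipulations are routine. I expect that extension argument to be the main thing requiring care, everything else being bookkeeping with the spectral decomposition.
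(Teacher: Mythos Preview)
Your proof is correct: the reduction to $M=A^{*}A$, the expansion in the eigenbasis, the constraints $0\leq p_{\ell}\leq 1$ and $\sum_{\ell}p_{\ell}=k$, and the rearrangement bound together give exactly the Ky Fan inequality you need. The paper, however, does not prove this theorem at all; it simply introduces it with ``recall the following consequence of the singular value decomposition (SVD)'' and uses it as a black box to derive inequality~(\ref{eq:15}). So your write-up supplies more detail than the paper itself, which treats the result as standard linear algebra background.
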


For $\rho=\rho_{n}$ as above with eigenvalues (equal to the singular values since $\rho$ is symmetric and positive semidefinite) $\alpha_{1} \geq \alpha_{2}\dots \geq\alpha_{2^{n}}$, let \[ \sqrt \rho  = \sum_{i\leq 2^n} \sqrt \alpha_{i} |\psi^{i}\big> \big< \psi^{i}| \] 
Note that $\sqrt \rho$ has eigenpairs ($\psi_{i},\sqrt \alpha_{i}$) and is PSD and symmetric. So, it's eigenpairs are the same as it's singular vector-singular value pairs.
If $k \leq 2^n$ and $w_1,w_2, \dots, w_k$ is any orthonormal set, then since the first k singular vectors of $\sqrt \rho$ are $\psi_1, \dots, \psi_k$, theorem \ref{thm:10} gives:

\begin{align}
\label{eq:14}
    \sum_{i \leq k} |\sqrt \rho \psi_{i}|^{2} \geq \sum_{i \leq k} |\sqrt \rho w_{i}|^{2}
\end{align}
 
Let $P$ be the orthogonal projection onto the subspace spanned by $w_1,w_2, \dots, w_k$. Since $\sqrt \rho$ is self-adjoint, $\big<x|\sqrt \rho y \big> = \big< (\sqrt \rho)^{*} x| y \big> = \big< \sqrt \rho  x| y \big>$,

\[ Tr( \rho P)= \sum_{i \leq k}   Tr(  \rho |w_{i}\big>\big<w_{i}|) = \sum_{i \leq k}   \big<w_{i}|  \rho |w_{i}\big> = \sum_{i \leq k}   \big<\sqrt \rho w_{i}|  \sqrt \rho w_{i}\big> = \sum_{i \leq k} |\sqrt \rho w_{i}|^{2}
\]
Similarly, if $G$ is the orthogonal projection onto the subspace spanned by $\psi_1, \dots, \psi_k$, then \[Tr(\rho G) = \sum_{i \leq k} |\sqrt \rho \psi_{i}|^{2}
\] So, by \ref{eq:14}, for any orthonormal projection, $G$, of rank=k, if $P$ is the orthogonal projection onto the subspace spanned by the first k singular vectors of $\rho$, $\psi_1, \dots, \psi_k$, then
\begin{align}
\label{eq:15}
     Tr( \rho G) \leq Tr( \rho P) = \sum_{i \leq k}   Tr(  \rho |\psi_{i}\big>\big<\psi_{i}|)= \sum_{i \leq k}   \alpha_{i}
\end{align}
Now we can prove Theorem \ref{thm:9}.
\begin{proof}

Proof sketch: Suppose for a contradiction that $\rho$ fails a q-MLT , $(G_m )_m$ at $\delta$. Fix an $m$. As $\rho$ fails $(G_m )_m$ at $\delta$ we have that for a.e. $n$, the sum, $S_n$ of the first $2^{n-m}$ many largest eigenvalues of $\rho_n$ exceeds $\delta$. This implies that for a.e $n$, $H(\rho_n ) \leq  1- mS_n +n < 1- m\delta +n$ (to get this bound, we consider a distribution more `uniform' than that induced by $\rho_n$ and use its entropy to bound $H(\rho_n ))$. Noting that $m$ was arbitrary, we get a contradiction.
\\
Proof details:
For a contradiction, let $\rho$ satisfy the condition but not be q-MLR. Fix a q-MLT, $(G^m)_m$ and a $\delta>0$ such that $\rho(G^m)>\delta$ for all $m$. I.e, $\forall m $ for almost every $n$, we have that Tr$(\rho_{n}G^{m}_{n})>\delta$. Note that since rank$(G^{m}_{n})\leq 2^{n-m}$, $G^{m}_{n}$ is a orthogonal projection onto a subspace spanned by atmost $2^{n-m}$ orthonormal vectors. So, by \ref{eq:15}, we have that for all $m$ for a.e $n$,
\begin{align}
\label{eq:16}
    \delta<Tr(\rho_{n} G^{m}_{n})\leq \sum_{i \leq 2^{n-m}}   \alpha^{n}_{i}
\end{align} 
For a fixed $m$ take a $N_{m}$ such that for all $n>N_{m}$, \ref{eq:16} holds. For an $n>N_{m}$, let \[S_{m,n} = \sum_{i \leq 2^{n-m}}   \alpha^{n}_{i}.\] Let $(r_{n}^{m}(i))_{i\leq 2^n}$ be the distribution on $1,2,....2^n$:\\
 $r(i)= S_{m,n} 2^{m-n}$ if $i\leq 2^{n-m}$\\
 $r(i)= (1-S_{m,n})/2^{n}(1-2^{m})$ if $ 2^{n-m}<i\leq 2^n$.
 So, $(r_{n}^{m}(i))_{i\leq 2^n}$ is uniform on the 2 blocks $\{1,...,2^{n-m}\}$ and $\{ 2^{n-m}+1,....,2^{n}\}$. It's mass on the first block is $S_{m,n}$ and that on the second is $1-S_{m,n}$. The distribution of the total mass between the 2 blocks is same for both $(r_{n}^{m}(i))_{i\leq 2^n}$ and $(\alpha_{n}(i))_{i\leq 2^n}$. Within each block, $(r_{n}^{m}(i))_{i\leq 2^n}$ is more uniform than $(\alpha_{n}(i))_{i\leq 2^n}$. I.e, $(r_{n}^{m}(i))_{i\leq 2^n}$ is obtained by first considering $(\alpha_{n}(i))_{i\leq 2^n}$ on each block and then by uniformly distributing the mass within each block. So, by exercise 2.18 on page 50 of the book by Thomas and Cover \cite{Cover2006}, we have that-
 \begin{align}
 \label{eq:17}
     H(\rho_n)\leq H((r_{n}^{m}(i))_{i\leq 2^n})
 \end{align}
 Recall that this holds for $\forall m$ and $\forall n >N_{m}$. Let, $(r_{n}^{m}(i))_{i\leq 2^{n}} $ be denoted by $r^{m}_{n}$.
 We now bound $H(r^{m}_{n})$ from above to get a contradiction.
 
 \[H(r^{m}_{n})=-[ 2^{n-m}(S_{m,n} 2^{m-n}\text{log}(S_{m,n} 2^{m-n}))+(2^{n}-2^{n-m})(1-S_{m,n})/2^{n}(1-2^{m})\text{log}((1-S_{m,n})/2^{n}(1-2^{m}))]\]
 
 This simplifies to: 
 \[H(r^{m}_{n})=h(S_{m,n}) - mS_{m,n} + n +(1-S_{m,n})\text{log} (1-2^{-m})\]
 where $ h(S_{m,n}) = -S_{m,n} \text{log}(S_{m,n})-(1-S_{m,n})\text{log} (1-S_{m,n})$ is a positive real number between 0 and 1 and so we can replace it by 1 to get an upper bound.
$\text{log}(1-2^{-m})<0$ is negative, $(1-S_{m,n})>0$ and so $(1-S_{m,n})\text{log} (1-2^{-m})<0$. So, we can drop it to get an upper bound, 
\begin{align}\label{eq:18}
    H(r^{m}_{n}) \leq 1 - m S_{m,n} + n.
\end{align}
By \ref{eq:17} and \ref{eq:18}, we get that for all $m$ for all $n>N_{m}$, 
\begin{align}
\label{eq:19}
    H(\rho_{n}) \leq H(r^{m}_{n}) \leq 1 - mS_{m,n} + n
\end{align}
By assumption fix a $c$ such that $\forall n$,  $H(\rho_n) \geq n- c $. In fact, all we need is that

\begin{align}
\label{eq:220}
    \exists^{\infty}n ,   H(\rho_n) \geq n- c
\end{align}. 

Now, for all $m$, \ref{eq:19} holds for almost every $n$ and \ref{eq:220} holds for infinitely many $n$. So, for all $m$, \ref{eq:19} and \ref{eq:220} hold for infinitely many $n$. I.e, \[\forall m \exists^{\infty} n,\text{ such that } n-c \leq  H(\rho_{n})\leq 1 - mS_{m,n} + n\]
So, 
\[\forall m \exists^{\infty} n,\text{ such that }  -c \leq    1 - mS_{m,n}.  \]
So,
\[\forall m \exists^{\infty} n,\text{ such that }   c \geq    -1 + mS_{m,n}.  \]
Noting that $S_{m,n}>\delta$ for $n>N_{m}$, we get that
\[\forall m \exists^{\infty} n,\text{ such that }   c \geq    -1 + mS_{m,n} \geq -1 + m \delta .  \]
So,
\[\forall m,    c +1 \geq   + m \delta . \]
  Contradiction.
\end{proof}

\begin{remark}
Any state differing from the tracial state, $\rho$ ($\rho= (\rho_n )_n $ where $\rho_n = 2^{-n} I_n$ and $I_n$ is the $2^n$ by $2^n$ identity), at only finitely many qubits clearly satisfies the hypothesis of \ref{thm:9}. We construct another one: Let $f$ be any function on $(0,1)$ satisfying

\[\int_{0}^{1}f(s)ds =1 \text{ and } -\infty < -\int_{0}^{1}f(s)\text{log}(f(s))ds < \infty.\]

For example, let\[f(x)= \dfrac{2}{x(1-ln x)^{3}},\] on $(0,1)$ where $ln$ stands for the natural logarithm.
Define a diagonal state $\rho = (\rho_{n})_{n}$ as follows. Fix $n$. For all $\sigma \in 2^n$, let
\[\alpha_{\sigma} = \int_{[\sigma]} f(s)ds\]
Here, $[\sigma] \subset (0,1]$ is the open interval defined by the string $\sigma$. \[\rho_{n}= \sum_{\sigma \in 2^{n}} \alpha_{\sigma} |\sigma\big>\big<\sigma|.\]
$\rho$ is a coherent state since $\alpha_{\sigma}= \alpha_{\sigma 1} + \alpha_{\sigma 0}$ by definition and since $\int_{0}^{1}f(s)ds =1 $.  Now we show that $\rho$ satisfies the hypothesis of \ref{thm:9}. For any $n$, by definition of the $\alpha$s, we have, \[H(\rho_{n}) = -\sum_{\sigma \in 2^{n}}\int_{[\sigma]} f(s)ds \text{log}(\int_{[\sigma]} f(s)ds)\]
By the mean-value theorem and continuity of $f$, for all $\sigma$ there is a $x_{\sigma}\in [\sigma]$ such that
\[\int_{[\sigma]} f(s)ds = 2^{-n}f(x_{\sigma})\]
So,
\[H(\rho_{n}) = -\sum_{\sigma \in 2^{n}}2^{-n}f(x_{\sigma}) \text{log}(2^{-n}f(x_{\sigma})) \]
\[=-\sum_{\sigma \in 2^{n}}2^{-n}f(x_{\sigma}) (-n+\text{log}(f(x_{\sigma}))) \]

\[=-\sum_{\sigma \in 2^{n}}2^{-n}f(x_{\sigma}) \text{log}(f(x_{\sigma})) + n \sum_{\sigma \in 2^{n}}2^{-n}f(x_{\sigma}) \]

\[=-\sum_{\sigma \in 2^{n}}2^{-n}f(x_{\sigma}) \text{log}(f(x_{\sigma})) + n \sum_{\sigma \in 2^{n}}\int_{[\sigma]} f(s)ds \]

\[=\text{Riemann Sum}[-f(.)\text{log}(f(.)), \text{Mesh Size}= 2^{-n}] + n \int_{0}^{1} f(s)ds \]
So, as the last integral is equal to 1,
\[H(\rho_n)-n = \text{Riemann Sum}[-f(.)\text{log}(f(.)), \text{Mesh Size}= 2^{-n}]\]

But, \[-\int_{0}^{1}f(s)\text{log}(f(s))ds = c,\]

for some constant $c$. Since the Reimann Sum converges to $c$, we have that $H(\rho_n)-n = O(1)$ as required.

\end{remark}
 
We now give an example to show that Theorem \ref{thm:9} cannot be reversed. The construction will be along the same lines as the preceding remark.
Let $f$ be any function on $(0,1)$ satisfying:
\[\int_{0}^{1}f(s)ds =1 \text{ and } -\int_{0}^{1}f(s)\text{log}(f(s))ds = - \infty.\]
For example, let \[f(x)= \dfrac{1}{x(1-\text{ln} x)^{2}}\] on $(0,1)$.
Define a diagonal state $\rho = (\rho_{n})_{n}$ as follows. Note that \[\int_{0}^{1}f(s)ds =1\]. Fix $n$. For all $\sigma \in 2^n$, let
\[\alpha_{\sigma} = \int_{[\sigma]} f(s)ds\]
Here, $[\sigma] \subset (0,1]$ is the open interval defined by the string $\sigma$. \[\rho_{n}= \sum_{\sigma \in 2^{n}} \alpha_{\sigma} |\sigma\big>\big<\sigma|.\]
$\rho$ is coherent since $\alpha_{\sigma}= \alpha_{\sigma 1} + \alpha_{\sigma 0}$ by definition. \begin{lem}
$\rho$ is q-MLR.
\end{lem}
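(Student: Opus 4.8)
The plan is to show that $\rho$ is q-MLR by establishing that the von Neumann entropy of its initial segments grows close to the maximum, and then invoking the Levin--Schnorr type criterion of Theorem~\ref{thm:9}. Concretely, I will show that $H(\rho_n) = n - O(1)$, which by the (stronger form of) Theorem~\ref{thm:9} immediately yields q-MLR. Since $\rho$ is diagonal with eigenvalues $\alpha_\sigma = \int_{[\sigma]} f(s)\,ds$ for $\sigma \in 2^n$, we have $H(\rho_n) = -\sum_{\sigma \in 2^n} \alpha_\sigma \log \alpha_\sigma$, so the entire problem reduces to estimating this sum.

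The key steps, in order: First, apply the mean value theorem (using continuity of $f$ on the relevant subintervals) to write $\alpha_\sigma = 2^{-n} f(x_\sigma)$ for some $x_\sigma \in [\sigma]$, exactly as in the preceding remark. Substituting, $H(\rho_n) = -\sum_\sigma 2^{-n} f(x_\sigma)(\log f(x_\sigma) - n) = n - \sum_\sigma 2^{-n} f(x_\sigma)\log f(x_\sigma)$, using $\sum_\sigma 2^{-n} f(x_\sigma) = \sum_\sigma \int_{[\sigma]} f\,ds = 1$. Second, I need to control the term $R_n := \sum_\sigma 2^{-n} f(x_\sigma)\log f(x_\sigma)$. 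This is a Riemann-type sum for $\int_0^1 f(s)\log f(s)\,ds$, which equals $+\infty$ for our choice of $f$ (since $-\int_0^1 f\log f = -\infty$). So unlike the remark, $R_n$ does \emph{not} converge; instead $R_n \to +\infty$, which is what would \emph{prevent} the hypothesis of Theorem~\ref{thm:9} — wait, no: $H(\rho_n) = n - R_n$ with $R_n \to +\infty$ would give $H(\rho_n) - n \to -\infty$, so the Levin--Schnorr condition $H(\rho_n) > n - c$ \emph{fails}. Hence the direct route via Theorem~\ref{thm:9} is unavailable, and the lemma must be proved by a direct test-capturing argument instead.

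So the actual plan is: suppose toward a contradiction that $\rho$ fails some q-MLT $(G^m)_m$ at level $\delta > 0$. Since $\rho$ is diagonal, Theorem~\ref{thm:30} applies: $\mu_\rho$ is then not Martin-L\"of absolutely continuous, i.e.\ there is a classical MLT $(C^m)_m$ with $\inf_m \mu_\rho(C^m) > \delta/2$. Now $\mu_\rho$ is precisely the measure on $2^\omega$ with $\mu_\rho(\llbracket\sigma\rrbracket) = \int_{[\sigma]} f(s)\,ds$, which is the pushforward of Lebesgue measure on $(0,1)$ under binary expansion; since $f \in L^1$ with $\int f = 1$, this measure is absolutely continuous with respect to Lebesgue measure $\lambda$ on $2^\omega$. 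The main step is then to verify that $\mu_\rho$ is \emph{computably} absolutely continuous in a sense strong enough that $\mu_\rho$-nullity of a $\Sigma^0_1$-approximable set implies $\lambda$-nullity: because $f$ is a fixed computable function, $\mu_\rho(C^m) \le \sup_{[0,1]}(\text{local density}) \cdot \lambda(C^m)$ fails when $f$ is unbounded, so one must instead argue via a computable Radon--Nikodym / Levin--Schnorr argument — use that $\mu_\rho$ is a computable measure and that $X \in \mathrm{MLR}(\mu_\rho) \Rightarrow X \in \mathrm{MLR}(\lambda)$ whenever the density $f$ is lower-semicomputable and $\lambda$-a.e.\ positive and finite, exactly mirroring Lemma~\ref{lem:24}. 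Then $\inf_m \mu_\rho(C^m) > \delta/2$ contradicts $\mu_\rho(\mathrm{MLR}(\lambda)) \ge \mu_\rho(\mathrm{MLR}(\mu_\rho)) = 1$ together with $(C^m)_m$ being an MLT capturing $\mu_\rho$-randoms.

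The hard part will be the absolute-continuity transfer: showing that $\mu_\rho$-random reals are $\lambda$-random (equivalently, that $\mu_\rho$ assigns full measure to $\mathrm{MLR}(\lambda)$). For the specific $f(x) = 1/(x(1-\ln x)^2)$ this is clean because $f$ is computable, lower-semicomputable, everywhere positive and finite on $(0,1)$, and $\int_0^1 f = 1$; one checks that the Radon--Nikodym derivative $d\mu_\rho/d\lambda$ evaluated along any $X$ is exactly $\lim_n 2^n \alpha_{X\restriction n} = f(0.X)$ which is finite and positive for every non-dyadic $X$, so $\mu_\rho$ and $\lambda$ are mutually absolutely continuous on the non-dyadics; then a standard argument (as in Theorem~196 of \cite{bookE}, already cited in the proof of Lemma~\ref{lem:24}) gives $\mathrm{MLR}(\mu_\rho) \subseteq \mathrm{MLR}(\lambda)$ up to the dyadic rationals, which are not random anyway. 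Combining: any q-MLT failed by $\rho$ yields, via Theorem~\ref{thm:30}, an MLT capturing a set of $\mu_\rho$-measure $> \delta/2$, contradicting that $\mu_\rho$ concentrates on $\mathrm{MLR}(\lambda)$ — the set on which every MLT has measure zero. Hence no such q-MLT exists and $\rho$ is q-MLR.
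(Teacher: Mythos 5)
Your final argument is correct, but it takes a genuinely different and more roundabout route than the paper. The paper's proof is direct: for the universal q-MLT $(G^m)_m$, the SVD consequence (inequality \ref{eq:15}) bounds $\mathrm{Tr}(\rho_n G^m_n)$ by the sum of the $2^{n-m}$ largest eigenvalues of $\rho_n$, which equals $\int_E f\,ds$ for a set $E\subseteq(0,1)$ of Lebesgue measure exactly $2^{-m}$; absolute continuity of the Lebesgue integral of $f\in L^1$ then makes this $<\delta$ for $m$ large, with no detour through Cantor space. You instead convert the failed q-MLT into a classical MLT via Theorem \ref{thm:30} and then argue that $\mu_\rho\ll\lambda$ forbids $\inf_m\mu_\rho(C^m)>\delta/2$. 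That chain does close (and your diagnosis that the entropy route via Theorem \ref{thm:9} is unavailable, since $H(\rho_n)-n\to-\infty$ here, is exactly right), but the machinery you invoke for the last step --- computability of $\mu_\rho$, $\mathrm{MLR}(\mu_\rho)\subseteq\mathrm{MLR}(\lambda)$, Radon--Nikodym derivatives, lower semicomputability of $f$ --- is all avoidable: once you have $\mu_\rho(C^m)>\delta/2$ with $\lambda(C^m)\le 2^{-m}$, plain absolute continuity of the integral ($\lambda(E)\to 0$ forces $\int_E f\,d\lambda\to 0$) is already a contradiction, with no randomness theory at all. Two small points to tidy if you keep your route: the step from $\inf_m\mu_\rho(C^m)>\delta/2$ to ``an MLT capturing a set of $\mu_\rho$-measure $>\delta/2$'' needs either a nested test (available via Theorem \ref{thm:00}, or by intersecting the $C^m$) or a reverse-Fatou/Solovay argument on $\limsup_m C^m$; and the underlying reason both proofs work is the same single fact, namely absolute continuity of $\int f$, so the conversion to a classical test buys nothing here --- it only re-derives, one level up, what the eigenvalue bound gives immediately.
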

\begin{proof}
Let $(G^m)_{m}$ be the universal q-MLT. Given a $\delta>0$ we find an $m$ such that $\rho(G^m) <\delta$. Since $f$ is in $L_1[0,1]$, by absolute continuity,  find a $m$ such that 
\[\int_{I} f(s)ds < \delta\] for any interval $I\subset (0,1)$, with $|I| \leq 2^{-m}$
We claim that this $m$ works by showing that for all $n$, $Tr(\rho_{n}G^{m}_{n})<\delta$. Fix an $n$. $G^{m}_{n}$ is an orthogonal projection with rank atmost $2^{n-m}$. By using the consequence of the SVD,
\[Tr(\rho_{n}G^{m}_{n})< \sum_{\sigma \in L} \alpha_{\sigma} = \sum_{\sigma \in L} \int_{[\sigma]} f(s)ds =\int_{E} f(s)ds < \delta \] where $L$ is the set of $\sigma$ s corresponding to the $2^{n-m}$ largest singular values of $\rho_{n}$, which are the $\alpha_{\sigma}$s. I.e., L = $\{\sigma_{1},.....\sigma_{2^{n-m}}\}$ if $\{\alpha_{\sigma_{1}},.....\alpha_{\sigma_{2^{n-m}}}\}$ are the first $2^{n-m}$ largest $\alpha$s. $E = \bigcup_{L}[\sigma]$ is a interval. Since $|L|= 2^{n-m}$ and $|[\sigma]|= 2^{-n}$, we have that $|E|= 2^{-n}2^{n-m}= 2^{-m}$.
\end{proof} 
\begin{lem}
For all $c \in \omega$, for almost every $n$, $H(\rho_n)-n <-c$. So, both the conditions of theorem \ref{thm:9} (the stronger and even the weaker one with `for infinitely many $n$') do not hold for $\rho$.
\end{lem}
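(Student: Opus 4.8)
The plan is to make rigorous the ``Riemann sum'' computation used in the preceding remark, now in the \emph{divergent} case. Write $\rho_n=\sum_{\sigma\in 2^n}\alpha_\sigma|\sigma\big>\big<\sigma|$ with $\alpha_\sigma=\int_{[\sigma]}f(s)\,ds$, so that $\sum_{\sigma\in 2^n}\alpha_\sigma=\int_0^1 f\,ds=1$ and hence
\[
n-H(\rho_n)=n+\sum_{\sigma\in 2^n}\alpha_\sigma\log\alpha_\sigma=\sum_{\sigma\in 2^n}\alpha_\sigma\log\bigl(2^n\alpha_\sigma\bigr).
\]
Here $2^n\alpha_\sigma=\tfrac{1}{|[\sigma]|}\int_{[\sigma]}f$ is the average of $f$ over the level-$n$ dyadic interval $[\sigma]$. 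So, defining $f_n\colon(0,1)\to[0,\infty)$ to be the function that is constant on each level-$n$ dyadic interval $[\sigma]$ with value $2^n\alpha_\sigma$, and writing $\phi(t):=t\log t$ (with $\phi(0):=0$), we get $n-H(\rho_n)=\sum_{\sigma\in 2^n}|[\sigma]|\,\phi(2^n\alpha_\sigma)=\int_0^1\phi\bigl(f_n(x)\bigr)\,dx$. Thus it suffices to show $\int_0^1\phi(f_n)\,dx\to+\infty$; then for every $c$ we get $H(\rho_n)-n<-c$ for almost every $n$, and in particular both hypotheses of Theorem \ref{thm:9} fail for this $\rho$.

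The two analytic inputs are: (i) $\phi$ is continuous on $[0,\infty)$ and bounded below (by $-\tfrac{1}{e\ln 2}$); and (ii) $f_n\to f$ almost everywhere on $(0,1)$, since $f_n(x)$ is the average of the integrable function $f$ over a nested sequence of dyadic intervals shrinking to $x$, so convergence holds at every Lebesgue point of $f$ (equivalently, $(f_n)$ is the martingale of conditional expectations of $f$ with respect to the dyadic filtration, which converges a.e.\ and in $L^1$). Granting these, apply Fatou's lemma to the nonnegative functions $\phi(f_n)+\tfrac{1}{e\ln2}$ and use $\phi(f_n)\to\phi(f)$ a.e.:
\[
\liminf_n\int_0^1\phi(f_n)\,dx\ \geq\ \int_0^1\phi(f)\,dx=\int_0^1 f(s)\log f(s)\,ds=+\infty,
\]
the last equality being the defining hypothesis $-\int_0^1 f\log f\,ds=-\infty$. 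Hence $n-H(\rho_n)\to+\infty$, which is the claim. One may also observe, via the log-sum inequality applied to $\alpha_{\sigma0},\alpha_{\sigma1}$ together with the coherence relation $\alpha_{\sigma0}+\alpha_{\sigma1}=\alpha_\sigma$, that $n-H(\rho_n)$ is nondecreasing in $n$, so the $\liminf$ above is actually a limit; this is not needed for the statement.

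The main obstacle is precisely the interchange of limit and integral: because $\phi$ is unbounded above and $\int_0^1 f\log f$ is an improper, divergent integral, one must be careful to invoke Fatou in the correct one-sided form (licensed by the lower bound on $\phi$), and one must justify the a.e.\ convergence $f_n\to f$ (Lebesgue differentiation along dyadic intervals, or martingale convergence). Everything else is bookkeeping: the identity $n-H(\rho_n)=\int_0^1\phi(f_n)$, and the translation of ``$n-H(\rho_n)\to\infty$'' into ``$\forall c\,\forall^{\infty}n\ H(\rho_n)-n<-c$''.
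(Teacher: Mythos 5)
Your proof is correct, and it is in substance the same computation as the paper's, but carried out more carefully at the one step where care is actually needed. The paper writes $\alpha_\sigma = 2^{-n}f(x_\sigma)$ via the mean value theorem and then asserts that the resulting ``Riemann sum'' $\sum_\sigma 2^{-n}\,(-f(x_\sigma)\log f(x_\sigma))$ converges to the divergent improper integral $-\int_0^1 f\log f = -\infty$. Since $f(x_\sigma)=2^n\alpha_\sigma$, that sum is exactly your $-\int_0^1\phi(f_n)\,dx$, so the two arguments compute the identical quantity; the difference is that the paper treats the passage to the limit as automatic, whereas for an unbounded integrand with a divergent improper integral, convergence of Riemann sums with particular tags is not automatic (Jensen applied to the convex $\phi$ only gives the vacuous lower bound here). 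Your combination of (i) the uniform lower bound $\phi\geq -\tfrac{1}{e\ln 2}$, (ii) a.e.\ convergence of the dyadic averages $f_n\to f$ (Lebesgue differentiation or $L^1$-martingale convergence), and (iii) the one-sided Fatou lemma is precisely what is needed to license that step, so your write-up supplies a justification the paper omits. Your side remark that $n-H(\rho_n)$ is nondecreasing (log-sum inequality plus coherence) is also correct and would give the limit outright by monotone convergence, though, as you say, it is not needed.
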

\begin{proof}
For any $n$, by definition of the $\alpha$s, we have, \[H(\rho_{n}) = -\sum_{\sigma \in 2^{n}}\int_{[\sigma]} f(s)ds \text{log}(\int_{[\sigma]} f(s)ds)\]
By the mean-value theorem and continuity of $f$, for all $\sigma$ there is a $x_{\sigma}\in [\sigma]$ such that
\[\int_{[\sigma]} f(s)ds = 2^{-n}f(x_{\sigma})\]
So,
\[H(\rho_{n}) = -\sum_{\sigma \in 2^{n}}2^{-n}f(x_{\sigma}) \text{log}(2^{-n}f(x_{\sigma}))\]
\[=-\sum_{\sigma \in 2^{n}}2^{-n}f(x_{\sigma}) (-n+\text{log}(f(x_{\sigma})))\]

\[=-\sum_{\sigma \in 2^{n}}2^{-n}f(x_{\sigma}) \text{log}(f(x_{\sigma})) + n \sum_{\sigma \in 2^{n}}2^{-n}f(x_{\sigma})\]
By definition,
\[=-\sum_{\sigma \in 2^{n}}2^{-n}f(x_{\sigma}) \text{log}(f(x_{\sigma})) + n \sum_{\sigma \in 2^{n}}\int_{[\sigma]} f(s)ds \]

\[=\text{Riemann Sum}[-f(.)\text{log}(f(.), \text{Mesh Size}= 2^{-n}] + n \int_{0}^{1} f(s)ds \]
So, as the last integral is equal to 1,
\[H(\rho_n)-n = \text{Riemann Sum}[-f(.)\text{log}(f(.), \text{Mesh Size}= 2^{-n}].\]

But, \[-\int_{0}^{1}f(s)\text{log}(f(s))ds = -\infty.\]

So, 

\[\lim_{n\to\infty}H(\rho_n)-n =\lim_{n\to\infty} \text{Riemann Sum}[-f(.)\text{log}(f(.), \text{Mesh Size}= 2^{-n}]\]
\[= -\int_{0}^{1}f(s)\text{log}(f(s)ds = -\infty.\]

So, for all $c\in \omega$ there is an $N$ such that $n>N$ implies that $H(\rho_n)-n < -c $.

\end{proof}

So, $\rho$ is the required computable q-MLR.
\chapter{Open questions}\label{5}

An important open question is whether weak Solovay random states have a Levin\textendash Schnorr characterization in terms of $QK$. Techniques similar to those used in subsection \ref{subsect:weak} may prove to be useful in answering this.

It still remains to find a complexity based characterization of q-MLR states. In this direction, Nies and Scholz found a partial Miller\textendash Yu theorem concerning the quantum descriptive complexity of q-MLR and weak Solovay random states \cite{unpublished}. One can ask whether a Miller \textendash Yu type result holds for q-MLR and/or weak Solovay random states when using $QK$ as a complexity measure. 

An interesting question is whether weak Solovay randomness is equivalent to q-MLR, a positive answer to which will yield a $QK$ based characterizations (namely, those in Theorems \ref{thm:7} and \ref{thm:39}) of q-MLR.

Another interesting question is to find a von-Neumann entropy based characterization of q-MLR for computable states. Chapter \ref{4} contains partial results towards answering this question.
%
%
\bibliographystyle{siam}
\bibliography{thesis}

\end{document}